\newcommand{\rev}[1]{\textcolor{black}{#1}}
\DeclareMathOperator*{\argmin}{arg\,min}
\newtheorem{theorem}{Theorem}
\newtheorem{lemma}{Lemma}
\newtheorem{corollary}[theorem]{Corollary}
\newcommand{\bounded}{O_p}
\newcommand{\fasterthan}{o_p}
\newcommand{\fasterthandet}{o}
\newcommand{\boundeddet}{O}
\DeclareMathOperator*{\inprob}{\stackrel{P}{\longrightarrow}}
\DeclareMathOperator*{\indist}{\stackrel{d}{\longrightarrow}}
\renewcommand{\b}[1]{\mathbf{#1}}
\newcommand{\s}[1]{\mathcal{#1}}
\renewcommand{\d}[1]{\mathbb{#1}}
\newcommand{\n}[1]{\mathrm{#1}}
\title{Debiased inference for a covariate-adjusted regression function}
\author[1]{Kenta Takatsu}
\author[2]{Ted Westling}
\affil[1]{Department of Statistics and Data Science, Carnegie Mellon University}
\affil[2]{Department of Mathematics and Statistics, University of Massachusetts Amherst}
\date{}
\begin{document}
\maketitle
\begin{abstract}
In this article, we study nonparametric inference for a covariate-adjusted regression function. This parameter captures the average association between a continuous exposure and an outcome after adjusting for other covariates. Under certain causal conditions, it also corresponds to the average outcome had all units been assigned to a specific exposure level, known as the causal dose-response curve. We propose a debiased local linear estimator of the covariate-adjusted regression function and demonstrate that our estimator converges pointwise to a mean-zero normal limit distribution. We use this result to construct asymptotically valid confidence intervals for function values and differences thereof. In addition, we use approximation results for the distribution of the supremum of an empirical process to construct asymptotically valid uniform confidence bands. Our methods do not require undersmoothing, permit the use of data-adaptive estimators of nuisance functions, and our estimator attains the optimal rate of convergence for a twice differentiable regression function. We illustrate the practical performance of our estimator using numerical studies and an analysis of the effect of air pollution exposure on cardiovascular mortality.
\end{abstract}

\tableofcontents

\clearpage
\doublespacing
\section{Introduction}
\subsection{Motivation and literature review}
In this article, we study nonparametric inference for a covariate-adjusted regression function, which is also known as a G-computed regression function. This statistical problem arises in the context of observational studies where interest focuses on the causal effect of a continuous exposure, such as the dose of a drug, the amount of air pollution exposure, or the amount of a biochemical in the bloodstream. However, a covariate-adjusted regression function may also be of interest outside of causal contexts as a one-dimensional marginal summary of a multivariate regression function. Despite the simple formulation and many applications of this parameter, a method achieving valid nonparametric inference without undersmoothing is not yet available. In this paper, we close this gap by introducing a novel nonparametric, doubly-robust estimator, pointwise confidence intervals, and uniform confidence bands for the covariate-adjusted regression function that do not require undersmoothing.

One setting where a covariate-adjusted regression function arises is causal inference with a continuous exposure. 
The gold standard for assessing the causal effect of a treatment or exposure on an outcome is a randomized experiment, where units in the population are assigned values of the exposure by a random process known to the researchers. Frequently, however, such an experiment is infeasible, unethical, or cost-prohibitive. For example, it is unethical to purposefully expose people to a chemical or pollutant known to have negative health effects. In such cases, researchers may instead wish to assess the causal effect of the exposure using observational data in which the exposure varies according to an unknown mechanism. Recovering a causal effect with observational data is more challenging due to potential common causes of the exposure and outcome.  However, if all common causes are recorded in the data, then causal effects can be recovered  by appropriately adjusting for them. Specifically, the average outcome had all units been assigned a specific exposure value, \rev{which is known as the \emph{causal dose-response curve} or \emph{average dose-response function}}, coincides with the covariate-adjusted regression function, which is known as the G-formula or G-computation in causal inference \citep{robins1986new, gill2001}.   \rev{Adjusting for covariates can also improve estimator efficiency in the context of randomized experiments  \citep{imbens2015causal}.}

The covariate-adjusted regression function is also of interest outside of causal contexts. The regression (i.e., the conditional expectation) of an outcome on a vector of covariates can be difficult to visualize and summarize when there are more than two regressors and when nonparametric methods are used. The covariate-adjusted regression function summarizes the adjusted association between a single continuous covariate and the outcome by averaging the regression function over all other covariates for each value of the covariate of interest. This is related to the use of \emph{marginal effects} to summarize the results of nonlinear regression models \citep{mize2019marginal}. \rev{Furthermore, since the covariate-adjusted regression is a univariate function, it can serve as a useful visualization tool  \citep{friedman2001greedy, apley2020visualizing, cattaneo2019binscatter}}. 

\rev{The covariate-adjusted regression function has been used in several recent observational studies to describe the association between a continuous exposure and an outcome after adjusting for potential confounders. \cite{josey2023air} and \cite{schwartz2023air} assessed the association between air pollution exposure and health outcomes after adjusting for socioeconomic and demographic factors. \citet{knaus2021double} assessed the association between time spent playing a musical instrument and cognitive improvement in youth after adjusting for socioeconomic factors. \cite{oulhote2019joint} assessed the association between exposure to chemicals and pollutants and neurodevelopment  in children after adjusting for sociodemographic and lifestyle factors and medical history. \cite{colangelo2020double} assessed the association between hours of job training and subsequent employment after adjusting for socioeconomic and health factors. \citet{shroff2022pretrial} assessed the association between the timing of arraignments and judicial decisions after adjusting for defendant, charge, and courtroom characteristics. \cite{Weng2022ContTrtTest} estimated the association between average nurse staffing on hospital readmission rates after adjusting for hospital characteristics. As these examples illustrate, the analysis of continuous exposures is of significant statistical interest across a wide range of disciplines.}

Several nonparametric methods exist for estimation and inference for the covariate-adjusted regression function. \cite{Kennedy2016ptwisecte} proposed an estimator based on local linear smoothing. Theirs was the first \emph{doubly robust} estimator for this parameter, meaning that their estimator is consistent if either of two nuisance estimators is consistent. \cite{vanderLaan2018} and \cite{colangelo2020double} also proposed doubly-robust estimators based on smoothing, and used cross-fitting to remove empirical process conditions for nuisance estimators. \cite{semenova2021debiased} proposed an estimator based on a series expansion. Finally, \cite{westling2020unified} and \cite{westling2018causal} explored inference under a monotonicity assumption. Additional literature related to the covariated-adjusted regression function includes \cite{robins1986new, robins2000msm} and \cite{zhang2016ci}, who proposed plug-in estimators based on a parametric outcome regression model; \cite{hirano2004propensity} and \cite{imai2004causal}, who proposed estimators based on a parametric propensity model;  \cite{NEUGEBAUER2007419}, who studied inference for the best projection onto a working parametric model; \citet{rubin2006extending} and \cite{diaz2013tmledoseresponse}, who proposed data-adaptive methods; \cite{bonvini2022fast}, who proposed a higher-order estimator; and \citet{westling2021causalnull} and \citet{Weng2022ContTrtTest}, who proposed tests of the null hypothesis that the function is flat.

As is the case for an ordinary regression function, nonparametric inference for the covariate-adjusted regression function is a challenging task. The bias of smoothing-based methods is not asymptotically negligible when the bandwidth is chosen to optimize the rate of convergence of the estimator. This bias complicates the task of obtaining valid inference. Some authors have argued for interpreting confidence sets constructed based on the resulting limit theorems as valid for a \textit{smoothed} parameter \citep{wasserman2006all, Kennedy2016ptwisecte}. An alternative approach is to choose the bandwidth to go to zero faster than the optimal rate to guarantee that the bias goes to zero faster than the standard deviation, which is called \textit{undersmoothing} \citep{vanderLaan2018, colangelo2020double, semenova2021debiased}. While this approach theoretically allows valid inference, it yields a suboptimal rate of convergence for the estimator. Furthermore, there is little guidance about how to select a bandwidth for undersmoothing in practice beyond ad-hoc methods. \rev{For instance, a common approach to undersmoothing is to divide the bandwidth selected by cross-validation or another method by a sequence going to infinity slowly with $n$, such as $n^{1/10}$, $\log(n)$, or $\log(\log(n))$. However, the specific choice of this sequence impacts the finite-sample performance of the estimator and confidence intervals, and there is no consensus on which sequence to use in any given situation. It is therefore both theoretically and practically valuable to develop asymptotically valid inference procedures that do not require undersmoothing.}  

Recently, \citet{Calonico2018} proposed a method of  bias correction for density and regression functions estimated using kernel smoothing. Bias correction in the context of kernel smoothing is challenging because the bias depends on the second or higher derivative of the true function, which is more difficult to estimate than the function itself \citep{wasserman2006all, hall1992effect}. However, \citet{Calonico2018} demonstrated that, via a careful choice of the bandwidth parameter of the bias estimator, it is possible to effectively debias kernel smoothing-based estimators. These estimators permit asymptotically valid inference without undersmoothing, and unlike undersmoothing, retain the optimal rate of convergence relative to the assumed smoothness of the true function. 

\subsection{Contribution and organization of the article}
In this article, we contribute to the existing literature in the following ways: (1) we propose a novel debiased estimator of the covariate-adjusted regression function motivated by the approach proposed by \citet{Calonico2018}; (2) we propose methods of pointwise and uniform inference and provide conditions under which our methods are asymptotically valid; and (3) we illustrate the practical performance of our proposed methods using numerical studies and an analysis of the effect of air quality on health. To the best of our knowledge, ours are the first asymptotically valid methods of pointwise and uniform inference for the covariate-adjusted regression function without undersmoothing. We note that our problem is substantively different from that of \citet{Calonico2018} due to the presence of nuisance parameters, which introduces remainder terms and technical considerations not present when estimating density and regression functions. We elucidate these differences more below. \rev{Finally, we have implemented all methods proposed in this article in an \texttt{R} package available at https://github.com/Kenta426/DebiasedDoseResponse.}

The remainder of this article is organized as follows. In Section~\ref{section:method}, we define our statistical setting and proposed estimator. In Section~\ref{section:inference}, we present our approach to inference and our theoretical results. In Section~\ref{section:numerical}, we demonstrate the empirical performance of our proposed methods using numerical studies, and in Section~\ref{section:application}, we use our methods to assess the effect of air pollution on cardiovascular mortality.  Section~\ref{section:conclusion} presents brief concluding remarks.  Proofs of all theorems are provided in Supplementary Material. 
\section{Proposed estimator}\label{section:method}

\subsection{Notation and statistical setting}

We now define the statistical setting we will work in and notation we will use. We consider a univariate outcome $Y \in \s{Y} \subseteq \d{R}$, a univariate exposure $A \in \s{A} \subseteq \d{R}$, and a vector of covariates $W \in \s{W} \subseteq \d{R}^d$. We assume that the distribution of $A$ possesses a Lebesgue density. We define the observed data unit $O \coloneqq (Y,A,W)$, which takes values in $\s{O} \coloneqq \s{Y} \times \s{A} \times \s{W}$. We then observe $n$ independent and identically distributed observations $O_1, \dots, O_n$ from an unknown probability distribution $P_0$. We denote by $\d{P}_n$ the empirical distribution function of the observed data. We index objects by $P$ when they depend on a generic distribution $P$ over the observed data unit $O$, and we use subscript $0$ as short-hand for the true distribution $P_0$. For instance, we denote the expectation under $P_0$ by $E_0$. For a distribution $P$ on $\s{O}$, we denote by $F_P$ the marginal distribution and $f_P$ the Lebesgue density of $A$  under $P$. We denote by $Q_P$ the marginal distribution of $W$ under $P$. We denote by $F_n$ and $Q_n$ the empirical distributions of $A_1, \dotsc, A_n$ and $W_1, \dotsc, W_n$, respectively. We let $\mu_P(a,w) \coloneqq E_P(Y \mid A =a, W=w)$ denote the \emph{outcome regression}  function and $g_P(a,w) \coloneqq \left[ \frac{\partial}{\partial a} P(A \leq a \mid W = w) \right] / f_P(a)$ denote the \emph{standardized propensity} function.  For a probability measure $P$ and $P$-integrable function $h$, we define $Ph \coloneqq \int h \, dP$. For $q \geq 1$, we denote by $\|h\|_{P, q} \coloneqq (P|h|^q)^{1/q}$ the $L_q(P)$ norm of $h$. \rev{For a real-valued function $h$ defined on $\mathcal{X}$, its supremum norm is denoted by $\|h\|_\infty := \sup_{x \in \mathcal{X}}|h(x)|$. We also define $e_1 := (1,0)$ and $e_3 := (0,0,1)$}.

\subsection{Parameter of interest and its interpretation}

Our parameter of interest is the covariate-adjusted regression function $a\mapsto \theta_0(a)$ defined as
\[\theta_0(a) \coloneqq E_0\left\{E_0(Y \mid A=a, W)\right\} = E_0 \left\{ \mu_0(a, W) \right\} = \int \mu_0(a,w) \, dQ_0(w).\]
Under certain conditions, $\theta_0$ has a causal interpretation. \rev{For each $a \in \s{A}$, we define $Y(a)$ as the potential outcome under an intervention that sets the exposure $A$ to $a$. If (1) the potential outcomes of each unit are unaffected by the exposures of all other units, (2) the observed outcome $Y$ equals $Y(A)$, i.e., the potential outcome under assignment to the observed exposure $A$, (3) $Y(a)$ and $A$ are conditionally independent given $W$, and (4) $g_0(a,W)$ is almost surely positive, then $\theta_0(a) = E_0[Y(a)]$. Hence, under causal conditions (1)--(4), $\theta_0(a)$ can be interpreted as the average potential outcome under the assignment of the entire population to exposure value $a$. The function $a \mapsto E_0[Y(a)]$ is called the \emph{causal dose-response curve} or the \emph{average dose-response function}. This causal identification result has been employed in prior work on causal inference with continuous exposures (e.g.,\ \citealp{robins1986new, gill2001, Kennedy2016ptwisecte,westling2018causal,westling2021causalnull}). These assumptions cannot be verified or tested using the observed data, so their plausibility depends on the particular scientific application.}

As mentioned in the introduction, $\theta_0$ is also of interest outside of causal settings. The outcome regression function $\mu_0$ is the expected outcome given exposure and covariates. Hence, for fixed $a$, $\mu_0(a,W)$ is a random variable representing the expected outcome value given $A=a$ across the distribution of the covariates $W$ in the population, and $\theta_0(a)$ is the mean of this variable. The curve $a \mapsto \theta_0(a)$ depicts how this average conditional mean changes with $a$. Hence, $\theta_0$ is a marginal summary of the multivariate regression function $\mu_0$. \rev{Therefore, obtaining nonparametric inference for $\theta_0$ is a relevant statistical problem even when the causal conditions listed above are implausible or a causal interpretation is not of interest.}

\subsection{Debiased local linear estimator}

We now define our estimator of the covariate-adjusted regression function. We begin with a review of the local linear method proposed by \cite{Kennedy2016ptwisecte} and its key properties.  For an outcome regression function $\mu$, a standardized propensity $g$, and a covariate distribution $Q$, we define the \emph{pseudo-outcome} mapping
\begin{equation}
    \xi_{\mu,g,Q}: (y,a,w) \mapsto \frac{y - \mu(a,w)}{g(a,w)} + \int \mu(a, \bar{w})\, dQ(\bar{w}). \label{eq:pseudo_outcome}
\end{equation}
Theorem~1 of \cite{Kennedy2016ptwisecte} showed that this mapping possesses a double-robust property: it holds that $E_0[\xi_{\mu, g, Q_0}(Y,A,W) \mid A=a_0] = \theta_0(a_0)$ if either $\mu=\mu_0$ or $g=g_0$. \cite{Kennedy2016ptwisecte} thus proposed first constructing estimators $\mu_n$ and $g_n$ of $\mu_0$ and $g_0$, respectively, using the empirical distribution $Q_n$ of the observed covariates to estimate $Q_0$, then regressing the estimated pseudo-outcomes $\xi_{n}(Y_1,A_1,W_1), \dotsc, \xi_{n}(Y_n,A_n,W_n)$ on the observed exposures $A_1, \dotsc, A_n$ using local linear regression, where $\xi_{n} \coloneqq \xi_{\mu_n, g_n, Q_n}$. Specifically, let $K$ be a kernel function (i.e., a symmetric density on $\d{R}$), $h > 0$ be a bandwidth, and $K_{h,a_0}(a) \coloneqq K([a - a_0]/h) / h$. The local linear estimator at a point $a_0$ is then defined as the evaluation at $a_0$ of the weighted ordinary least squares regression of the estimated pseudo-outcomes on intercept and the observed exposures with weights $K_{h,a_0}(A_1), \dotsc, K_{h,a_0}(A_n)$. Mathematically, the local linear estimator can be written as
\begin{align}
    \theta_{n}^{LL}(a_0) \coloneqq  e_1^T \b{D}_{n,h,a_0,1}^{-1} \d{P}_n\left(w_{h,a_0,1} K_{h,a_0} \xi_{n}\right),\label{eq:LL_linear_representation}
\end{align}
where $w_{h,a_0,j}(a) \coloneqq \left(1, [a - a_0] /h, \dotsc, [a-a_0]^j / h^j \right)^T$ and $\b{D}_{n,h,a_0, j} \coloneqq \d{P}_n(w_{h,a_0, j} w^T_{h,a_0, j} K_{h, a_0} )$ for any integer $j \geq 1$.  Other authors have used a similar approach as \cite{Kennedy2016ptwisecte}, but replaced the local linear regression step with an alternative nonparametric regression estimator \citep{westling2018causal,semenova2021debiased,bonvini2022fast}.

As discussed in the introduction, standard approaches to nonparametric regression, including local linear regression, do not yield valid inference when the bandwidth is chosen to minimize mean squared error because the bias of the resulting estimator is of the same order as its standard deviation. Specifically, in their Theorem~3, \cite{Kennedy2016ptwisecte} showed that under suitable conditions, \rev{including that $a_0$ is in the interior of the support of $A$}, $(nh)^{1/2} [ \theta_{n}^{LL}(a_0) - \theta_0(a_0) - h^2 c_2 \theta_0''(a_0) /2]$ converges to a mean-zero normal distribution, where $c_2 \coloneqq \int u^2 K(u)\,du$ and $\theta_0''$ is the second derivative of $\theta_0$. To minimize mean squared error, the bandwidth $h$ should be chosen to balance bias squared and variance, which means choosing $h$ such that $(nh)^{1/2} h^2 = (nh^5)^{1/2}$ converges to a positive, finite constant, or equivalently $h$ proportional to $n^{-1/5}$. Hence, if $\theta_0''(a_0) \neq 0$, then $(nh)^{1/2} \big[ \theta_{n}^{LL}(a_0) - \theta_0(a_0)\big]$ converges to a normal distribution with non-zero mean, implying that confidence intervals centered around $\theta_{n}^{LL}(a_0)$ will be asymptotically anti-conservative.

We propose debiasing the local linear estimator by subtracting an estimator of the bias in the spirit of \cite{Calonico2018}. \rev{We define the debiased estimator as $\theta_{n}^{DB}(a_0) := \theta_{n}^{LL}(a_0) - \tfrac{1}{2}h^2 c_{n,a_0,2} \theta_{n}''(a_0)$, where $c_{n,a_0,2} := e_1^T\b{D}_{n,h,a_0,1}^{-1} \d{P}_n ( \tilde{w}_{h,a_0,1} K_{h,a_0})$ for $\tilde{w}_{h,a_0,1}(a) := w_{h,a_0,1}(a) [(a - a_0) /h ]^2$, and $\theta_{n}''(a_0)$ is a second derivative estimator based on a local quadratic regression with bandwidth $b > 0$. We use $c_{n,a_0,2}$ rather than $c_2$ for proper debiasing on and near the boundary of the support of $A$, since the limiting constant $c_2$ is different on the boundary than the interior. The local quadratic estimator $\theta_{n}''(a_0)$ is the second derivative at $a_0$ of the weighted linear least squares regression of $\xi_n(Y_1, A_1, W_1), \dotsc, \xi_n(Y_n, A_n, W_n)$ on intercept, $A_1, \dotsc, A_n$, and $A_1^2, \dotsc, A_n^2$  with weights $K_{b,a_0}(A_1), \dotsc, K_{b,a_0}(A_n)$. Mathematically, $\theta_{n}''(a_0) := 2b^{-2}e_3^T  \b{D}_{n,b,a_0,2}^{-1} \d{P}_n ( w_{b,a_0,2} K_{b,a_0} \xi_n)$. We can write $\theta_{n}^{DB}(a_0)  = \d{P}_n(\Gamma_{n,a_0} \xi_{n})$, where
 \begin{align}
    \Gamma_{n,a_0}(a) &\coloneqq e_1^T \b{D}_{n,h,a_0,1}^{-1} w_{h,a_0,1}(a) K_{h,a_0}(a) - e_3^T c_{n,a_0,2} (h/b)^2 \b{D}_{n,b,a_0,2}^{-1} w_{b,a_0,2}(a) K_{b,a_0}(a).\label{eq:def_gamma_n}
\end{align}}
To summarize, for given bandwidths $h$ and $b$ and kernel $K$, our proposed estimator is constructed in two steps: (1) construct estimators $\mu_n$ and $g_n$ of $\mu_0$ and $g_0$ respectively, and (2) compute the plug-in estimates of pseudo-outcomes $\xi_n(Y_i, A_i, W_i) = \xi_{\mu_n, g_n, Q_n}(Y_i,A_i,W_i)$ for $i=1, \dotsc,n$,  and regress them on the observed exposures using the bias-corrected local linear estimator \rev{$\theta_{n}^{DB}(a_0) = \d{P}_n(\Gamma_{n,a_0} \xi_{n})$} for each $a_0$. In Section~\ref{section:inference}, we provide conditions on the true data-generating mechanism and on $h$, $b$, $K$, $\mu_n$, and $g_n$, as well as practical guidance for selecting or estimating these quantities.

It may seem that \rev{effective debiasing using an estimator of the second derivative} would require additional smoothness of $\theta_0$, hence violating our stated goal of obtaining the optimal rate of convergence relative to the assumed smoothness of $\theta_0$. This is not the case. As an intuitive explanation, we decompose $(nh)^{1/2}[ \theta_{n}^{DB}(a_0) - \theta_0(a_0)]$ as
\[(nh)^{1/2} \left[ \theta_{n}^{LL}(a_0) - \theta_0(a_0) - \tfrac{1}{2} h^2 c_{n,a_0,2} \theta_0''(a_0)\right] + \tfrac{1}{2} c_{n,a_0,2}(nh^5)^{1/2}\left[ \theta_{n}''(a_0) - \theta_0''(a_0)\right].\]
Under regularity conditions, the first term converges in distribution to a mean-zero limit. Thus, if $nh^5 = \boundeddet(1)$ and $\theta_{n}''(a_0) \inprob \theta_0''(a_0)$, then $(nh)^{1/2} [ \theta_{n}^{DB}(a_0) - \theta_0(a_0) ]$ converges to this same limit. However, perhaps surprisingly, $\theta_{n}''(a_0)$ need not be consistent for $\theta_0''(a_0)$ to achieve good inference using $\theta_{n}(a_0)$. \rev{We show that the variance of $\theta_{n}''(a_0)$ is proportional to $(nb^5)^{-1}$. Hence, if $h/b \longrightarrow \tau > 0$, then the variance of $h^2\theta_{n}''(a_0)$ is of the same order as that of $\theta_{n}^{LL}(a_0)$, so the bias correction \emph{will not} be asymptotically negligible; it will contribute to the asymptotic variance of the estimator as in \cite{Calonico2018}. If in addition $h \propto n^{-1/5}$, then the variance of $\theta_{n}''(a_0)$ does not go to zero, so it is not consistent. However, even in these cases, we show that $(nh)^{1/2} [ \theta_{n}^{DB}(a_0) - \theta_0(a_0) ]$ converges to a mean-zero limit distribution.}

Even if the bias correction is asymptotically negligible, and especially if it is not, accounting for its finite-sample variability is important for achieving good finite-sample inference. As discussed more below,  our variance estimator will account for the variability of $\theta_{n}''(a_0)$, and in particular, we can still achieve valid inference when the variance of $\theta_{n}''(a_0)$ is not going to zero.  We will show that $\theta_{n}''(a_0) - \theta_0''(a_0)$ converges to a mean-zero limit if $b \longrightarrow 0$ and $\theta_{0}''$ is continuous at $a_0$ (and additional technical conditions unrelated to smoothness of $\theta_0$ hold). 

\subsection{Local parameter and its efficient influence function}

We now provide an alternative motivation for our proposed estimator, which also motivates our approach to inference. We show that $\theta_{n}^{DB}(a_0)$ can be considered as a one-step estimator of a debiased smoothed parameter. \rev{We recall that an object indexed by the subscript $P$ indicates the evaluation under a generic probability distribution $P$ in the model, and the subscript $0$ indicates the true value of the object, i.e., evaluated at the true data-generating distribution $P_0$. Hence, $\theta_0(a_0)$ is the evaluation of $\theta_P(a_0)$ at $P = P_0$.}

The parameter mapping $P \mapsto \theta_P(a_0)$ is not \emph{pathwise differentiable} relative to a nonparametric model, meaning that it is not smooth enough as a function of the distribution $P$ to permit $n^{-1/2}$-rate estimation \citep{bickel1982adaptive, pfanzagl1985contributions, bickel1993efficient}. One way to develop inference methods for such a parameter is to consider a sequence of smoothed parameters approaching the parameter of interest, each of which is pathwise differentiable \citep{vanderLaan2018}. Our debiased local linear estimator can be viewed through this lens. For any \rev{distribution $P$ and an} integer $j \geq 1$, we define $\b{D}_{P, h, a_0, j} \coloneqq P( w_{h, a_0, j} w_{h,a_0, j}^T K_{h,a_0})$. We then define the \emph{debiased smoothed parameter mapping} as $\rev{P \mapsto  \theta_{P}^{DB}(a_0) \coloneqq P (\Gamma_{P,a_0} \theta_P) =\int \Gamma_{P,a_0}(a) \theta_P(a) \, dF_P(a)}$, where
\rev{\begin{align}
    \Gamma_{P,a_0}(a)  \coloneqq e_1^T \b{D}_{P,h,a_0,1}^{-1} w_{h,a_0,1}(a) K_{h,a_0}(a) - e_3^T c_{P,h,a_0,2} (h/b)^2 \b{D}_{P,b,a_0,2}^{-1} w_{b,a_0,2}(a) K_{b,a_0}(a),\label{eq:def_gamma_P}
\end{align}}
\rev{for $c_{P,h,a_0,2} := e_1 \b{D}_{P,h,a_0,1}^{-1} P( \tilde{w}_{h,a_0,1} K_{h,a_0})$. We refer to $\theta_{P}^{DB}(a_0)$ as \emph{smoothed} because it is a weighted average of $\theta_P(a)$ for $a$ a neighborhood of $a_0$, with (possibly negative) weights  $\Gamma_{P,a_0}(a) f_P(a)$. We refer to $\theta_P^{DB}(a_0)$ as \emph{debiased} because, as we show in Supplementary Material, if $\theta_P$ is twice continuously differentiable in a neighborhood of $a_0$, $h/b \longrightarrow \tau \in [0,\infty)$, and additional mild conditions hold, then the smoothing bias satisfies $\theta_{P}^{DB}(a_0) - \theta_P(a_0) = \fasterthandet(h^2)$ as $h \longrightarrow 0$. } The asymptotic properties of $\theta_{n}^{DB}(a_0)$ can now be understood through the following decomposition:
\[(nh)^{1/2}\left[\theta_{n}^{DB}(a_0) - \theta_0(a_0)\right] =(nh)^{1/2}\left[\theta_{n}^{DB}(a_0) - \theta_{0}^{DB}(a_0)\right] + (nh)^{1/2}\left[\theta_{0}^{DB}(a_0) - \theta_0(a_0) \right].  \]
Hence, if $nh^5 = \boundeddet(1)$, then $\theta_{0}^{DB}(a_0) - \theta_0(a_0) = \fasterthandet(\{nh\}^{-1/2})$, and so the first-order asymptotic properties of $(nh)^{1/2}\left[\theta_{n}^{DB}(a_0) - \theta_0(a_0)\right]$ are determined by $(nh)^{1/2}\left[\theta_{n}^{DB}(a_0) - \theta_{0}^{DB}(a_0)\right]$. This expression can be studied using semiparametric efficiency theory. The first step in doing so is to derive the efficient influence function of \rev{the functional $P \mapsto \theta_{P}^{DB}(a_0)$}. This is the subject of the following lemma. 
\begin{lemma}[Efficient influence function] 
\label{lm:influence_function}
For each $h$ and $b > 0$ and $a_0 \in \s{A}$, $P \mapsto \theta_{P}^{DB}(a_0)$ is a pathwise differentiable parameter with respect to the model $\s{M}$ consisting of $P$ such that (1) $E_P[Y^2] < \infty$ and (2) there exists $\kappa > 0$ such that $g_P(a,w) \geq \kappa$ for $F_P$-a.e.\ all $a$ such that $|a - a_0| \leq \max\{h,b\}$ and $Q_P$-a.e.\ $w$, and the efficient influence function of \rev{$\theta_{P}^{DB}(a_0)$} relative to this model is 
\begin{align}
    \phi^*_{P, a_0}&=\phi^*_{P, h,b, a_0}: (y,a,w) \mapsto \Gamma_{P, a_0}(a) \xi_{\mu_P,g_P,Q_P}(y,a,w)-\gamma_{P,a_0}(a) \nonumber\\
    &\qquad+ \int \Gamma_{P,a_0}(\bar{a})\left\{\mu_P(\bar{a},w)-\theta_P(\bar{a})\right\}\,dF_P(\bar{a}) , \text{ where}\label{eq:eif_phi} \\
    \gamma_{P,a_0}(a) &:= e_1^T\b{D}^{-1}_{P, h, a_0,1}w_{h,a_0, 1}(a) K_{h, a_0}(a)w^T_{h,a_0, 1}(a)\b{D}^{-1}_{P, h, a_0,1} P \left( w_{h,a_0, 1} K_{h, a_0} \theta_P\right) \nonumber\\
    &\qquad- e_3^T c_{P,h,a_0,2} (h/b)^2 \b{D}_{P, b,a_0,2}^{-1} w_{b,a_0,2}(a) K_{b,a_0}(a)w^T_{b,a_0, 2}(a)\b{D}^{-1}_{P, b, a_0,2} P \left( w_{b,a_0, 2} K_{b, a_0} \theta_P\right)\nonumber \\
    &\qquad- (h/b)^2 e_1^T\b{D}^{-1}_{P,h, a_0,1}\left[ \tilde{w}_{h, a_0, 1}(a)  - w_{h,a_0, 1}(a) w^T_{h,a_0, 1}(a) \b{D}^{-1}_{P,h, a_0,1} P\left( \tilde{w}_{h, a_0, 1} K_{h,a_0} \right)\right] K_{h,a_0}(a) \nonumber \\
        &\qquad\qquad \times e_3^T\b{D}_{P, b,a_0,2}^{-1} P( w_{b,a_0,2} K_{b,a_0} \theta_P) \nonumber.
\end{align}
\end{lemma}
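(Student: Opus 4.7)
The plan is to verify directly that $\phi^*_{P, h, b, a_0}$ is the canonical gradient: for every regular parametric submodel $\{P_\epsilon\} \subset \s{M}$ through $P$ with score $s \in L_2(P)$, I will show
\[
\frac{d}{d\epsilon}\theta_{P_\epsilon, h, b}(a_0)\Big|_{\epsilon = 0} = P(\phi^*_{P, h, b, a_0}\, s)
\]
and that $P \phi^*_{P, h, b, a_0} = 0$. Since the constraints defining $\s{M}$ (finite second moment of $Y$, and $g_P \geq \kappa$ on a bounded slab around $a_0$) are locally open conditions at $P$, the tangent space at $P$ is the full mean-zero subspace of $L_2(P)$, and hence any $L_2(P)$ mean-zero Riesz representer of the pathwise derivative is the efficient influence function.

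Writing $\theta_{P, h, b}(a_0) = \int \Gamma_{P, h, b, a_0}(a)\, \theta_P(a)\, dF_P(a)$, the product rule decomposes the pathwise derivative into three pieces corresponding to perturbations of $\Gamma_{P_\epsilon, h, b, a_0}$, of $\theta_{P_\epsilon}$, and of $dF_{P_\epsilon}$. For the first piece, I combine the identity $\frac{d}{d\epsilon} \b{D}_{P_\epsilon, h, a_0, j}^{-1}\big|_{\epsilon=0} = -\b{D}_{P, h, a_0, j}^{-1} \cdot P(w_{h, a_0, j} w_{h, a_0, j}^T K_{h, a_0}\, s) \cdot \b{D}_{P, h, a_0, j}^{-1}$ with the definition of $\Gamma_{P, h, b, a_0}$ applied to both the $h$ and $b$ blocks; rearrangement yields $-P(\gamma_{P, h, b, a_0}\, s)$, with precisely the characteristic quadratic-form sandwich structure of $\gamma_{P, h, b, a_0}$. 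The third piece is $P(\Gamma_{P, h, b, a_0}\, \theta_P\, s)$ by the standard calculation for derivatives of marginal densities.

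The second piece is the crux. Using $\theta_{P_\epsilon}(a) = \int \mu_{P_\epsilon}(a, w)\, dQ_{P_\epsilon}(w)$, differentiation gives two subterms. The $\mu_P$ perturbation produces
\[
\int \Gamma_{P, h, b, a_0}(a) \int E_P\!\left[(Y - \mu_P(A, W))\, s(O) \mid A = a, W = w\right] dQ_P(w)\, dF_P(a),
\]
and the key identity $dF_P(a)\, dQ_P(w) = dP(a, w)/g_P(a, w)$, which is immediate from the factorization $p(a, w) = g_P(a, w) f_P(a) q_P(w)$, together with the tower property converts this into $P(\Gamma_{P, h, b, a_0}(\cdot) (Y - \mu_P)/g_P \cdot s)$. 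The $Q_P$ perturbation, after Fubini and tower, yields $P\bigl(\int \Gamma_{P, h, b, a_0}(\bar{a})\, \mu_P(\bar{a}, W)\, dF_P(\bar{a}) \cdot s\bigr)$. Summing all three pieces, the $(Y - \mu_P)/g_P$ contribution and the $\Gamma_{P, h, b, a_0}\, \theta_P$ contribution combine into $P(\Gamma_{P, h, b, a_0}\, \xi_{\mu_P, g_P, Q_P}\, s)$, since $\xi_{\mu_P, g_P, Q_P}(y, a, w) = (y - \mu_P(a, w))/g_P(a, w) + \theta_P(a)$; and because $\int \Gamma_{P, h, b, a_0}(\bar{a})\, \theta_P(\bar{a})\, dF_P(\bar{a}) = \theta_{P, h, b}(a_0)$ is constant in $O$ while $Ps = 0$, I may freely subtract $\theta_P(\bar{a})$ inside the remaining integrand, producing exactly the stated form of $\phi^*_{P, h, b, a_0}$. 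To conclude, $P \phi^*_{P, h, b, a_0} = 0$ follows from three observations: the double-robustness in Theorem~1 of \cite{Kennedy2016ptwisecte} gives $P(\Gamma_{P, h, b, a_0}\, \xi_{\mu_P, g_P, Q_P}) = P(\Gamma_{P, h, b, a_0}\, \theta_P) = \theta_{P, h, b}(a_0)$; direct expansion using $\b{D}_{P, h, a_0, j}^{-1} \b{D}_{P, h, a_0, j} = I$ for $j = 1, 3$ gives $P \gamma_{P, h, b, a_0} = \theta_{P, h, b}(a_0)$; and the last term vanishes under $Q_P$ by the centering of $\mu_P(\bar{a}, W) - \theta_P(\bar{a})$.

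The principal technical friction is bookkeeping rather than conceptual: tracking the matrix-inverse derivative through both the $h$ and $b$ kernel blocks is what produces the sandwich structure of $\gamma_{P, h, b, a_0}$, and the measure-theoretic rewrite $dF_P\, dQ_P = dP/g_P$, while elementary, is the one nontrivial step that converts a product-of-marginals integral into an honest $L_2(P)$ inner product against $s$. The conditions $E_P Y^2 < \infty$ and $g_P \geq \kappa$ on $\{a : |a - a_0| \leq \max\{h, b\}\}$ ensure that $\phi^*_{P, h, b, a_0} \in L_2(P)$, since the $1/g_P$ factor is bounded on the compact kernel support while $Y - \mu_P$ is square integrable, which is precisely what makes the formal Gateaux calculation a valid pathwise differentiation in the Hilbert-space sense.
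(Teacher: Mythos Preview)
Your proposal is correct and follows essentially the same approach as the paper's proof: both compute the pathwise derivative via the product rule, use the matrix-inverse derivative $\frac{d}{d\epsilon}\b{D}_{P_\epsilon}^{-1} = -\b{D}_P^{-1}(\frac{d}{d\epsilon}\b{D}_{P_\epsilon})\b{D}_P^{-1}$ to produce the sandwich term $\gamma_{P,h,b,a_0}$, and invoke the identity $dF_P\,dQ_P = dP/g_P$ to convert the $\mu_P$-perturbation into an inner product with the score. The only cosmetic difference is that the paper splits $\theta_{P,h,b}(a_0) = \theta_{P,h}(a_0) - \tfrac{1}{2}c_2 h^2 \theta_{P,b}^{(2)}(a_0)$ and handles the two local-polynomial blocks separately before recombining, whereas you work directly with the combined kernel $\Gamma_{P,h,b,a_0}$ throughout; the calculations are otherwise identical.
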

The proof of Lemma~\ref{lm:influence_function} and all other results are provided in Supplementary Material. \rev{Lemma~\ref{lm:influence_function} establishes that the smoothed and debiased parameter mapping $P \mapsto \theta_{P}^{DB}(a_0)$ is pathwise differentiable relative to a nonparametric model. We will use this result below to derive the asymptotic properties of $\theta_{n}^{DB}(a_0)$ and to construct a variance estimator for $\theta_{n}^{DB}(a_0)$.} 

\rev{We note that there may be other smoothed parameters with the same properties as $\theta_{P}^{DB}(a_0)$ that yield an asymptotically mean-zero distribution without undersmoothing---namely, that the parameter mapping is pathwise differentiable for a fixed bandwidth, and that the approximation bias as the bandwidth tends to zero is negligible. Furthermore, other smoothed parameters may result in different asymptotic variances of the resulting estimator of $\theta_0(a_0)$. In particular, Theorem~\ref{thm:pointwise} below demonstrates that the asymptotic variance of our estimator depends on the kernel function $K$ and the ratio $h/b$. Hence, the selection of the precise smoothed parameter impacts the asymptotic variance of the estimator. To the best of our knowledge, there is no precise characterization of the optimality of such approximations. This is an important area of future research.}

\rev{A simple and popular method of estimating a pathwise differentiable parameter is the so-called \emph{one-step construction} \citep{bickel1982adaptive, pfanzagl1982contributions}. For clarity of exposition, we now briefly describe the one-step construction in a general setting. Suppose $\psi : \s{M} \to \d{R}$ is a real-valued functional on a model $\s{M}$ that is pathwise differentiable relative to $\s{M}$ at the true data-generating distribution $P_0$, and it has efficient influence function $\eta_0^*$. Suppose $O_1, \dotsc, O_n$ are drawn IID from $P_0$, $P_n$ is an estimator of $P_0$ based on $O_1, \dotsc, O_n$, $\eta_{P_n}^*$ is the efficient influence function evaluated at $P_n$, and that $\d{P}_n$ is the empirical distribution of $O_1, \dotsc, O_n$. The one-step estimator of $\psi_0$ is then defined as $\psi_n := \psi(P_n) + \d{P}_n \phi_{P_n}^*$, which can be viewed as the plug-in estimator $\psi(P_n)$ plus a term that corrects some of the bias of $\psi(P_n)$. The one-step estimator can be shown to be asymptotically linear with influence function $\eta_0^*$ under conditions on $P_n$ and the true distribution $P_0$.}

\rev{We now demonstrate that the debiased local linear estimator $\theta_{n}^{DB}(a_0)$ defined above can be viewed as a one-step estimator of  $\theta_{0}^{DB}(a_0)$. We define $\phi_{n,a_0}^\circ$ as the estimated efficient influence function obtained by replacing $\mu_P$ and $g_P$ from Lemma~\ref{lm:influence_function} with estimators $\mu_n$ and $g_n$, $Q_P$ and $F_P$ with the empirical distributions $Q_n$ and $F_n$, and $\b{D}_{P, h, a_0,j}$  with $\b{D}_{n, h, a_0,j}$. Due to the cancellation of terms, it holds that
\begin{equation}
    \theta_{n}^{DB}(a_0) = \iint \Gamma_{n,h,h,a_0}(a) \mu_n(a, w) \, dQ_n(w) \, dF_n(a)  + \d{P}_n \phi_{n,a_0}^\circ\label{eq:onestep}.
\end{equation}
This representation is proved in Lemma 3 in Supplementary Material. Since the first term is the plug-in estimator of $\theta_{0}^{DB}(a_0)$ and $\phi_{n,a_0}^\circ$ is the plug-in estimator of the efficient influence function of $\theta_{0}^{DB}(a_0)$ established in Lemma~\ref{lm:influence_function}, \eqref{eq:onestep} represents $\theta_{n}^{DB}(a_0)$ as a one-step  estimator.}

The representation of our debiased estimator as a one-step estimator of a smoothed parameter plays an important role in motivating our approach to inference and our asymptotic results. We would typically expect that one-step estimators of pathwise differentiable parameters are asymptotically linear under appropriate conditions. Similarly, in Theorem~\ref{thm:pointwise}, we will see that a finite-sample version of asymptotic linearity holds for our estimator: \rev{$\theta_{n}^{DB}(a_0) - \theta_0(a_0) = \d{P}_n \phi_{\infty,a_0}^* + \fasterthan(\{nh\}^{-1/2})$, where $\phi_{\infty,a_0}^*$ is the limiting efficient influence function defined precisely below. As with asymptotic linearity, this representation is useful because it reduces the derivation of further asymptotic properties to the study of the linear term $\d{P}_n \phi_{\infty,a_0}^*$. Furthermore, it suggests that the variance of $(nh)^{1/2}[\theta_{n}^{DB}(a_0) - \theta_0(a_0)]$ can be estimated by $\sigma_{n}^2(a_0) := h\d{P}_n (\phi_{n,a_0}^*)^2$, where $\phi_{n,a_0}^*$ is an estimator of $\phi_{\infty, a_0}^*$ also defined below. Importantly, this variance estimator accounts for the contribution of the bias estimator, and in particular it is a consistent estimator of the asymptotic variance even when the bias estimator contributes to the asymptotic variance.}

\subsection{Bandwidth selection}\label{section:bandwidth}

As with the debiased density and regression estimators proposed by \cite{Calonico2018}, our estimator requires the choice of two bandwidths. Many bandwidth selection methods for local polynomial regression can be adapted to our setting. Here, we briefly discuss several strategies for data-driven bandwidth selection. In Section~\ref{section:numerical}, we compare the empirical behavior of the three methods outlined here.

First, we can choose the bandwidths to minimize a cross-validated estimator of the integrated mean squared error (IMSE) of \rev{$\theta_{n}^{DB}$}, as \citet{Kennedy2016ptwisecte} did for the local linear estimator. Specifically, since \rev{$\theta_{n}^{DB}(a_0)$} can be written as a linear smoother of $\xi_n$, a computationally efficient leave-one-out cross-validated estimator of the IMSE of \rev{$\theta_{n}^{DB}$} is given by 
\rev{\begin{align}
    \mathrm{IMSE}_{\text{cv}}(h,b) := \frac{1}{n}\sum_{i=1}^n \left\{\frac{\xi_n(Y_i, A_i, W_i)-\theta_{n}^{DB}(A_i)}{1-\Gamma_{n,A_i}(A_i)/n}\right\}^2. \label{eq:loocv_h=b}
\end{align}}
We refer the reader to Chapter~5.3, and specifically Theorem~5.34, of \cite{wasserman2006all} for additional discussion of this formula. We then define $(h_{\text{cv}}, b_{\text{cv}}) := \argmin_{h, b} \mathrm{IMSE}_{\text{CV}}(h,b)$. Alternatively, we can fix $b$ as a function of $h$ via $h/b = \tau \in (0,\infty)$. For example, we can fix $b = h$ so that $\tau = 1$. We can then optimize the estimated IMSE over $h$ alone, i.e.\ $h_{\text{cv},\tau} := \argmin_{h} \mathrm{IMSE}_{\text{CV}}(h,h/\tau)$.  Fixing $\tau$ removes the need to select $b$, which reduces the search to a one-dimensional space, and guarantees that $h/b = \boundeddet(1)$, which is required by our conditions below. Furthermore, in the setting of regression estimation, \cite{Calonico2018} found that fixing $\tau > 0$ to a positive constant yields improved coverage accuracy. However, the choice of $\tau$ is somewhat arbitrary, and as we will see in Theorem~\ref{thm:pointwise}, $\tau > 0$ also yields an estimator with larger asymptotic variance than that of \citet{Kennedy2016ptwisecte}.

As a second bandwidth selection procedure, we will consider an adaptation of the plug-in method proposed by \cite{Calonico2018} for their debiased local linear estimator of a regression function. The method works by minimizing an estimator of the IMSE of the local linear estimator $\theta_{n}^{LL}$ over $h$. We approximate the large-sample IMSE of $\theta_{n}^{LL}$ with respect to a probability measure $\omega$ as $\mathrm{IMSE}_{\text{plug-in}}(h) := h^4 \int \hat{B}^{LL}(a)^2 d\omega(a)+ (nh)^{-1} \int \hat{V}^{LL}(a) \, d\omega(a)$, where 
\begin{align*}
\hat{B}^{LL}(a_0) &:= e_1^T \b{D}_{n,h_1,a_0,1}^{-1}\frac{1}{n}\sum_{i=1}^n  w_{h_1,a_0, 1}(A_i)K_{h_1,a_0}(A_i) \tfrac{1}{2}\hat\theta_{h_1}''(a_0)\left(\frac{A_i-a_0}{h_1}\right)^2, \text{ and}\\
\hat{V}^{LL}(a_0) &:= e_1^T \b{D}_{n,h_1,a_0,1}^{-1}\left\{\frac{1}{n}\sum_{i=1}^n w_{h_1,a_0,1}(A_i) K_{h_1,a_0}(A_i)^2 \hat\sigma^2(A_i) w^T_{h_1,a_0,1}(A_i) \right\} \b{D}_{n,h_1,a_0,1}^{-1}e_1.
\end{align*}
Here, $h_1$ is a \emph{pilot bandwidth}, $\hat\theta_{h_1}''$ is an estimator of $\theta_0''$, and $\hat\sigma^2(a)$ is a nearest-neighbors estimator of the conditional variance of $\xi_{n}$ given $A = a$. Since by design neither $\hat{B}^{LL}$ nor $\hat{V}^{LL}$ depends on $h$, the bandwidth $h_{\text{plug-in}}$ minimizing $\mathrm{IMSE}_{\text{plug-in}}$ is given explicitly by 
\[h_{\text{plug-in}} := n^{-1/5}\left(\frac{\int \hat{V}^{LL} \, d\omega}{ 4 \int [\hat{B}^{LL}]^2 d\omega}\right)^{1/5}.\]
Finally, the bandwidth $b_{\text{plug-in}}$ of the bias correction is defined as $b_{\text{plug-in}} = h_{\text{plug-in}}/\tau$, where $\tau$ is user-specified. A benefit of this method is that it does not require numerical optimization. A second benefit is that the measure $\omega$ can be chosen based on the range over which it is of interest to estimate $\theta_0$. 
\section{Asymptotic properties of the proposed methods}\label{section:inference}

\subsection{Pointwise convergence in distribution}

In this section, we study the asymptotic properties of our proposed estimator, and use these properties to derive approaches to pointwise and uniform inference. We first show that $(nh)^{1/2}[\theta_{n}^{DB}(a_0) - \theta_0(a_0)]$ converges in distribution to a normal limit for fixed $a_0$. \rev{We will use this result to show that pointwise $(1-\alpha)$-level Wald-style confidence intervals  are asymptotically valid.}

We begin by introducing technical conditions we will rely upon. We discuss these conditions following the statement of Theorem~\ref{thm:pointwise}. Our first two conditions concern the kernel function and bandwidths, which will be required in all of our results. \
\begin{enumerate}[label=\textbf{(A\arabic*)},leftmargin=2cm]
\item \label{cond:bounded_K} The kernel $K$ is a mean-zero, symmetric, nonnegative, and Lipschitz continuous density function with support contained in $[-1,1]$. Additionally, $K$ belongs to the linear span of the functions whose subgraph can be represented as a finite number of Boolean operations among sets of the form $\{(s,u) \in \d{R} \times \d{R} : p(s,u) \leq \varphi(u)\}$ where $p$ is a polynomial and $\varphi$ is an arbitrary real function. 
\item \label{cond:bandwidth} As $n \longrightarrow \infty$, the bandwidths $h = h_n$ and $b = b_n$ satisfy $h_n \longrightarrow 0$,   $nh_n \longrightarrow \infty$, $b_n \longrightarrow 0$, and $\tau_n := h_n/b_n \longrightarrow \tau \in [0,\infty)$.
\end{enumerate}
For some results, we will impose additional restrictions on the rate that $h$ approaches zero.

The next condition restricts the uniform entropy of the class of functions in which the nuisance estimators are assumed to reside. We briefly define uniform entropy; we refer the reader to \cite{vandervaart1996} for additional details. For a class of functions $\s{F}$, a probability measure $Q$ and any $\varepsilon > 0$, the $\varepsilon$ covering number $N(\varepsilon, \s{F}, L_2(Q))$ of $\s{F}$ relative to the $L_2(Q)$ metric is defined as the minimal number of $L_2(Q)$ balls of radius less than or equal to $\varepsilon$ needed to cover $\s{F}$. The uniform $\varepsilon$-entropy of $\s{F}$ is defined as $\sup_{Q} \log N(\varepsilon, \s{F}, L_2(Q))$, where the supremum is taken over all probability measures. We now state the following conditions.
\begin{enumerate}[label=\textbf{(A\arabic*)},leftmargin=2cm]
\setcounter{enumi}{2}
    \item \label{cond:uniform_entropy_nuisances} There exist classes of functions $\s{F}_\mu$ and $\s{F}_g$ such that almost surely for all $n$ large enough, $\mu_0, \mu_n \in \s{F}_\mu$, $g_0,g_n \in \s{F}_g$, and for some constants $C_j\in (0, \infty)$, $V_\mu \in (0,1)$, and $V_g \in (0,2)$:
        \begin{enumerate}
            \item \label{cond:bounded_cond} \rev{$\|\mu\|_\infty \leq C_1$ for all $\mu \in \s{F}_\mu$, and $\|1/g\|_\infty \leq C_2$ and $\|g\|_\infty \leq C_3$ for all $g \in \s{F}_g$}; and
            \item \label{cond:entropy_cond} $ \sup_Q \log N(\varepsilon, \s{F}_\mu, L_2(Q)) \leq C_4 \varepsilon^{-V_\mu}$ and $\sup_Q \log N(\varepsilon, \s{F}_g, L_2(Q)) \leq C_5  \varepsilon^{-V_g}$ for all $\varepsilon > 0$.
        \end{enumerate}
\end{enumerate}

Next, we control the behavior of limiting functions to which nuisance estimators converge. We define the following pseudo-distance for any $P_0$-square integrable functions $\gamma_1, \gamma_2 : \s{A} \times \s{W} \mapsto \mathbb{R}$, $\s{A}_0 \subset \s{A}$, and $\s{S} \subseteq \s{A} \times \s{W}$:
\begin{align*}
    d(\gamma_1, \gamma_2; \s{A}_0, \s{S}) :=  \sup_{a \in \s{A}_0} \left\{E_0 [I_{\s{S}}(a,W)\{\gamma_1(a,W) - \gamma_2(a,W)\}^2]\right\}^{1/2}
\end{align*}
We also define $B_{\varepsilon}(a_0)$ as the closed ball of radius $\varepsilon$ centered at $a_0$. We then state the final two conditions concerning the rate of convergence of the nuisance estimators and properties of the true distribution $P_0$. These conditions are specific to a value $a_0$ because they will be used in the pointwise result.
\begin{enumerate}[label=\textbf{(A\arabic*)},leftmargin=2cm]
\setcounter{enumi}{3}
    \item \label{cond:doubly_robust} There exist $\mu_\infty \in \s{F}_\mu$ and $g_\infty \in \s{F}_g$, $\delta_1 > 0$, and subsets $\s{S}_1, \s{S}_2$ and $\s{S}_3$ of $B_{\delta_1}(a_0)\times \s{W}$ such that $\s{S}_1 \cup \s{S}_2 \cup \s{S}_3 = B_{\delta_1}(a_0) \times \s{W}$ and:
        \begin{enumerate}
            \item $\mu_\infty(a, w) = \mu_0(a, w)$ for all $(a, w) \in \s{S}_1 \cup \s{S}_3$ and $g_\infty(a, w) = g_0(a, w)$ for all $(a,  w) \in \s{S}_2 \cup \s{S}_3$;
            \item $d(\mu_n,\mu_\infty; B_{\delta_1}(a_0),\s{S}_1) = \fasterthan\left( \{nh\}^{-1/2}\right)$, and $d(g_n,g_\infty; B_{\delta_1}(a_0),\s{S}_1) = \fasterthan(1)$;
            \item $d(g_n,g_\infty; B(a_0; \delta_1),\s{S}_2)\} = \fasterthan\left( \{nh\}^{-1/2}\right)$, and $d(\mu_n,\mu_\infty; B(a_0; \delta_1),\s{S}_2) = \fasterthan(1)$; and
            \item $d(\mu_n,\mu_\infty; B_{\delta_1}(a_0),\s{S}_3)d(g_n,g_\infty; B_{\delta_1}(a_0),\s{S}_3) = \fasterthan\left( \{nh\}^{-1/2}\right)$.
        \end{enumerate}
    \item \label{cond:cont_density} It holds that:
    \begin{enumerate}
        \item $\theta_0$ is twice continuously differentiable on $B_{\delta_1}(a_0)$;
        \item $f_0$ is positive and Lipschitz continuous on $B_{\delta_1}(a_0)$; 
        \item there exist $\delta_2 > 0$ and $C_6 < \infty$ such that $E_0[|Y|^{2+\delta_2} \mid A = a, W = w] \leq C_6$ for all $a \in B_{\delta_1}(a_0)$ and $P_0$-almost every $w$ and $E_0[ |Y|^4] < \infty$; and 
        \item $a \mapsto \sigma_0^2(a) \coloneqq E_0\left[ \{\xi_\infty(Y,A,W) - \theta_0(A)\}^2 \mid A = a\right]$ is bounded and continuous on $B_{\delta_1}(a_0)$, where $\xi_\infty \coloneqq \xi_{\mu_\infty, g_\infty, Q_0}$ is the limiting pseudo-outcome.
    \end{enumerate}
\end{enumerate}
Finally, we define the limiting influence function
\begin{align*}
    \phi_{\infty,a_0}^*&: (y,a,w)\mapsto\Gamma_{0,a_0}(a) \xi_\infty(y,a,w) - \gamma_{0,a_0}(a) \\
        &\qquad + \int\Gamma_{0,a_0}(\bar{a}) \left\{ \mu_\infty(\bar{a},w) - \int \mu_\infty(\bar{a}, \bar{w}) \, dQ_0(\bar{w}) \right\} \, dF_0(\bar{a})
\end{align*}
\rev{for $\Gamma_{P_0,a_0} = \Gamma_{0,a_0}$ and $\gamma_{P_0,a_0} = \gamma_{0,a_0}$. We also define our estimator $\phi_{n,a_0}^*$ of $\phi_{\infty, a_0}^*$ as
\begin{align*}
    \phi_{n,a_0}^*&: (y,a,w)\mapsto\Gamma_{n,a_0}(a) \xi_n(y,a,w) - \gamma_{n,a_0}(a) \\
        &\qquad + \int\Gamma_{n,a_0}(\bar{a}) \left\{ \mu_n(\bar{a},w) - \int \mu_n(\bar{a}, \bar{w}) \, dQ_n(\bar{w}) \right\} \, dF_n(\bar{a}), \text{ where} \\
     \gamma_{n,a_0}(a) &:= e_1^T\b{D}^{-1}_{n, h, a_0,1}w_{h,a_0, 1}(a) K_{h, a_0}(a)w^T_{h,a_0, 1}(a)\b{D}^{-1}_{n, h, a_0,1} \d{P}_n\left( w_{h,a_0, 1} K_{h, a_0} \xi_n \right) \,  \\
        &\qquad- e_3^Tc_{n,h,a_0,2} (h/b)^2 \b{D}_{n, b,a_0,2}^{-1} w_{b,a_0,2}(a) K_{b,a_0}(a)w^T_{b,a_0, 2}(a)\b{D}^{-1}_{n, b, a_0,2} \d{P}_n \left( w_{b,a_0, 2} K_{b, a_0} \xi_n\right)\\
        &\qquad + e_1^T (h/b)^2\b{D}^{-1}_{n,h, a_0,1}\left[ \tilde{w}_{h, a_0, 1}(a)  - w_{h,a_0, 1}(a) w^T_{h,a_0, 1}(a) \b{D}^{-1}_{n,h, a_0,1} \d{P}_n\left( \tilde{w}_{h, a_0, 1} K_{h,a_0} \right)\right] K_{h,a_0}(a)  \\
        &\qquad\qquad \times e_3^T\b{D}_{n, b,a_0,2}^{-1} \d{P}_n( w_{b,a_0,2} K_{b,a_0} \xi_n)
\end{align*}
Our variance estimator is then given by $\sigma_{n}^2(a_0) := h\d{P}_n (\phi_{n,a_0}^*)^2$. We note $\phi_{n,a_0}^*$ differs from the plug-in estimator $\phi_{n,a_0}^\circ$ in that $\gamma_{n,a_0}$ uses $\xi_n$ rather than $\mu_n$. We use $\phi_{n,a_0}^*$ rather than $\phi_{n,a_0}^\circ$ for the variance estimator because it is a better estimator when $\mu_n$ is inconsistent, so that $\mu_\infty \neq \mu_0$, due to the appearance of $\gamma_{0,a_0}$ in $\phi_{\infty,a_0}^*$.}

Under the five conditions defined above, we have the following result concerning the pointwise asymptotics of our estimator.
\begin{theorem}\label{thm:pointwise}
If~\ref{cond:bounded_K}--\ref{cond:cont_density} hold, then
$\theta_n^{DB}(a_0) - \theta_0(a_0) = \d{P}_n \phi_{\infty,a_0}^* + \fasterthan\left(\{nh\}^{-1/2} + h^2\right),$ 
and $(nh)^{1/2}\d{P}_n \phi_{\infty, a_0}^*\indist N\left(0, V_{K,\tau} f_0(a_0)^{-1}\sigma_0^2(a_0)\right)$, where 
\begin{align*}
V_{K,\tau} &= \int \left\{ K(u) - \tau^3 c_2\frac{(\tau u)^2 - c_2 }{c_4 - c_2^2}  K(\tau u)\right\}^2 \, du \\
&= c_0^* - 2\tau^3 c_2 \frac{\tau^2 c_{2,\tau}^*- c_2c_{0,\tau}^* }{c_4 - c_2^2} + \tau^5 c_2^2 \frac{ c_4^*-2c_2 c_2^* + c_2^2 c_0^*}{\left( c_4 - c_2\right)^2}
\end{align*}
for $c_j \coloneqq \int u^j K(u)\, du$, $c^*_j \coloneqq \int u^j K^2(u)\, du$ and $c^*_{j, \tau}\coloneqq \int u^j K(u)K(\tau u)\, du$. Hence, if $nh^5 = \boundeddet(1)$, then $(nh)^{1/2} \left[ \theta_{n}^{DB}(a_0) - \theta_0(a_0) \right]$ converges in distribution to this same limit. Furthermore, \rev{$\sigma_{n}^2(a_0) \inprob V_{K,\tau} f_0(a_0)^{-1}\sigma_0^2(a_0)$,} so $(nh)^{1/2} \left[ \theta_{n}^{DB}(a_0) - \theta_0(a_0) \right] / \sigma_{n}(a_0) \indist N(0,1)$.
\end{theorem}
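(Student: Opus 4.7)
The plan is to build on the one-step representation $\theta_{n,h,b}(a_0) = \int \mu_n(a_0,w)\,dQ_n(w) + \d{P}_n \phi^*_{n,h,b,a_0}$ noted just after Lemma~\ref{lm:influence_function}, together with the von Mises expansion of a one-step estimator of $\theta_{0,h,b}(a_0)$. Starting from
\begin{align*}
\theta_{n,h,b}(a_0) - \theta_0(a_0) = \d{P}_n \phi^*_{\infty,h,b,a_0} + [\theta_{0,h,b}(a_0) - \theta_0(a_0)] + (\d{P}_n - P_0)[\phi^*_{n,h,b,a_0} - \phi^*_{\infty,h,b,a_0}] + R_n^{\mathrm{DR}},
\end{align*}
where $R_n^{\mathrm{DR}}$ is the usual second-order doubly robust remainder produced by combining the plug-in drift with $P_0\phi^*_{n,h,b,a_0}$, the linearization claim reduces to showing the last three terms are $\fasterthan(\{nh\}^{-1/2}+h^2)$; the CLT and studentization then follow routinely.

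The smoothing bias is handled by the design of the bias correction: a second-order Taylor expansion of $\theta_0$ at $a_0$, combined with the kernel-moment identities $\int u^j K(u)\,du = c_j$ and the limits $h\b{D}_{0,h,a_0,1}\to f_0(a_0)\operatorname{diag}(1,c_2)$ and its cubic analog (both via A5(b)), makes the leading $\tfrac12 h^2 c_2\theta_0^{(2)}(a_0)$ of the local linear piece cancel against the bias-correction piece, leaving only an $\fasterthandet(h^2)$ modulus-of-continuity contribution from $\theta_0^{(2)}$ under A5(a). For $R_n^{\mathrm{DR}}$, the doubly robust structure of $\xi_{\mu,g,Q}$ makes the integrand proportional to $\Gamma_{0,h,b,a_0}(a)\{\mu_n-\mu_\infty\}(a,w)\{g_n-g_\infty\}(a,w)$; since $\Gamma_{0,h,b,a_0}$ is supported on $B_{\max\{h,b\}}(a_0)\times\s{W}$ and scales as $h^{-1}$, partitioning along $\s{S}_1,\s{S}_2,\s{S}_3$ and applying Cauchy--Schwarz converts the rates in A4 directly into $R_n^{\mathrm{DR}}=\fasterthan(\{nh\}^{-1/2})$.

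For the empirical process drift, I would apply a uniform-entropy maximal inequality to the shrinking class of differences $\phi^*_{(\mu,g,Q_n,F_n),h,b,a_0}-\phi^*_{\infty,h,b,a_0}$. Condition A1 forces $K_{h,a_0}$ into a VC-subgraph-type class, condition A3 gives polynomial uniform entropy for $\s{F}_\mu$ and $\s{F}_g$, and stability of entropy under sums, products, and Lipschitz maps transfers this to the full class; the kernel support of radius $\max\{h,b\}$ shrinks both the $L_2(P_0)$ envelope and the diameter, so A4 is exactly what is needed to deliver $\fasterthan(\{nh\}^{-1/2})$. With the linearization in hand, Lyapunov's CLT applied to the triangular array $\{h^{1/2}\phi^*_{\infty,h,b,a_0}(O_i)\}$ yields asymptotic normality, where the variance calculation is a $u=(a-a_0)/h$ change of variables producing the effective kernel $u\mapsto K(u)-\tau^3 c_2[(\tau u)^2-c_2]/(c_4-c_2^2)\,K(\tau u)$ and hence $V_{K,\tau}$, with the prefactor $\sigma_0^2(a_0)/f_0(a_0)$ coming from A5(d). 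The $nh^5=\boundeddet(1)$ statement absorbs the $\fasterthandet(h^2)$ bias, and studentization follows from $\sigma_{n,h,b}^2(a_0)\inprob V_{K,\tau}f_0(a_0)^{-1}\sigma_0^2(a_0)$, proved by the same kernel-moment limits applied to $h\d{P}_n(\phi^*_{\infty,h,b,a_0})^2$ together with a Glivenko--Cantelli-type replacement of $\phi^*_n$ by $\phi^*_\infty$.

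The principal obstacle I expect is the empirical process drift at the aggressive rate $\fasterthan(\{nh\}^{-1/2})$: the $h^{-1}$ scale of $\Gamma_{0,h,b,a_0}$ inflates both the envelope and the intrinsic $L_2$ diameter of the nuisance-difference class by a factor of $h^{-1/2}$ relative to a non-local problem, and meeting the $\fasterthan(\{nh\}^{-1/2})$ target is precisely what forces the faster-than-$\fasterthan(1)$ rates in A4 rather than mere consistency. The $\s{S}_1/\s{S}_2/\s{S}_3$ bookkeeping inside $R_n^{\mathrm{DR}}$ is the mechanism by which a product of $\fasterthan(1)$ rates suffices on $\s{S}_3$, which is where the doubly robust structure buys the flexibility needed for data-adaptive nuisance estimation.
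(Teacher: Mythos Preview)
Your high-level decomposition is exactly the paper's starting point (the proof of their Lemma~\ref{lm:first-order-decomposition} begins from the same one-step identity), and your treatment of the smoothing bias, the Lyapunov CLT with the change-of-variables variance computation, and the studentization all match the paper's route.

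Where your sketch is too coarse is in the two remainders $(\d{P}_n-P_0)[\phi^*_{n,h,b,a_0}-\phi^*_{\infty,h,b,a_0}]$ and $R_n^{\mathrm{DR}}$. The estimated influence function $\phi^*_{n,h,b,a_0}$ depends on the data not only through $(\mu_n,g_n)$ but also through $\Gamma_{n,h,b,a_0}$ (via the random matrices $\b{D}_{n,h,a_0,j}^{-1}$), through $F_n$, and through $Q_n$. Two consequences matter. First, the integrated piece $w\mapsto\int\Gamma_{n,h,b,a_0}(\bar a)\mu_n(\bar a,w)\,dF_n(\bar a)$ is already an empirical average in $\bar a$, so applying $\d{P}_n$ to it produces a genuine V-statistic; a uniform-entropy maximal inequality over a class indexed by $(\mu,g)$ alone will not control this. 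The paper isolates it as a separate remainder $R_5=\iint\Gamma_{n,h,b,a_0}\mu_n\,d(Q_n-Q_0)\,d(F_n-F_0)$ and bounds it via a U-process maximal inequality (their Lemma~\ref{lemma:Uprocess_general}); this is exactly where the stronger entropy exponent $V_\mu<1$ in A3(b), rather than merely $V_\mu<2$, is consumed. Second, the presence of $\Gamma_n$ rather than $\Gamma_0$ forces a further split: a piece with $\Gamma_0$ fixed ($R_2$, handled by asymptotic equicontinuity much as you describe), a piece carrying $\Gamma_n-\Gamma_0$ ($R_3$, controlled by the $O_p(\{nh\}^{-1/2})$ rate of $\b{D}_n^{-1}-\b{D}_0^{-1}$), and a second-order matrix remainder $R_6$. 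Your claim that $R_n^{\mathrm{DR}}$ reduces to a $\{\mu_n-\mu_0\}\{g_n-g_0\}$ product (their $R_4$) is therefore an oversimplification: the extra data-dependence through $(\Gamma_n,F_n,Q_n)$ does not vanish in the one-step algebra and must be tracked separately.
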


The two most crucial features of Theorem~\ref{thm:pointwise} are that the estimator is centered around $\theta_0(a_0)$, and that the conditions permit the bandwidth to be selected at the optimal rate for estimation. \rev{In particular, the final statement of Theorem~\ref{thm:pointwise} implies that the pointwise $(1-\alpha)$-level  confidence interval given by
\begin{equation}
    \left[\ell_n(a_0), u_n(a_0)\right] := \left[\theta_{n}^{DB}(a_0) - (nh)^{-1/2}q_{1-\alpha/2} \sigma_{n}(a_0), \theta_{n}^{DB}(a_0) + (nh)^{-1/2}q_{1-\alpha/2} \sigma_{n}(a_0)\right] \label{eq:ci}
\end{equation}
has asymptotic coverage level $1-\alpha$ for $\theta_0(a_0)$ without undersmoothing, where $q_p$ is the $p$th quantile of a standard normal distribution. That is, $P_0(\theta_0(a_0) \in [\ell_n(a_0), u_n(a_0)]) \longrightarrow 1-\alpha$.}
The first statement of Theorem~\ref{thm:pointwise} resembles asymptotic linearity but differs in that the influence function \rev{$\phi_{\infty,a_0}^*=\phi_{\infty, h, b,a_0}^*$} changes with $n$ through $h$ and $b$. Nevertheless, the result is useful for suggesting a natural variance estimator and in revealing the process driving the first-order behavior of the estimator.

The limiting variance of our estimator is the same as that of the local linear estimator proposed by \cite{Kennedy2016ptwisecte} up to the constant $V_{K,\tau}$.  When $\tau = 0$, so that $h/ b \longrightarrow 0$, $V_{k,0} = \int K^2 = c_2^*$, which is the same as the constant in \cite{Kennedy2016ptwisecte}. Hence, in this case, the bias correction has no impact on the asymptotic variance of the estimator. \rev{However, we note that identifying the optimal rate of convergence of $h/b$ to 0 requires assuming additional smoothness of $\theta_0$.} When $\tau > 0$, the asymptotic variance of our estimator is a constant factor larger than that of the local linear estimator. When $\tau = 1$, the constant simplifies to 
\[V_{K,1} = \frac{c_0^* c_4^2 - 2c_2 c_4 c_2^* + c_2^2 c_4^*}{(c_4 - c_2^2)^2}.\]
Therefore, if $\tau > 0$, our debiased estimator asymptotically reduces bias at the expense of variance. For the Epanechnikov kernel, $V_{K,1} = 1.25$, while $V_{K,0} = 0.6$. Hence, debiasing approximately doubles the asymptotic variance in this case. However, our variance estimator is consistent even when the bias estimator contributes to the asymptotic variance. \rev{We also note that our variance estimator is not a plug-in estimator of the asymptotic variance established in Theorem~\ref{thm:pointwise}, but is instead based on the estimated influence function of the smoothed and debiased parameter. This is analogous to the fixed-$n$ variance calculations of \cite{Calonico2018}. As in \cite{Calonico2018}, we expect this to improve the finite-sample coverage of our confidence intervals. This is explored more in numerical studies in Section~\ref{section:numerical}.}

We note that we can decompose the linear term as
\begin{align*}
   \d{P}_n\phi_{\infty, a_0}^* 
   &= \d{P}_n \left\{ \Gamma_{0,a_0} (\xi_\infty - \theta_0)\right\} +\d{P}_n\left\{\Gamma_{0,a_0} \theta_0-\gamma_{0,a_0} \right\} + \d{P}_n\left\{ \int\Gamma_{0,a_0} \left( \mu_\infty - \int \mu_\infty \, dQ_0 \right) \, dF_0 \right\}.
\end{align*}
In the proof of Theorem~\ref{thm:pointwise}, we show that the second and third terms in the above decomposition are $\fasterthan(\{n/h\}^{-1/2})$ and $\bounded(n^{-1/2})$, respectively. Both of these are $\fasterthan(\{nh\}^{-1/2})$, so under the conditions of Theorem~\ref{thm:pointwise}, the simpler representation \rev{$\theta_n^{DB}(a_0) - \theta_0(a_0) = \d{P}_n\left\{ \Gamma_{0,a_0} (\xi_\infty - \theta_0)\right\}   + \fasterthan(\{nh\}^{-1/2})$} holds, and therefore only the first component of \rev{$\phi_{\infty, a_0}^*$} contributes to the limit distribution of the estimator. This suggests \rev{$h\d{P}_n\left\{ \Gamma_{n,a_0} (\xi_n - \theta_n^{DB})\right\}^2$} as an alternative variance estimator. While this variance estimator would still yield asymptotically valid confidence intervals, including the additional asymptotically negligible terms in the variance estimator better captures the finite-sample behavior of the estimator. 

We now discuss the conditions used in Theorem~\ref{thm:pointwise}. Condition~\ref{cond:bounded_K} is a standard condition for kernel smoothing and is satisfied for many common kernel functions. The bounded support condition is technically convenient but may be avoidable. The subgraph requirement of the condition is only used for uniform inference but is relatively mild. We impose this condition so that the class of functions  $\left\{ a\mapsto K\left(\frac{a-a_0}{h}\right) : a_0 \in \d{R}\right\}$ is of VC-type \citep{gine2002}. The condition is satisfied in particular if $K$ is of the form $\phi \circ p$, where $p$ is a polynomial and $\phi$ is a bounded real function of bounded variation, which is the case for many standard kernels including the triangular, Epanechnikov, and truncated Gaussian kernels.

The requirements on $h$ in condition~\ref{cond:bandwidth} are standard in kernel smoothing. They require that the bandwidth goes to zero, so that the estimator properly localizes around $a_0$, but that it goes to zero slower than $n^{-1}$ so that the estimator does not localize too much. In order to ensure that $\fasterthandet(h^2) = \fasterthandet(\{nh\}^{-1/2})$, the convergence in distribution part of the result also requires that $nh^5 = \boundeddet(1)$, which means that $h$ goes to zero at least as fast as $n^{-1/5}$. This permits but does not require undersmoothing. The second part of condition~\ref{cond:bandwidth} requires that the bandwidth $b$ used for estimating the second derivative in the bias correction goes to zero, but that it does not go to zero faster than $h$. 


Condition~\ref{cond:uniform_entropy_nuisances} requires that the nuisance estimators be contained in uniformly bounded function classes, and in the case of $g_n$, that the function class be uniformly bounded away from zero as well. Furthermore, \ref{cond:uniform_entropy_nuisances}(b) restricts the uniform entropy of these function classes. The uniform entropy condition for $\s{F}_g$ is standard in empirical process theory since it guarantees that the uniform entropy integral is finite. The uniform entropy condition for $\s{F}_\mu$ is slightly stronger in order to control empirical U-processes associated with the integrated term $\int \mu_n(a, w)\, dQ_n(w)$ in the estimated pseudo-outcomes. 

Cross-fitting could be used to avoid the entropy condition \ref{cond:uniform_entropy_nuisances}(b)  \citep{vanderLaan2018, westling2018causal, semenova2021debiased, colangelo2020double}. \rev{Estimators based on cross-fitting have the same asymptotic distribution as those based on the full data, though their finite-sample variance is often larger because the nuisance estimators are based on a smaller training sample.} However, cross-fitting carries a higher computational cost if the nuisance estimators are estimated for each training set, and the nuisance estimators use a smaller training set. Hence, it is of interest to determine whether the results can be obtained under entropy conditions. Furthermore, the conditions for both classes notably do not restrict the nuisance estimator to VC classes; hence, our conditions permit large function classes typically associated with data-adaptive estimators.

Condition~\ref{cond:doubly_robust} is a doubly-robust condition similar to, but slightly more flexible than, that required by \cite{Kennedy2016ptwisecte}. It requires that at least one of $\mu_n$ and $g_n$ be consistent for $\mu_0$ or $g_0$, respectively, in a neighborhood of $a_0$ and for almost all covariate values. It also requires that the product of the rates of convergence of $\mu_n - \mu_0$ and $g_n - g_0$ be faster than $(nh)^{-1/2}$ in order to ensure negligibility of a second-order remainder term. For points at which only one of the nuisance estimators is consistent, that estimator must achieve this rate alone. \rev{Importantly, this assumption does not require $\mu_n$ or $g_n$ to be estimated using parametric models; the required rate of convergence can be attained when $\mu_n$ and $g_n$ are data-adaptive estimators.} If the covariates are low-dimensional, these rates can be guaranteed by many nonparametric estimators. For moderate or high-dimensional covariates, the nuisance estimators need to take advantage of additional smoothness or structure of the true nuisance parameters to ensure these rates of convergence are attainable \citep{bonvini2022fast}. In practice, we recommend leveraging multiple candidate estimators in an ensemble estimator such as SuperLearner \citep{vdlSuperLearner}.

Finally, condition~\ref{cond:cont_density} imposes smoothness conditions on features of the true distribution. Most importantly, \ref{cond:cont_density}(a) requires that $\theta_0$ be twice continuously differentiable in a neighborhood of $a_0$. Hence, as in \cite{Calonico2018}, our estimator does not require additional smoothness to yield asymptotically valid inference. Intuition for how this is possible, despite using a third-order local polynomial estimator for the second derivative estimator, was provided in Section~\ref{section:method}. The bandwidth conditions permit the estimator to obtain the optimal rate of convergence relative to its assumed smoothness. Conditions~\ref{cond:cont_density}(c) and~\ref{cond:cont_density}(d) require that $Y$ possesses four bounded moments, and that the conditional distribution of $Y$ given $A$ and $W$ possesses a uniformly bounded $k$th moment for some $k > 2$. They do not require that $Y$ be uniformly bounded.

\rev{We note that conditions~\ref{cond:doubly_robust} and~\ref{cond:cont_density} assume that $a_0$ is in the interior of the support of $F_0$. If $a_0$ is on the boundary of the support, but $f_0(a_0) > 0$ and~\ref{cond:doubly_robust} and~\ref{cond:cont_density} hold in a neighborhood of $a_0$ intersected with the support of $F_0$, then Theorem~\ref{thm:pointwise} continues to hold, except that the constant $V_{K, \tau}$ is different. However, since our variance estimator is based on the influence function, and the influence function is valid for boundary points, our variance estimator is also consistent and the resulting confidence intervals have valid asymptotic coverage for boundary points. This is analogous to the validity of local linear estimators at the boundary (see, e.g., Section 3.2.5 of \citealp{fan1996} and \citealp{Calonico2018}).}

\subsection{Inference on causal effects}\label{sec:effects}

The pointwise results of Theorem~\ref{thm:pointwise} allow us to construct asymptotically valid confidence intervals for $\theta_0(a_0)$ for any $a_0$ for which the conditions hold. However, in many cases, it is also of interest to draw simultaneous inference on a finite collection of values $\{\theta_0(a_1), \dotsc, \theta_0(a_m)\}$. This can be used, for instance, to construct confidence intervals for causal effects of the form $\theta_0(a_2) - \theta_0(a_1)$. For this, joint convergence of the estimator at several points is necessary. The next result demonstrates joint convergence in distribution of our estimator at a finite collection of points.

\begin{theorem}\label{thm:multivariate}
If~\ref{cond:bounded_K}--\ref{cond:cont_density} hold for each $a_0$ in the finite and fixed collection $\{a_1, \dotsc, a_m\}$ and $nh^5 = \boundeddet(1)$, then 
\[ (nh)^{1/2} \begin{pmatrix}\theta_{n}^{DB}(a_1) -\theta_0(a_1)\\ \vdots \\ \theta_{n}^{DB}(a_m) - \theta_0(a_m) \end{pmatrix}\]
converges in distribution to a mean-zero multivariate normal distribution with diagonal covariance matrix and variances as defined in Theorem~\ref{thm:pointwise}.
\end{theorem}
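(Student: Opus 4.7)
The plan is to combine the linearization from Theorem~\ref{thm:pointwise} with the Cram\'er--Wold device and a triangular-array central limit theorem. First, I would apply Theorem~\ref{thm:pointwise} at each $a_j$ to obtain $\theta_{n,h,b}(a_j) - \theta_0(a_j) = \d{P}_n \phi_{\infty, h, b, a_j}^* + \fasterthan(\{nh\}^{-1/2})$, where the deterministic $\boundeddet(h^2)$ bias term is absorbed into the $\fasterthan$ because $nh^5 = \boundeddet(1)$. Since $m$ is fixed, these $m$ pointwise remainders combine into a joint $\fasterthan(\{nh\}^{-1/2})$ remainder in $\d{R}^m$, so it suffices to establish joint asymptotic normality of $(nh)^{1/2}(\d{P}_n \phi_{\infty,h,b,a_1}^*, \ldots, \d{P}_n \phi_{\infty, h, b, a_m}^*)^T$. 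By Cram\'er--Wold, this reduces to showing that, for every fixed $(t_1, \ldots, t_m) \in \d{R}^m$, the scalar sum $(nh)^{1/2} \sum_{j=1}^m t_j \d{P}_n \phi_{\infty, h, b, a_j}^*$ is asymptotically mean-zero normal with variance $\sum_{j=1}^m t_j^2 V_{K,\tau} f_0(a_j)^{-1} \sigma_0^2(a_j)$.

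Using the decomposition of $\phi_{\infty, h, b, a_j}^*$ exhibited in the discussion following Theorem~\ref{thm:pointwise}, the $\gamma_{0,h,b,a_j}$ term and the integrated $\mu_\infty$ term each contribute only $\fasterthan(\{nh\}^{-1/2})$, so I only need to study $(nh)^{1/2} \d{P}_n \sum_j t_j \Gamma_{0, h, b, a_j} (\xi_\infty - \theta_0)$. The key structural observation is that, because $K$ has support in $[-1,1]$ by condition~\ref{cond:bounded_K}, the function $\Gamma_{0,h,b,a_j}$ vanishes outside $[a_j - \max(h,b), a_j + \max(h,b)]$. Since the $a_j$ are distinct and $h, b \longrightarrow 0$ by~\ref{cond:bandwidth}, for all sufficiently large $n$ the supports of $\Gamma_{0, h, b, a_j}$ and $\Gamma_{0, h, b, a_k}$ are disjoint whenever $j \ne k$, so $\Gamma_{0, h, b, a_j}(A)\,\Gamma_{0, h, b, a_k}(A) = 0$ almost surely. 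Combined with the conditional-mean-zero property $E_0[\xi_\infty(Y,A,W) - \theta_0(A) \mid A] = 0$ implied by~\ref{cond:doubly_robust}, this forces every cross-covariance to vanish for $n$ large, leaving only the diagonal terms; each of these equals the marginal variance already computed in the proof of Theorem~\ref{thm:pointwise}.

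Finally, I would apply the Lindeberg--Feller CLT to the triangular array $Z_{n,i} := (h/n)^{1/2} \sum_{j=1}^m t_j \Gamma_{0, h, b, a_j}(A_i)[\xi_\infty(O_i) - \theta_0(A_i)]$ for $i = 1, \ldots, n$. The Lindeberg condition carries over from the scalar argument in the proof of Theorem~\ref{thm:pointwise}: for $n$ large enough, at most one of the $m$ summands in $Z_{n,i}$ is nonzero for any given $A_i$, so the $(2 + \delta)$-moment control from~\ref{cond:cont_density}(c)--(d) applies termwise and a union bound inflates the Lindeberg sum by at most a factor of $m$. I expect the main obstacle to be mostly bookkeeping rather than new technique, since Theorem~\ref{thm:pointwise} already does the heavy lifting at each $a_j$; the genuine new content is the vanishing cross-covariance argument, which rests entirely on the disjoint-support observation above, together with verifying that the required threshold on $n$ depends only on $\min_{j \ne k} |a_j - a_k|$ and the bandwidth rates.
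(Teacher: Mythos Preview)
Your proposal is correct and mirrors the paper's proof almost exactly: the paper likewise invokes Theorem~\ref{thm:pointwise} componentwise to reduce to the joint behavior of $(nh)^{1/2}\d{P}_n(\phi_{\infty,h,b,a_1}^*,\ldots,\phi_{\infty,h,b,a_m}^*)^T$, applies Cram\'er--Wold, strips off the $\gamma_{0,h,b,a_j}$ and integrated-$\mu_\infty$ pieces as $\fasterthan(\{nh\}^{-1/2})$, and then uses a triangular-array CLT (Lyapunov rather than Lindeberg--Feller, but this is immaterial) together with the disjoint-support observation $\max\{h,b\}<\tfrac{1}{2}\min_{j\ne k}|a_j-a_k|$ to kill the cross terms. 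Your handling of the Lindeberg condition via the ``at most one nonzero summand'' device is a slightly cleaner way to say what the paper does with the triangle inequality.
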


Theorem~\ref{thm:multivariate} demonstrates that the estimator is asymptotically independent at any two distinct points. \rev{This is because the covariance of the influence functions provided in Lemma~\ref{lm:influence_function}, i.e., $h P_0(\phi_{\infty,a_1}^* \phi_{\infty, a_2}^*)$ converges to zero as $h \longrightarrow 0$ for any $a_1 \neq a_2$}. Intuitively, this is due to the fact that the estimator localizes around each point as the sample size increases. Although this justifies estimating the covariance matrix using a diagonal matrix with variance estimators \rev{$\sigma_{n}^2(a_j)$} on the diagonal, we recommend instead estimating the $(j,k)$ element of the covariance matrix using the estimator \rev{$h \d{P}_n ( \phi_{n,a_j}^*\phi_{n,a_k}^*)$}. This is reminiscent of the influence-function based estimator of the covariance between two or more asymptotically linear estimators. Though the true covariance is going to zero, it is not necessarily zero in finite samples, and we expect this estimator to better capture the finite-sample covariance. In numerical studies, we demonstrate that this estimator can yield substantial finite-sample improvements over an estimator that utilizes asymptotic independence. To obtain inference on $\nu(\theta_0(a_1), \dotsc, \theta_0(a_m))$ for a differentiable function $\nu :\d{R}^m \to \d{R}$, we can combine this covariance estimator with the delta method. Hence, Theorem~\ref{thm:multivariate} enables us to perform asymptotically valid inference on effects of the form $\theta_0(a_1) - \theta_0(a_2)$ without undersmoothing.  

\rev{As with Theorem~\ref{thm:pointwise}, the conditions of Theorem~\ref{thm:multivariate} assume that $\{a_1, \dotsc, a_m\}$ are in the interior of support of $A$, but the result still holds on the boundary, though with different asymptotic variances. Furthermore, our variance estimator is consistent and the resulting confidence intervals have valid asymptotic coverage for causal effects involving boundary points.}

\subsection{Uniform inference}\label{sec:uniform}

We now turn to the uniform behavior of the estimator over a compact set $\s{A}_0$. Our goal is to construct an asymptotically valid uniform confidence band for $\theta_0$ over $\s{A}_0$, \rev{by which we mean random functions $\ell_n^\circ, u_n^\circ : \s{A}_0 \mapsto \d{R}$ such that $P_0( \theta_0(a_0) \in [\ell_n^\circ(a_0), u_n^\circ(a_0)]$ for all $a_0 \in \s{A}_0) \longrightarrow 1-\alpha$.} A standard approach to this problem would be to demonstrate that $\{(nh)^{1/2}[\theta_{n}^{DB}(a_0)  - \theta_0(a_0)] : a_0 \in \s{A}_0\}$ converges weakly as a process to a tight limit in the space $\ell^\infty(\s{A}_0)$ of uniformly bounded functions on $\s{A}_0$ equipped with the supremum norm. However, by the asymptotic independence of $(nh)^{1/2}[\theta_n^{DB}(a_1)  - \theta_0(a_1)] $ and $(nh)^{1/2}[\theta_n^{DB}(a_2)  - \theta_0(a_2)]$ demonstrated in Theorem~\ref{thm:multivariate}, the only possible limit is a white noise process, which is not tight in $\ell^\infty(\s{A}_0)$. \citet{stupfler2016weak} explored this phenomenon in more depth for kernel density estimators. Instead, we will use the finite-sample approximation theory developed in \citet{Chernozhukov_2014}. Using this theory and the first-order representation \rev{$\theta_n^{DB}(a_0) - \theta_{0}(a_0) = \d{P}_n \phi_{\infty,a_0}^* + \fasterthan(\{nh\}^{-1/2})$}, we approximate the distribution of \rev{$\sup_{\s{A}_0} (nh)^{1/2}| \theta_n^{DB} - \theta_0|/ \sigma_{n}$} with that of $\max_{ \s{A}_n} |Z_{n,h,b}|$, where $\s{A}_n$ is a finite subset of $\s{A}_0$ and conditional on the data, $Z_{n,h,b}$ is a multivariate Gaussian random vector on $\s{A}_n$ with covariance given by 
\rev{
\[\n{Cov}\left(Z_{n,h,b}(a_1), Z_{n,h,b}(a_2)\right) = h\d{P}_n \left(\phi_{n,a_1}^* \phi_{n,a_2}^*\right) / \left[\sigma_{n}(a_1)\sigma_{n}(a_2)\right].\]}
Defining $t_{1-\alpha,n}$ as the $1-\alpha$ quantile of the conditional distribution of $\max_{\s{A}_n} |Z_{n,h,b}|$ given the data, the lower and upper limits of our asymptotic $(1-\alpha)$-level confidence band for $\theta_0$ over $a_0 \in \s{A}_0$ are then given by \rev{$\ell_n^\circ(a_0) := \theta_n^{DB}(a_0) - (nh)^{-1/2} t_{1-\alpha,n} \sigma_{n}(a_0)$ and $u_n^\circ(a_0) := \theta_n^{DB}(a_0) + (nh)^{-1/2} t_{1-\alpha,n} \sigma_{n}(a_0)$}. Notably, the limits of this confidence band are proportional to the limits of the pointwise confidence interval defined in~\eqref{eq:ci}.

We now introduce additional conditions we will use. For $\delta > 0$, we define $\s{A}_\delta$ as the $\delta$-enlargement of $\s{A}_0$, that is, the set of $a \in \d{R}$ such that there exists $a_0 \in \s{A}_0$ with $|a - a_0| \leq \delta$.
\begin{enumerate}[label=\textbf{(A\arabic*)},leftmargin=2cm]\setcounter{enumi}{5}
    \item \label{cond:unif_nuisance_rate}The constant $V := \max\{V_\mu, V_g\}$, for $V_\mu$ and $V_g$ defined in~\ref{cond:uniform_entropy_nuisances} and the bandwidth $h$  satisfy $n\left[ h / (\log n)\right]^{\frac{2+V}{2-V}} \longrightarrow \infty$ and $nh^3 \longrightarrow \infty$.
    \item \label{cond:unif_doubly_robust} There exist $\mu_\infty \in \s{F}_\mu$, $g_\infty \in \s{F}_g$, $\delta_3 > 0$ and subsets $\s{S}_1', \s{S}_2'$, and $\s{S}_3'$ of $\s{A}_{\delta_3}\times \s{W}$ such that $\s{S}_1' \cup \s{S}_2' \cup \s{S}_3' = \s{A}_{\delta_3}\times \s{W}$ and:
        \begin{enumerate}
            \item $\mu_\infty(a, w) = \mu_0(a, w)$ for all $(a, w) \in \s{S}_1' \cup \s{S}_3'$ and $g_\infty(a, w) = g_0(a, w)$ for all $(a,  w) \in \s{S}_2' \cup \s{S}_3'$;
            \item $d(\mu_n,\mu_\infty; \s{A}_{\delta_3},\s{S}_1') = \fasterthan\left( \{nh \log n\}^{-1/2}\right)$;
            \item $d(g_n,g_\infty; \s{A}_{\delta_3},\s{S}_2')\} = \fasterthan\left( \{nh \log n\}^{-1/2}\right)$;
            \item $d(\mu_n,\mu_\infty; \s{A}_{\delta_3},\s{S}_3')d(g_n,g_\infty; \s{A}_{\delta_3},\s{S}_3') = \fasterthan\left( \{nh \log n\}^{-1/2}\right)$; and
            \item $d(\mu_n,\mu_\infty; \s{A}_{\delta_3},\s{A} \times \s{W})$ and $d(g_n,g_\infty; \s{A}_{\delta_3},\s{A} \times \s{W})$ are both $\fasterthan\left(h^{\frac{V}{2(2-V)}}\{\log n\}^{-\frac{1}{2-V}}\right)$.
        \end{enumerate}
    \item \label{cond:holder_smooth_theta} It holds that:
    \begin{enumerate}
        \item $\theta_0$ is twice continuously differentiable with H\"{o}lder-continuous second derivative on $\s{A}_{\delta_3}$; 
        \item $f_0$ is Lipschitz continuous and bounded away from 0 and $\infty$ on $\s{A}_{\delta_3}$;
        \item $|Y|$ is $P_0$-almost surely bounded; and
        \item  $a \mapsto E_0\left[ \{\xi_\infty(Y,A,W) - \theta_0(A)\}^2 \mid A = a\right]$ is continuous on $\s{A}_{\delta_3}$.
    \end{enumerate}
\end{enumerate}
We now present a result demonstrating the asymptotic validity of our proposed uniform confidence band. We define $\b{O}_n := (O_1, \dotsc, O_n)$ and $\omega_n := \sup_{a_0 \in \s{A}_0} \min_{a \in \s{A}_n} |a - a_0|$ as the mesh of $\s{A}_n$ in $\s{A}_0$.
\begin{theorem}\label{thm:uniform}
If~\ref{cond:bounded_K}--\ref{cond:uniform_entropy_nuisances} and~\ref{cond:unif_nuisance_rate}--\ref{cond:holder_smooth_theta} hold, and $nh^5 = \boundeddet(1)$, then 
\[\sup_{a_0 \in \s{A}_0}\left|\theta_n^{DB}(a_0) - \theta_0(a_0)\right| =  \bounded\left(\{nh / \log n\}^{-1/2}\right)\]
and $\sup_{u,v \in \s{A}_0} |h\d{P}_n (\phi_{n,u}^* \phi_{n,v}^*) - h P_0( \phi_{\infty, u}^* \phi_{\infty, v}^*) | = \fasterthan(1)$, and if $\omega_n = \fasterthandet\left(h^p\right)$ for some $p > 1$ and $mn^d = \boundeddet(1)$ for some $d \in (0, \infty)$ as well, then
\begin{align*}
    \sup_{t \in \d{R}}\left|  P_0 \left(\sup_{a_0 \in \s{A}_0}\{nh\}^{1/2} \left| \frac{\theta_n^{DB}(a_0) - \theta_0(a_0)}{\sigma_{n}(a_0)} \right| \leq t\right) -  P_0 \left(\max_{a_0 \in \s{A}_n}\left| Z_{n, h, b}(a_0)\right| \leq t \mid \b{O}_n\right) \right| = \fasterthan(1).
\end{align*}
\end{theorem}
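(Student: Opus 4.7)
My plan combines four ingredients: (i) a uniform-in-$a_0$ version of the first-order linearization from Theorem~\ref{thm:pointwise}, (ii) the Chernozhukov--Chetverikov--Kato (CCK) Gaussian approximation for suprema of empirical processes indexed by VC-type classes \citep{Chernozhukov_2014}, (iii) a discretization step that replaces $\sup_{\s{A}_0}$ by $\max_{\s{A}_n}$ using the mesh condition, and (iv) a Gaussian comparison argument showing that the conditional law of $\max_{\s{A}_n}|Z_{n,h,b}|$ tracks that of the approximating Gaussian maximum. The theorem then follows by the triangle inequality for Kolmogorov distance.

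First I would establish the uniform linearization
\[\sup_{a_0 \in \s{A}_0} (nh)^{1/2}\left|\theta_{n,h,b}(a_0) - \theta_0(a_0) - \d{P}_n\phi_{\infty,h,b,a_0}^*\right| = \fasterthan\left((\log n)^{-1/2}\right),\]
which is the slack needed to survive the Gaussian approximation error. The bias contribution is controlled by $nh^5 = \boundeddet(1)$ together with the H\"{o}lder smoothness of $\theta_0^{(2)}$ in~\ref{cond:holder_smooth_theta}(a). The doubly-robust cross term is absorbed by~\ref{cond:unif_doubly_robust}, whose $\log n$ penalty on the nuisance rates is precisely what is required after a Cauchy--Schwarz step applied uniformly in $a_0$. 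The remaining empirical process remainders are dominated by maximal inequalities for VC-type classes, using the kernel structure from~\ref{cond:bounded_K}, the entropy bounds in~\ref{cond:uniform_entropy_nuisances}(b), and the bandwidth condition $n[h/(\log n)]^{(2+V)/(2-V)} \longrightarrow \infty$ from~\ref{cond:unif_nuisance_rate}.

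Next I would apply the CCK Gaussian approximation to the studentized empirical process $\{(nh)^{1/2}\d{P}_n\phi_{\infty,h,b,a_0}^*/\sigma_0(a_0) : a_0 \in \s{A}_n\}$, producing a centered Gaussian vector $\t{G}_n$ on $\s{A}_n$ whose covariance matches the true covariance of the process and whose maximum approximates the empirical supremum in Kolmogorov distance. The indexing family is VC-type by~\ref{cond:bounded_K}, the envelope is bounded by~\ref{cond:holder_smooth_theta}(c) and~\ref{cond:uniform_entropy_nuisances}(a), and the CCK rate hypotheses are verified using $nh^3 \longrightarrow \infty$ and the continuity of $\sigma_0^2$ and $f_0$. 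The discretization from $\sup_{\s{A}_0}$ to $\max_{\s{A}_n}$ is negligible: Lipschitz continuity of $K$ gives the pre-limit process a modulus of order $h^{-1}$, and $\omega_n = \fasterthandet(h^p)$ with $p>1$ forces $(nh)^{1/2}\omega_n/h \to 0$. Passing from $\t{G}_n$ to the conditional law of $\max_{\s{A}_n}|Z_{n,h,b}|$ is a Gaussian comparison (Lemma~3.1 of \citealt{Chernozhukov_2014}), which reduces to showing that the estimated correlation matrix agrees with the true one uniformly on $\s{A}_n\times\s{A}_n$ at rate $\fasterthan((\log m)^{-2}) = \fasterthan((\log n)^{-2})$, since $m = \boundeddet(n^d)$. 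This uniform correlation consistency follows from the same VC-type maximal inequalities combined with the uniform nuisance consistency in~\ref{cond:unif_doubly_robust}(e).

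The main obstacle is the uniform linearization. In the pointwise proof, the doubly-robust cross term is handled by Cauchy--Schwarz and a local $L_2$ rate at a single point $a_0$, but the uniform version requires a chaining or peeling argument over $\s{A}_0$ that combines~\ref{cond:unif_doubly_robust}(b)--(d) with the entropy bound~\ref{cond:uniform_entropy_nuisances}(b) into a single uniform estimate. A secondary hurdle is uniform variance consistency: because $\sigma_{n,h,b}^2(a_0)$ is itself a kernel-smoothed quantity of width $h$, obtaining the polylogarithmic slack needed for the Gaussian comparison step is delicate and is precisely what drives the additional $nh^3 \longrightarrow \infty$ requirement in~\ref{cond:unif_nuisance_rate}.
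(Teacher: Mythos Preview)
Your overall architecture—uniform linearization with $\fasterthan((\log n)^{-1/2})$ slack, CCK Gaussian approximation, discretization, and Gaussian comparison via the estimated covariance—matches the paper's route, and you have correctly identified the role of anticoncentration and of the extra rate conditions in~\ref{cond:unif_nuisance_rate} and~\ref{cond:unif_doubly_robust}(e).

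The discretization step, however, does not go through as written. You discretize the \emph{pre-limit} process via a pathwise Lipschitz bound ``of order $h^{-1}$'' and the claim that $(nh)^{1/2}\omega_n/h \to 0$. But under $nh^5 = \boundeddet(1)$ one has $(nh)^{1/2} \lesssim h^{-2}$, so $(nh)^{1/2}\omega_n/h \lesssim h^{-3}\omega_n = \fasterthandet(h^{p-3})$, which tends to zero only when $p>3$, whereas the theorem assumes only $p>1$; and this still ignores the extra $(\log h^{-1})^{1/2}$ penalty coming from anticoncentration. The paper proceeds differently: it first applies the CCK coupling over the full continuum $\s{A}_0$, passing to $\sup_{\s{A}_0}|Z_{\infty,h,b}|$, and only then replaces $\sup_{\s{A}_0}$ by $\max_{\s{A}_n}$ at the \emph{Gaussian} level. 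The key fact is that the intrinsic standard-deviation semimetric of $Z_{\infty,h,b}$ satisfies $\rho_{\infty,h,b}(u,v) \lesssim (h^{-1}|u-v|)^{1/2}$ for $|u-v|\leq 2h$, so Dudley's entropy integral bounds the expected Gaussian oscillation over mesh $\omega_n$ by a constant times $\{h^{-1}\omega_n\log\omega_n^{-1}\}^{1/2}$. Combined with anticoncentration, what is needed is only $h^{-1}\omega_n\,(\log h^{-1})(\log\omega_n^{-1}) \to 0$, and this is exactly what $\omega_n = \fasterthandet(h^p)$ with any $p>1$ delivers. The square-root scaling of the $L_2$ increments, not a pathwise Lipschitz constant, is what buys the weaker mesh condition; if you prefer to discretize the empirical process before applying CCK, you would still need this $L_2$-increment/chaining argument rather than a Lipschitz bound.
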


\rev{Theorem~\ref{thm:uniform} implies that our proposed uniform confidence band has asymptotically valid coverage. This is because by definition, $P_0 \left(\max_{a_0 \in \s{A}_n}\left| Z_{n, h, b}(a_0)\right| \leq t_{1-\alpha, n} \mid \b{O}_n\right) = 1-\alpha$, and combined with Theorem~\ref{thm:uniform} this implies that 
\begin{align*}
    &P_0\left(\theta_0(a_0) \in [\ell_n^\circ(a_0), u_n^\circ(a_0)] \text{ for all }  a_0 \in \s{A}_0\right) \\
    &\qquad= P_0 \left(\sup_{a_0 \in \s{A}_0}\{nh\}^{1/2} \left| \frac{\theta_n^{DB}(a_0) - \theta_0(a_0)}{\sigma_{n}(a_0)} \right| \leq t_{1-\alpha,n}\right) \longrightarrow 1-\alpha.
\end{align*}}
To prove Theorem~\ref{thm:uniform}, we first use the results of \cite{Chernozhukov_2014} to demonstrate that the distribution of $\sup_{a_0 \in \s{A}_0}\left|\mathbb{G}_n h^{1/2}\phi^*_{\infty, a_0} / \sigma_{\infty}(a_0)\right|$ can be approximated by that of $\sup_{\s{A}_0} |Z_{\infty,h,b}|$, where $Z_{\infty,h,b}$ is a mean-zero Gaussian process on $\s{A}_0$ with covariance
\[\n{Cov}( Z_{\infty,h,b}(a_1), Z_{\infty,h,b}(a_2)) := hP_0(\phi^*_{\infty, a_1 }\phi^*_{\infty, a_2}) / [\sigma_{\infty}(a_1)\sigma_{\infty}(a_2)],\] 
where $\sigma_{\infty}^2(a_0) := hP_0(\phi^*_{\infty,a_0})^2$ and $\d{G}_n := n^{1/2} (\d{P}_n - P_0)$. Notably, the results of \cite{Chernozhukov_2014} cover situations where the empirical process is not converging weakly to a tight limit, which is the case for our process. Since we also establish that $\sup_{a_0 \in \s{A}_0}|\theta_n^{DB}(a_0) - \theta_0(a_0) - \d{P}_n \phi_{\infty,a_0}^*| = \fasterthan\left(\{nh \log n\}^{-1/2}\right)$, this means that the distribution of $(nh)^{1/2}\sup_{\s{A}_0}|\theta_n^{DB} - \theta_0|$ can also be approximated by that of $\sup_{\s{A}_0} |Z_{\infty,h,b}|$. Finally, we  use the results of \cite{chernozhukov2015comparison} to demonstrate that the distribution of $\sup_{\s{A}_0} |Z_{\infty,h,b}|$ can be approximated by that of $\max_{\s{A}_n} |Z_{n,h,b}|$.

In order to show that $\sup_{a_0 \in \s{A}_0}|\theta_n^{DB}(a_0) - \theta_0(a_0) - \d{P}_n \phi_{\infty, a_0}^*| = \fasterthan\left(\{nh \log n\}^{-1/2}\right)$, we need faster rates of convergence for the remainder terms in the first-order expansion of our estimator. For some remainder terms, this is straightforward. However, for the empirical process remainder term $(\d{P}_n - P_0)(\phi_{n,a_0}^* - \phi_{\infty, a_0}^*)$, it is a challenging task because the usual method of demonstrating negligibility of this remainder, namely uniform asymptotic equicontinuity, only yields that it is $\fasterthan\left(\{nh\}^{-1/2}\right)$. In order to achieve the extra $\{\log n\}^{-1/2}$ term in the rate, we use the local maximal inequalities of \cite{van2011local}. By assuming rates of convergence for $\mu_n - \mu_\infty$ and $g_n - g_\infty$ in condition~\ref{cond:unif_doubly_robust}(e), we are able to establish a rate of convergence for $P_0(\phi_{n,a_0}^*- \phi_{\infty, a_0}^*)^2$, which then permits the use of the results of \cite{van2011local}. 

We now discuss the additional conditions used in Theorem~\ref{thm:uniform} beyond those used in Theorem~\ref{thm:pointwise}. Condition~\ref{cond:unif_nuisance_rate}(a) further restricts the exponent of the uniform entropy of the nuisance classes, and is used to control the empirical process remainder using local maximal inequalities as discussed above. If $h$ is chosen at the optimal rate $n^{-1/5}$, then the condition is satisfied if $V_\mu, V_g \in (0,4/3)$. If undersmoothing is employed, then the requirement is stricter. The requirement that $nh^3 \longrightarrow \infty$ can be relaxed somewhat, especially if $V_\mu$ is much less than 1, but not beyond $nh^2 \longrightarrow \infty$. Hence, severe undersmoothing in addition to debiasing is possible without sample-splitting if the nuisance estimators fall in smaller classes, though we expect this is of less interest because a main point of debiasing is to avoid undersmoothing.

Condition~\ref{cond:unif_doubly_robust} is a uniform version of the doubly-robust condition~\ref{cond:doubly_robust}. The rates of convergence of the nuisance estimators on the sets on which they are consistent in parts (b), (c), and (d) are slightly faster in order to establish the needed rate of convergence of the second-order remainder term. Part (e) also requires a  rate of convergence of the nuisance estimators towards their limiting objects. As discussed above, this is also to enable the use of local maximal inequalities. If $h$ is chosen at the optimal rate $n^{-1/5}$ and $V_g < 4/3$ as required by \ref{cond:unif_nuisance_rate}(a), then part (e) is satisfied if $d(\mu_n, \mu_\infty; A_{\delta_3}, \s{A} \times \s{W})$ and $d(g_n, g_\infty; A_{\delta_3}, \s{A} \times \s{W})$ are each $\fasterthan( n^{-1/5})$.

Condition~\ref{cond:holder_smooth_theta} places conditions on the true distribution that are stronger than those required in condition~\ref{cond:cont_density} for pointwise convergence. The assumed smoothness of $\theta_0$ and $f_0$ is standard in the literature (see, e.g., Assumption 3 in \citealp{Calonico2018} and R1 in \citealp{Cheng_2019}), and we notably still do not require more than two derivatives for $\theta_0$. Part (c) requires $Y$ to be uniformly bounded, which is again used to apply the local maximal inequality. However, this could be relaxed to $|Y|$ having an almost surely finite conditional $q$th moment for $q > 4$ at the expense of stronger and more complicated restrictions on the complexity and rate of convergence of the nuisance estimators.

Theorem~\ref{thm:uniform} also requires that the mesh $\omega_n$ of $\s{A}_n$ decrease faster than $h^p$ for some $p > 1$ but that the number of points $m$ in $\s{A}_n$ increase at most at a polynomial rate. The former is to ensure that the distribution of $\sup_{\s{A}_n} | Z_{\infty, h,b}|$ is a good enough approximation to that of $\sup_{\s{A}_0}| Z_{\infty, h,b}|$, while the latter is to ensure that the distribution of $\max_{\s{A}_n} | Z_{n, h,b}|$ is a good enough approximation of $\max_{\s{A}_n} | Z_{\infty, h,b}|$. These conditions can be simultaneously achieved, for instance, with a uniform grid of $n$ points on $\s{A}_0$.

\rev{Conditions~\ref{cond:unif_nuisance_rate} and~\ref{cond:holder_smooth_theta} imply that $\s{A}_0$ does not contain boundary points of $\s{A}$. However, as with Theorems~\ref{thm:pointwise} and~\ref{thm:multivariate}, Theorem~\ref{thm:uniform} continues to hold when $\s{A}_0$ includes boundary points as long as~\ref{cond:unif_nuisance_rate} and~\ref{cond:holder_smooth_theta} hold on $\s{A}_{\delta_3}$ intersected with the support of $F_0$. Hence, if the support of $A$ is compact and the marginal density of $A$ is bounded away from zero on its support, we may be able to construct asymptotically valid uniform confidence bands over the entire support of $A$.}
\section{Numerical studies}\label{section:numerical}

\subsection{Study design}
In this section, we conduct numerical studies to investigate the finite-sample behavior of the proposed methods. We begin by describing our data-generating process. First, we generate covariates $W \in \d{R}^4$ from a standard multivariate normal distribution. Given $W$, we then generate $A$ from the distribution whose conditional density function is given by $p_0(a \mid w) \coloneqq I_{[0,1]}(a)[\lambda(w) + 2a\{1-\lambda(w)\}]$ for $\lambda(w) \coloneqq 0.1 + 1.8\,  \n{expit}(\beta^Tw)$, where $\n{expit}(x) := 1 / (1 + e^{-x})$. This construction guarantees that $0.1 < p_0(a \mid w) < 1.9$ for all $a \in [0,1]$ and $w \in\d{R}^4$ and that $A$ is marginally $\n{Uniform}(0,1)$. Finally, we generate $Y$ given $A=a$ and $W=w$ as a Bernoulli random variable with mean $\mu_0(a, w) \coloneqq \text{expit}\left(\gamma_1^T \bar{w} + \gamma_2^T \bar{w}a + \gamma_3 a^2 + \gamma_4 T(a)\right),$  where $T(a) \coloneqq \sin(3\pi\{2a-1\}/2) / (a^2+1)$ and $\bar{w} := (1,w)$. 
We set $\beta=(-1,-1,1,1)^T$, $\gamma_1=(-1,-1,-1,1,1)^T$, $\gamma_2=(3,-1,-1,1,1)^T$, $\gamma_3 = 1$ and $\gamma_4 = 1$. Figure 17 in Supplementary Material displays $\theta_0$ and its second derivative $\theta_0^{(2)}$ implied by these settings. The resulting curve is non-monotonic and has large second derivatives for some regions of $[0,1]$. 

We simulated 1000 datasets using the above process for each $n \in \{500, 1000, 2500\}$. To illustrate the performance of the proposed procedures when data-adaptive estimators are used for the nuisance functions, we used SuperLearner \citep{vdlSuperLearner}. To estimate $\mu_0$, we used SuperLearner with a library consisting of generalized linear models, multivariate adaptive regression splines, and generalized additive models. To estimate $g_0$, we used the SuperLearner procedure proposed by \citet{munoz2011super} with the same library. To investigate double-robustness, we considered using these estimators but omitting $W_1$ and $W_2$ from the estimation procedure. We then considered three settings: (1) both $\mu_n$ and $g_n$ use all covariates, (2) $\mu_n$ uses all covariates, while $g_n$ only uses $W_3$ and $W_4$, and (3) $g_n$ uses all covariates, while $\mu_n$ only uses $W_3$ and $W_4$.

We estimated $\theta_0$ using both the local linear estimator of \citet{Kennedy2016ptwisecte} and our debiased estimator. For the local linear estimator, we considered \rev{three} bandwidth selection mechanisms. First, we used the leave-one-out cross-validation bandwidth selection proposed in Section~3.5 of \cite{Kennedy2016ptwisecte}, which we refer to as ``Local linear (CV)". Second, we used the plug-in methodology described in Section~\ref{section:bandwidth}, which we refer to as ``Local linear (PI)". \rev{Lastly, we undersmoothed the bandwidth obtained by LOOCV by dividing it by $\log_{10}(n)$, which we refer to as ``Local linear (US)".} For the debiased estimator, we used all three selection procedures described in Section~\ref{section:bandwidth}. We refer to the estimator with LOOCV bandwidths $(h_{\n{cv}}, b_{\n{cv}})$  as ``Debiased (CV)", the estimator with LOOCV bandwidths $(h_{\n{cv}, 1}, h_{\n{cv}, 1})$ selected by minimizing  IMSE$_{\n{cv}}$ over $h$ alone with $b = h$ as ``Debiased (CV, h=b)", and the estimator with bandwidths $(h_{\n{plug-in}}, h_{\n{plug-in}})$ based on the plug-in methodology as ``Debiased (PI)". 

We constructed 95\% confidence intervals for each  \rev{$a_0\in\{0.0, 0.05,0.1,\dots,0.95, 1.0\}$. For the local linear estimator, we constructed confidence intervals based on the influence function proposed by \citet{Kennedy_2016}, For the debiased estimator, we used equation~\eqref{eq:ci}}. \rev{We also considered using a plug-in estimator of the asymptotic variance $V_{K,\tau} f_0(a_0)^{-1}\sigma_0^2(a_0)$ established in Theorem~\ref{thm:pointwise}. We estimated the marginal density $f_0(a_0)$ using a kernel density estimator and estimated $\sigma_0^2(a_0)$ by regressing $\{[\xi_n(Y_i, A_i, W_i) - \theta^{LL}_{n}(A_i)]^2; i=1,\dots, n\}$ on $A_1, \dots, A_n$ using a local linear estimator. We refer to the corresponding confidence intervals as ``Debiased (PI+AV)".} We constructed 95\% confidence intervals for $\theta_0(a_0) - \theta_0(0.5)$ for \rev{$a_0 \in \{0.525, 0.55, \dotsc, 1.0\}$} using our estimator and both variance estimators described Section~\ref{sec:effects}. Finally, we constructed a 95\% confidence band over \rev{$\s{A}_0 = [0,1]$} using the method described in Section~\ref{sec:uniform}. \rev{Notably, we considered the properties of the estimator and confidence intervals both at interior and boundary points, and we considered confidence bands that include the boundary points.}

\subsection{Results of numerical studies}
\begin{figure}[t]
    \centering
    \includegraphics[width=6.5in]{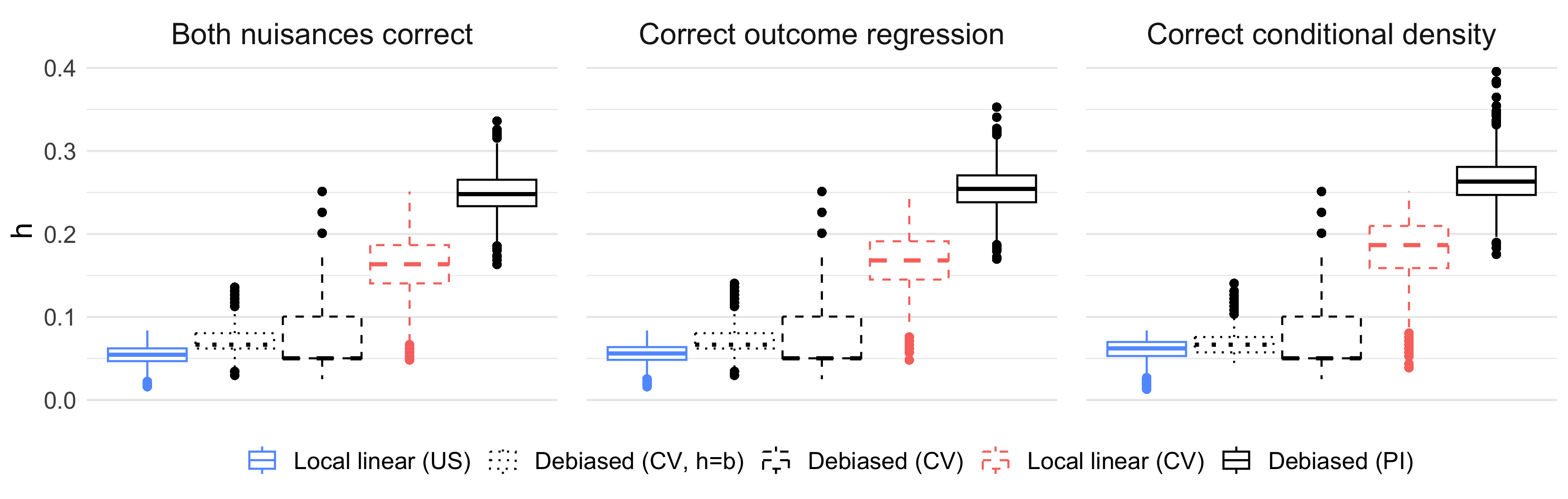}
    \caption{Box plots of the bandwidth $h$ selected by the different  procedures.}
    \label{tab:h-summary}
\end{figure}

\begin{figure}[h!]
\centering
\begin{subfigure}{6.5in}
  \centering
  \includegraphics[width=6.5in]{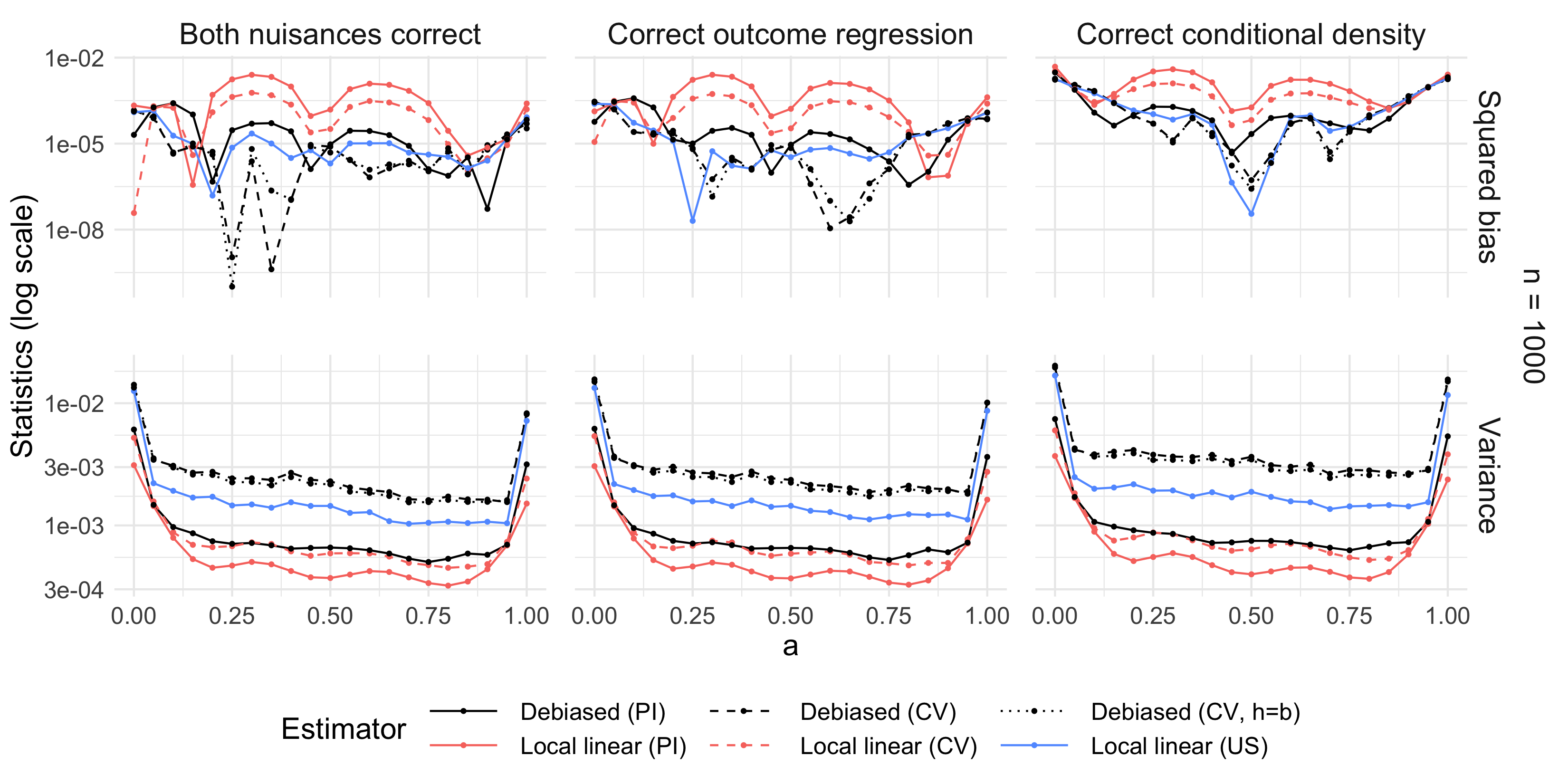}
\end{subfigure}
\caption{Top: squared empirical bias of the estimators on the log scale. The local linear estimator (shown in red) has large bias at points where the magnitude of the second derivative is large. Bottom: empirical variance of the estimators on the log scale.}
\label{fig:bias-variance}
\end{figure}

For ease of viewing, we focus here on the results for $n = 1000$. Results for $n=500$ and $n = 2500$ are provided in Supplementary Material. \rev{Figure~\ref{tab:h-summary} displays box plots of the bandwidths $h$ selected by the different procedures. Undersmoothing tends to select the smallest bandwidth, followed by the LOOCV methods, and the plug-in method tends to select the largest bandwidth.} Figure~\ref{fig:bias-variance} displays the pointwise empirical bias and variance of the two methods. As expected, the magnitude of the bias of both local linear estimators is generally larger than that of all three debiased estimators \rev{unless undersmoothing is employed}. In particular, the local linear estimator has large bias where the second derivative of $\theta_0$ is large. \rev{The bias of the debiased estimators and that of the local linear estimator with undersmoothing are generally comparable.} 

The variance of the local linear estimator is smaller than that of the debiased estimator when using the same bandwidth selection procedure. At interior points, the variance of the local linear estimator is about one-half that of the debiased estimator with the cross-validated bandwidth, which agrees with the constants computed after Theorem~\ref{thm:pointwise}. \rev{The variance of the debiased estimator with the two LOOCV bandwidth selection procedures is comparable.} The variance of both estimators using plug-in bandwidth selection is smaller than when using LOOCV bandwidth selection, which is due to the plug-in bandwidths being larger on average. \rev{The variance of the debiased estimator using the plug-in method has comparable variance with the local linear using LOOCV bandwidth selection, and is much smaller than that of the local linear estimator using undersmoothing. The variance of all estimators is larger at the boundaries and  when the outcome regression model is misspecified.}

Figure~\ref{fig:pointwise_coverage} displays the empirical coverage of pointwise 95\% confidence intervals for $\theta_0(a_0)$. The asymptotic bias of the local linear estimator results in poor coverage at points where the second derivative is large in magnitude \rev{unless undersmoothing is employed}. The debiased estimator has generally good coverage for all bandwidth selection procedures considered \rev{when the influence function-based variance estimator is used. The debiased estimators with LOOCV bandwidth selection have good coverage despite having larger variance because the influence function-based variance estimator accounts for the bandwidth. The debiased methods have good coverage at the boundaries, but slightly lower coverage near the boundaries. This issue is not present when using parametric nuisance estimators (shown in the Supplementary Material), so it could be due to the finite-sample performance of data-adaptive nuisance estimators over the corresponding region. The debiased estimator with confidence intervals based on the plug-in estimator of the asymptotic variance has worse coverage, especially near the boundary, which further illustrates the value of using the influence function-based variance estimator.}

\begin{figure}
    \centering
    \includegraphics[width=6.5in]{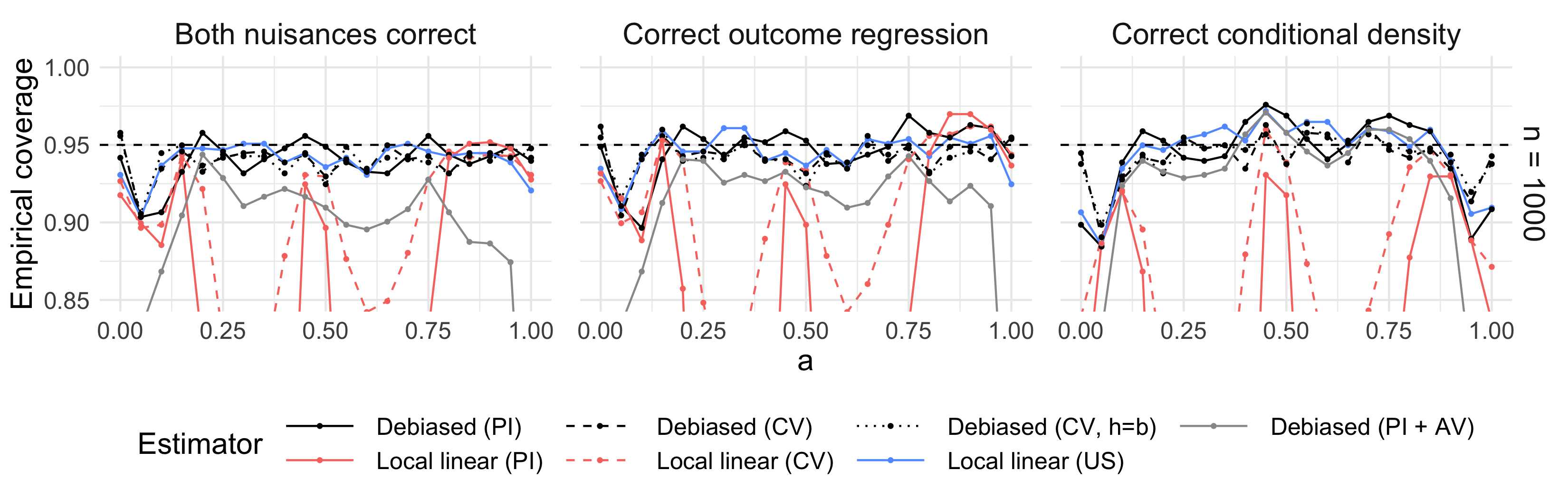}
    \caption{Empirical coverage of 95\% pointwise confidence intervals based on the debiased local linear estimator and the local linear estimator. PI, CV, and US correspond to plug-in, cross-validation, and undersmoothing respectively, and PI + AV corresponds to the plug-in bandwidth with the direct plug-in estimator of the asymptotic variance. }
    \label{fig:pointwise_coverage}
\end{figure}



Figure~\ref{fig:pairwise_coverage} displays the empirical coverage of confidence intervals for the causal effect $\theta_0(a) - \theta_0(0.5)$ based on the debiased estimator and the two variance estimators described following Theorem~\ref{thm:multivariate}. The confidence intervals based on the variance estimator using asymptotic independence (top row) are conservative when $a$ is close to 0.5 because the finite-sample covariance between the estimators is positive when the distance between the  evaluation points is small. When $a$ is further from $0.5$, the confidence intervals have better coverage. The confidence intervals using the influence function-based variance estimator (\rev{second} row) have much better coverage for all values of $a$, especially at large sample sizes, because the variance estimator captures some of the finite-sample covariance between the estimators. \rev{For both approaches, the plug-in method is conservative when the outcome regression is misspecified.}

\begin{figure}[h]
    \centering
    \includegraphics[width=6.5in]{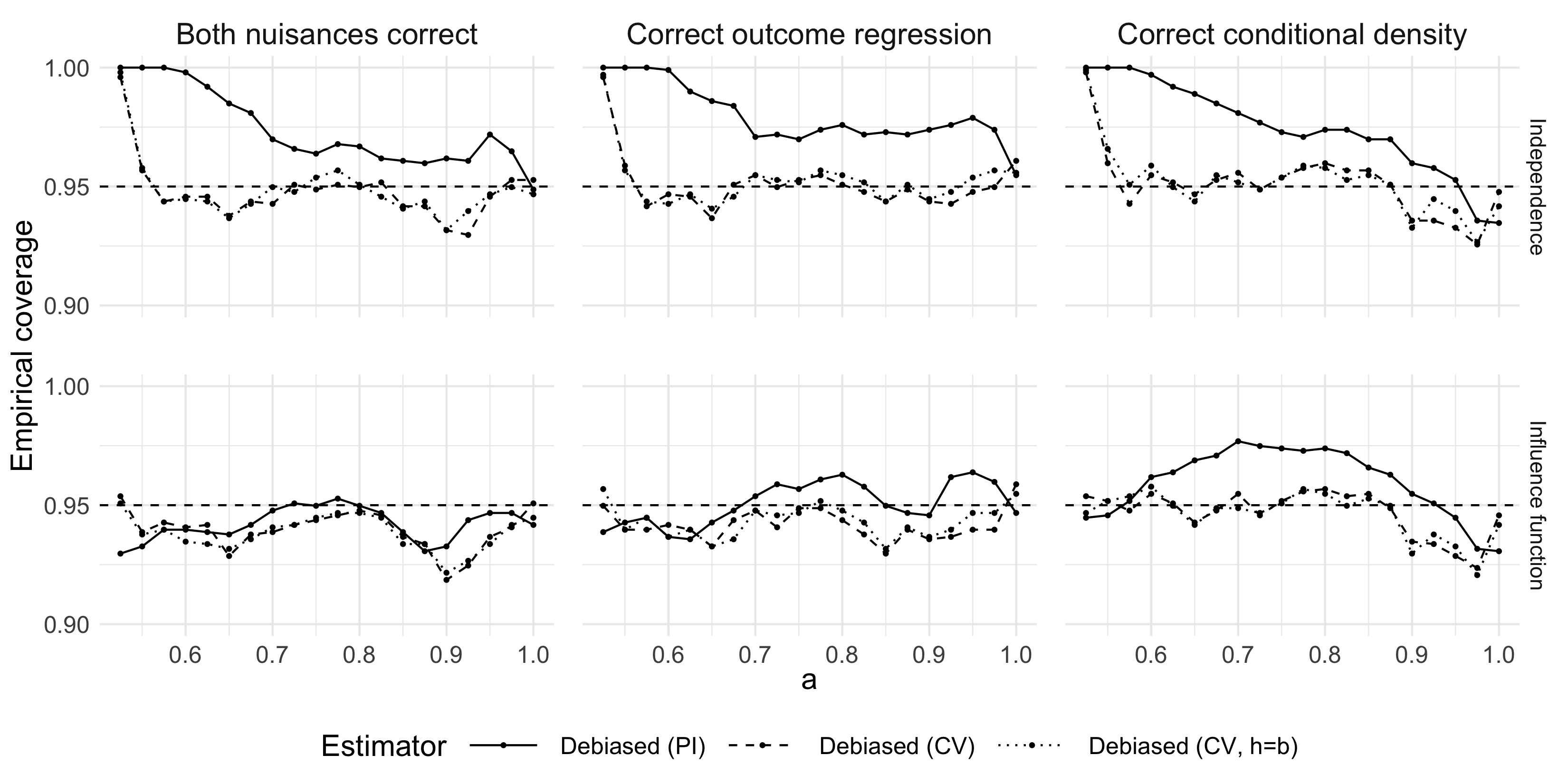}
    \caption{Empirical coverage of 95\% pointwise confidence intervals for $\theta_0(a) - \theta_0(0.5)$ based on the debiased estimator. The intervals in the top row use the sum of two variance estimators; those in the bottom row use the influence function-based variance estimator.}
    \label{fig:pairwise_coverage}
\end{figure}

Finally, Figure~\ref{fig:uniform_coverage_boundary} displays the empirical coverage of the uniform confidence bands. For this simulation, we considered the augmented set of sample sizes $n \in \{500, 750, 1000, 1250, 1500, 2000\}$. The plug-in bandwidth selection method exhibits slight undercoverage for sample sizes less than 1000, but generally performs well. However, both bandwidth selection methods based on cross-validation have serious undercoverage at sample sizes less than 1500. We conjecture that this undercoverage is a result of the bandwidth selected using LOOCV being smaller on average than that selected by the plug-in methodology. Smaller bandwidths result in a process with smaller correlation between points and whose supremum is stochastically larger. This results in a slower rate of convergence for the approximation in Theorem~\ref{thm:uniform}. This further illustrates the benefit of permitting bandwidths to be selected at the optimal rate.

\begin{figure}[h]
    \centering
    \includegraphics[width=6.5in]{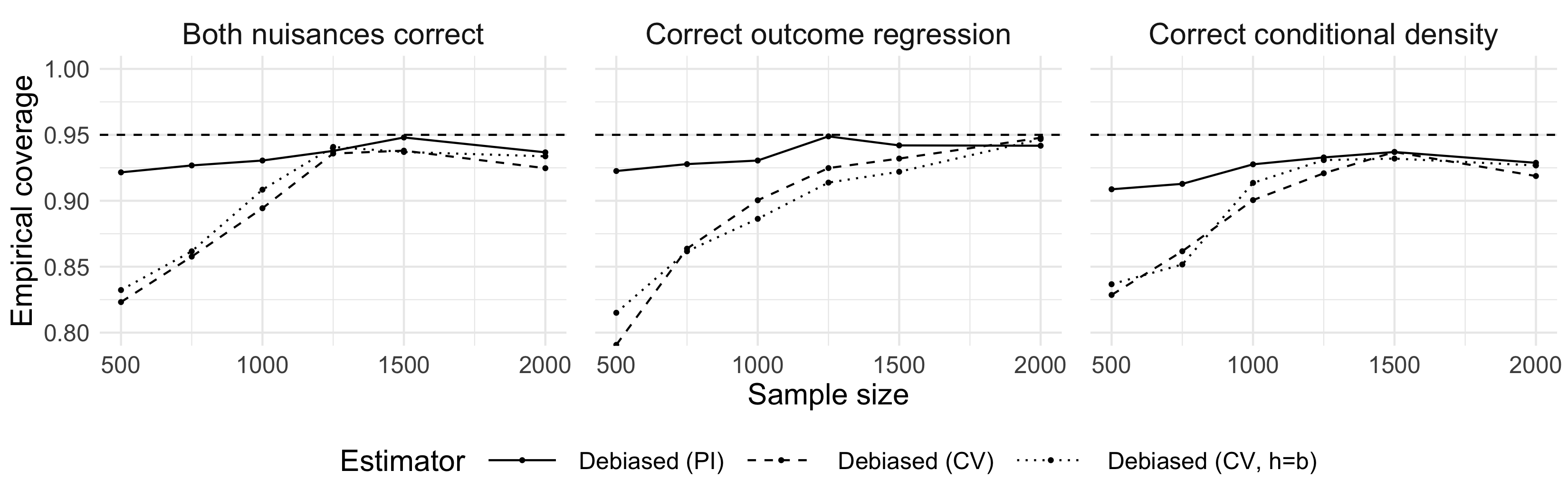}
    \caption{Empirical coverage of $95\%$ uniform confidence bands based on the debiased estimator.}
    \label{fig:uniform_coverage_boundary}
\end{figure}

\rev{An important conclusion of the numerical studies is that the plug-in bandwidth selection method with $h=b$ consistently demonstrates good coverage for all inferential tasks we examined. Additionally, the plug-in method yields smaller variance and narrower confidence intervals than the other debiased methods and the undersmoothed local linear estimator. Based on these findings, we recommend using the plug-in method with  $h=b$ for bandwidth selection.}

\rev{The Supplementary Material contains additional results from the numerical study, including the results presented here for more sample sizes, median width of confidence intervals, and results when the nuisance estimators are based on parametric models.}

\section{Effect of fine particulate matter on cardiovascular mortality}\label{section:application}

Fine particulate matter is a common air pollutant, and its concentration in the United States is regulated by the Environmental Protection Agency under the Clean Air Act. Numerous scientific studies have reported an association between exposure to air pollution and adverse health outcomes. The reduction in the concentration of particulate matter in the atmosphere over the past several decades is considered one of the contributors to the declining cardiovascular mortality rate  \citep{pope2002lung,pope2009fine,correia2013effect,roth2017trends,corrigan2018fine}. 

\citet{wyatt2020contribution} recently conducted an observational study to investigate the association between particulate matter less than 2.5 microns in diameter (PM$_{2.5}$) and cardiovascular mortality rate. Socioeconomic factors are a potential confounding of this relationship because they impact both exposure to PM$_{2.5}$ and cardiovascular mortality. Using data recorded at the county level in the United States between 1990 and 2010, the authors found a positive association between PM$_{2.5}$ and cardiovascular mortality rate after adjusting for socioeconomic characteristics using regression models. 


We used the publicly available data compiled by \citet{wyatt2020annual} to estimate the covariate-adjusted relationship between PM$_{2.5}$ and cardiovascular mortality rate using the methods presented here. The data contains information about $n=2132$ counties. Our exposure $A$ was the county-level annual PM$_{2.5}$ (in $\mu$g/m$^3$) averaged over twenty observations from 1990 to 2010, as measured by US Environmental Protection Agency's Community Multi-scale Air Quality modeling system \citep{gan2015assessment}. Our outcome $Y$ was county-level cardiovascular mortality rate (CMR, deaths/100,000 people) in 2010, as measured by the National Center for Health Statistics. Our covariates $W$ consisted of county-level socioeconomic factors based on 1990 and 2000 census data: population, households below the poverty level,  proportion of female-headed households with dependent children, vacant housing units, owner-occupied housing units, median household income, high education rate, unemployment rate, and density of medical facilities. We note that there are very likely to be violations of the stable unit treatment value assumption in this data since air pollution in one county could impact the health of residents in neighboring counties. In addition, the data are unlikely to be independent. It would be of interest in future research to extend the methods presented here to deal with treatment spillover and dependent data. 

We estimated the outcome regression using SuperLearner \citep{vdlSuperLearner} with a library consisting of generalized linear models, multivariate adaptive regression splines, generalized additive models, and regression trees. We estimated the conditional density using the version of SuperLearner developed by \citet{munoz2011super} with the same library. We used the Epanechnikov kernel and selected the bandwidth $h$ using the plug-in method discussed in Section~\ref{section:bandwidth} and set $b=h$.  We focus on values of \rev{PM$_{2.5}$ between $2.5 \mu g / m^3$ and $11.5 \mu g / m^3$, which approximately corresponds to the 0.01 and 0.99 quantiles} of the marginal empirical distribution of PM$_{2.5}$, respectively.
\begin{figure}[h]
    \centering
    \includegraphics[width=0.8\textwidth]{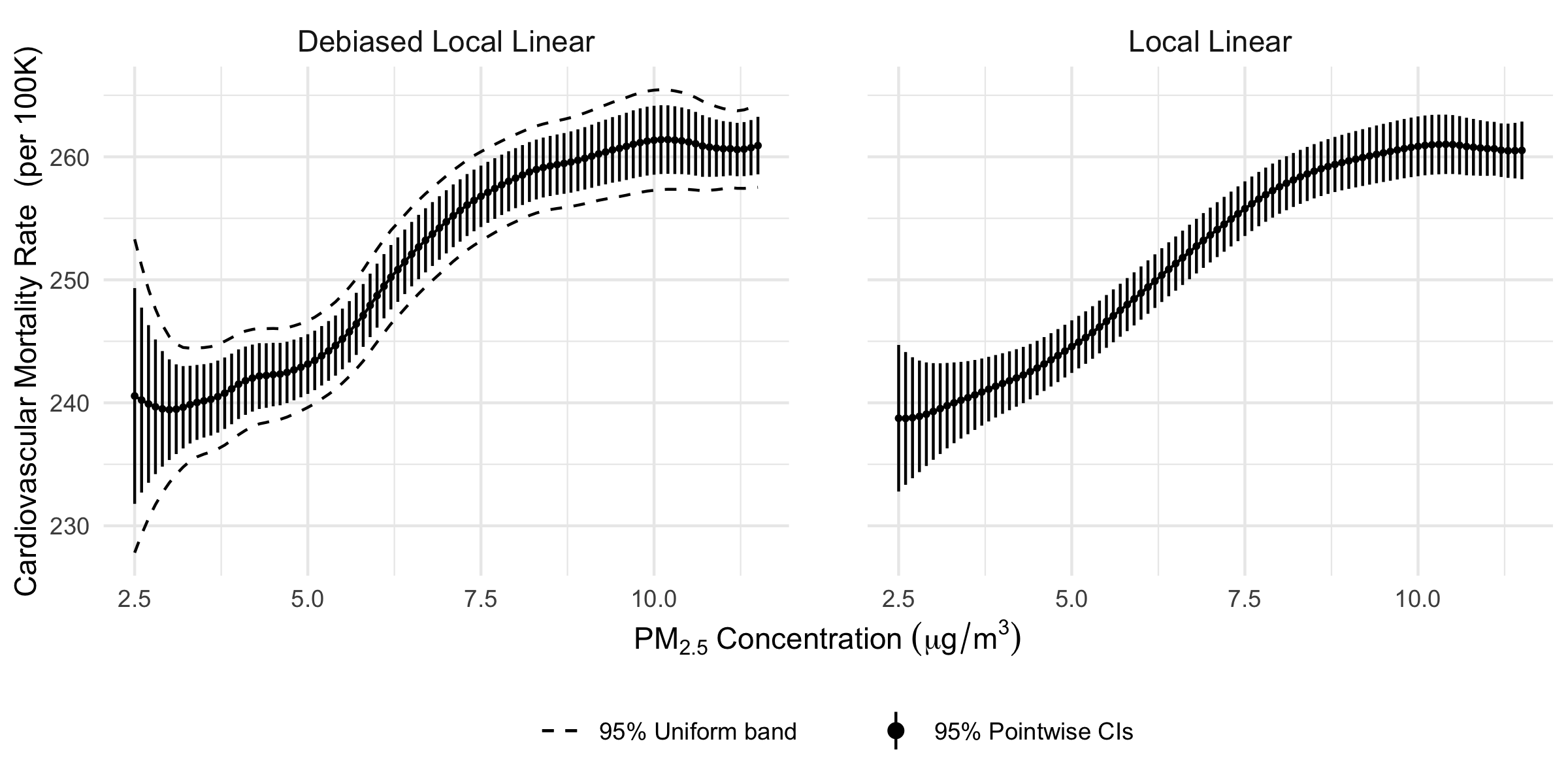}
    \caption{Estimated covariate-adjusted relationship between the concentration of fine particulate matter (PM$_{2.5}$) in the atmosphere and cardiovascular mortality rate at the county level. The regression function is adjusted for socioeconomic factors. 
    The figure shows 95\% pointwise confidence intervals as vertical lines and a 95\% uniform confidence band as dashed lines. \rev{The right panel displays the estimated regression function based on the local linear estimator.}}
    \label{fig:real_data}
\end{figure}

Figure~\ref{fig:real_data} displays the estimated covariate-adjusted regression function using our debiased method. \rev{The result based on the local linear estimator with the plug-in bandwidth selection is also provided for comparison purposes.} Pointwise 95\% confidence intervals are displayed using vertical lines, and a 95\% uniform confidence band is displayed using dashed lines.  The adjusted expected CMR appears to be monotone increasing as a function of PM$_{2.5}$, and its rate of increase is fastest for PM$_{2.5}$ values between roughly 6 and 8 $\mu g / m^3$. Our method shows larger uncertainty for PM$_{2.5}$ levels below 4 $\mu g / m^3$ and above 9 $\mu g / m^3$ due to the low density of PM$_{2.5}$. It is unclear if average CMR increases or plateaus over these regions. We estimate that increasing PM$_{2.5}$ from 6 to 8 $\mu g / m^3$ corresponds to an increase of \rev{10.3 in expected CMR after adjusting for socioeconomic factors (95\% CI: 7.6--12.9)}, and increasing PM$_{2.5}$ from 5 to 9 $\mu g / m^3$ corresponds to an increase of  \rev{17.5 in expected CMR (95\% CI: 14.8--20.2)}. Our conclusions generally agree with those reported in \cite{wyatt2020contribution}, but our flexible approach to estimation and principled approach to inference ensure that the conclusions are not the result of statistical bias. \rev{We also note that in this application, it may be sensible to assume the dose-response function is monotone increasing, so the methods of \citet{westling2018causal} could be used instead.}
\section{Concluding remarks}\label{section:conclusion}

In this article, we studied nonparametric inference for a covariate-adjusted regression function. This problem has wide applications in observational studies when the exposure of interest is continuous. In particular, under appropriate causal assumptions, the covariate-adjusted regression function corresponds to the average counterfactual outcome had all units been assigned to the same fixed exposure level. We presented conditions under which our proposed procedures yield valid pointwise and uniform inference. Our conditions do not require undersmoothing and permit the use of data-adaptive estimators for nuisance functions, and our results do not require more smoothness conditions than the original local linear estimator. 

\rev{Our method requires the choice of two tuning parameters: the bandwidth $h$ of the original local linear estimator and the bandwidth $b$ of the bias correction. We considered several methods of selecting these tuning parameter and compared them in numerical studies. In practice, we suggest choosing $h$ using the plug-in method and setting $b =h$ because this method had the best overall confidence interval and band coverage rates while also maintaining the lowest variance and narrowest median confidence interval length among the debiased methods.}

There are several natural extensions to our work. Cross-fitting the nuisance estimators would remove the empirical process conditions. Debiasing higher-order local polynomial estimators would yield faster rates of convergence under stronger smoothness assumptions. Debiased estimators of derivatives of the parameter could be obtained using similar methodology. Debiasing a higher-order corrected estimator would yield weaker assumptions for the rates of convergence of the nuisance estimators \citep{bonvini2022fast}. Alternative procedures for targeting the smoothed parameter, such as targeted minimum loss-based estimation \citep{van2011targeted} could yield improved finite-sample or asymptotic properties of the methods. Finally, twicing kernels \citep{newey2004twicing,zhang2012twicing} may be an alternative approach to bias correction.

\singlespacing
\section*{Acknowledgements}
The authors gratefully acknowledge support from the University of Massachusetts Amherst Department of Mathematics and Statistics startup fund and NSF Award 2113171.
\bibliographystyle{apalike}
\bibliography{ref.bib,supref.bib}

\begin{adjustwidth}{-.25in}{-.25in}
\begin{appendices}
\begin{small}
\singlespacing
\clearpage
\section*{Supplementary Material}\vspace{.1in}
\section{Guide to notation}

We recall the data unit is $O := (Y, A,W)$, and takes value in the sample space $\s{Y} \times \s{A} \times \s{W}$, where $\s{Y} \subseteq \d{R}$, $\s{A} \subseteq\d{R}$, and $\s{W} \subseteq \d{R}^d$. We observe $n$ IID replicates of this data unit, which we call $O_1, \dotsc, O_n$, and we define the unknown distribution of $O_i$ as $P_0$.  We index objects and functions by $P$ when they depend on a distribution $P$ generating $O$. We index by $0$ when $P = P_0$. We index by $n$ when the object is random depending on the sample $O_1, \dotsc, O_n$. Finally, we index by $\infty$ when the object is a limit of an object indexed by $n$ that may or may not be equal to its true counterpart depending on $P_0$.

\begin{itemize}
    \item $F_P(a) := P(A \leq a)$ and $f_P(a) := \frac{\partial}{\partial a} F_P(a)$ is the marginal CDF and PDF of $A$ under $P$
    \item $Q_P(\s{X}) := P(W \in \s{X})$ is the marginal distribution of $W$ under $P$
    \item $\mu_P(a,w) := E_P(Y \mid A =a, W=w)$ as the outcome regression function under $P$
    \item $\theta_P(a) := E_P[ \mu_P(a, W)] = \int \mu_P(a, w) \, dQ_P(w)$ is the $G$-computed regression function
    \item $g_P(a,w) := \left[ \frac{\partial}{\partial a} P(A \leq a \mid W = w) \right] / f_P(a)$ is the standardized conditional density of $A$ given $W = w$
    \item $Pf := \int f(o) \, dP(o)$
    \item $\|\beta\|_{P, q} := (P|\beta|^q)^{1/q}$
    \item $\| \beta\|_\s{Z} := \sup_{z \in \s{Z}}|\beta(z)|$
    \item $\d{P}_n$ is the empirical probability measure corresponding to $O_1, \dotsc, O_n$
    \item $\d{G}_n := n^{1/2}(\d{P}_n - P_0)$
    \item $Q_n$ is the marginal empirical distribution of $W_1, \dotsc, W_n$
    \item $F_n$ is the marginal empirical distribution of $A_1, \dotsc, A_n$
    \item $\mu_n$ is an estimator of $\mu_0$ with limit $\mu_\infty$
    \item $g_n$ is an estimator of $g_0$ with limit $g_\infty$
    \item $\psi_P(y,a,w) := \{y - \mu_P(a,w)\} / g_P(a,w)$
    \item $\psi_n(y,a,w) := \{y - \mu_n(a,w)\} / g_n(a,w)$
    \item $\psi_\infty(y,a,w) := \{y - \mu_\infty(a,w)\} / g_\infty(a,w)$
    \item $\xi_P(y,a,w) := \psi_P (y,a,w) + \int \mu_P(a, \bar{w})\, dQ_P(\bar{w})$
    \item $\xi_n(y,a,w) := \psi_n (y,a,w) + \int \mu_n(a, \bar{w})\, dQ_n(\bar{w})$
    \item $\xi_\infty(y,a,w) := \psi_\infty (y,a,w) + \int \mu_\infty(a, \bar{w})\, dQ_0(\bar{w})$
    \item $K_{h,a_0}(a) := K((a - a_0)/h) / h$ for $K$ the kernel function
    \item $c_j := \int u^j K(u)\,du$
    \item $c_j^* := \int u^j K^2(u)\,du$
    \item $c_{j,\tau}^* := \int u^j K(u)K(\tau u)\,du$
    \item $\b{S}_k := (c_{i+j-2})_{1 \leq i,j \leq k}$; i.e.\ the $k \times k$ matrix with $(i,j)$ element $c_{i+j-2}$
    \item $\b{S}_\tau^* := \int (1, u)^T (1, \tau u, [\tau u]^2, [\tau u]^3) K(u) K(\tau u)\, du$
    \item $e_1 := (1,0)^T$, $e_3 := (0,0,1)$;
    \item $w_{h,a_0,j}(a) := \left(1, [a - a_0] /h, \dotsc, [a-a_0]^j / h^j \right)^T$ for an integer $j \geq 1$
    \item $\b{D}_{P, h, a_0, j} := P\left(w_{h,a_0, j} K_{h, a_0} w^T_{h,a_0, j}\right)$ for an integer $j \geq 1$
    \item $\b{D}_{n,h, a_0, j} := \d{P}_n\left(w_{h,a_0, j} K_{h, a_0} w^T_{h,a_0, j}\right)$  for an integer $j \geq 1$
    \item $c_{P,h,a_0,2} := e_1 \b{D}_{P, h, a_0, 1}^{-1} P (\tilde{w}_{h,a_0,1} K_{h,a_0})$ for $\tilde{w}_{h,a_0,1}(a) := w_{h,a_0, 1}(a) [(a - a_0) / h]^2$
    \item $c_{n,h,a_0,2} := e_1 \b{D}_{n, h, a_0, 1}^{-1} \d{P}_n (\tilde{w}_{h,a_0,1} K_{h,a_0})$
    \item $\Gamma_{P, h, b, a_0}(a) := e_1^T \b{D}_{P, h, a_0,1}^{-1} w_{h,a_0,1}(a) K_{h,a_0}(a) - e_3^T c_{P,h,a_0,2}(h/b)^2 \b{D}_{P, b,a_0,2}^{-1} w_{b,a_0,2}(a) K_{b,a_0}(a)$
    \item $\Gamma_{n,h, b, a_0}(a) := e_1^T \b{D}_{n,h,a_0,1}^{-1} w_{h,a_0,1}(a) K_{h,a_0}(a) - e_3^T c_{n,h,a_0,2} (h/b)^2 \b{D}_{n,b,a_0,2}^{-1} w_{b,a_0,2}(a) K_{b,a_0}(a)$
    \item $\theta_{P,h,b}(a_0) := \theta_{P}^{DB}(a_0) = P (\Gamma_{P,h,b,a_0} \theta_P)$
    \item $\theta_{n,h,b}(a_0) := \theta_{n}^{DB}(a_0) = \d{P}_n(\Gamma_{n,h,b,a_0} \xi_n)$
    \item $\begin{aligned}[t]
        \gamma_{P,h,b,a_0}(a) &:= e_1^T\b{D}^{-1}_{P, h, a_0,1}w_{h,a_0, 1}(a) K_{h, a_0}(a)w^T_{h,a_0, 1}(a)\b{D}^{-1}_{P, h, a_0,1} P \left( w_{h,a_0, 1} K_{h, a_0} \theta_P\right) \\
        &\qquad- e_3^T c_{P,h,a_0,2} (h/b)^2 \b{D}_{P, b,a_0,2}^{-1} w_{b,a_0,2}(a) K_{b,a_0}(a)w^T_{b,a_0, 2}(a)\b{D}^{-1}_{P, b, a_0,2} P \left( w_{b,a_0, 2} K_{b, a_0} \theta_P\right) \\
        &\qquad + e_1^T(h/b)^2 \b{D}^{-1}_{P,h, a_0,1}\left[ \tilde{w}_{h, a_0, 1}(a)  - w_{h,a_0, 1}(a) w^T_{h,a_0, 1}(a) \b{D}^{-1}_{P,h, a_0,1} P\left( \tilde{w}_{h, a_0, 1} K_{h,a_0} \right)\right] K_{h,a_0}(a)  \\
        &\qquad\qquad \times e_3^T\b{D}_{P, b,a_0,2}^{-1} P( w_{b,a_0,2} K_{b,a_0} \theta_P)\end{aligned}$
    \item $\begin{aligned}[t]
        \gamma_{n,h,b,a_0}(a) &:= e_1^T\b{D}^{-1}_{n, h, a_0,1}w_{h,a_0, 1}(a) K_{h, a_0}(a)w^T_{h,a_0, 1}(a)\b{D}^{-1}_{n, h, a_0,1} \d{P}_n\left( w_{h,a_0, 1} K_{h, a_0} \xi_n \right) \,  \\
        &\qquad- e_3^Tc_{n,h,a_0,2} (h/b)^2 \b{D}_{n, b,a_0,2}^{-1} w_{b,a_0,2}(a) K_{b,a_0}(a)w^T_{b,a_0, 2}(a)\b{D}^{-1}_{n, b, a_0,2} \d{P}_n \left( w_{b,a_0, 2} K_{b, a_0} \xi_n\right)\\
        &\qquad + e_1^T (h/b)^2\b{D}^{-1}_{n,h, a_0,1}\left[ \tilde{w}_{h, a_0, 1}(a)  - w_{h,a_0, 1}(a) w^T_{h,a_0, 1}(a) \b{D}^{-1}_{n,h, a_0,1} \d{P}_n\left( \tilde{w}_{h, a_0, 1} K_{h,a_0} \right)\right] K_{h,a_0}(a)  \\
        &\qquad\qquad \times e_3^T\b{D}_{n, b,a_0,2}^{-1} \d{P}_n( w_{b,a_0,2} K_{b,a_0} \xi_n)\end{aligned}$
    \item $\begin{aligned}[t]
        \phi^*_{P, h, b, a_0}(y,a,w) &:=\Gamma_{P, h, b, a_0}(a) \xi_P(y,a,w)-\gamma_{P,h,b,a_0}(a)\\
        &\qquad+ \int \Gamma_{P, h, b, a_0}(\bar{a})\left\{\mu_P(\bar{a},w)-\int \mu_P(\bar{a},\bar{w})\, dQ_P(\bar{w})\right\}\,dF_P(\bar{a})\end{aligned}$
    \item $\begin{aligned}[t]
        \phi_{\infty, h,b,a_0}^*(y,a,w) &:= \Gamma_{0,h,b,a_0}(a) \xi_\infty(y,a,w) - \gamma_{0,h,b,a_0}(a) \\
        &\qquad + \int\Gamma_{0,h,b,a_0}(\bar{a}) \left\{ \mu_\infty(\bar{a},w) - \int \mu_\infty(\bar{a}, \bar{w}) \, dQ_0(\bar{w}) \right\} \, dF_0(\bar{a})\end{aligned}$
    \item $\begin{aligned}[t]\phi_{n, h,b,a_0}^*(y,a,w) &:= \Gamma_{n,h,b,a_0}(a) \xi_n(y,a,w) - \gamma_{n,h,b,a_0}(a) \\
    &\qquad + \int\Gamma_{n,h,b,a_0}(\bar{a}) \left\{ \mu_n(\bar{a},w) - \int \mu_n(\bar{a}, \bar{w}) \, dQ_n(\bar{w}) \right\} \, dF_n(\bar{a})\end{aligned}$
    \item $\tau_n := h_n /b_n$
    \item $\sigma_0^2(a) := E_0\left[\left\{\xi_\infty(Y,A, W) - \theta_0(A)\right\}^2 \mid A=a\right]$
    \item $\sigma_{\infty,h,b}^2(a_0) := hP_0\left( \phi_{\infty, h,b,a_0}^*\right)^2$.
    \item $\sigma_{n,h,b}^2(a_0) := h\d{P}_n\left( \phi_{n, h,b,a_0}^*\right)^2$.
    \item $\omega_n := \sup_{a_0 \in \s{A}_0} \inf_{a \in \s{A}_n} |a_0 - a_n|$.
\end{itemize}

Throughout, we use $\lesssim$ to mean ``less than up to a constant not depending on $n$, $h$, or $b$.

\clearpage
\section{Conditions}

Here, we re-state the conditions we use in our asymptotic results for convenience. These are the same conditions as stated in the main text.
\begin{enumerate}[label=\textbf{(A\arabic*)},leftmargin=2cm]
\item The kernel $K$ is a mean-zero, symmetric, nonnegative, and Lipschitz continuous density function with support contained in $[-1,1]$. Additionally, $K$ belongs to the linear span of the functions whose subgraph can be represented as a finite number of Boolean operations among sets of the form $\{(s,u) \in \d{R} \times \d{R} : p(s,u) \leq \varphi(u)\}$ where $p$ is a polynomial and $\varphi$ is an arbitrary real function.  
\item As $n \longrightarrow \infty$, the bandwidths $h = h_n$ and $b = b_n$ satisfy $h_n \longrightarrow 0$,  $nh_n \longrightarrow \infty$, $b_n \longrightarrow 0$, and $\tau_n := h_n/b_n \longrightarrow \tau \in [0,\infty)$. 
\end{enumerate}

We define the following pseudo-distance for any $P_0$-square integrable functions $\gamma_1, \gamma_2 : \s{A} \times \s{W} \mapsto \mathbb{R}$, $\s{A}_0 \subset \s{A}$, and $\s{S} \subseteq \s{A} \times \s{W}$:
\begin{align*}
    d(\gamma_1, \gamma_2; \s{A}_0, \s{S}) :=  \sup_{a \in \s{A}_0} \left\{E_0 [I_{\s{S}}(a,W)\{\gamma_1(a,W) - \gamma_2(a,W)\}^2]\right\}^{1/2}
\end{align*}
We also define $B_\varepsilon(a_0)$ as the closed ball of radius $\varepsilon$ centered at $a_0$. 

\begin{enumerate}[label=\textbf{(A\arabic*)},leftmargin=2cm]
\setcounter{enumi}{2}
    \item  There exist classes of functions $\s{F}_\mu$ and $\s{F}_g$ such that almost surely for all $n$ large enough, $\mu_0, \mu_n \in \s{F}_\mu$, $g_0,g_n \in \s{F}_g$, and for some constants $C_j\in (0, \infty)$, $V_\mu \in (0,1)$, and $V_g \in (0,2)$:
        \begin{enumerate}
            \item \label{cond:bounded_cond}  $\|\mu\|_\infty \leq C_1$ for all $\mu \in \s{F}_\mu$, and $\|1/g\|_\infty \leq C_2$ and $\|g\|_\infty \leq C_3$ for all $g \in \s{F}_g$;
            \item \label{cond:entropy_cond} $ \sup_Q \log N(\varepsilon, \s{F}_\mu, L_2(Q)) \leq C_4 \varepsilon^{-V_\mu}$ and $\sup_Q \log N(\varepsilon, \s{F}_g, L_2(Q)) \leq C_5  \varepsilon^{-V_g}$ for all $\varepsilon > 0$.
        \end{enumerate}
    \item  There exist $\mu_\infty \in \s{F}_\mu$ and $g_\infty \in \s{F}_g$, $\delta_1 > 0$, and subsets $\s{S}_1, \s{S}_2$ and $\s{S}_3$ of $B_{\delta_1}(a_0)\times \s{W}$ such that $\s{S}_1 \cup \s{S}_2 \cup \s{S}_3 = B_{\delta_1}(a_0)\times \s{W}$ and:
        \begin{enumerate}
            \item $\mu_\infty(a, w) = \mu_0(a, w)$ for all $(a, w) \in \s{S}_1 \cup \s{S}_3$ and $g_\infty(a, w) = g_0(a, w)$ for all $(a,  w) \in \s{S}_2 \cup \s{S}_3$;
            \item $d(\mu_n,\mu_\infty; B_{\delta_1}(a_0),\s{S}_1) = \fasterthan\left( \{nh\}^{-1/2}\right)$, and $d(g_n,g_\infty; B_{\delta_1}(a_0),\s{S}_1) = \fasterthan(1)$;
            \item $d(g_n,g_\infty; B_{\delta_1}(a_0),\s{S}_2)\} = \fasterthan\left( \{nh\}^{-1/2}\right)$, and $d(\mu_n,\mu_\infty; B_{\delta_1}(a_0),\s{S}_2) = \fasterthan(1)$;
            \item $d(\mu_n,\mu_\infty; B_{\delta_1}(a_0),\s{S}_3)d(g_n,g_\infty; B_{\delta_1}(a_0),\s{S}_3) = \fasterthan\left( \{nh\}^{-1/2}\right)$.
        \end{enumerate}
    \item It holds that: 
    \begin{enumerate}
        \item $\theta_0$ is twice continuously differentiable on $B_{\delta_1}(a_0)$;
        \item $f_0$ is positive and Lipschitz continuous on $B_{\delta_1}(a_0)$; 
        \item there exist $\delta_2 > 0$ and $C_6 < \infty$ such that $E_0[|Y|^{2+\delta_2} \mid A = a, W = w] \leq C_6$ for all $a \in B_{\delta_1}(a_0)$ and $P_0$-almost every $w$ and $E_0[ |Y|^4] < \infty$; and 
        \item $a \mapsto \sigma_0^2(a) := E_0\left[ \{\xi_\infty(Y,A,W) - \theta_0(A)\}^2 \mid A = a\right]$ is bounded and continuous on $B_{\delta_1}(a_0)$, where $\xi_\infty := \xi_{\mu_\infty, g_\infty, Q_0}$ is the limiting pseudo-outcome.
    \end{enumerate}
\end{enumerate}

Let $\s{A}_0$ be a compact subset of $\s{A}$ over which we wish to perform uniform inference. For $\delta > 0$ define $\s{A}_\delta$ as the $\delta$-enlargement of $\s{A}_0$; that is, the set of $a \in \d{R}$ such that there exists $a_0 \in \s{A}_0$ with $|a - a_0| \leq \delta$. We then assume there exists $\delta_3 > 0$ such that the following hold.

\begin{enumerate}[label=\textbf{(A\arabic*)},leftmargin=2cm]\setcounter{enumi}{5}
    \item The constant $V := \max\{V_\mu, V_g\}$, for $V_\mu$ and $V_g$ defined in~\ref{cond:uniform_entropy_nuisances} and the bandwidth $h$  satisfies $n\left[ h / (\log n)\right]^{\frac{2+V}{2-V}} \longrightarrow \infty$ and $nh^3 \longrightarrow \infty$. 
    \item There exist $\mu_\infty \in \s{F}_\mu$, $g_\infty \in \s{F}_g$, $\delta_3 > 0$ and subsets $\s{S}_1', \s{S}_2'$, and $\s{S}_3'$ of $\s{A}_{\delta_3}\times \s{W}$ such that $\s{S}_1' \cup \s{S}_2' \cup \s{S}_3' = \s{A}_{\delta_3}\times \s{W}$ and:
        \begin{enumerate}
            \item $\mu_\infty(a, w) = \mu_0(a, w)$ for all $(a, w) \in \s{S}_1' \cup \s{S}_3'$ and $g_\infty(a, w) = g_0(a, w)$ for all $(a,  w) \in \s{S}_2' \cup \s{S}_3'$;
            \item $d(\mu_n,\mu_\infty; \s{A}_{\delta_3},\s{S}_1') = \fasterthan\left( \{nh \log n\}^{-1/2}\right)$;
            \item $d(g_n,g_\infty; \s{A}_{\delta_3},\s{S}_2')\} = \fasterthan\left( \{nh \log n\}^{-1/2}\right)$;
            \item $d(\mu_n,\mu_\infty; \s{A}_{\delta_3},\s{S}_3')d(g_n,g_\infty; \s{A}_{\delta_3},\s{S}_3') = \fasterthan\left( \{nh \log n\}^{-1/2}\right)$.
            \item $d(\mu_n,\mu_\infty; \s{A}_{\delta_3},\s{A} \times \s{W})$ and $d(g_n,g_\infty; \s{A}_{\delta_3},\s{A} \times \s{W})$ are both $\fasterthan\left(h^{\frac{V}{2(2-V)}}\{\log n\}^{-\frac{1}{2-V}}\right)$
        \end{enumerate}
    \item It holds that:
    \begin{enumerate}
        \item $\theta_0$ is twice continuously differentiable with H\"{o}lder-continuous second derivative on $\s{A}_{\delta_3}$; 
        \item $f_0$ is Lipschitz continuous and bounded away from 0 and $\infty$ on $\s{A}_{\delta_3}$;
        \item $|Y|$ is $P_0$-almost surely bounded; and
        \item  $a \mapsto E_0\left[ \{\xi_\infty(Y,A,W) - \theta_0(A)\}^2 \mid A = a\right]$ is continuous on $\s{A}_{\delta_3}$.
    \end{enumerate}
\end{enumerate}

\clearpage

\section{Efficient influence function of the local parameter}

\begin{proof}[\bfseries{Proof of Lemma 1}]
We let $\{P_\varepsilon: |\varepsilon| \leq \delta\}$ be a one-dimensional quadratic mean differentiable path contained in the model $\s{M}$ such that $P_{\epsilon=0} = P_0$. We let $(y,a,w) \mapsto \dot{\ell}_0(y,a,w)$ be the score function of the path at $\varepsilon=0$, which necessarily satisfies $P_0 \dot{\ell}_0 = 0$ and $P_0 \dot{\ell}^2_0 < \infty$. We define the marginal score functions $\dot{\ell}_0(a,w) := E_0[\dot{\ell}_0 \mid A=a, W=w]$,  $\dot{\ell}_0(a) := E_0[\dot{\ell}_0 \mid A=a]$, and $\dot{\ell}_0(w) := E[\dot{\ell}_0 \mid W=w]$, which all have mean zero under $P_0$ by the tower property. We also define the conditional score function $\dot{\ell}_0(y \mid a, w):= \dot{\ell}_0(y,a,w)-\dot{\ell}(a,w)$, which satisfies $E_0[ \dot\ell_0(Y \mid A, W) \mid A = a, W = w] = 0$ for $P_0$-almost every $(a,w)$. For convenience, we denote objects depending on $P_\varepsilon$ with a subscript $\varepsilon$.

To demonstrate that $\phi^*_{0, h, b, a_0}$ is the nonparametric efficient influence function of $\theta_{0,h,b}(a_0) = \theta_0^{DB}(a_0)$, we show that \[\left.\frac{\partial}{\partial \varepsilon} \theta_{\varepsilon, h, b}(a_0) \right|_{\varepsilon=0} = P_0(\phi^*_{0, h, b, a_0} \dot{\ell}_0)\] 
for each fixed $h,b > 0$ and $a_0 \in \s{A}$. We define
\begin{align*}
    \theta_{P,h}(a_0) &:= e_1^T\b{D}^{-1}_{P, h, a_0,1} P_0\left(w_{h,a_0, 1} K_{h, a_0} \theta_P \right) \text{ and} \\ 
    \theta_{P,b}''(a_0) &:= 2 b^{-2} e_3^T\b{D}^{-1}_{P, b, a_0,2} P\left(w_{b,a_0, 2} K_{b, a_0} \theta_P\right),
\end{align*}
so that $\theta_{P, h, b}(a_0)$ can be expressed as
\begin{align*}
    \theta_{P, h, b}(a_0) = \theta_{P,h}(a_0) - \tfrac{1}{2}c_{P, h, a_0,2} h^2 \theta''_{P,b}(a_0).
\end{align*}
Therefore, 
\[\left.\frac{\partial}{\partial \varepsilon} \theta_{\varepsilon, h, b}(a_0) \right|_{\varepsilon=0} = \left.\frac{\partial}{\partial \varepsilon} \theta_{\varepsilon, h}(a_0) \right| _{\varepsilon=0} - \tfrac{1}{2}\left.\frac{\partial}{\partial \varepsilon} c_{\varepsilon, h, a_0,2} \right|_{\varepsilon=0} h^2 \theta''_{P, b}(a_0) -\tfrac{1}{2}c_{P, h, a_0,2} h^2 \left.\frac{\partial}{\partial \varepsilon} \theta''_{\varepsilon, b}(a_0) \right|_{\varepsilon=0}.\]
We now provide the derivation of $\left.\frac{\partial}{\partial \varepsilon}\theta_{\varepsilon, h}(a_0) \right| _{\varepsilon=0}$. We first have by the product rule that
\begin{align*}
    \left. \frac{\partial}{\partial \varepsilon}\theta_{\varepsilon, h}(a_0) \right| _{\varepsilon=0} &= \left. \frac{\partial}{\partial \varepsilon}e_1^T\b{D}^{-1}_{\varepsilon, h, a_0,1}P_\varepsilon \left( w_{h,a_0, 1} K_{h, a_0} \theta_\varepsilon\right)\right|_{\varepsilon=0}\\
    & = \left. \frac{\partial}{\partial \varepsilon}e_1^T\b{D}^{-1}_{\varepsilon, h, a_0,1}\right|_{\varepsilon=0} P_0 \left(w_{h,a_0, 1} K_{h, a_0} \theta_0\right) + e_1^T\b{D}^{-1}_{0, h, a_0,1} \left.\frac{\partial}{\partial \varepsilon}P_\varepsilon \left(w_{h,a_0, 1} K_{h, a_0} \theta_0\right)\right|_{\varepsilon=0}  \\
    &\qquad + e_1^T\b{D}^{-1}_{0, h, a_0,1}\left.\frac{\partial}{\partial \varepsilon}P_0 \left(w_{h,a_0, 1} K_{h, a_0} \theta_\varepsilon \right)\right|_{\varepsilon=0}.
\end{align*}
For the first term, by the definition of $\b{D}_{P, h, a_0}$ as the mean of a fixed and uniformly bounded function and the chain rule, we have
\begin{align*}
    \left. \frac{\partial}{\partial \varepsilon}e_1^T\b{D}^{-1}_{\varepsilon, h, a_0,1}\right|_{\varepsilon=0} &= -e_1^T\b{D}^{-1}_{0,h, a_0,1} \left. \frac{\partial}{\partial \varepsilon}\b{D}_{\varepsilon, h, a_0,1}\right|_{\varepsilon=0}\b{D}^{-1}_{0,h, a_0,1} \\
    &= -e_1^T\b{D}^{-1}_{0,h, a_0,1}P_0\left( w_{h,a_0, 1} K_{h, a_0} w^T_{h,a_0, 1} \dot{\ell}_0 \right)\b{D}^{-1}_{0,h, a_0,1}.
\end{align*}
Similarly, for the second term, we have
\begin{align*}
    \left.\frac{\partial}{\partial \varepsilon}P_\varepsilon \left(w_{h,a_0, 1} K_{h, a_0} \theta_0\right)\right|_{\varepsilon=0}  = P_0\left( w_{h,a_0, 1} K_{h, a_0} \theta_0 \dot{\ell}_0 \right).
\end{align*}
For the third term, we have by definition of $\theta_P$
\begin{align*}
     \left. \frac{\partial}{\partial \varepsilon} P_0 \left( w_{h,a_0, 1} K_{h, a_0}\theta_\varepsilon \right) \right|_{\varepsilon=0} &= \left. \frac{\partial}{\partial \varepsilon} \int w_{h,a_0, 1}(a) K_{h, a_0}(a) \iint y \, dP_\varepsilon(y \mid a, w) \, dQ_\varepsilon(w)\,  dF_0(a) \right|_{\varepsilon=0}\\
    & = \int w_{h,a_0, 1}(a) K_{h, a_0}(a) \iint y \dot{\ell}_{0}(y \mid a, w) \, dP_0(y \mid a, w) \, dQ_0(w) \, dF_0(a) \\
    &\qquad + \int w_{h,a_0, 1}(a) K_{h, a_0}(a) \iint y \, dP_0(y \mid a,w) \dot{\ell}_0(w) \, dQ_0(w)\, dF_0(a).
\end{align*}
By properties of score functions and since $g_0(a,w) > 0$ for $F_0$-a.e.\ $a$ such that $|a - a_0| \leq h$ and $Q_0$-a.e.\ $w$, the first term equals 
\begin{align*}
&\iiint w_{h,a_0, 1}(a) K_{h, a_0}(a) \frac{y}{g_0(a,w)} \dot{\ell}_{0}(y \mid a, w) \, dP_0(y, a, w) \\
&\qquad = \iiint w_{h,a_0, 1}(a) K_{h, a_0}(a) \frac{y-\mu_0(a,w)}{g_0(a,w)} \dot{\ell}_{0}(y,a,w) \, dP_0(y,a,w).
\end{align*}
The second term equals
\begin{align*}
\iint w_{h,a_0, 1}(a) K_{h, a_0}(a)\mu_0(a,w) \, dF_0(a)\dot{\ell}_{0}(w)\, dQ_0(w) &=\int \eta_0(w) \dot{\ell}_{0}(w) \, dQ_0(w)\\
&= \iiint \eta_0(w) \dot\ell_0(y,a,w) \, dP_0(y,a,w),
\end{align*}
where $\eta_0(w) := \int w_{h,a_0, 1}(a) K_{h, a_0}(a) \mu_0(a,w) \,dF_0(a)$. Putting it together, we have that $ \left.\frac{\partial}{\partial \varepsilon}\theta_{\varepsilon, h}(a_0)\right|_{\varepsilon=0}$ equals

\begin{align*}
   &-e_1^T\b{D}^{-1}_{0, h, a_0,1} P_0\left(  w_{h,a_0, 1} K_{h, a_0} w^T_{h,a_0, 1} \dot{\ell}_0\right) \b{D}^{-1}_{0, h, a_0,1} P_0 \left( w_{h,a_0, 1} K_{h, a_0} \theta_0\right) \\
    &\qquad + e_1^T\b{D}^{-1}_{0, h, a_0,1} P_0\left(w_{h,a_0, 1} K_{h, a_0} \theta_0 \dot{\ell}_0\right) \\
    &\qquad+ e_1^T\b{D}^{-1}_{0, h, a_0,1} \iiint \left\{\left[w_{h,a_0, 1}(a) K_{h, a_0}(a) \left\{\frac{y-\mu_0(a,w)}{g_0(a,w)}\right\}+\eta_0(w)\right]\dot{\ell}_{0}(y, a,w)\right\} \, dP_0(y,a,w) \\
    &=  e_1^T\b{D}^{-1}_{0, h, a_0,1} \iiint \left\{\left[w_{h,a_0, 1}(a) K_{h, a_0}(a) \left\{\frac{y-\mu_0(a,w)}{g_0(a,w)} + \theta_0(a) - \gamma_{0,h, a_0}(a) \right\}+\eta_0(w)\right]\dot{\ell}_{0}(y,a,w)\right\} \, dP_0(y,a,w)
\end{align*}
where $\gamma_{0,h, a_0}(a) := w^T_{h,a_0, 1}(a)\b{D}^{-1}_{0, h, a_0,1} P_0 \left( w_{h,a_0, 1} K_{h, a_0} \theta_0\right)$. 
Therefore, the uncentered influence function of $\theta_{P,h}(a_0)$ at $P = P_0$ is 
\[\phi_{0, h,  a_0}(y,a,w) := e_1^T\b{D}^{-1}_{0, h, a_0,1}\left[w_{h,a_0, 1} K_{h, a_0}\left\{\frac{y-\mu_0(a,w)}{g_0(a,w)}+ \theta_0(a)-\gamma_{0,h, a_0}(a)\right\}+\eta_0(w)\right]\]
Noting that 
\[P_0\left(w_{h,a_0, 1} K_{h, a_0}\gamma_{0,h, a_0}\right) = P_0 \eta_0 = P_0\left(w_{h,a_0, 1} K_{h, a_0} \theta_0 \right),\]
the mean of $\phi_{0, h,a_0}$ is
\begin{align*}
    P_0 \phi_{0, h,a_0}  &= e_1^T\b{D}^{-1}_{0, h, a_0,1}P_0\left(w_{h,a_0, 1} K_{h, a_0} \theta_0 \right).
\end{align*}
Thus, the centered influence function of $\theta_{P,h}$ at $P_0$ is 
\begin{align*}
    \phi^*_{0, h, a_0}(y,a,w) &= \phi_{0, h,a_0}(y,a,w)  - P_0 \phi_{0, h, a_0} \\
    &= e_1^T\b{D}^{-1}_{0, h, a_0,1}w_{h,a_0, 1}(a) K_{h, a_0}(a)\left\{\frac{y-\mu_0(a,w)}{g_0(a,w)}+ \theta_0(a)-\gamma_{0,h, a_0}(a)\right\}\\
    &\qquad +e_1^T\b{D}^{-1}_{0, h, a_0,1}\int w_{h,a_0, 1}(a) K_{h, a_0}(a) \{\mu_0(a,w)-\theta_0(a)\} \,dF_0(a)
\end{align*}
By a similar argument, the influence function of $\theta_{P,b}''(a_0)$ at $P_0$ is 
\begin{align*}
    \phi^*_{0, b, a_0}(y,a,w) &= 2b^{-2}e_3^T\b{D}^{-1}_{0, b, a_0,2}w_{b,a_0, 2}(a) K_{b, a_0}(a)\left\{\frac{y-\mu_0(a,w)}{g_0(a,w)}+ \theta_0(a)-\gamma_{0,b, a_0}(a)\right\}\\
    &\qquad +2b^{-2}e_3^T\b{D}^{-1}_{0, b, a_0,2}\int w_{b,a_0, 2}(a) K_{b, a_0}(a) \{\mu_0(a,w)-\theta_0(a)\} \,dF_0(a)
\end{align*}
where $\gamma_{0,b, a_0}(a) := w^T_{b,a_0, 2}(a)\b{D}^{-1}_{0, b, a_0,2} P_0 \left( w_{b,a_0, 2} K_{b, a_0} \theta_0\right)$. We next have
\begin{align*}
     \left.\frac{\partial}{\partial \varepsilon} c_{\varepsilon, h,a_0,2}\right|_{\varepsilon=0} &=  \left.\frac{\partial}{\partial \varepsilon} e_1^T\b{D}^{-1}_{\varepsilon, h, a_0,1} 
 P_\varepsilon \left( \tilde{w}_{h, a_0, 1} K_{h,a_0} \right) \right|_{\varepsilon=0} \\
 &=\left.\frac{\partial}{\partial \varepsilon} e_1^T\b{D}^{-1}_{\varepsilon, h, a_0,1} \right|_{\varepsilon=0}   P_0 \left( \tilde{w}_{h, a_0, 1} K_{h,a_0} \right)+ e_1^T\b{D}^{-1}_{0, h, a_0,1} 
\left.\frac{\partial}{\partial \varepsilon}  P_\varepsilon \left( \tilde{w}_{h, a_0, 1} K_{h,a_0} \right) \right|_{\varepsilon=0} \\
&=-e_1^T\b{D}^{-1}_{0,h, a_0,1}P_0\left( w_{h,a_0, 1} K_{h, a_0} w^T_{h,a_0, 1} \dot{\ell}_0 \right)\b{D}^{-1}_{0,h, a_0,1} P_0 \left( \tilde{w}_{h, a_0, 1} K_{h,a_0} \right)+ e_1^T\b{D}^{-1}_{0, h, a_0,1} 
  P_0\left( \tilde{w}_{h, a_0, 1} K_{h,a_0}  \dot\ell_0 \right).
\end{align*}
Hence, the uncentered influence function of $c_{P,h, a_0,2}$ is
\begin{align*}
   \gamma_{0,h,a_0}^c(a)&=-e_1^T\b{D}^{-1}_{0,h, a_0,1} w_{h,a_0, 1}(a) K_{h, a_0}(a) w^T_{h,a_0, 1}(a) \b{D}^{-1}_{0,h, a_0,1} P_0 \left( \tilde{w}_{h, a_0, 1} K_{h,a_0} \right)+ e_1^T\b{D}^{-1}_{0, h, a_0,1} 
   \tilde{w}_{h, a_0, 1}(a) K_{h,a_0}(a) \\
   &= e_1^T\b{D}^{-1}_{0,h, a_0,1}\left[ \tilde{w}_{h, a_0, 1}(a)  - w_{h,a_0, 1}(a) w^T_{h,a_0, 1}(a) \b{D}^{-1}_{0,h, a_0,1} P_0 \left( \tilde{w}_{h, a_0, 1} K_{h,a_0} \right)\right] K_{h,a_0}(a).
\end{align*}
This function has mean zero, so it is the nonparametric efficient influence function of $c_{P,h, a_0,2}$ at $P = P_0$.

Putting the three pieces together, we have
\begin{align*}
    &\phi^*_{0, h, b, a_0}(y,a,w) = \phi^*_{0, h, a_0}(y,a,w) - \tfrac{1}{2}c_{0,h, a_0,2} h^2  \phi^*_{0, b, a_0}(y,a,w) - \tfrac{1}{2}\gamma_{0,h,a_0}^c(a) h^2 \theta_{0,b}''(a_0) \\
    &\qquad =\Gamma_{0, h, b, a_0}(a)\left\{\frac{y-\mu_0(a,w)}{g_0(a,w)}+ \theta_0(a)\right\} - \gamma_{0,h,b,a_0}(a)+ \int \Gamma_{0, h, b, a_0}(a)\{\mu_0(a,w)-\theta_0(a) \}\,dF_0(a).
\end{align*}
\end{proof}

\clearpage

\section{Decomposition of the estimator}

In this section, we present a decomposition of the estimator into six remainder terms. Before we present the decomposition, we provide a supporting Lemma showing that the pseudo-outcome function possesses a double robustness property under~\ref{cond:doubly_robust}(a). 
\begin{lemma} \label{lm:double_robustness} 
If~\ref{cond:doubly_robust}(a) holds, then 
\begin{align*}
    \theta_0(a) = E_0 \left\{ \frac{Y-\mu_\infty(A,W)}{g_\infty(A, W)} + \int \mu_\infty(A, w) \, dQ_0(w) \, \Big| \,  A=a\right\}
\end{align*}
for $F_0$-almost every $a \in B_{\delta_1}(a_0)$.
\end{lemma}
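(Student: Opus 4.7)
The plan is to verify this identity by direct computation of the inner conditional expectation and then reducing the resulting discrepancy to an integral whose integrand vanishes pointwise by the case decomposition in~\ref{cond:doubly_robust}(a).

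First I would fix $a \in B_{\delta_1}(a_0)$ with $f_0(a) > 0$ (which holds for $F_0$-almost every such $a$) and apply the tower property. By iterating on $W$ and using $E_0(Y \mid A, W) = \mu_0(A, W)$,
\[
E_0\!\left[\frac{Y - \mu_\infty(A,W)}{g_\infty(A,W)} \,\Big|\, A = a\right] = \int \frac{\mu_0(a,w) - \mu_\infty(a,w)}{g_\infty(a,w)} \, dP_0(w \mid A = a).
\]
The key computational identity is that the conditional law of $W$ given $A = a$ satisfies $dP_0(w \mid A = a) = g_0(a, w) \, dQ_0(w)$, which follows from Bayes' rule and the definition $g_0(a,w) = f_0(a \mid w)/f_0(a)$. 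Likewise, the second piece of the pseudo-outcome is handled by $E_0\!\left[\int \mu_\infty(A, w) \, dQ_0(w) \,\big|\, A = a\right] = \int \mu_\infty(a, w) \, dQ_0(w)$ since the inner integral is a measurable function of $A$ alone.

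Combining these and subtracting $\theta_0(a) = \int \mu_0(a, w) \, dQ_0(w)$, an algebraic simplification yields the familiar product representation
\[
E_0\!\left[\xi_{\mu_\infty,g_\infty,Q_0}(Y,A,W) \,\big|\, A = a\right] - \theta_0(a) = \int \frac{\{\mu_\infty(a,w) - \mu_0(a,w)\}\{g_\infty(a,w) - g_0(a,w)\}}{g_\infty(a,w)} \, dQ_0(w),
\]
where dividing by $g_\infty$ is legitimate because $g_\infty \in \s{F}_g$ is bounded away from zero by~\ref{cond:uniform_entropy_nuisances}(a).

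It then remains to show the displayed integrand vanishes $Q_0$-almost surely. For each fixed $a$, I would set $\s{W}_i(a) := \{w : (a,w) \in \s{S}_i\}$ for $i = 1, 2, 3$ and form the disjoint partition $\tilde{\s{W}}_1 := \s{W}_1(a)$, $\tilde{\s{W}}_2 := \s{W}_2(a) \setminus \s{W}_1(a)$, $\tilde{\s{W}}_3 := \s{W}_3(a) \setminus (\s{W}_1(a) \cup \s{W}_2(a))$, which by \ref{cond:doubly_robust}(a) cover $\s{W}$. On $\tilde{\s{W}}_1$ the factor $\mu_\infty - \mu_0$ vanishes; on $\tilde{\s{W}}_2$ the factor $g_\infty - g_0$ vanishes; and on $\tilde{\s{W}}_3 \subseteq \s{S}_3$ both factors vanish. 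Summing over the partition gives the integral is zero, which establishes the claim.

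This proof is essentially mechanical; there is no real obstacle beyond the bookkeeping of the three subsets. The only subtlety is ensuring that the partition is handled without double-counting when the $\s{S}_i$ overlap, which is why I would explicitly disjointify the covering rather than relying on the raw union from~\ref{cond:doubly_robust}(a).
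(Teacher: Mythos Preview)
Your proposal is correct and follows essentially the same approach as the paper: both compute the conditional expectation via the tower property and the identity $dP_0(w\mid A=a)=g_0(a,w)\,dQ_0(w)$, arrive at the product form of the discrepancy, and then use the covering in~\ref{cond:doubly_robust}(a) to conclude the product vanishes. The only cosmetic difference is that the paper integrates the absolute discrepancy over $a\in B_{\delta_1}(a_0)$ before splitting over $\s S_1,\s S_2,\s S_3$, whereas you fix $a$ and slice the sets at that $a$; your version is slightly more direct and gives the identity for every $a$ with $f_0(a)>0$ rather than just $F_0$-almost every $a$.
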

\begin{proof}[\bfseries{Proof of Lemma~\ref{lm:double_robustness}}]
By the tower property and since $dP_0(w \mid A = a) = g_0(a,w) \, dQ_0(w)$, we have for any $a$ in the support of $A$ that 
\begin{align*}
   E_0 \left\{ \xi_{\infty}(Y, A, W) \mid A = a\right\}  &= E_0 \left\{ \frac{Y-\mu_\infty(A,W)}{g_\infty(A, W)} + \int \mu_\infty(A, w) \, dQ_0(w) \, \Big| \, A=a \right\} \\
    &= \int \frac{\mu_0(a, w)-\mu_\infty(a,w)}{g_\infty(a_0,w)}g_0(a,w) \, dQ_0(w) + \int \mu_\infty(a, w) \, dQ_0(w)\\
    & =\int \{\mu_0(a, w)-\mu_\infty(a,w)\}\left\{\frac{g_0(a,w)}{g_\infty(a,w)}-1\right\} \, dQ_0(w) + \int \mu_0(a, w) \, dQ_0(w) \\
    & =\int \{\mu_0(a, w)-\mu_\infty(a,w)\}\left\{\frac{g_0(a,w)}{g_\infty(a,w)}-1\right\} \, dQ_0(w) + \theta_0(a).
\end{align*}
Hence, 
\begin{align*}
    &\int_{B_{\delta_1}(a_0)} \left| E_0 \left\{ \xi_{\infty}(Y, A, W) \mid A = a\right\} - \theta_0(a)\right| \, dF_0(a)\\
    &\qquad= \int_{B_{\delta_1}(a_0)} \left|\int \{\mu_0(a, w)-\mu_\infty(a,w)\}\left\{\frac{g_0(a,w)}{g_\infty(a,w)}-1\right\} \, dQ_0(w) \right| \, dF_0(a) \\
    &\qquad \leq \iint_{B_{\delta_1}(a_0) \times \s{W}} \left|\{\mu_0(a, w)-\mu_\infty(a,w)\}\left\{\frac{g_0(a,w)}{g_\infty(a,w)}-1\right\}\right| \, d(F_0 \times Q_0)(a,w) \\
    &\qquad= \sum_{j=1}^3 \iint_{\s{S}_j} \left|\{\mu_0(a, w)-\mu_\infty(a,w)\}\left\{\frac{g_0(a,w)}{g_\infty(a,w)}-1\right\}\right| \, d(F_0 \times Q_0)(a,w).
\end{align*}
By~\ref{cond:doubly_robust}(a), at least one of the two terms inside the integral is zero for each of the three integrals over $\s{S}_1$, $\s{S}_2$, and $\s{S}_3$. Hence, the above expression is zero, and the claim follows.
\end{proof}

We now provide the decomposition of the estimator that we will use throughout our results. The decomposition involves a leading empirical mean, $\d{P}_n \phi^*_{\infty, h, b, a_0}$, which drives the first-order asymptotic behavior of the estimator, and six remainder terms. We separate the remainder into six terms because the methods of controlling these remainders are conceptually distinct, as we will see in the ensuing results.
\begin{lemma}\label{lm:first-order-decomposition}
If~\ref{cond:bounded_K} and~\ref{cond:doubly_robust}(a) hold, then the following expansion of the estimator holds for all $h \in (0,\delta_1)$:
\[ \theta_{n,h,b}(a_0) - \theta_{0}(a_0) = \d{P}_n \phi^*_{\infty, h, b, a_0} + \sum_{j=1}^6\mathrel{R}_{n,h,b, a_0, j},\]
where
\begin{align*}
    \mathrel{R}_{n, h,b,a_0, 1}&:= \theta_{0,h,b}(a_0) - \theta_{0}(a_0),\\
    \mathrel{R}_{n, h,b,a_0, 2}&:=(\d{P}_n - P_0)\left[ \left\{\Gamma_{0,h,b,a_0} \left(\psi_n + \int \mu_n \, dQ_0 \right) + \int \Gamma_{0,h,b,a_0} \mu_n \, dF_0 \right\} \right. \\
    &\qquad\qquad\qquad\qquad\left.-\left\{\Gamma_{0,h,b,a_0} \left(\psi_\infty + \int \mu_\infty \, dQ_0 \right) + \int \Gamma_{0,h,b,a_0} \mu_\infty \, dF_0 \right\} \right],\\  
    \mathrel{R}_{n, h,b,a_0, 3}&:=(\d{P}_n - P_0) \left\{ \left(\Gamma_{n,h,b,a_0} - \Gamma_{0,h,b,a_0}\right)\left( \psi_n + \int \mu_n \, dQ_0 \right) + \int \left(\Gamma_{n,h,b,a_0} - \Gamma_{0,h,b,a_0}\right) \mu_n \, dF_0 \right\},\\ 
    \mathrel{R}_{n, h,b,a_0, 4}&:=\iint\Gamma_{n,h,b,a_0}\{\mu_n-\mu_0\}\left\{1-\frac{g_0}{g_n}\right\}\,dF_0\,dQ_0,\\
    \mathrel{R}_{n, h,b, a_0, 5}&:=\iint \Gamma_{n,h,b,a_0}(a)\mu_n(a,w)\,d(Q_n-Q_0)(w)\,d(F_0-F_n)(a), \text{ and}\\
    \mathrel{R}_{n, h,b, a_0, 6} &:= e_1^T \b{D}^{-1}_{0,h,a_0,1}\left( \b{D}_{0,h,a_0,1} - \b{D}_{n,h,a_0,1}\right)\left( \b{D}^{-1}_{n,h,a_0,1} - \b{D}^{-1}_{0,h,a_0,1}\right) P_0 \left( w_{h,a_0, 1} K_{1, a_0} \theta_0\right) \\
    &\qquad- c_{0,h,a_0,2} \tau_n^2 e_3^T \b{D}^{-1}_{0, b,a_0,2}\left( \b{D}_{0,b,a_0,2} - \b{D}_{n,b,a_0,2} \right)\left( \b{D}^{-1}_{n,b,a_0,2} - \b{D}^{-1}_{0, b,a_0,2}\right) P_0 \left( w_{b,a_0, 2} K_{b, a_0} \theta_0\right) \\
    &\qquad -\tau_n^2 e_1^T    \left(\b{D}^{-1}_{n,h,a_0,1} - \b{D}^{-1}_{0,h,a_0,1} \right) (\d{P}_n - P_0)(\tilde{w}_{h,a_0,1} K_{h,a_0}) e_3^T \b{D}^{-1}_{0, b,a_0,2}P_0\left(w_{b,a_0,2} K_{b,a_0} \theta_0 \right) \\
    &\qquad -\tau_n^2  e_1^T   \left(\b{D}^{-1}_{n,h,a_0,1} - \b{D}^{-1}_{0,h,a_0,1} \right) \b{D}_{0,h,a_0,1}^{-1}\left( \b{D}_{0,h,a_0,1} -  \b{D}_{n,h,a_0,1}\right) P_0( \tilde{w}_{h,a_0,1} K_{h,a_0})  e_3^T \b{D}^{-1}_{0, b,a_0,2}P_0\left(w_{b,a_0,2} K_{b,a_0} \theta_0 \right) \\
    &\qquad - \tau_n^2  (c_{n,h,a_0,2} - c_{0,h,a_0,2}) e_3^T\left(\b{D}^{-1}_{n,b,a_0,2}-\b{D}^{-1}_{0,b,a_0,2}\right) P_0\left(w_{b,a_0,2} K_{b,a_0} \theta_0 \right)
\end{align*}
\end{lemma}
\begin{proof}[\bfseries{Proof of Lemma~\ref{lm:first-order-decomposition}}]
It is straightforward to see that $P_0 \gamma_{0,h,b,a_0}  = P_0 \left( \Gamma_{0,h,b,a_0} \theta_0\right)$.
Since~\ref{cond:doubly_robust}(a) holds by assumption, Lemma \ref{lm:double_robustness} implies that $E_0\left\{\xi_\infty(Y,A,W) \mid A =a\right\} = \theta_0(a)$ for $F_0$-almost every $a \in B_{\delta_1}(a_0)$, and since~\ref{cond:bounded_K} holds, for all $h \leq \delta_1$, $\Gamma_{0,h,b,a_0}(a) = 0$ for $a$ such that $|a - a_0| > \delta_1$. Thus,  
\[P_0 \left( \Gamma_{0,h,b,a_0} \xi_\infty\right) = E_0\left[ \Gamma_{0,h,b,a_0}(A) E_0 \left\{ \xi_\infty(Y,A,W) \mid A  \right\} \right] =E_0\left[ \Gamma_{0,h,b,a_0}(A) \theta_0(A) \right] = P_0\left(\Gamma_{0,h,b,a_0} \theta_0 \right). \]
Therefore,
\begin{align*}
    P_0 \phi_{\infty, h,b,a_0}^* &= P_0 \left( \Gamma_{0,h,b,a_0} \xi_\infty\right) - P_0 \gamma_{0,h,b,a_0} + \iint \Gamma_{0,h,b,a_0} \mu_\infty \, dF_0 \, dQ_0 - \iint \Gamma_{0,h,b,a_0} \mu_\infty\, dF_0 \, dQ_0\\
    &= P_0\left(\Gamma_{0,h,b,a_0} \theta_0 \right) - P_0\left(\Gamma_{0,h,b,a_0} \theta_0 \right) = 0.
\end{align*}
We now define 
\begin{align*}
     \gamma_{n,h,b,a_0}^\circ(a)&:=e_1^T\b{D}^{-1}_{n,h, a_0,1}w_{h,a_0, 1}(a) K_{h, a_0}(a)w_{h,a_0, 1}^T(a) \b{D}^{-1}_{n,h, a_0,1}\d{P}_n \left( w_{h,a_0,1} K_{h,a_0} \theta_{n}\right) \\
     &\qquad- e_3^T c_{n,h,a_0,2}\tau_n^2 \b{D}_{n,b,a_0,2}^{-1} w_{b,a_0,3}(a) K_{b,a_0}(a)w_{b,a_0,2}^T(a) \b{D}_{n,b,a_0,2}^{-1}\d{P}_n \left( w_{b,a_0,2} K_{b,a_0} \theta_{n} \right)\\
     &\qquad + \tau_n^2 e_1^T\b{D}^{-1}_{n,h, a_0,1}\left[ \tilde{w}_{h, a_0, 1}(a)  - w_{h,a_0, 1}(a) w^T_{h,a_0, 1}(a) \b{D}^{-1}_{n,h, a_0,1} \d{P}_n\left( \tilde{w}_{h, a_0, 1} K_{h,a_0} \right)\right] K_{h,a_0}(a)  \\
     &\qquad\qquad \times e_3^T\b{D}_{n, b,a_0,2}^{-1} \d{P}_n( w_{b,a_0,2} K_{b,a_0} \theta_n),
\end{align*}
where $\theta_n : a \mapsto \int \mu_n(a,w) \, dQ_n(w)$, and we note that 
\begin{align*}
    \d{P}_n  \gamma_{n,h,b,a_0}^\circ &=e_1^T\b{D}^{-1}_{n,h, a_0,1} \d{P}_n \left(w_{h,a_0, 1} K_{h, a_0}w_{h,a_0, 1}^T \right) \b{D}^{-1}_{n,h, a_0,1}\d{P}_n \left( w_{h,a_0,1} K_{h,a_0} \theta_n \right) \\
    &\qquad- e_3^T c_{n,h,a_0,2} \tau_n^2 \b{D}_{n,b,a_0,2}^{-1} \d{P}_n \left( w_{b,a_0,2} K_{b,a_0} w_{b,a_0,2}^T\right) \b{D}_{n,b,a_0,2}^{-1}\d{P}_n \left( w_{b,a_0,2} K_{b,a_0} \theta_n \right)\\
    &\qquad+ \tau_n^2 e_1^T\b{D}^{-1}_{n,h, a_0,1}\left[ \d{P}_n \left( \tilde{w}_{h, a_0, 1}  K_{h,a_0}\right) - \d{P}_n\left(w_{h,a_0, 1} w^T_{h,a_0, 1} K_{h,a_0}\right) \b{D}^{-1}_{n,h, a_0,1} \d{P}_n\left( \tilde{w}_{h, a_0, 1} K_{h,a_0} \right)\right]  \\
     &\qquad\qquad \times e_3^T\b{D}_{n, b,a_0,2}^{-1} \d{P}_n( w_{b,a_0,2} K_{b,a_0} \theta_n) \\
    &= e_1^T\b{D}^{-1}_{n,h, a_0,1} \b{D}_{n,h, a_0,1}  \b{D}^{-1}_{n,h, a_0,1}\d{P}_n \left( w_{h,a_0,1} K_{h,a_0} \theta_n \right) \\
    &\qquad- e_3^T c_{n,h,a_0,2} \tau_n^2 \b{D}_{n,b,a_0,2}^{-1} \b{D}_{n,b,a_0,2} \b{D}_{n,b,a_0,2}^{-1}\d{P}_n \left( w_{b,a_0,2} K_{b,a_0} \theta_n \right)\\
    &\qquad+ \tau_n^2 e_1^T\b{D}^{-1}_{n,h, a_0,1}\left[ \d{P}_n \left( \tilde{w}_{h, a_0, 1}  K_{h,a_0}\right) -  \b{D}_{n,h, a_0,1}\b{D}^{-1}_{n,h, a_0,1} \d{P}_n\left( \tilde{w}_{h, a_0, 1} K_{h,a_0} \right)\right] e_3^T\b{D}_{n, b,a_0,2}^{-1} \d{P}_n( w_{b,a_0,2} K_{b,a_0} \theta_n) \\
    &= e_1^T\b{D}^{-1}_{n,h, a_0,1} \d{P}_n \left( w_{h,a_0,1} K_{h,a_0} \theta_n \right)- e_3^T  c_{n,h,a_0,2}\tau_n^2 \b{D}_{n,b,a_0,2}^{-1} \d{P}_n \left( w_{b,a_0,2} K_{b,a_0} \theta_n \right)\\
    &= \d{P}_n \left( e_1^T\b{D}^{-1}_{n,h, a_0,1}  w_{h,a_0,1} K_{h,a_0} \theta_n - e_3^T c_{n,h,a_0,2}  \tau_n^2 \b{D}_{n,b,a_0,2}^{-1} w_{b,a_0,2} K_{b,a_0} \theta_n \right)\\
    &= \d{P}_n \left( \Gamma_{n,h, b, a_0} \theta_n\right).
\end{align*}
We then define the plug-in influence function estimator $\phi_{n, h,b,a_0}^\circ$ as
\begin{align*}
   \phi_{n, h,b,a_0}^\circ(y,a,w) &:= \Gamma_{n,h,b,a_0}(a) \xi_n(y,a,w) -  \gamma_{n,h,b,a_0}^\circ(a) + \int\Gamma_{n,h,b,a_0}(\bar{a}) \left\{ \mu_n(\bar{a},w) - \int \mu_n(\bar{a}, \bar{w}) \, dQ_n(\bar{w}) \right\} \, dF_n(\bar{a}), 
\end{align*}
and we note that
\begin{align*}
    \d{P}_n \phi_{n, h,b,a_0}^\circ &=  \d{P}_n \left( \Gamma_{n,h, b, a_0} \xi_n\right) - \d{P}_n \gamma_{n,h,b,a_0}^\circ + \iint\Gamma_{n,h,b,a_0}  \mu_n \, dF_n \, dQ_n - \iint\Gamma_{n,h,b,a_0} \mu_n \, dQ_n \, dF_n \\
    &= \d{P}_n \Gamma_{n,h,b,a_0} (\xi_n - \theta_n) =  \d{P}_n \Gamma_{n,h,b,a_0}\psi_n = \theta_{n,h,b}(a_0) - \d{P}_n \Gamma_{n,h,b,a_0}\theta_n.
\end{align*}
Therefore, $\theta_{n,h,b}(a_0) = \d{P}_n \Gamma_{n,h,b,a_0}\theta_n + \d{P}_n \phi_{n, h,b,a_0}^\circ$, which establishes the one-step representation of the debiased estimator. By adding and subtracting terms and using the derivations above, we can now write
\begin{align*}
    \theta_{n,h,b}(a_0) - \theta_0(a_0) &= \d{P}_n \phi_{\infty, h,b,a_0}^*  + (\d{P}_n - P_0)\left(\phi_{n, h,b,a_0}^\circ -  \phi_{\infty, h,b,a_0}^*\right) + \left\{ \d{P}_n \Gamma_{n,h,b,a_0}\theta_n - \theta_{0,h,b}(a_0) + P_0 \phi_{n, h,b,a_0}^\circ \right\}  \\
    &\qquad+ \left\{ \theta_{0,h,b}(a_0) - \theta_0(a_0)\right\}.
\end{align*}
We note that the last summand in braces equals $R_{n,h,b,a_0,1}$. This is a standard first-order expansion of the estimator $\theta_{n,h,b}(a_0)$ of the pathwise differentiable parameter $\theta_{0,h,b}(a_0)$ for fixed $h$ and $b$, though our analysis in subsequent results will consider the case where $h$ and $b$ go to zero as $n$ grows.

We now enter a calculation showing that the second and third summands above equal $\sum_{j=2}^6 R_{n,h,b,a_0,j}$. For convenience, in this derivation we omit the subscripts $h$, $b$, and $a_0$ when it is clear. We have
\begin{align}
    &(\d{P}_n - P_0)\left(\phi_{n, h,b,a_0}^\circ -  \phi_{\infty, h,b,a_0}^*\right) + \left\{ \d{P}_n \Gamma_{n,h,b,a_0}\theta_n - \theta_{0,h,b}(a_0) + P_0 \phi_{n, h,b,a_0}^\circ \right\} \nonumber\\
    &\qquad=(\d{P}_n - P_0) \left( \left\{ \Gamma_n \psi_n + \Gamma_n \int \mu_n \, dQ_n - \gamma_n^\circ + \int \Gamma_n \mu_n \, dF_n\right\} \right.\nonumber \\
    &\qquad\qquad\qquad\qquad\qquad \left.- \left\{ \Gamma_0 \psi_\infty + \Gamma_0 \int \mu_\infty \, dQ_0 - \gamma_0 + \int \Gamma_0 \mu_\infty \, dF_0\right\} \right) \nonumber\\
    &\qquad\qquad + \left\{ \d{P}_n ( \Gamma_n \theta_n) - P_0 (\Gamma_0 \theta_0) + P_0 \left(\Gamma_n \psi_n + \Gamma_n \int \mu_n \, dQ_n - \gamma_n^\circ + \int \Gamma_n \mu_n \, dF_n - \iint \Gamma_n \mu_n\,dF_n \, dQ_n \right) \right\}\nonumber \\
    &\qquad= (\d{P}_n - P_0) (\Gamma_n \psi_n - \Gamma_0 \psi_\infty ) + \left\{- (\d{P}_n - P_0) \gamma_n^\circ + \d{P}_n( \Gamma_n \theta_n ) - P_0 \gamma_n^\circ \right\} + \left\{ (\d{P}_n - P_0) \gamma_0 + P_0( \Gamma_n \theta_0) - P_0 ( \Gamma_0 \theta_0)\right\}\nonumber\\
    &\qquad\qquad + \iint \Gamma_n \mu_n \left\{dQ_n \, d(F_n - F_0) + dF_n \, d(Q_n - Q_0) + dQ_n \, dF_0 + dF_n \, dQ_0 - dF_n \, dQ_n - dQ_0 \, dF_0\right\}\nonumber \\
    &\qquad\qquad - \iint \Gamma_0 \mu_\infty \left\{ dQ_0 \, d(F_n - F_0) + dF_0 \, d(Q_n - Q_0) \right\}\nonumber \\
    &\qquad\qquad + \left\{ P_0 (\Gamma_n \psi_n) + \iint \Gamma_n \mu_n \, dQ_0 \, dF_0 - P_0 (\Gamma_n \theta_0) \right\}\label{eq:inf_fn_expansion}
\end{align}
We note that we added and subtracted the terms $\iint \Gamma_n \mu_n \, dQ_0 \, dF_0$ and $P_0 \Gamma_n \theta_0$ above.  We address each of the summands in equation~\eqref{eq:inf_fn_expansion} in turn. First, we write
\[ (\d{P}_n - P_0) (\Gamma_n \psi_n - \Gamma_0 \psi_\infty ) = (\d{P}_n - P_0)\left\{(\Gamma_n  -\Gamma_0) \psi_n\right\} + (\d{P}_n - P_0)\left\{\Gamma_0 (\psi_n- \psi_\infty )\right\}.\]
Next, we note that since $\d{P}_n \gamma_n^\circ = \d{P}_n (\Gamma_n \theta_n)$,
$ - (\d{P}_n - P_0) \gamma_n^\circ + \d{P}_n( \Gamma_n \theta_n ) - P_0 \gamma_n^\circ = 0.$
Next, defining $\tilde{D}_{0,h} := P_0(\tilde{w}_{h,a_0,1}K_{h,a_0})$ and $\tilde{D}_{n,h} := \d{P}_n(\tilde{w}_{h,a_0,1}K_{h,a_0})$ we have
\begin{align*}
    (\d{P}_n - P_0)& \gamma_0+ P_0( \Gamma_n \theta_0) - P_0 ( \Gamma_0 \theta_0)\\
    &=   e_1^T  \b{D}^{-1}_{0,h}\left(\b{D}_{n,h} - \b{D}_{0,h }\right) \b{D}^{-1}_{0,h}P_0 \left(w_{h,a_0, 1} K_{h, a_0} \theta_0 \right) \\
    &\qquad - e_3^T c_{0,2} \tau_n^2 \b{D}^{-1}_{0, b}(\b{D}_{n,b} - \b{D}_{0,b}) \b{D}^{-1}_{0, b} P_0\left(w_{b,a_0,2} K_{b,a_0} \theta_0 \right) \\
    &\qquad +  e_1^T \tau_n^2  \b{D}^{-1}_{0,h} \left(\tilde{D}_{n,h} -  \b{D}_{n,h} \b{D}^{-1}_{0,h} \tilde{D}_{0,h} \right) e_3^T \b{D}^{-1}_{0, b}P_0\left(w_{b,a_0,2} K_{b,a_0} \theta_0 \right) \\
    &\qquad + e_1^T  \left(\b{D}^{-1}_{n,h} - \b{D}^{-1}_{0,h}\right) P_0 \left( w_{h,a_0, 1} K_{h, a_0} \theta_0\right) - e_3^T c_{0,2} \tau_n^2 \left(\b{D}^{-1}_{n,b} - \b{D}^{-1}_{0, b}\right)  P_0\left(w_{b,a_0,2} K_{b,a_0} \theta_0 \right) \\
    &\qquad - \tau_n^2  (c_{n,2} - c_{0,2}) e_3^T\b{D}^{-1}_{0,b}  P_0\left(w_{b,a_0,2} K_{b,a_0} \theta_0 \right)- \tau_n^2  (c_{n,2} - c_{0,2}) e_3^T\left(\b{D}^{-1}_{n,b}-\b{D}^{-1}_{0,b}\right) P_0\left(w_{b,a_0,2} K_{b,a_0} \theta_0 \right)\\
    &= e_1^T  \left[ \b{D}^{-1}_{0,h}\b{D}_{n,h}(\b{D}^{-1}_{0,h} - \b{D}^{-1}_{n,h}) +  (\b{D}^{-1}_{n,h} - \b{D}^{-1}_{0,h})\right] P_0\left(w_{h,a_0, 1} K_{h, a_0} \theta_0\right) \\
    &\qquad  - e_3^T c_{0,2} \tau_n^2 \left[ \b{D}^{-1}_{0, b}\b{D}_{n,b}(\b{D}^{-1}_{0, b}- \b{D}^{-1}_{n,b})+(\b{D}^{-1}_{n,b} - \b{D}^{-1}_{0, b}) \right]P_0 \left( w_{b,a_0,2} K_{b,a_0} \theta_0 \right) \\
    &\qquad + e_1^T \tau_n^2   \left(\b{D}^{-1}_{0,h} \tilde{D}_{n,h} -  \b{D}^{-1}_{0,h}\b{D}_{n,h} \b{D}^{-1}_{0,h} \tilde{D}_{0,h} - \b{D}_{n,h}^{-1} \tilde{D}_{n,h} +  \b{D}_{0,h}^{-1} \tilde{D}_{0,h} \right) e_3^T \b{D}^{-1}_{0, b}P_0\left(w_{b,a_0,2} K_{b,a_0} \theta_0 \right) \\
    &\qquad - \tau_n^2  (c_{n,2} - c_{0,2}) e_3^T\left(\b{D}^{-1}_{n,b}-\b{D}^{-1}_{0,b}\right) P_0\left(w_{b,a_0,2} K_{b,a_0} \theta_0 \right) \\
    &= e_1^T \b{D}^{-1}_{0,h} ( \b{D}_{0,h} - \b{D}_{n,h})(\b{D}^{-1}_{n,h} - \b{D}^{-1}_{0,h})  P_0 \left( w_{h,a_0, 1} K_{h, a_0} \theta_0 \right) \\
    &\qquad - e_3^T c_{0,2} \tau_n^2 \b{D}^{-1}_{0, b}(\b{D}_{0, b} - \b{D}_{n,b})(\b{D}^{-1}_{n,b} - \b{D}^{-1}_{0, b})P_0\left(w_{b,a_0,2} K_{b,a_0} \theta_0 \right) \\
    &\qquad - e_1^T \tau_n^2   \left(\b{D}^{-1}_{n,h} - \b{D}^{-1}_{0,h} \right) \left[ \left( \tilde{D}_{n,h} -\tilde{D}_{0,h} \right) + \b{D}_{0,h}^{-1}\left( \b{D}_{0,h} -  \b{D}_{n,h}\right) \tilde{D}_{0,h}   \right] e_3^T \b{D}^{-1}_{0, b}P_0\left(w_{b,a_0,2} K_{b,a_0} \theta_0 \right) \\
    &\qquad - \tau_n^2  (c_{n,2} - c_{0,2}) e_3^T\left(\b{D}^{-1}_{n,b}-\b{D}^{-1}_{0,b}\right) P_0\left(w_{b,a_0,2} K_{b,a_0} \theta_0 \right),
\end{align*}
which equals $R_{n,h,b,a_0,6}$. We address the next two terms in equation~\eqref{eq:inf_fn_expansion} together:
\begin{align*}
    \iint \Gamma_n \mu_n &\left\{ dQ_n \, d(F_n - F_0) + dF_n \, d(Q_n - Q_0) + dQ_n \, dF_0 + dF_n \, dQ_0 - dF_n \, dQ_n - dQ_0 \, dF_0\right\} \\
    &\qquad - \iint \Gamma_0 \mu_\infty \left\{ dQ_0 \, d(F_n - F_0) + dF_0 \, d(Q_n - Q_0) \right\} \\
    &=\iint \Gamma_n \mu_n \left\{ dQ_n \, dF_n - dQ_0 \, dF_0\right\} - \iint \Gamma_0 \mu_n \left\{ dQ_0 \, d(F_n - F_0) + dF_0 \, d(Q_n - Q_0) \right\}\\
    &\qquad+ \iint \Gamma_0 (\mu_n- \mu_\infty) \left\{ dQ_0 \, d(F_n - F_0) + dF_0 \, d(Q_n - Q_0) \right\} \\
    &=\iint \Gamma_n \mu_n \left\{ dQ_n \, dF_n - dQ_0 \, dF_0 -dQ_0 \, d(F_n - F_0) - dF_0 \, d(Q_n - Q_0)\right\}\\
    &\qquad+ \iint \Gamma_0 (\mu_n- \mu_\infty) \left\{dQ_0 \, d(F_n - F_0) + dF_0 \, d(Q_n - Q_0) \right\}\\
    &\qquad + \iint (\Gamma_n - \Gamma_0) \mu_n \left\{ dQ_0 \, d(F_n - F_0) + dF_0 \, d(Q_n - Q_0) \right\} .
\end{align*}
We then note that
\begin{align*}
    \iint \Gamma_n \mu_n &\left\{ dQ_n \, dF_n - dQ_0 \, dF_0 -dQ_0 \, d(F_n - F_0) - dF_0 \, d(Q_n - Q_0)\right\}\\
    &= \iint \Gamma_n \mu_n \left\{ dQ_n \, dF_n  -dQ_0 \, dF_n - dF_0 \, dQ_n + dQ_0 \, dF_0\right\} \\
    &= \iint \Gamma_n \mu_n \, d(Q_n - Q_0) \, d(F_n - F_0),
\end{align*}
which equals $R_{n,h,b,a_0,5}$. Finally, we have
\begin{align*}
    P_0 (\Gamma_n \psi_n) + \iint \Gamma_n \mu_n \, dQ_0 \, dF_0 - P_0 (\Gamma_n \theta_0) &= P_0\left (\Gamma_n \frac{\mu_0 - \mu_n}{g_n}\right) + \iint \Gamma_n (\mu_n - \mu_0) \, dQ_0 \, dF_0 \\
    &= \iint \Gamma_n\left\{\frac{g_0}{g_n} (\mu_0 - \mu_n) + (\mu_n - \mu_0)\right\} \, dQ_0 \, dF_0 \\
    &= \iint \Gamma_n\left(1 - \frac{g_0}{g_n}\right) (\mu_n - \mu_0) \, dQ_0 \, dF_0,
\end{align*}
which equals  $R_{n,h,b,a_0,4}$. Putting these derivations back into equation~\eqref{eq:inf_fn_expansion}, we have
\begin{align*}
    (\d{P}_n - P_0)&\left(\phi_{n, h,b,a_0}^* -  \phi_{\infty, h,b,a_0}^*\right) + \left\{ \theta_{n,h,b}(a_0) - \theta_{0,h,b}(a_0) + P_0 \phi_{n, h,b,a_0}^* \right\} \\
    &= (\d{P}_n - P_0)\left\{ (\Gamma_n  -\Gamma_0) \psi_n\right\} + (\d{P}_n - P_0)\left\{\Gamma_0 (\psi_n- \psi_\infty )\right\} \\
    &\qquad+ \iint \Gamma_0 (\mu_n- \mu_\infty) \left\{ dQ_0 \, d(F_n - F_0) + dF_0 \, d(Q_n - Q_0) \right\} \\
    &\qquad + \iint (\Gamma_n - \Gamma_0) \mu_n \left\{ dQ_0 \, d(F_n - F_0) + dF_0 \, d(Q_n - Q_0) \right\} \\
    &\qquad + R_{n,h,b,a_0,4} + R_{n,h,b,a_0,5} + R_{n,h,b,a_0,6} \\
    &=  (\d{P}_n - P_0)\left[ \left\{\Gamma_0 \left(\psi_n + \int \mu_n \, dQ_0 \right) + \int \Gamma_0 \mu_n \, dF_0 \right\} -\left\{\Gamma_0 \left(\psi_\infty + \int \mu_\infty \, dQ_0 \right) + \int \Gamma_0 \mu_\infty \, dF_0 \right\} \right]  \\
    &\qquad + (\d{P}_n - P_0)\left\{ (\Gamma_n  -\Gamma_0) \left(\psi_n + \int \mu_n \, dQ_0\right) + \int(\Gamma_n - \Gamma_0) \mu_n \, dF_0 \right]\\
    &\qquad+ R_{n,h,b,a_0,4} + R_{n,h,b,a_0,5} + R_{n,h,b,a_0,6},
\end{align*}
which equals $\sum_{j=2}^6 R_{n,h,b,a_0,j}$.
\end{proof}

\clearpage

\section{Proof of Theorems}

\begin{proof}[\bfseries{Proof of Theorem~1}]
By Lemma~\ref{lm:first-order-decomposition}, we have
\[ \theta_{n,h,b}(a_0) - \theta_0(a_0) =  \d{P}_n \phi_{\infty, h, b,a_0}^* + \sum_{j=1}^6 R_{n,h,b,a_0,j}.\]
By Lemma~\ref{lm:R1}, $R_{n,h,b,a_0,1} = \fasterthandet\left( h^2\right)$. By Corollary~\ref{cor:R2R3} and Lemmas~\ref{lm:R4},~\ref{lemma:R5}, and~\ref{lm:R6}, $R_{n,h,b,a_0,j} = \fasterthan\left( \{nh\}^{-1/2}\right)$ for each $j \in \{2, \dotsc, 6\}$. This establishes the first claim. By Lemma~\ref{lm:lindeberg_feller_CLT}, $(nh)^{1/2}\d{P}_n \phi_{\infty, h, b,a_0}^*$ converges in distribution to the claimed limit distribution. If $n h^{5} = \boundeddet(1)$, then $(nh)^{1/2} R_{n,h,b,a_0,1} = \fasterthandet\left(\{nh^5\}^{1/2}\right) = \fasterthandet(1)$, so
\[ (nh)^{1/2} \left[\theta_{n,h,b}(a_0) - \theta_0(a_0)\right] =   (nh)^{1/2}\d{P}_n \phi_{\infty, h, b,a_0}^* + \fasterthan(1),\]
so $(nh)^{1/2} \left[\theta_{n,h,b}(a_0) - \theta_0(a_0)\right]$ converges to this same limit. For the final statement, by Lemma~\ref{lemma:covar}, $\sigma_{n,h,b}^2(a_0) = h\d{P}_n ( \phi_{n,h,b,a_0}^*)^2$ satisfies $\sigma_{n,h,b}^2(a_0) - h P_0( \phi_{\infty,h,b,a_0}^*)^2 = \fasterthan(1)$. By Lemma~\ref{lm:lindeberg_feller_CLT}, $ h P_0( \phi_{\infty,h,b,a_0}^*)^2$ converges to the variance of the limit distribution. Hence, $\sigma_{n,h,b}(a_0)$ converges in probability to the standard deviation of the limit distribution, so the final statement follows by Slutsky's theorem.
\end{proof}

\begin{proof}[\bfseries{Proof of Theorem~2}]
By Theorem~1 and since $nh^5 = \boundeddet(1)$, we have
\[ (nh)^{1/2} \begin{pmatrix}\theta_{n,h,b}(a_1) -\theta_0(a_1)\\ \vdots \\ \theta_{n,h,b}(a_m) - \theta_0(a_m) \end{pmatrix} = (nh)^{1/2} \d{P}_n\begin{pmatrix}\phi_{\infty, h, b,a_1}^* \\ \vdots \\ \phi_{\infty, h, b,a_m}^*  \end{pmatrix} + \fasterthan(1).\]
The result follows by Lemma~\ref{lm:joint_convergence} and Slutsky's theorem. 
\end{proof}


\begin{proof}[\bfseries{Proof of Theorem~3}]
By Lemma~\ref{lm:first-order-decomposition}, we have
\[ \sup_{a\in\mathcal{A}_0}\left|\theta_{n,h,b}(a) - \theta_0(a)\right| =  \sup_{a\in\mathcal{A}_0}\left|\d{P}_n \phi_{\infty, h, b,a}^* + \sum_{j=1}^6 R_{n,h,b,a,j}\right|.\]
By Lemma~\ref{lm:R1}, $\sup_{a\in\mathcal{A}_0}\left|R_{n,h,b,a,1}\right| = \boundeddet(h^{2+\delta_4})$ for some $\delta_4 > 0$. By Corollary~\ref{cor:supR2R3} and Lemmas~\ref{lm:R4}, \ref{lemma:R5} and \ref{lm:R6}, $\sup_{a\in\mathcal{A}_0} \left|R_{n,h,b,a,j} \right|= \fasterthan\left(\{nh\log n\}^{-1/2}\right)$ for $j \in \{2, \dots\, 6\}$. We now write
\begin{align*}
\sup_{a\in\mathcal{A}_0} \left| \d{P}_n \phi_{\infty, h, b,a}^* \right| &= (nh)^{-1/2} \sup_{a\in\mathcal{A}_0} \left|\sigma_{\infty,h,b}(a) \d{G}_n \frac{h^{1/2}\phi_{\infty, h, b,a}^*}{\sigma_{\infty,h,b}(a)} \right| \\
&\leq (nh)^{-1/2} \sup_{a\in\mathcal{A}_0} \left|\sigma_{\infty,h,b}(a) \right| \sup_{a\in\mathcal{A}_0} \left| \d{G}_n \frac{h^{1/2}\phi_{\infty, h, b,a}^*}{\sigma_{\infty,h,b}(a)} \right| \\
&= (nh)^{-1/2} \sup_{a\in\mathcal{A}_0} \left|\sigma_{\infty,h,b}(a) \right| \left[ \sup_{a\in\mathcal{A}_0}  \left| \d{G}_n \frac{h^{1/2}\phi_{\infty, h, b,a}^*}{\sigma_{\infty,h,b}(a)} \right| -  \sup_{a\in\mathcal{A}_0} \left| Z_{\infty, h,b}(a)\right| \right]\\
&\qquad + (nh)^{-1/2} \sup_{a\in\mathcal{A}_0} \left|\sigma_{\infty,h,b}(a) \right| \sup_{a\in\mathcal{A}_0} \left| Z_{\infty, h,b}(a)\right|.
\end{align*}
By Lemma~\ref{lm:unif_bounded_variance}, $\sup_{a\in\mathcal{A}_0} \left|\sigma_{\infty,h,b}(a) \right| = \boundeddet(1)$, by Lemma~\ref{lm:subgaussian}, $\sup_{a\in\mathcal{A}_0} \left| Z_{\infty, h,b}(a)\right| = \bounded(\{\log h^{-1}\}^{1/2} )$, which is $\bounded(\{\log n\}^{1/2})$ since $nh \longrightarrow \infty$, and by Lemma~\ref{lm:gaussian_approx}, $\sup_{a\in\mathcal{A}_0} \left|\d{G}_n \frac{h^{1/2}\phi_{\infty, h, b,a}^*}{\sigma_{\infty,h,b}(a)} \right| -  \sup_{a\in\mathcal{A}_0} \left| Z_{\infty, h,b}(a)\right| = \fasterthan(1)$. Hence, 
\[\sup_{a\in\mathcal{A}_0} \left| \d{P}_n \phi_{\infty, h, b,a}^* \right| = \bounded\left( \left\{ nh / \log n\right\}^{-1/2}\right). \]
This proves the first claim. The second claim follows by Lemma~\ref{lemma:covar}. 

For the final claim, by the triangle inequality, we have
\begin{align*}
 &\sup_{t \in \d{R}}\left|  P_0 \left(\sup_{a_0 \in \s{A}_0}(nh)^{1/2} \left| \frac{\theta_{n,h,b}(a_0) - \theta_0(a_0)}{\sigma_{n,h,b}(a_0)} \right| \leq t\right) -  P_0 \left(\max_{a_0 \in \s{A}_n}\left| Z_{n, h, b}(a_0)\right| \leq t\mid \b{O}_n\right) \right| \\
 &\qquad \leq  \sup_{t \in \d{R}}\left|  P_0 \left(\sup_{a_0 \in \s{A}_0}(nh)^{1/2} \left| \frac{\theta_{n,h,b}(a_0) - \theta_0(a_0)}{\sigma_{n,h,b}(a_0)} \right| \leq t\right) -  P_0 \left(\sup_{a_0 \in \s{A}_0}\left| Z_{\infty, h, b}(a_0)\right| \leq t\right) \right| \\
 &\qquad\qquad +  \sup_{t \in \d{R}}\left| P_0 \left(\sup_{a_0 \in \s{A}_0}\left| Z_{\infty, h, b}(a_0)\right| \leq t\right) -  P_0 \left(\max_{a_0 \in \s{A}_n}\left| Z_{\infty, h, b}(a_0)\right| \leq t\right) \right| \\
 &\qquad\qquad +  \sup_{t \in \d{R}}\left| P_0 \left(\max_{a_0 \in \s{A}_n}\left| Z_{\infty, h, b}(a_0)\right| \leq t\right) -  P_0 \left(\max_{a_0 \in \s{A}_n}\left| Z_{n, h, b}(a_0)\right| \leq t\mid \b{O}_n\right) \right|.
\end{align*}
The first term on the right hand side is $\fasterthandet(1)$ by Lemma~\ref{lemma:sup_empirical_process}. The second term on the right hand side is $\fasterthandet(1)$ by Lemma~\ref{lemma:finite_approx}. The last term on the right hand side is $\fasterthan(1)$ by Lemma~\ref{lemma:approx_process}. 
\end{proof}
\clearpage

\section{Analysis of the leading term}

We first establish the following result on the behavior of $\b{D}^{-1}_{0, h, a_0,1}$ and $\b{D}^{-1}_{0, b, a_0,3}$ as $h, b \longrightarrow 0$.
\begin{lemma}\label{lm:D0_altform}
If conditions~\ref{cond:bounded_K} and~\ref{cond:cont_density} hold, then $\b{D}^{-1}_{0, h, a_0,1} = f_0(a_0)^{-1} \b{S}_{2}^{-1} + \boundeddet(h)$ and $c_{0, h, a_0,2} = c_2 + \boundeddet(h)$ as $h \longrightarrow 0$, and $\b{D}^{-1}_{0, b, a_0,2} = f_0(a_0)^{-1} \b{S}_{3}^{-1} + \boundeddet(b)$ as $b \longrightarrow 0$.

If conditions~\ref{cond:bounded_K} and~\ref{cond:holder_smooth_theta} hold, then $\sup_{a_0 \in \s{A}_0} \left\| \b{D}^{-1}_{0, h, a_0,1} - f_0(a_0)^{-1} \b{S}_{2}^{-1}\right\|_{\infty} = \boundeddet(h)$, $\sup_{a_0 \in \s{A}_0} | c_{0, h, a_0,2} -c_2| = \boundeddet(h)$, and $\sup_{a_0 \in \s{A}_0} \left\| \b{D}^{-1}_{0, h, a_0,2} - f_0(a_0)^{-1} \b{S}_{3}^{-1}\right\|_{\infty} = \boundeddet(h)$ as $h \longrightarrow 0$.
\end{lemma}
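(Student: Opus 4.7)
The plan is to first compute the entries of $\b{D}_{0,h,a_0,1}$ and $\b{D}_{0,b,a_0,3}$ via the change of variables $u = (a - a_0)/h$. A generic $(i,k)$ entry of $\b{D}_{0,h,a_0,1}$ takes the form $\int u^{i+k-2} K(u) f_0(a_0 + uh)\, du$. Since $K$ is supported on $[-1,1]$ by~\ref{cond:bounded_K} and $f_0$ is Lipschitz continuous on $B_{\delta_1}(a_0)$ by~\ref{cond:cont_density}(b), for $h < \delta_1$ we have $f_0(a_0 + uh) = f_0(a_0) + \boundeddet(h)$ uniformly over $u \in [-1,1]$. Integrating gives $\b{D}_{0,h,a_0,1} = f_0(a_0)\b{S}_2 + \boundeddet(h)$ entrywise, and the analogous statement with $b$ in place of $h$ and $\b{S}_4$ in place of $\b{S}_2$ holds for $\b{D}_{0,b,a_0,3}$.

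Next, I would establish that $\b{S}_2$ and $\b{S}_4$ are invertible. Symmetry of $K$ forces all odd moments of $K$ to vanish, so $\b{S}_2 = \mathrm{diag}(1, c_2)$, which is invertible since $c_2 > 0$. After reordering rows and columns according to index parity, $\b{S}_4$ becomes block diagonal with $2\times 2$ blocks $\left(\begin{smallmatrix} 1 & c_2 \\ c_2 & c_4 \end{smallmatrix}\right)$ and $\left(\begin{smallmatrix} c_2 & c_4 \\ c_4 & c_6 \end{smallmatrix}\right)$, whose determinants $c_4 - c_2^2$ and $c_2 c_6 - c_4^2$ are strictly positive by the Cauchy--Schwarz inequality applied to the density $K$. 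Combined with $f_0(a_0) > 0$ from~\ref{cond:cont_density}(b), this makes $f_0(a_0)\b{S}_2$ and $f_0(a_0)\b{S}_4$ invertible with bounded inverse norms.

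I would then conclude via the matrix perturbation identity $A^{-1} - B^{-1} = B^{-1}(B - A)A^{-1}$ applied to $A = \b{D}_{0,h,a_0,1}$ and $B = f_0(a_0)\b{S}_2$. For $h$ small enough that the entrywise perturbation is smaller than half the smallest singular value of $f_0(a_0)\b{S}_2$, $\b{D}_{0,h,a_0,1}$ is invertible with inverse of bounded norm, and the identity yields $\b{D}^{-1}_{0,h,a_0,1} - f_0(a_0)^{-1}\b{S}_2^{-1} = \boundeddet(h)$. The same argument applied with $b$ and $\b{S}_4$ handles $\b{D}^{-1}_{0,b,a_0,3}$.

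For the uniform statement, I would rely on~\ref{cond:holder_smooth_theta}(b), which gives a \emph{single} Lipschitz constant for $f_0$ on $\s{A}_{\delta_3}$ and a positive lower bound $\inf_{a \in \s{A}_{\delta_3}} f_0(a) > 0$. The first makes the entrywise error $\sup_{a_0 \in \s{A}_0}\|\b{D}_{0,h,a_0,1} - f_0(a_0)\b{S}_2\|_\infty = \boundeddet(h)$, and the second makes $\sup_{a_0 \in \s{A}_0}\|(f_0(a_0)\b{S}_2)^{-1}\|_\infty$ finite, so the same perturbation argument yields the bound uniformly in $a_0 \in \s{A}_0$. I do not anticipate a serious obstacle here; the only care needed is in confirming that the perturbation constants are uniform in $a_0$, which is a direct consequence of the uniform bounds imposed by~\ref{cond:holder_smooth_theta}(b).
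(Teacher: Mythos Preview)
Your proposal is correct and follows essentially the same route as the paper: change of variables to compute the entries, Lipschitz continuity of $f_0$ to get $\b{D}_{0,h,a_0,1} = f_0(a_0)\b{S}_2 + \boundeddet(h)$, and then pass to inverses. You are in fact more careful than the paper, which simply asserts the passage to inverses ``since $f_0(a_0) > 0$'' without spelling out the invertibility of $\b{S}_2$, $\b{S}_4$ or the perturbation identity; your explicit block-diagonal argument and use of $A^{-1} - B^{-1} = B^{-1}(B-A)A^{-1}$ fill in that gap.
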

\begin{proof}[\bfseries{Proof of Lemma~\ref{lm:D0_altform}}] The proof is analogous to that of Section 3.7 of \cite{fan1996} and many others. By definition of $\b{D}_{0, h, a_0,1}$ and the change of variables $u = (t-a_0)/h$,
\begin{align*}
   \b{D}_{0, h, a_0,1}[j,k] &= \int \left(\frac{t - a_0}{h}\right)^{j+k-2}h^{-1}K\left(\frac{t - a_0}{h}\right) \, dF_0(t) = \int u^{j+k-2}K(u) f_0(a_0+uh)\, du.
\end{align*}
By~\ref{cond:cont_density}, $f_0(a_0+uh) = f_0(a_0) + \boundeddet(uh)$, and since $K$ has support $[-1,1]$ by~\ref{cond:bounded_K}, we then have
\[\int u^{j+k-2}K(u) f_0(a_0+uh)\, du = f_0(a_0)\int u^{j+k-2}K(u)\, du + \boundeddet(h) = f_0(a_0) c_{j+k-2}+\boundeddet(h).\]
Hence, $\b{D}_{0, h, a_0,1} = f_0(a_0) \b{S}_2 + \boundeddet(h)$, so $\b{D}_{0, h, a_0,1}^{-1} = f_0(a)^{-1}\b{S}_{2}^{-1} + \boundeddet(h)$ since $f(a_0) > 0$. The proof for $\b{D}_{0, b, a_0, 2}$ is analogous. By the same logic, we have $P_0 (\tilde{w}_{h,a_0,1} K_{h,a_0}) = f_0(a_0)(c_2, c_3)^T + \boundeddet(h)$. Hence,
\begin{align*}
c_{0,h,a_0,2} &= e_1^T\b{D}_{0, h, a_0,1}^{-1} P_0 (\tilde{w}_{h,a_0,1} K_{h,a_0}) = e_1^T\left[f_0(a_0)^{-1} \b{S}_2^{-1} + \boundeddet(h)\right] \left[ f_0(a_0)(c_2, 0)^T + \boundeddet(h)\right] = e_1^T\b{S}_2^{-1} (c_2, 0)^T  + \boundeddet(h) \\
&= c_2 + \boundeddet(h).
\end{align*}
For the uniform result, we have for each $j,k \in \{1,2\}$,
\begin{align*}
    \sup_{a_0 \in \s{A}_0}\left|\b{D}_{0, h, a_0,1}[j,k] - f_0(a_0)c_{j+k+2} \right|  &= \sup_{a_0 \in \s{A}_0}\left|\int u^{j+k-2}K(u) \left[ f_0(a_0+uh) - f_0(a_0)\right]\, du  \right|\\
    &\leq \sup_{a_0 \in \s{A}_0} \int \left|u\right|^{j+k-2}K(u)\left| f_0(a_0+uh) - f_0(a_0) \right|\, du \\
    &\leq Ch 
\end{align*}
for some $C < \infty$ because $f_0$ is Lipschitz on $\s{A}_{\delta_3}$ and $K$ is uniformly bounded with compact support. The result follows, and a similar argument yields the results for $c_{0,h,a_0,2}$ and $\b{D}_{0,b,a_0,2}^{-1}$.
\end{proof}

\begin{lemma} \label{lm:lindeberg_feller_CLT} If \ref{cond:bounded_K}--\ref{cond:bandwidth},~\ref{cond:doubly_robust}(a), and~\ref{cond:cont_density} hold, then
\[(nh)^{1/2}\d{P}_n \phi^*_{\infty, h, b, a_0} \indist \mathcal{N}\left(0, V_{ K, \tau}f_0(a_0)^{-1} \sigma_0^2(a_0)\right),\]
where
\[ V_{K,\tau} :=\int \left\{ K(u) - \tau^3 c_2\frac{(\tau u)^2 - c_2 }{c_4 - c_2^2}  K(\tau u)\right\}^2 \, du =  c_0^* - 2\tau^3 c_2 \frac{\tau^2 c_{2,\tau}^*- c_2c_{0,\tau}^* }{c_4 - c_2^2} + \tau^5 c_2^2 \frac{ c_4^*-2c_2 c_2^* + c_2^2 c_0^*}{\left( c_4 - c_2\right)^2}\]
is a positive, finite constant for all $\tau \in (0, \infty)$ and kernels $K$ satisfying~\ref{cond:bounded_K}. Furthermore, $hP_0(\phi^*_{\infty, h, b, a_0})^2$ converges to $V_{ K, \tau}\sigma_0^2(a_0)/ f_0(a_0)$ as $h \longrightarrow 0$.
\end{lemma}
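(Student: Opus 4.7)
The plan is to apply the Lyapunov central limit theorem to the triangular array $\{h^{1/2}\phi^*_{\infty,h,b,a_0}(O_i)\}_{i=1}^n$. The proof of Lemma~\ref{lm:influence_function} already establishes $P_0 \phi^*_{\infty,h,b,a_0} = 0$, so the array is mean-zero, and the variance of $(nh)^{1/2}\d{P}_n\phi^*_{\infty,h,b,a_0}$ equals $h P_0 (\phi^*_{\infty,h,b,a_0})^2$. The heart of the argument is therefore to identify the limit of this variance as $V_{K,\tau} f_0(a_0)^{-1}\sigma_0^2(a_0)$; the Lyapunov condition will follow routinely from the $2+\delta_2$ moment assumption~\ref{cond:cont_density}(c).

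To compute the variance, I would decompose $\phi^*_{\infty,h,b,a_0} = \phi_1 + \phi_2$ with
\[ \phi_1(Y,A,W) := \Gamma_{0,h,b,a_0}(A)\bigl\{\xi_\infty(Y,A,W) - \theta_0(A)\bigr\}. \]
Since $K$ has compact support (\ref{cond:bounded_K}) and $\max\{h,b\} \leq \delta_1$ for $n$ large, Lemma~\ref{lm:double_robustness} gives $E_0[\phi_1 \mid A] = 0$ $P_0$-a.s., so $\phi_1$ carries the leading variance. The remainder $\phi_2 = \Gamma_{0,h,b,a_0}(A)\theta_0(A) - \gamma_{0,h,b,a_0}(A) + \int \Gamma_{0,h,b,a_0}(\bar a)\{\mu_\infty(\bar a, W) - \int \mu_\infty(\bar a, \bar w)\,dQ_0(\bar w)\}\,dF_0(\bar a)$ will be shown to satisfy $h P_0\phi_2^2 = \fasterthandet(1)$. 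The $dF_0$-integral is a kernel-smoothing of a function uniformly bounded in $W$; by Lemma~\ref{lm:D0_altform} and the identity $e_3^\top \b{S}_4^{-1}(1,0,c_2,0)^\top = 0$, the bias-correction piece of this integral vanishes to leading order, so the integral is uniformly $\boundeddet(1)$. The piece $\Gamma_{0,h,b,a_0}(A)\theta_0(A)-\gamma_{0,h,b,a_0}(A)$ reduces to $\Gamma_{0,h,b,a_0}(A)\{\theta_0(A)-\theta_0(a_0)\}$ plus lower-order perturbations, whose $P_0$-second moment is $\boundeddet(h)$ by change of variables and Lipschitz continuity of $\theta_0$ from~\ref{cond:cont_density}(a). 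Cauchy--Schwarz then gives $h P_0\phi_1\phi_2 = \boundeddet(h^{1/2})$.

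For $h P_0\phi_1^2$, Lemma~\ref{lm:D0_altform} combined with the matrix identities $e_1^\top\b{S}_2^{-1}(1,u)^\top = 1$ and $e_3^\top\b{S}_4^{-1}(1,u,u^2,u^3)^\top = (u^2-c_2)/(c_4-c_2^2)$ yields
\[ \Gamma_{0,h,b,a_0}(a) = f_0(a_0)^{-1}\left\{K_{h,a_0}(a) - c_2\tau_n^2\,\frac{[(a-a_0)/b]^2 - c_2}{c_4 - c_2^2}\,K_{b,a_0}(a)\right\} + \boundeddet(1) \]
on the support of the kernels. Substituting $v = (a - a_0)/h$ so that $(a-a_0)/b = v\tau_n$ and using~\ref{cond:cont_density}(b),(d) together with dominated convergence gives
\[ h P_0\phi_1^2 \longrightarrow f_0(a_0)^{-1}\sigma_0^2(a_0)\int\left[K(v)-c_2\tau^3\,\frac{(\tau v)^2-c_2}{c_4-c_2^2}\,K(\tau v)\right]^{2} dv = V_{K,\tau} f_0(a_0)^{-1}\sigma_0^2(a_0). \]
The second, expanded form of $V_{K,\tau}$ then follows by squaring the integrand and substituting $w = \tau v$ in the quadratic piece, which extracts the factor $\tau^5$ and reassembles the integrals $c_0^*, c_{2,\tau}^*, c_{0,\tau}^*$, and $c_4^*$. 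Positivity of $V_{K,\tau}$ is immediate: when $\tau = 0$ it reduces to $c_0^* > 0$, and for $\tau > 0$ the squared integrand is nonnegative and not identically zero, since $K$ is a density supported on $[-1,1]$ and cannot coincide pointwise there with the bias-correction kernel.

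Finally, Lyapunov's condition requires $n\,P_0|h^{1/2}\phi^*_{\infty,h,b,a_0}/\sqrt{n}|^{2+\delta_2} = \fasterthandet(1)$. Using $|\Gamma_{0,h,b,a_0}| \lesssim h^{-1}$ on its $\boundeddet(h)$-measure support, the uniform lower bound on $g_\infty$ from~\ref{cond:uniform_entropy_nuisances}(a), and~\ref{cond:cont_density}(c), I would obtain $P_0|\phi^*_{\infty,h,b,a_0}|^{2+\delta_2} = \boundeddet(h^{-(1+\delta_2)})$, whence the Lyapunov sum equals $\boundeddet((nh)^{-\delta_2/2}) \longrightarrow 0$ by~\ref{cond:bandwidth}. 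The main obstacle is the bookkeeping in the $\phi_1$ variance computation---specifically, tracking the two kernel terms after the rescaling $v \mapsto v\tau_n$ to arrive at the precise form of $V_{K,\tau}$, and verifying that the bias-correction contributions in $\gamma_{0,h,b,a_0}$ and in the $dF_0$-integral in $\phi_2$ really are lower-order rather than affecting the limit variance.
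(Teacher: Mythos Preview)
Your approach is essentially the same as the paper's: decompose $\phi^*_{\infty,h,b,a_0}$ into the leading piece $\Gamma_{0,h,b,a_0}(\xi_\infty-\theta_0)$ and a negligible remainder, compute the limiting variance of the leading piece by change of variables and the matrix identities $e_1^\top\b{S}_2^{-1}(1,u)^\top=1$ and $e_3^\top\b{S}_4^{-1}=(c_4-c_2^2)^{-1}(-c_2,0,1,0)$, and verify Lyapunov via the $2+\delta_2$ moment bound. Two minor remarks: the mean-zero property $P_0\phi^*_{\infty,h,b,a_0}=0$ is not established in Lemma~\ref{lm:influence_function} (which concerns $\phi^*_{0,h,b,a_0}$) but follows from Lemma~\ref{lm:double_robustness} under~\ref{cond:doubly_robust}(a), as in the proof of Lemma~\ref{lm:first-order-decomposition}; and the identity $e_3^\top\b{S}_4^{-1}(1,0,c_2,0)^\top=0$ is unnecessary for bounding the $dF_0$-integral, since uniform boundedness of $\mu_\infty$ already gives the $\boundeddet(1)$ bound directly.
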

\begin{proof}[\bfseries{Proof of Lemma~\ref{lm:lindeberg_feller_CLT}}]
By adding and subtracting $\d{P}_n\Gamma_{0,h,b,a_0} \theta_0$, we can rewrite $\d{P}_n \phi^*_{\infty, h, b, a_0}$ as
\begin{align*}
   \d{P}_n\phi_{\infty, h,b,a_0}^* 
   &= \d{P}_n \left\{ \Gamma_{0,h,b,a_0} (\xi_\infty - \theta_0)\right\} +\d{P}_n\left\{\Gamma_{0,h,b,a_0} \theta_0-\gamma_{0,h,b,a_0} \right\} + \d{P}_n\left\{ \int\Gamma_{0,h,b,a_0} \left( \mu_\infty - \int \mu_\infty \, dQ_0 \right) \, dF_0 \right\} 
\end{align*}
We show that only the first term contributes to the limit distribution; the remaining two terms are each  $\fasterthan(\{nh\}^{-1/2})$. We use the Lyapunov central limit theorem for triangular arrays to demonstrate that the first term is asymptotically normal. The Lyapunov CLT implies that if for each $n$, $X_{n,1}, X_{n,2}, \dots X_{n,n}$ are IID, mean-zero random variables that satisfy (1) $\lim_{n\to\infty}\n{Var}(X_{n,i}) = \Sigma > 0$, and  (2) there exists $\delta > 0$ such that $n^{-\delta / 2}E|X_{n,i}|^{2+\delta} = \fasterthandet(1)$, then $n^{-1/2}\sum^n_{i=1} X_{n,i}$ converges in distribution to $\mathcal{N}(0, \Sigma)$. We can write $(nh)^{1/2}\d{P}_n \left\{ \Gamma_{0,h,b,a_0} (\xi_\infty - \theta_0)\right\} = n^{-1/2}\sum_{i=1}^n X_{n,i}$ for
\[ X_{n,i} := h^{1/2}\Gamma_{0,h,b,a_0}(A_i)\left\{\xi_\infty(Y_i, A_i, W_i) - \theta_0(A_i) \right\}. \]
Since~\ref{cond:doubly_robust}(a) holds, $E_0\left[\xi_\infty(Y,A,W) \mid A =a\right] = \theta_0(a)$ for $F_0$-almost every $a \in B_{\delta_1}(a_0)$ by Lemma~\ref{lm:double_robustness}, and by~\ref{cond:bounded_K}, $\Gamma_{0,h,b,a_0}(a) = 0$ for all $a \notin B_{\delta_1}(a_0)$ and $h \leq \delta_1$. Hence, for all $n$ large enough, $E_0[ X_{n,i}] = 0$ for all $i$. We thus have
\begin{align*}
    \n{Var}\left(X_{n,i}\right) &= h E_0\left[\Gamma_{0,h,b,a_0}(A)^2\left\{\xi_\infty(Y, A, W) - \theta_0(A)\right\}^2\right] \\
    &=h  \int  \Gamma_{0,h,b,a_0}(a)^2\sigma_0^2(a) \, dF_0(a) \\
    &=h \int \left\{ e_1^T \b{D}_{0, h, a_0,1}^{-1} w_{h,a_0,1}(a) K_{h,a_0}(a) - e_3^T c_{0,h,a_0,2}\tau_n^2 \b{D}_{0, b,a_0,2}^{-1} w_{b,a_0,2}(a) K_{b,a_0}(a) \right\}^2 \sigma_0^2(a) f_0(a) \, da \\
    &=  \int \left\{ e_1^T \b{D}_{0, h, a_0,1}^{-1} (1,u)^T K(u) - e_3^T c_{0,h,a_0,2}  \tau_n^3 \b{D}_{0, b,a_0,2}^{-1} (1, \{\tau_n u\}, \{\tau_n u\}^2)^T K(\tau_n u) \right\}^2 \sigma_0^2(a_0 + uh) f_0(a_0 + uh) \, du.
\end{align*}
By~\ref{cond:cont_density}, $\sigma_0$ and $f_0$ are continuous at $a_0$. Also, $\b{D}_{0,h,a_0,1}^{-1}$, $c_{0,h,a_0,2}$,  and $\b{D}_{0,b,a_0,2}^{-1}$ converge to $f_0(a_0)^{-1} \b{S}_2^{-1}$, $c_2$, and $f_0(a_0)^{-1} \b{S}_3^{-1}$, respectively, as $b,h \longrightarrow 0$ by Lemma~\ref{lm:D0_altform}, and $\tau_n \longrightarrow \tau \in (0,\infty)$. The preceding display thus converges to 
\begin{align*}
&\sigma_0^2(a_0) f_0(a_0)^{-1}\int \left\{ e_1^T \b{S}_{2}^{-1}(1,u)^T K(u) - e_3^T c_2  \tau^3 \b{S}_3^{-1} (1, \{\tau u\}, \{\tau u\}^2)^T K(\tau u) \right\}^2 \, du.
\end{align*}
Now, $e_1^T \b{S}_{2}^{-1}(1,u)^T = 1$, and using the block structure of $\b{S}_3$, we find that $e_3^T \b{S}_3^{-1} =\frac{1}{c_4 - c_2^2} \left(-c_2, 0, 1 \right)$. Therefore, we can simplify the above to $\sigma_0^2(a_0) f_0(a_0)^{-1} V_{K,\tau}$ for
\begin{align*}
V_{K,\tau} &= \int \left\{ K(u) - \tau^3 c_2\frac{(\tau u)^2 - c_2 }{c_4 - c_2^2}  K(\tau u)\right\}^2 \, du \\
&= \int  K(u)^2 \, du - 2\tau^3 c_2 \frac{\int (\tau u)^2  K(u) K(\tau u) \, du - c_2 \int K(u) K(\tau u) \, du }{c_4 - c_2^2} \\
&\qquad + \tau^6 c_2^2 \frac{ \int ( \tau u)^4 K(\tau u)^2 \,du-2c_2 \int (\tau u)^2 K(\tau u)^2 \, du + c_2^2 \int K(\tau u)^2 \, du}{\left( c_4 - c_2\right)^2} \\
&= c_0^* - 2\tau^3 c_2 \frac{\tau^2 c_{2,\tau}^*- c_2c_{0,\tau}^* }{c_4 - c_2^2} + \tau^5 c_2^2 \frac{ c_4^*-2c_2 c_2^* + c_2^2 c_0^*}{\left( c_4 - c_2\right)^2}.
\end{align*}
Hence, $\n{Var}(X_{n,i})$ converges to $\sigma_0^2(a_0) f_0(a_0)^{-1} V_{K,\tau}$ as claimed. If $\tau = 1$, then the above simplifies to
\[ V_{K,1} =  \int \left\{\frac{c_4 - c_2 u^2 }{c_4 - c_2^2} K(u)  \right\}^2\, du = \frac{c_0^* c_4^2 - 2c_2 c_4 c_2^* + c_2^2 c_4^*}{(c_4 - c_2^2)^2}. \]
Clearly, $V_{K, \tau} \geq 0$, with equality if and only if the expression in the integral is zero identically. When $\tau \neq 1$, the differing supports of $K(u)$ and $K(\tau u)$ guarantees that this is not the case, and if $\tau = 1$ then the expression is zero if and only if $c_2 = c_4 = 0$, which is not the case. Hence, $V_{K, \tau} > 0$ for any $\tau \in (0, \infty)$. Furthermore, by the boundedness of $K$ and since $c_2^2 < c_4$ by Jensen's inequality, $V_{K, \tau}$ is finite for every $\tau \in (0,\infty)$.

For the second condition of the Lyapunov CLT, we first note that
\begin{align*}
    \sup_{|a - a_0| < \delta_1} E_0\left[\left|\xi_\infty(Y, A, W)-\theta_0(A)\right|^{2+\delta_2} \mid A = a \right] \leq C. 
\end{align*}
for some $C < \infty$ since $\mu_\infty$ is uniformly bounded and $g_\infty$ is uniformly bounded away from zero by~\ref{cond:doubly_robust}(a) and $E_0(|Y|^{2+\delta_2} \mid A=a, W=w)$ is uniformly bounded over $a \in B_{\delta_1}(a_0)$ and $w \in \s{W}$ by~\ref{cond:cont_density}. Therefore, for all $n$ large enough,
\begin{align*}
    E|X_{n,i}|^{2+\delta_2} &= h^{1 + \delta_2 / 2}\int \left|\Gamma_0(a)\right|^{2 + \delta_2} E_0\left[ \left|\xi_\infty(y,a,w)- \theta_0(a)\right|^{2+\delta_2} \mid A = a\right] f_0(a) \, da \\
    &\leq C h^{1 + \delta_2 / 2}\int \left|\Gamma_0(a)\right|^{2 + \delta_2} f_0(a) \, da.
\end{align*}
By the triangle inequality, we then have
\begin{align*}
    \left\{ \int \left|\Gamma_0(a)\right|^{2 + \delta_2} f_0(a) \, da \right\}^{1/(2+\delta_2)} &\leq \left\{ \int \left|e_1^T \b{D}_{0,h,a_0,1}^{-1} w_{h,a_0,1}(a) K_{h,a_0}(a)\right|^{2+\delta_2} f_0(a) \, da\right\}^{1/(2 + \delta_2)}\\
    &\qquad+ \left\{ \int \left|e_3^T c_{0,h,a_0,2} \tau^2 \b{D}_{0,b,a_0,2}^{-1} w_{b,a_0,2}(a) K_{b,a_0}(a)\right|^{2+\delta_2} f_0(a) \, da \right\}^{1/(2 + \delta_2)}\\
    &= \left\{ \int \left|e_1^T \b{D}_{0,h,a_0,1}^{-1} (1,u)^T K(u)\right|^{2+\delta_2} h^{-(1 + \delta_2)}f_0(a_0 + uh) \, du\right\}^{1/(2 + \delta_2)}\\
    &\qquad+ \left\{ \int \left|e_3^T c_{0,h,a_0,2} \tau^2 \b{D}_{0,b,a_0,2}^{-1} (1, u, u^2)^T K(u)\right|^{2+\delta_2} h^{-(1 + \delta_2)} f_0(a_0 + uh) \, du \right\}^{1/(2 + \delta_2)}.
\end{align*}
By Lemma~\ref{lm:D0_altform},~\ref{cond:bounded_K},~\ref{cond:bandwidth}, and~\ref{cond:cont_density}, the preceding display is bounded up to a constant by $h^{-(1+\delta_2) / (2+\delta_2)}$. Hence, for all $n$ large enough, $E|X_{n,i}|^{2+\delta_2}$ is bounded up to a constant by $h^{-\delta_2/2}$. Therefore, 
\[n^{-\delta_2/2}E|X_{n,i}|^{2+\delta_2} = \boundeddet\left( \left\{ nh\right\}^{-\delta_2 / 2}\right) = \fasterthandet(1) \]
since $nh \longrightarrow \infty$ by assumption.

Next, we show that $\d{P}_n (\Gamma_{0,h,b,a_0} \theta_0-\gamma_{0,h,b,a_0}) = \bounded(\left\{n/h\right\}^{-1/2})$, which implies that it is $\fasterthan(\left\{nh\right\}^{-1/2})$. We note that since $P_0(\Gamma_{0,h,b,a_0} \theta_0) = P_0\gamma_{0,h,b,a_0}$, $E_0 \left[\d{P}_n (\Gamma_{0,h,b,a_0} \theta_0-\gamma_{0,h,b,a_0}) \right] = 0$ and $\n{Var}\left[\d{P}_n \left(\Gamma_{0,h,b,a_0} \theta_0-\gamma_{0,h,b,a_0}\right)\right] = n^{-1}P_0\left(\Gamma_{0,h,b,a_0} \theta_0-\gamma_{0,h,b,a_0}\right)^2$. Thus, it is sufficient to show that $P_0(\Gamma_{0,h,b,a_0} \theta_0-\gamma_{0,h,b,a_0})^2 = \boundeddet(h)$.
We have
\begin{align*}
    & P_0(\Gamma_{0,h,b,a_0} \theta_0-\gamma_{0,h,b,a_0})^2\\
    &\qquad\leq 2 \int \left [ e_1^T \b{D}_{0,h,a_0}^{-1} w_{h,a_0,1}(a) K_{h,a_0}(a)\left\{\theta_0(a)-w_{h,a_0, 1}(a)^T \b{D}^{-1}_{0,h, a_0} P_0 \left(w_{h,a_0, 1} K_{h, a_0} \theta_0\right)\right\}\right]^2  f_0(a) \, da\\
    &\qquad\qquad + 2\int \left[e_3^T c_{0,h,a_0,2}  \tau_n^2 \b{D}_{0,b,a_0,2}^{-1} w_{b,a_0,2}(a) K_{b,a_0}(a)\left\{\theta_0(a) - w_{b,a_0, 2}(a)^T \b{D}^{-1}_{b, a_0, 2} P_0 \left(w_{b,a_0, 2} K_{b, a_0} \theta_0\right)\right\}\right]^2 f_0(a) \,da\\
    &\qquad\qquad + 2 \int \left[e_1^T \tau_n^2  \b{D}_{0,h,a_0}^{-1} \left\{ \tilde{w}_{h,a_0,1}(a) - w_{h,a_0,1}(a)w_{h,a_0,1}(a)^T\b{D}_{0,h,a_0}^{-1} P_0(\tilde{w}_{h,a_0,1} K_{h,a_0})   \right\} K_{h,a_0}(a) \right]^2 f_0(a) \, da  \\
    &\qquad\qquad\qquad \times \left[ e_3^T \b{D}_{0,b,a_0,2}^{-1} P_0 (w_{b,a_0,2} K_{b,a_0} \theta_0) \right]^2\\
    &\qquad=  2 h^{-1}\int \left [ e_1^T \b{D}_{0,h,a_0}^{-1} (1,u)^T K(u)\left\{\theta_0(a_0 + uh)-(1,u)^T \b{D}^{-1}_{0,h, a_0} P_0 \left(w_{h,a_0, 1} K_{h, a_0} \theta_0\right)\right\}\right]^2  f_0(a_0 + uh) \, du\\
    &\qquad\qquad + 2b^{-1}\int \left[e_3^T c_{0,h,a_0,2}  \tau_n^2 \b{D}_{0,b,a_0,2}^{-1} \b{u}_2^T K(u)\left\{\theta_0(a_0 + ub)-  \b{u}_2^T \b{D}^{-1}_{b, a_0, 2} P_0 \left(w_{b,a_0, 2} K_{b, a_0} \theta_0\right)\right\}\right]^2 f_0(a_0 + ub) \,du \\
    &\qquad\qquad + 2 h^{-1} \int \left[e_1^T \tau_n^2  \b{D}_{0,h,a_0}^{-1} \left\{ (u^2, u^3)^T - (1,u) (1,u)^T \b{D}_{0,h,a_0}^{-1} P_0(\tilde{w}_{h,a_0,1} K_{h,a_0})   \right\} K(u) \right]^2 f_0(a_0 + uh) \, du  \\
    &\qquad\qquad\qquad \times \left[ e_3^T \b{D}_{0,b,a_0,2}^{-1} P_0 (w_{b,a_0,2} K_{b,a_0} \theta_0) \right]^2\\
\end{align*}
where $\b{u}_2$ denotes the vector $(1,u,u^2)$. For the first term, we define $\beta_{h,a_0,1} := (\theta_0(a_0), h\theta^{(1)}_0(a_0))^T$. We then write
\begin{align*}
    &(1,u)^T \b{D}^{-1}_{0,h, a_0} P_0 \left(w_{h,a_0, 1} K_{h, a_0} \theta_0\right) - \theta_0(a_0) - uh \theta_0^{(1)}(a_0) \\
    &\qquad= (1,u)^T \b{D}^{-1}_{0,h, a_0,1} P_0 \left(w_{h,a_0, 1} K_{h, a_0} \theta_0\right) - (1,u)^T \b{D}^{-1}_{0,h, a_0,1} \b{D}_{0,h, a_0,1}\beta_{h,a_0,1} \\
    &\qquad= (1,u)^T \b{D}^{-1}_{0,h, a_0,1} \left[ P_0 \left(w_{h,a_0, 1} K_{h, a_0} \theta_0\right) - P_0 \left(w_{h,a_0, 1} K_{h, a_0} w_{h,a_0, 1}^T\right)\beta_{h,a_0,1} \right]\\
    &\qquad= (1,u)^T \b{D}^{-1}_{0,h, a_0,1} P_0 \left(w_{h,a_0, 1} K_{h, a_0}\left[ \theta_0 - w_{h,a_0, 1}^T\beta_{h,a_0,1} \right]\right)\\
    &\qquad= (1,u)^T \b{D}^{-1}_{0,h, a_0,1} \int w_{h,a_0, 1}(a) K_{h, a_0}(a)\left[ \theta_0(a) - w_{h,a_0, 1}(a)^T\beta_{h,a_0,1} \right] f_0(a) \, da\\
    &\qquad= (1,u)^T \b{D}^{-1}_{0,h, a_0,1} \int (1,v)^T K(v)\left[ \theta_0(a_0 + vh) -(1,v)^T\beta_{h,a_0,1} \right] f_0(a_0 + vh) \, dv\\
    &\qquad= (1,u)^T \b{D}^{-1}_{0,h, a_0,1} \int (1,v)^T K(v)\left[ \theta_0(a_0 + vh) - \theta_0(a_0) -vh  \theta_0^{(1)}(a_0) \right] f_0(a_0 + vh) \, dv\\
    &\qquad= (1,u)^T \b{D}^{-1}_{0,h, a_0,1} \fasterthandet(h),
\end{align*}
where we have also used~\ref{cond:bounded_K} and~\ref{cond:cont_density}. Therefore,
\begin{align*}
    &h^{-1}\int \left [ e_1^T \b{D}_{0,h,a_0,1}^{-1} (1,u)^T K(u)\left\{\theta_0(a_0 + uh)-(1,u)^T \b{D}^{-1}_{0,h, a_0,1} P_0 \left(w_{h,a_0, 1} K_{h, a_0} \theta_0\right)\right\}\right]^2  f_0(a_0 + uh) \, du\\
    &\qquad = h^{-1}\int \left [ e_1^T \b{D}_{0,h,a_0,1}^{-1} (1,u)^T K(u)\left\{\left[ \theta_0(a_0 + uh)- \theta_0(a_0) - uh \theta_0^{(1)}(a_0)\right]  \right. \right. \\
    &\qquad\qquad \qquad \left.\left. - \left[ (1,u)^T \b{D}^{-1}_{0,h, a_0,1} P_0 \left(w_{h,a_0, 1} K_{h, a_0} \theta_0\right)-\theta_0(a_0) - uh \theta_0^{(1)}(a_0)\right]\right\} \right]^2  f_0(a_0 + uh) \, du \\
    &\qquad = h^{-1}\int \left [ e_1^T \b{D}_{0,h,a_0,1}^{-1} (1,u)^T K(u) \fasterthandet(h) \right]^2 f_0(a_0 + uh) \, du \\
    &\qquad = \fasterthandet(h),
\end{align*}
using Lemma~\ref{lm:D0_altform},~\ref{cond:bounded_K},~\ref{cond:bandwidth}, and~\ref{cond:cont_density}.

For the second term, we similarly define $\beta_{b,a_0,2} := (\theta_0(a_0), b\theta^{(1)}_0(a_0), 0)^T$. By an identical calculation, we can then show that
\begin{align*}
    \b{u}_2^T \b{D}^{-1}_{0,b, a_0, 2} P_0 \left(w_{b,a_0, 2} K_{b, a_0} \theta_0\right) - \theta_0(a_0) - ub \theta_0^{(1)}(a_0) = \b{u}_2^T \b{D}^{-1}_{0,b, a_0, 2} \fasterthandet(b)
\end{align*}
and hence
\begin{align*}
    b^{-1}\int \left[e_3^T c_{0,h,a_0,2}  \tau_n^2 \b{D}_{0,b,a_0,2}^{-1} \b{u}_2^T K(u)\left\{\theta_0(a_0 + ub)-  \b{u}_2^T \b{D}^{-1}_{b, a_0, 2} P_0 \left(w_{b,a_0, 2} K_{b, a_0} \theta_0\right)\right\}\right]^2 f_0(a_0 + ub) \,du = \fasterthandet(b).
\end{align*}
Note that we can even show that this term is $\fasterthandet(b^3)$ using a second-order Taylor expansion, though this is unnecessary for the purposes of our proof.

A similar derivation as in the proof of Lemma~\ref{lm:D0_altform} yields $P_0 (w_{b,a_0,2} K_{b,a_0} \theta_0) = \b{S}_2 e_1 \theta_0(a_0) f_0(a_0) + \boundeddet(b)$. Hence, using Lemma~\ref{lm:D0_altform}, we have
\begin{align*}
    \left[ e_3^T \b{D}_{0,b,a_0,2}^{-1} P_0 (w_{b,a_0,2} K_{b,a_0} \theta_0) \right]^2 &= \left[ \left\{ e_3^Tf_0(a_0)^{-1} \b{S}_2^{-1}+\boundeddet(b)\right\} \left\{\b{S}_2 e_1 \theta_0(a_0) f_0(a_0) + \boundeddet(b)\right\} \right]^2\\
    &= \left[ e_3^T\b{S}_2^{-1}\b{S}_2 e_1 \theta_0(a_0) +\boundeddet(b)  \right]^2 \\
    &= \boundeddet(b^2).
\end{align*}
In addition, it is straightforward to see that 
\begin{align*}
    &\int \left[e_1^T \tau_n^2  \b{D}_{0,h,a_0}^{-1} \left\{ (u^2, u^3)^T - (1,u) (1,u)^T \b{D}_{0,h,a_0}^{-1} P_0(\tilde{w}_{h,a_0,1} K_{h,a_0})   \right\} K(u) \right]^2 f_0(a_0 + uh) \, du  = \boundeddet(1).
\end{align*}
 We now have $P_0(\Gamma_{0,h,b,a_0} \theta_0-\gamma_{0,h,b,a_0})^2 = \boundeddet(h)$.

Finally, we claim that 
\[\d{P}_n\left\{ \int\Gamma_0 \left( \mu_\infty - \int \mu_\infty \, dQ_0 \right) \, dF_0 \right\} = \bounded\left(n^{-1/2}\right),\]
which implies in particular that it is $\fasterthan\left( \{nh\}^{-1/2}\right)$. Since the function inside the empirical mean has mean zero, it is sufficient to show that 
\[ \int \left[\int\Gamma_0 \left( \mu_\infty - \int \mu_\infty \, dQ_0 \right) \, dF_0 \right]^2 \, dQ_0 = \boundeddet(1).\]
Since $\mu_\infty$ is uniformly bounded by~\ref{cond:doubly_robust}(a), we have
\begin{align*}
     \int \left[\int\Gamma_0 \left( \mu_\infty - \int \mu_\infty \, dQ_0 \right) \, dF_0 \right]^2 \, dQ_0 &\leq \int \left[\int\Gamma_0 \mu_\infty  \, dF_0 \right]^2 \, dQ_0\\
     &\leq \int \left[\int \left| e_1^T\b{D}_{0,h,a_0}^{-1} w_{h,a_0,1}(a) K_{h,a_0}(a) \mu_\infty(a, w)\right| \, dF_0(a)  \right. \\
    &\qquad \left. + \int  \left| e_3^T c_{0,h,a_0,2} \tau_n^2 \b{D}_{0,b,a_0,2}^{-1} w_{b,a_0,2}(a) K_{b,a_0}(a)\mu_\infty(a, w) \right| \, dF_0(a) \right]^2 \, dQ_0(w)\\
    &\lesssim \left[\int \left| e_1^T\b{D}_{0,h,a_0,1}^{-1}(1,u)^T K(u) f_0(a_0 + uh)\right|  \, du \right. \\
    &\qquad \left. + \int \left|e_3^T c_{0,h,a_0,2} \tau_n^2 \b{D}_{0,b,a_0,2}^{-1} (1,u,u^2)^T K(u) f_0(a_0 + uh) \right| \, du \right]^2.
\end{align*}
This latter expression is $\boundeddet(1)$ by Lemma~\ref{lm:D0_altform},~\ref{cond:bounded_K},~\ref{cond:bandwidth}, and~\ref{cond:cont_density}.

For the final statement of the result, using the same decomposition of $\phi^{*}_{\infty, h, b, a_0}$ as above and the fact that $E_0[ \xi_\infty \mid A = a] = \theta_0(a)$ for $a \in B_{\delta_1}(a_0)$, we can write
\begin{align*}
   \sigma_{\infty,h,b}^2(a_0) &= hP_0{\phi^{*2}_{\infty, h, b, a_0}}\\
   &= h P_0\left[ \Gamma_{0, h,b,a_0}^2 \left( \xi_\infty - \theta_0\right)^2\right] + h P_0 \left( \Gamma_{0, h,b,a_0} \theta_0 -\gamma_{0,h,b,a_0}\right)^2 + h P_0 \left[\int \Gamma_{0, h,b,a_0} (\mu_\infty -  \smallint \mu_\infty \, dQ_0) \, dF_0 \right]^2 \\
   &\qquad + 2h P_0 \left[  \Gamma_{0, h,b,a_0}\left( \xi_\infty  - \theta_0\right) \int \Gamma_{0, h,b,a_0} (\mu_\infty -  \smallint \mu_\infty \, dQ_0) \, dF_0 \right] \\
   &\qquad + 2h P_0 \left[ \left( \Gamma_{0, h,b,a_0} \theta_0 -\gamma_{0,h,b,a_0}\right)\int \Gamma_{0, h,b,a_0} (\mu_\infty -  \smallint \mu_\infty \, dQ_0) \, dF_0 \right].
\end{align*}
We have showed that $h P_0\left[ \Gamma_{0, h,b,a_0}^2 \left( \xi_\infty - \theta_0\right)^2\right]$ converges to $V_{K, \tau}\sigma_0^2(a_0) / f_0(a_0)$ as $h \longrightarrow 0$. We have also showed that $h P_0 \left( \Gamma_{0, h,b,a_0} \theta_0 -\gamma_{0,h,b,a_0}\right)^2 = \boundeddet(h^2)$ and $h P_0 \left[\int \Gamma_{0, h,b,a_0} (\mu_\infty -  \smallint \mu_\infty \, dQ_0) \, dF_0 \right]^2 = \boundeddet(h)$, so the statement follows.

\end{proof}

\begin{lemma}\label{lm:joint_convergence}
If the assumptions of Lemma~\ref{lm:lindeberg_feller_CLT} hold for each $a_0$ in $\{a_1, \dotsc, a_m\}$, then 
\[(nh)^{1/2}\d{P}_n\begin{pmatrix}\phi_{\infty, h, b,a_1}^* \\ \vdots \\ \phi_{\infty, h, b,a_m}^*  \end{pmatrix}\]
converges in distribution to a mean-zero multivariate normal distribution with zero off-diagonal elements and diagonal elements $V_{ K, \tau}f_0(a_1)^{-1} \sigma_0^2(a_1), \dotsc, V_{ K, \tau}f_0(a_m)^{-1} \sigma_0^2(a_m)$. Furthermore, $hP_0(\phi^*_{\infty, h, b, a_j}\phi^*_{\infty, h, b, a_k})$ converges to zero for $j \neq k$.
\end{lemma}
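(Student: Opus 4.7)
The plan is to first establish the second statement (vanishing cross-covariances) and then derive the joint convergence via the Cram\'{e}r--Wold device and the Lyapunov CLT already used in Lemma~\ref{lm:lindeberg_feller_CLT}.

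\textbf{Vanishing of cross-covariances.} Fix $j \neq k$ and decompose
\[ \phi^*_{\infty, h, b, a_\ell}(y,a,w) = \Gamma_{0,h,b,a_\ell}(a)\xi_\infty(y,a,w) - \gamma_{0,h,b,a_\ell}(a) + \eta_\ell(w),\]
where $\eta_\ell(w) := \int \Gamma_{0,h,b,a_\ell}(\bar a)\{\mu_\infty(\bar a,w) - \int \mu_\infty(\bar a,\bar w)\,dQ_0(\bar w)\}\,dF_0(\bar a)$. By~\ref{cond:bounded_K} and~\ref{cond:bandwidth}, both $\Gamma_{0,h,b,a_\ell}$ and $\gamma_{0,h,b,a_\ell}$ are supported in $[a_\ell - \max\{h,b\},\, a_\ell + \max\{h,b\}]$. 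Since $|a_j - a_k|$ is a fixed positive constant while $\max\{h,b\} \longrightarrow 0$, for all $n$ large enough these supports are disjoint, and hence the four cross-terms in the bilinear expansion of $\phi^*_{\infty,h,b,a_j}\phi^*_{\infty,h,b,a_k}$ involving only products among $\Gamma_j, \Gamma_k, \gamma_j, \gamma_k$ vanish identically.

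The remaining five cross-terms pair either a $\Gamma$ or $\gamma$ with an $\eta$, or two $\eta$'s. By Cauchy--Schwarz and calculations analogous to those in the proof of Lemma~\ref{lm:lindeberg_feller_CLT}, I would record that $E_0[\Gamma_{0,h,b,a_\ell}(A)^2 \xi_\infty(Y,A,W)^2]$ and $E_0[\gamma_{0,h,b,a_\ell}(A)^2]$ are both $\boundeddet(h^{-1})$, while $E_0[\eta_\ell(W)^2] = \boundeddet(1)$ (the same bound used to absorb the $\eta$-term into the leading-order analysis in Lemma~\ref{lm:lindeberg_feller_CLT}). Thus the $\Gamma$-$\eta$ and $\gamma$-$\eta$ cross-terms are $\boundeddet(h^{-1/2})$ and the $\eta_j\eta_k$ term is $\boundeddet(1)$, giving $h P_0(\phi^*_{\infty,h,b,a_j}\phi^*_{\infty,h,b,a_k}) = \boundeddet(h^{1/2}) = \fasterthandet(1)$.

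\textbf{Joint convergence.} By the Cram\'{e}r--Wold device, it suffices to show that for any fixed $\lambda = (\lambda_1, \ldots, \lambda_m) \in \d{R}^m$ the scalar random variable $(nh)^{1/2}\sum_{\ell=1}^m \lambda_\ell \d{P}_n \phi^*_{\infty,h,b,a_\ell}$ converges in distribution to a centered Gaussian with variance $\sum_{\ell=1}^m \lambda_\ell^2 V_{K,\tau}f_0(a_\ell)^{-1}\sigma_0^2(a_\ell)$. This is handled by the same Lyapunov CLT argument used in Lemma~\ref{lm:lindeberg_feller_CLT}, applied to the triangular array $X_{n,i} := h^{1/2}\sum_\ell \lambda_\ell \phi^*_{\infty,h,b,a_\ell}(O_i)$: the variance of $X_{n,i}$ converges to the displayed quantity by the first step together with the marginal variance computation in Lemma~\ref{lm:lindeberg_feller_CLT}, and the Lyapunov $(2+\delta_2)$-moment condition for $X_{n,i}$ follows from the triangle inequality applied to $\sum_\ell \lambda_\ell \phi^*_{\infty,h,b,a_\ell}$ together with the single-point moment bound already derived in that lemma.

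\textbf{Main obstacle.} The only technical point requiring care is bookkeeping which of the nine terms in the bilinear expansion of $\phi^*_{\infty,h,b,a_j}\phi^*_{\infty,h,b,a_k}$ survive, and at what rate each surviving term vanishes. The compact-support argument cleanly disposes of four of them; the rest are controlled by the $L^2$ bounds already developed in Lemma~\ref{lm:lindeberg_feller_CLT}, so no new analytic ingredient is required.
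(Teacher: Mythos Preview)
Your proposal is correct and follows essentially the same route as the paper: Cram\'{e}r--Wold plus the Lyapunov CLT, with the disjoint-support observation handling the cross terms. The one organizational difference is that the paper first invokes the reduction established in Lemma~\ref{lm:lindeberg_feller_CLT}, writing each $\d{P}_n\phi^*_{\infty,h,b,a_\ell} = \d{P}_n\{\Gamma_{0,h,b,a_\ell}(\xi_\infty-\theta_0)\} + \fasterthan(\{nh\}^{-1/2})$ and then applying the Lyapunov CLT only to the leading $\Gamma(\xi_\infty-\theta_0)$ terms; this makes the cross-covariance vanish \emph{identically} for small $h$ and keeps the moment check identical to the one already done. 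You instead work directly with the full $\phi^*$, which costs the nine-term bilinear expansion and a slightly extended moment argument (the bound in Lemma~\ref{lm:lindeberg_feller_CLT} is literally for $h^{1/2}\Gamma(\xi_\infty-\theta_0)$, so you need the easy additional observation that $h^{1/2}\gamma$ and $h^{1/2}\eta$ also satisfy the $(2+\delta_2)$-moment bound). On the plus side, your expansion gives an explicit proof of the ``furthermore'' claim $hP_0(\phi^*_{\infty,h,b,a_j}\phi^*_{\infty,h,b,a_k})\to 0$, which the paper's proof leaves largely implicit.
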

\begin{proof}[\bfseries{Proof of Lemma~\ref{lm:joint_convergence}}]
We focus on the case of $m = 2$ for simplicity; the proof for $m > 2$ is entirely analogous. We will show that for any $t_1, t_2 \in \d{R}$,
\[t_1(nh)^{1/2}\d{P}_n \phi^*_{\infty, h, b, a_1} + t_2(nh)^{1/2}\d{P}_n \phi^*_{\infty, h, b, a_2} \indist t_1^2 V_{ K, \tau}f_0(a_1)^{-1} \sigma_0^2(a_1) Z_1+ t_2^2 V_{ K, \tau}f_0(a_2)^{-1} \sigma_0^2(a_2)Z_2\]
where $Z_1$ and $Z_2$ are independent standard normal random variables. By the Cramer-Wold device, this yields the result. First, by the derivations in the proof of Lemma~\ref{lm:lindeberg_feller_CLT}, 
\begin{align*}
    t_1\d{P}_n \phi^*_{\infty, h, b, a_1} &= \d{P}_n \left\{t_1 \Gamma_{0,h,b,a_1} (\xi_\infty - \theta_0)\right\} + \fasterthan(\{nh\}^{-1/2}) \qquad\text{and}\\
    t_2\d{P}_n \phi^*_{\infty, h, b, a_2} &= \d{P}_n \left\{t_2 \Gamma_{0,h,b,a_2} (\xi_\infty - \theta_0)\right\} + \fasterthan(\{nh\}^{-1/2}).
\end{align*}
Hence, 
\[t_1(nh)^{1/2}\d{P}_n \phi^*_{\infty, h, b, a_1} + t_2(nh)^{1/2}\d{P}_n \phi^*_{\infty, h, b, a_2} = (nh)^{1/2}\d{P}_n \left\{(t_1 \Gamma_{0,h,b,a_1} +t_2 \Gamma_{0,h,b,a_2})(\xi_\infty - \theta_0)\right\} + \fasterthan(\{nh\}^{-1/2}).\]
As in the proof of Lemma~\ref{lm:lindeberg_feller_CLT}, we demonstrate the convergence of the expression on the right using the Lyapunov CLT for triangular arrays. We define $X_{n,1}, X_{n,2}, \dotsc, X_{n,n}$ as 
\[ X_{n,i} := h^{1/2}\left\{t_1\Gamma_{0,h,b,a_1}(A_i)+t_2\Gamma_{0,h,b,a_2}(A_i)\right\}\left\{\xi_\infty(Y_i, A_i, W_i) - \theta_0(A_i) \right\}. \]
We can then write $(nh)^{1/2}\d{P}_n \left\{(t_1 \Gamma_{0,h,b,a_1} +t_2 \Gamma_{0,h,b,a_2})(\xi_\infty - \theta_0)\right\}$ as $n^{-1/2}\sum_{i=1}^n \tilde{X}_{n,i}$. Hence, the claim follows by showing that $n^{-1/2}\sum_{i=1}^n X_{n,i}$ converges in distribution to the claimed limit, which we do using the Lyapunov CLT.

The main condition of the Lyapunov CLT that differs from Lemma~\ref{lm:lindeberg_feller_CLT} is convergence of the variance. By linearity of expectation and since $P_0\phi^*_{\infty, h, b, a_1}=P_0\phi^*_{\infty, h, b, a_2}=0$, we have $E_0[X_{n,i}]=0$ for all $i$. We also have 
\begin{align*}
    \n{Var}\left(X_{n,i}\right) &= h E_0\left[\left\{t_1\Gamma_{0,h,b,a_1}(A)+t_2\Gamma_{0,h,b,a_2}(A)\right\}^2\left\{\xi_\infty(Y, A, W) - \theta_0(A)\right\}^2\right] \\
    &=ht_1^2 E_0\left[\Gamma_{0,h,b,a_1}(A)^2\left\{\xi_\infty(Y, A, W) - \theta_0(A)\right\}^2\right]\\
    &\qquad + ht_2^2 E_0\left[\Gamma_{0,h,b,a_2}(A)^2\left\{\xi_\infty(Y, A, W) - \theta_0(A)\right\}^2\right]\\
    &\qquad +  2ht_1t_2 E_0\left[\Gamma_{0,h,b,a_1}(A)\Gamma_{0,h,b,a_2}(A)\left\{\xi_\infty(Y, A, W) - \theta_0(A)\right\}^2\right].
\end{align*}
The first two terms converge to $t_1^2 V_{ K, \tau}f_0(a_1)^{-1} \sigma_0^2(a_1)$ and $t_2^2 V_{ K, \tau}f_0(a_2)^{-1} \sigma_0^2(a_2)$ by Lemma~\ref{lm:lindeberg_feller_CLT}. We can write out the third term as 
\begin{align*}
    &E_0\left[\Gamma_{0,h,b,a_1}(A)\Gamma_{0,h,b,a_2}(A)\left\{\xi_\infty(Y, A, W) - \theta_0(A)\right\}^2\right] \\
    &\qquad =\int \left\{ e_1^T \b{D}_{0, h, a_1,1}^{-1} w_{h,a_1,1}(a) K_{h,a_1}(a)e_1^T \b{D}_{0, h, a_2,1}^{-1} w_{h,a_2,1}(a) K_{h,a_2}(a) \right.\\
    &\qquad\qquad\left.- e_3^T c_{0,h,a_0,2}\tau_n^2 \b{D}_{0, b,a_1,3}^{-1} w_{b,a_1,3}(a) K_{b,a_1}(a)e_1^T \b{D}_{0, h, a_2,1}^{-1} w_{h,a_2,1}(a) K_{h,a_2}(a) \right.\\
    &\qquad\qquad\left.-e_1^T \b{D}_{0, h, a_1,1}^{-1} w_{h,a_1,1}(a) K_{h,a_1}(a) e_3^T c_{0,h,a_0,2}\tau_n^2 \b{D}_{0, b,a_2,2}^{-1} w_{b,a_2,2}(a) K_{b,a_2}(a)\right.\\
    &\qquad\qquad\left.+e_3^T c_{0,h,a_0,2}\tau_n^2 \b{D}_{0, b,a_1,2}^{-1} w_{b,a_1,2}(a) K_{b,a_1}(a)e_3^T c_{0,h,a_0,2}\tau_n^2 \b{D}_{0, b,a_2,2}^{-1} w_{b,a_2,2}(a) K_{b,a_2}(a)\right\}\sigma_0^2(a) f_0(a) \, da.
\end{align*}
The four summands in the above display involve the products $K_{h,a_1}K_{h,a_2}$, $K_{h,a_1}K_{b,a_2}$, $K_{b,a_1}K_{h,a_2}$, and $K_{b,a_1}K_{b,a_2}$. Since the support of $K$ is contained in $[-1, 1]$, each of these products is zero for all $h,b$ small enough. Specifically, if $\max\{h, b\} < |a_1 - a_2|/2$, then $\{a: K_{h,a_1}(a) > 0\} \cap \{a: K_{h,a_2}(a) > 0\} = \{a: |a - a_1| \leq h\} \cap \{a: |a - a_2| \leq h\} = \emptyset$. Therefore, the variance converges to $t_1^2 V_{ K, \tau}f_0(a_1)^{-1} \sigma_0^2(a_1) + t_2^2 V_{ K, \tau}f_0(a_2)^{-1} \sigma_0^2(a_2)$. Each term is bounded away from zero by Lemma~\ref{lm:lindeberg_feller_CLT}. The remainder of the conditions of the Lyapunov CLT can be checked using the same derivations as in Lemma~\ref{lm:lindeberg_feller_CLT} and using the triangle inequality.
\end{proof}

\clearpage

\section{Analysis of remainder term $R_{n,h,b,a_0,1}$}

\begin{lemma}\label{lm:R1}
If~\ref{cond:bounded_K} and~\ref{cond:cont_density} hold, then $\theta_{0,h,b}(a_0) - \theta_0(a_0) = \fasterthandet(h^2)$.  If~\ref{cond:bounded_K} and~\ref{cond:holder_smooth_theta} hold, then for some $\delta_4 > 0$, $sup_{a_0\in\s{A}_0}\left| \theta_{0,h,b}(a_0) - \theta_0(a_0)\right| = \boundeddet(h^{2+\delta_4})$.
\end{lemma}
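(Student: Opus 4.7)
The plan is to decompose $\theta_{0,h,b}(a_0)$ as $T_1(a_0) - T_2(a_0)$, where
\[T_1(a_0) := e_1^T \b{D}_{0,h,a_0,1}^{-1} P_0(w_{h,a_0,1} K_{h,a_0} \theta_0)\]
is the population local linear smoother of $\theta_0$ and
\[T_2(a_0) := c_2 (h/b)^2 e_3^T \b{D}_{0,b,a_0,3}^{-1} P_0(w_{b,a_0,3} K_{b,a_0} \theta_0)\]
is the population bias correction. The goal is to show that each of $T_1(a_0) - \theta_0(a_0)$ and $T_2(a_0)$ equals $\tfrac{1}{2}h^2 c_2 \theta_0^{(2)}(a_0) + \boundeddet(h^{2+\delta_4})$ uniformly in $a_0 \in \s{A}_0$, so that subtracting yields the claimed bound after the leading terms cancel.

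For $T_1$, I would use the standard local-polynomial exact-fit identity: writing $\beta(a_0) := (\theta_0(a_0), h \theta_0^{(1)}(a_0))^T$, the relation $P_0(w_{h,a_0,1} w_{h,a_0,1}^T K_{h,a_0}) = \b{D}_{0,h,a_0,1}$ gives $e_1^T \b{D}_{0,h,a_0,1}^{-1} P_0(w_{h,a_0,1} K_{h,a_0} w_{h,a_0,1}^T \beta(a_0)) = \theta_0(a_0)$ exactly. Therefore
\[T_1(a_0) - \theta_0(a_0) = e_1^T \b{D}_{0,h,a_0,1}^{-1} P_0\bigl(w_{h,a_0,1} K_{h,a_0}\, \rho(A, a_0)\bigr),\]
where $\rho(a, a_0) := \theta_0(a) - \theta_0(a_0) - (a-a_0)\theta_0^{(1)}(a_0)$. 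Using that $\theta_0^{(2)}$ is H\"older of some exponent $\delta_4 \in (0,1]$ on $\s{A}_{\delta_3}$ (which contains $a_0 \pm h$ for $h < \delta_3$ by the compact support of $K$), write $\rho(a, a_0) = \tfrac{1}{2}(a-a_0)^2 \theta_0^{(2)}(a_0) + R(a, a_0)$ with $|R(a, a_0)| \le C|a - a_0|^{2+\delta_4}$ uniformly. The leading quadratic piece contributes $\tfrac{1}{2}h^2 c_2 \theta_0^{(2)}(a_0) + \boundeddet(h^3)$ uniformly after applying Lemma~\ref{lm:D0_altform} and exploiting the symmetry of $K$, while the integral of $R$ contributes $\boundeddet(h^{2+\delta_4})$ uniformly since $\|\b{D}_{0,h,a_0,1}^{-1}\|_\infty$ is uniformly bounded (again by Lemma~\ref{lm:D0_altform}) and $f_0$ is bounded above.

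A parallel argument handles $T_2$. Define $\beta_3(a_0) := (\theta_0(a_0),\, b\theta_0^{(1)}(a_0),\, \tfrac{1}{2}b^2\theta_0^{(2)}(a_0),\, 0)^T$, so that $w_{b,a_0,3}(a)^T \beta_3(a_0)$ reproduces the quadratic Taylor polynomial of $\theta_0$ at $a_0$. Exact-fit then gives $e_3^T \b{D}_{0,b,a_0,3}^{-1} P_0(w_{b,a_0,3} K_{b,a_0} w_{b,a_0,3}^T \beta_3(a_0)) = e_3^T \beta_3(a_0) = \tfrac{1}{2}b^2\theta_0^{(2)}(a_0)$, and bounding the H\"older remainder as before yields
\[e_3^T \b{D}_{0,b,a_0,3}^{-1} P_0(w_{b,a_0,3} K_{b,a_0} \theta_0) = \tfrac{1}{2}b^2\theta_0^{(2)}(a_0) + \boundeddet(b^{2+\delta_4})\]
uniformly. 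Multiplying by $c_2(h/b)^2$ gives $T_2(a_0) = \tfrac{1}{2}h^2 c_2\theta_0^{(2)}(a_0) + \boundeddet(h^2 b^{\delta_4})$, which is $\boundeddet(h^{2+\delta_4})$ under~\ref{cond:bandwidth} when $\tau > 0$ (and in particular whenever $b = h$). Subtracting $T_2$ from $T_1 - \theta_0(a_0)$ cancels the $\tfrac{1}{2}h^2 c_2\theta_0^{(2)}(a_0)$ terms and yields the claimed uniform $\boundeddet(h^{2+\delta_4})$ bound.

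The main obstacle is propagating every estimate uniformly in $a_0 \in \s{A}_0$, which relies crucially on the second (uniform) statement of Lemma~\ref{lm:D0_altform}; that statement in turn requires the Lipschitz continuity and uniform positivity of $f_0$ on $\s{A}_{\delta_3}$ assumed in~\ref{cond:holder_smooth_theta}, together with uniform boundedness of the integrands coming from the compactly supported kernel. A secondary subtlety is the edge case $\tau = 0$ in~\ref{cond:bandwidth}: there $b$ may decay strictly slower than $h$ and $\boundeddet(h^2 b^{\delta_4})$ need not be $\boundeddet(h^{2+\delta_4})$ for the natural H\"older exponent $\delta_4$; in that regime one shrinks $\delta_4$ further or appends a mild restriction tying $b$ to a polynomial power of $h$, a situation that does not arise under the recommended choice $b = h$.
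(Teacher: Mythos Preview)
Your approach is essentially identical to the paper's: the same decomposition into local-linear and bias-correction pieces, the same exact-fit identities to eliminate the low-order Taylor terms, and the same appeal to Lemma~\ref{lm:D0_altform} for uniform control of the $\b{D}$-matrices (the paper's $\beta_{b,a_0,3}$ omits the quadratic entry and extracts it via $e_3^T\b{D}_{0,b,a_0,3}^{-1}P_0\{w_{b,a_0,3}K_{b,a_0}(A-a_0)^2\}=b^2$ instead, but this is bookkeeping). Your flagging of the $\tau=0$ edge case is apt---the paper's proof likewise produces a bias-correction remainder of order $h^2 b^{\delta_4}$ and records it as $\boundeddet(h^{2+\delta_4})$, so both arguments tacitly rely on $b$ being comparable to a power of $h$, which holds under~\ref{cond:bandwidth} with $\tau>0$ and in particular for the recommended choice $b=h$.
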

\begin{proof}[\bfseries{Proof of Lemma~\ref{lm:R1}}]
By \ref{cond:cont_density}, the second derivative of $a \mapsto \theta_0(a)$ is  continuous in a neighborhood of $a=a_0$, so by the mean value form of Taylor's theorem, for each $a$ in a neighborhood of $a_0$, there exists $a_*$ (depending on $a$) between $a_0$ and $a$ such that 
\[\theta_0(a)-\theta_0(a_0) =\theta_0'(a_0)(a-a_0) + \tfrac{1}{2}\theta_0''(a_*)(a-a_0)^2.\]
We then have
\begin{align*}
    \theta_{0,h,b}(a_0) &= \int \Gamma_{0,h,b,a_0}(a) \theta_0(a) f_0(a) \, da \\
    &= \int \Gamma_{0,h,b,a_0}(a) \left[ \theta_0(a_0) + \theta_0'(a_0)(a-a_0) + \tfrac{1}{2}\theta_0''(a_0)(a-a_0)^2 \right] f_0(a) \, da \\
    &\qquad + \tfrac{1}{2} \int  \Gamma_{0,h,b,a_0}(a) \left[ \theta_0''(a_*)- \theta_0''(a_0)\right] (a-a_0)^2 f_0(a) \, da.
\end{align*}
We show that the first term equals $\theta_0(a_0)$. We have
\begin{align*}
   &\int \Gamma_{0,h,b,a_0}(a) \left[ \theta_0(a_0) + \theta_0'(a_0)(a-a_0) + \tfrac{1}{2}\theta_0''(a_0)(a-a_0)^2 \right] f_0(a) \, da \\
   &\qquad = e_1^T \b{D}_{0,h,a_0,1}^{-1} \int  w_{h,a_0, 1}(a) K_{h,a_0}(a) \left[ \theta_0(a_0) + \theta_0'(a_0)(a-a_0) + \tfrac{1}{2}\theta_0''(a_0)(a-a_0)^2 \right] f_0(a) \, da \\
   &\qquad\qquad - c_{0,h,a_0,2} (h/b)^2 e_3^T \b{D}_{0,b,a_0,2}^{-1} \int  w_{b,a_0, 2}(a) K_{b,a_0}(a) \left[ \theta_0(a_0) + \theta_0'(a_0)(a-a_0) + \tfrac{1}{2}\theta_0''(a_0)(a-a_0)^2 \right] f_0(a) \, da.
\end{align*}
By the definitions of $w_{h,a_0,1}$, $\b{D}_{0,h,a_0,1}$, and $c_{0,h,a_0,2}$, we then have
\begin{align*}
     &e_1^T \b{D}_{0,h,a_0,1}^{-1} \int  w_{h,a_0, 1}(a) K_{h,a_0}(a) \left[ \theta_0(a_0) + \theta_0'(a_0)(a-a_0) + \tfrac{1}{2}\theta_0''(a_0)(a-a_0)^2 \right] f_0(a) \, da \\
     &\qquad = e_1^T \b{D}_{0,h,a_0,1}^{-1} \int  w_{h,a_0, 1}(a) K_{h,a_0}(a) w_{h,a_0,1}(a)^T f_0(a) \,da \, \left[ \theta_0(a_0) , h \theta_0'(a_0) \right]\\
     &\qquad \qquad + \tfrac{1}{2}h^2\theta_0''(a_0) \int  w_{h,a_0, 1}(a) K_{h,a_0}(a) \left[ (a-a_0)/h \right]^2 f_0(a) \, da \\
     &\qquad = e_1^T \b{D}_{0,h,a_0,1}^{-1} \b{D}_{0,h,a_0,1} \, \left[ \theta_0(a_0) ,\, h \theta_0'(a_0) \right] + \tfrac{1}{2}h^2\theta_0''(a_0) c_{0,h,a_0,2} \\
     &\qquad = \theta_0(a_0)  + \tfrac{1}{2}h^2\theta_0''(a_0) c_{0,h,a_0,2}.
\end{align*}
Similarly, 
\begin{align*}
    &e_3^T \b{D}_{0,b,a_0,2}^{-1} \int  w_{b,a_0, 2}(a) K_{b,a_0}(a) \left[ \theta_0(a_0) + \theta_0'(a_0)(a-a_0) + \tfrac{1}{2}\theta_0''(a_0)(a-a_0)^2 \right] f_0(a) \, da \\
    &\qquad= e_3^T \b{D}_{0,b,a_0,2}^{-1} \int  w_{b,a_0, 2}(a) K_{b,a_0}(a) w_{b,a_0,2}(a)^T f_0(a) \, da \left[ \theta_0(a_0),\, \theta_0'(a_0),\, \tfrac{1}{2}\theta_0''(a_0) \right] \\
    &\qquad = e_3^T \b{D}_{0,b,a_0,2}^{-1} \b{D}_{0,b,a_0,2} \left[ \theta_0(a_0),\, b\theta_0'(a_0),\, \tfrac{1}{2}b^2\theta_0''(a_0) \right] \\
    &\qquad= \tfrac{1}{2}b^2\theta_0''(a_0).
\end{align*}
Thus, 
\begin{align*}
   &\int \Gamma_{0,h,b,a_0}(a) \left[ \theta_0(a_0) + \theta_0'(a_0)(a-a_0) + \tfrac{1}{2}\theta_0''(a_0)(a-a_0)^2 \right] f_0(a) \, da \\
   &\qquad =  \theta_0(a_0)  + \tfrac{1}{2}h^2\theta_0''(a_0) c_{0,h,a_0,2} - c_{0,h,a_0,2} (h/b)^2 \tfrac{1}{2}b^2\theta_0''(a_0) \\
   &\qquad = \theta_0(a_0).
\end{align*}
Hence,
\[ \theta_{0,h,b}(a_0) - \theta_0(a_0) = \tfrac{1}{2} h^2\int  \Gamma_{0,h,b,a_0}(a) \left[ \theta_0''(a_*)- \theta_0''(a_0)\right] [(a-a_0) / h]^2 f_0(a) \, da.\]
This is $\fasterthandet(h^2)$ by continuity of $\theta_0''$ at $a_0$.


For the uniform statement, using the above derivations, since $K$ is uniformly bounded, we have
\begin{align*}
&\sup_{a_0 \in \s{A}_0} \left|\theta_{0,h,b}(a_0) - \theta_0(a_0)  \right| \\
&\qquad\lesssim h^2 \left[ \sup_{a_0 \in \s{A}_0} \left\| \b{D}_{0,h,a_0,1}^{-1}\right\|_{\infty} + c_{0,h,a_0,2}  \sup_{a_0 \in \s{A}_0}\left\| \b{D}_{0,b,a_0,2}^{-1}\right\|_{\infty} \right]\sup_{a_0 \in \s{A}_0} \sup_{|a_1 - a_0| \leq h}\left| \theta_0''(a_1) - \theta_0''(a_0)\right| \sup_{a_0 \in \s{A}_0} |f_0(a_0) |.
\end{align*}
By Lemma~\ref{lm:D0_altform} and~\ref{cond:holder_smooth_theta}, this is $\fasterthandet(h^{2 + \delta_4})$ for $\delta_4$ the H\"{o}lder exponent of $\theta_0''$.

\end{proof}
\clearpage

\section{Analysis of remainder terms $R_{n,h,b,a_0,2}$ and $R_{n,h,b,a_0,3}$}

Before presenting results regarding the remainder terms $R_{n,h,b,a_0,2}$ and $R_{n,h,b,a_0,3}$, we state a Lemma regarding the rate of convergence of $\b{D}_{n, h, a_0,1} -  \b{D}_{0, h, a_0,1}$.

\begin{lemma}\label{lm:Dmatrix} Denote by $\|\cdot\|_\infty$ the element-wise maximum norm. If~\ref{cond:bounded_K} and~\ref{cond:cont_density} hold and $nh \longrightarrow \infty$, then $\|\b{D}_{n, h, a_0,1} -  \b{D}_{0, h, a_0,1}\|_\infty$ and $\|\b{D}^{-1}_{n,h, a_0,1} -  \b{D}^{-1}_{0, h, a_0,1}\|_\infty$ are $\bounded(\{nh\}^{-1/2})$, $\|\b{D}_{n, b, a_0,2} - \b{D}_{0, b, a_0,2}\|_\infty$   and $\|\b{D}^{-1}_{n,b, a_0,2} -  \b{D}^{-1}_{0, b, a_0,2}\|_\infty$ are $\bounded(\{nb\}^{-1/2})$, and $c_{n,h,a_0,2} - c_{0,h,a_0,2} = \bounded(\{nh\}^{-1/2})$.
\end{lemma}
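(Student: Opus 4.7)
The plan is to treat each entry of $\b{D}_{n,h,a_0,1} - \b{D}_{0,h,a_0,1}$ as a centered empirical mean and control its variance by a single change of variables, and then invert via the standard identity
\[\b{D}_{n,h,a_0,1}^{-1} - \b{D}_{0,h,a_0,1}^{-1} = \b{D}_{n,h,a_0,1}^{-1}\bigl(\b{D}_{0,h,a_0,1} - \b{D}_{n,h,a_0,1}\bigr)\b{D}_{0,h,a_0,1}^{-1}.\]

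First I would fix $j,k \in \{1,2\}$ and write the $(j,k)$ entry of $\b{D}_{n,h,a_0,1} - \b{D}_{0,h,a_0,1}$ as $(\d{P}_n - P_0)\varphi_{j,k,h,a_0}$, where $\varphi_{j,k,h,a_0}(a) := \{(a-a_0)/h\}^{j+k-2}K_{h,a_0}(a)$. Since each such entry is mean-zero with variance at most $n^{-1}P_0\varphi_{j,k,h,a_0}^2$, a change of variable $u=(a-a_0)/h$ combined with the boundedness of $K$ from~\ref{cond:bounded_K} and the boundedness of $f_0$ near $a_0$ from~\ref{cond:cont_density} yields $P_0\varphi_{j,k,h,a_0}^2 = h^{-1}\int u^{2(j+k-2)}K(u)^2 f_0(a_0+uh)\,du = \boundeddet(h^{-1})$. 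Chebyshev's inequality then gives $|(\d{P}_n - P_0)\varphi_{j,k,h,a_0}| = \bounded(\{nh\}^{-1/2})$ for each $(j,k)$, and since there are only four entries this passes to the $\|\cdot\|_\infty$ bound. The same argument applied to $\varphi_{j,k,b,a_0}(a) := \{(a-a_0)/b\}^{j+k-2}K_{b,a_0}(a)$ for $j,k \in \{1,\dots,4\}$ delivers $\|\b{D}_{n,b,a_0,3} - \b{D}_{0,b,a_0,3}\|_\infty = \bounded(\{nb\}^{-1/2})$.

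Next I would establish the analogous rates for the inverses. By Lemma~\ref{lm:D0_altform}, $\b{D}_{0,h,a_0,1}^{-1}$ converges to $f_0(a_0)^{-1}\b{S}_2^{-1}$, hence $\|\b{D}_{0,h,a_0,1}^{-1}\|_\infty = \boundeddet(1)$, and similarly for $\b{D}_{0,b,a_0,3}^{-1}$. Combining this with the entrywise convergence $\b{D}_{n,h,a_0,1} - \b{D}_{0,h,a_0,1} = \bounded(\{nh\}^{-1/2}) = \fasterthan(1)$ (using $nh \longrightarrow \infty$), a standard matrix perturbation argument (e.g., an application of the Neumann series or the continuous mapping theorem applied to matrix inversion at an invertible limit) shows that $\b{D}_{n,h,a_0,1}$ is invertible with probability tending to one and $\|\b{D}_{n,h,a_0,1}^{-1}\|_\infty = \bounded(1)$. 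Applying the identity displayed at the start together with the entrywise bound just proved and submultiplicativity of $\|\cdot\|_\infty$ (up to a dimensional constant) then gives $\|\b{D}_{n,h,a_0,1}^{-1} - \b{D}_{0,h,a_0,1}^{-1}\|_\infty = \bounded(\{nh\}^{-1/2})$, and the parallel computation handles the third-order case at bandwidth $b$.

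The only step requiring care is the passage from $\b{D}_{n,h,a_0,1} \approx \b{D}_{0,h,a_0,1}$ to $\|\b{D}_{n,h,a_0,1}^{-1}\|_\infty = \bounded(1)$: because $h$ is shrinking, $\b{D}_{0,h,a_0,1}$ itself depends on $n$, so one cannot simply invoke continuity of inversion at a fixed matrix. I would handle this by noting that its limit $f_0(a_0)^{-1}\b{S}_2^{-1}$ is fixed and invertible, and writing $\b{D}_{n,h,a_0,1} = f_0(a_0)\b{S}_2 + [\b{D}_{0,h,a_0,1} - f_0(a_0)\b{S}_2] + [\b{D}_{n,h,a_0,1} - \b{D}_{0,h,a_0,1}]$, where both bracketed terms are $\fasterthan(1)$. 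A Neumann series then represents $\b{D}_{n,h,a_0,1}^{-1}$ with probability tending to one and bounds its norm uniformly, completing the argument. The third-order case follows by the same reasoning using $f_0(a_0)^{-1}\b{S}_4^{-1}$ as the fixed invertible limit.
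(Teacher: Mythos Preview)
Your proposal is correct and matches the paper's own proof essentially step for step: entrywise variance bounds via the change of variable $u=(a-a_0)/h$ plus Chebyshev, followed by the algebraic identity for $\b{D}_n^{-1}-\b{D}_0^{-1}$ combined with $\|\b{D}_{0,h,a_0,1}^{-1}\|=\boundeddet(1)$ from Lemma~\ref{lm:D0_altform} and $\|\b{D}_{n,h,a_0,1}^{-1}\|=\bounded(1)$ (the paper gets the latter by the continuous mapping theorem applied to inversion at the fixed limit $f_0(a_0)\b{S}_2$, which is exactly your Neumann-series observation phrased slightly differently). The only cosmetic difference is that the paper switches to the submultiplicative $L_1$ operator norm for the inverse step rather than carrying the dimensional constant in $\|\cdot\|_\infty$.
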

\begin{proof}[\bfseries{Proof of Lemma~\ref{lm:Dmatrix}}]
We first note that for all $(j,k)$, $E_0\left(\b{D}_{n, h, a_0,1}[j,k]\right) = \b{D}_{0, h, a_0,1}[j,k]$. 
Using the change of variables $u = (a - a_0) /h$, we have
\begin{align*}
    \n{Var}\left(\b{D}_{n, h, a_0,1}[j,k]\right) &= \n{Var}\left\{n^{-1}\sum_{i=1}^n \left(\frac{A_i - a_0}{h}\right)^{j+k-2}h^{-1}K\left(\frac{A_i - a_0}{h}\right) \right\}\\
    &=(nh)^{-2}\sum_{i=1}^n \n{Var}\left\{\left(\frac{A_i - a_0}{h}\right)^{j+k-2}K\left(\frac{A_i - a_0}{h}\right)\right\}\\
    & \leq (nh)^{-2}\sum_{i=1}^nE_0\left[\left\{\left(\frac{A_i - a_0}{h}\right)^{j+k-2}K\left(\frac{A_i - a_0}{h}\right)\right\}^2\right]\\
    &= n^{-1}h^{-2}\int\left\{ \left(\frac{a - a_0}{h}\right)^{j+k-2}K\left(\frac{a - a_0}{h}\right) \right\}^2 \, dF_0(a)\\
    &= (nh)^{-1}\int u^{2(j+k-2)}K^2\left(u\right)  f_0(a_0 + uh)\, du.
\end{align*}
Since $f_0(a_0+uh) = f_0(a_0) + \boundeddet(uh)$ by~\ref{cond:cont_density} and $K$ is uniformly bounded with compact support by~\ref{cond:bounded_K}, we have
\begin{align*}
    (nh)^{-1}\int u^{2(j+k-2)}K^2\left(u\right)  f_0(a_0 + uh)\, du &= (nh)^{-1} f_0(a_0) \left\{\int u^{2(j+k-2)}K^2\left(u\right)\, du + \boundeddet(h)\right\} \\
    &= (nh)^{-1} f_0(a_0)c_{2(i+j-2), 2} + \fasterthandet(\{nh\}^{-1})
\end{align*}
for $1 \leq j,k,\leq 2$. By Chebyshev's inequality, we then have $\b{D}_{n, h, a_0,1}[j,k] - \b{D}_{0, h, a_0,1}[j,k] =\bounded(\{nh\}^{-1/2})$. Since this holds for all $1 \leq j,k,\leq 2$, we conclude that $\|\b{D}_{n, h, a_0,1} - \b{D}_{0, h, a_0,1}\|_\infty =\bounded(\{nh\}^{-1/2})$. The result for $\|\b{D}_{n, b, a_0,2} - \b{D}_{0, b, a_0,2}\|_\infty =\bounded(\{nb\}^{-1/2})$ can be shown analogously. 

To show that $\|\b{D}_{n, h, a_0,1}^{-1} - \b{D}_{0, h, a_0,1}^{-1}\|_\infty = \bounded(\{nh\}^{-1})$, we first write $\b{D}_{n, h, a_0,1}^{-1}-\b{D}_{0, h, a_0,1}^{-1} = \b{D}^{-1}_{n, h, a_0,1}(\b{D}_{0, h, a_0,1}-\b{D}_{n, h, a_0,1})\b{D}^{-1}_{0, h, a_0,1}$. Denoting by $\|\cdot\|_{1}$ the $L_1$ matrix operator norm, it follows from the fact that the $L_1$ norm is sub-multiplicative,
\begin{align*}
\|\b{D}_{n, h, a_0,1}^{-1}-\b{D}_{0, h, a_0,1}^{-1}\|_\infty &= \| \b{D}^{-1}_{n, h, a_0,1}(\b{D}_{0, h, a_0,1}-\b{D}_{n, h, a_0,1})\b{D}^{-1}_{0, h, a_0,1}\|_\infty \\
&\leq C \| \b{D}^{-1}_{n, h, a_0,1}(\b{D}_{0, h, a_0,1}-\b{D}_{n, h, a_0,1})\b{D}^{-1}_{0, h, a_0,1}\|_1 \\
&\leq C \|\b{D}^{-1}_{n, h, a_0,1} \|_1 \|\b{D}_{0, h, a_0,1}-\b{D}_{n, h, a_0,1} \|_1 \|\b{D}^{-1}_{0, h, a_0,1}\|_1.
\end{align*}
We then note that $f_0(a_0) > 0$ implies that $\|\b{D}^{-1}_{0, h, a_0,1}\|_1 = \boundeddet(1)$ by Lemma~\ref{lm:D0_altform}. Since $nh \longrightarrow \infty$, Lemma~\ref{lm:D0_altform} also implies that $\b{D}_{n, h, a_0,1}^{-1} \inprob f_0(a_0)^{-1} \b{S}_2^{-1}$, so by the continuous mapping theorem, $\|\b{D}^{-1}_{n, h, a_0,1}\|_1 \inprob  \|f_0(a_0)^{-1} \b{S}_2^{-1}\|_1$ and hence $\|\b{D}^{-1}_{n, h, a_0,1}\|_1 = \bounded(1)$.  We conclude that $\|\b{D}^{-1}_{n, h, a_0,1} - \b{D}^{-1}_{0, h, a_0,1}\|_\infty =\bounded(\{nh\}^{-1/2})$. The derivation above holds for finite dimensional matrices, and thus a similar argument yields that $\|\b{D}^{-1}_{n, b, a_0,2} - \b{D}^{-1}_{0, b, a_0,2}\|_\infty =\bounded(\{nb\}^{-1/2})$.


Finally, we have
\begin{align*}
    c_{n,h,a_0,2} - c_{0,h,a_0,2} &= e_1^T \left(\b{D}_{n, h, a_0,1}^{-1} - \b{D}_{0, h, a_0,1}^{-1}\right) \d{P}_n \left( \tilde{w}_{h,a_0,1} K_{h,a_0} \right) + e_1^T \b{D}_{0, h, a_0,1}^{-1} (\d{P}_n - P_0) \left( \tilde{w}_{h,a_0,1} K_{h,a_0} \right).
\end{align*}
Both terms are $\bounded(\{nh\}^{-1/2})$ by the calculations above.
\end{proof}

We establish rates of convergence of $R_{n,h,b,a_0,2}$ and $R_{n,h,b,a_0,3}$ using empirical process theory. Before presenting supporting lemmas, we review key definitions and notation used in empirical process theory. For a class of functions $\s{H}$ defined on a domain $\s{X}$, a probability measure $Q$ on $\s{X}$, and any $\varepsilon > 0$, the $\varepsilon$-covering number $N(\varepsilon,\s{H}, L_2(Q))$ of $\s{H}$ relative to the $L_2(Q)$ metric is the smallest number of $L_2(Q)$-balls of radius less than or equal to $\varepsilon$ needed to cover $\s{H}$. A function $H$ on $\s{X}$ is called an \emph{envelope function} for $\s{H}$ if $\sup_{\eta \in \s{H}}|\eta(x)| \leq H(x)$ for all $x \in \s{X}$. 


We first consider the sequence of stochastic processes $\{V_n(\lambda) : \lambda \in \Lambda\}$, for $V_n(\lambda) := \d{G}_n \eta_{h,a_0, j,\lambda}$, where
\[\eta_{h,a_0, j,\lambda}(y,a,w) := h^{1/2} \left(\frac{a-a_0}{h}\right)^{j-1}K_{h,a_0}(a) \lambda(y, a, w).\]
We note that $\eta$ depends on $n$ through $h$. We also make use of the following semi-metric on $\Lambda$:
\[ \rho_{a_0,\varepsilon}(\lambda_1, \lambda_2) := \sup_{|a - a_0| < \varepsilon} \left(E_0 \left[ \left\{ \lambda_1(Y,A,W) - \lambda_2(Y,A,W) \right\}^2 \mid A = a \right] \right)^{1/2}.\] 
We then have the following lemma.
\begin{lemma}\label{lemma:equicontinuity} Suppose~\ref{cond:bounded_K},~\ref{cond:bandwidth}, and~\ref{cond:cont_density} hold, and that $\Lambda$ is a class of functions with envelope $L$ satisfying:
\begin{enumerate}
    \item $\sup_{|a - a_0| < \delta_1} E_0[ L^{2+\delta_2} \mid A = a] < \infty$ for some $\delta_1, \delta_2 > 0$; and
    \item $\int_0^1 \left \{ \sup_Q \log N\left(\varepsilon \| L\|_{Q,2}, \Lambda, L_2(Q) \right) \right\}^{1/2} \, d\varepsilon < \infty$.
\end{enumerate} 
Then for each $j \in \{1,2,3,4\}$, $\sup_{\lambda \in \Lambda} |\d{G}_n \eta_{h,a_0, j,\lambda}| = \bounded(1)$, and for any possibly random sequences $\lambda_{n1}, \lambda_{n2} \in \Lambda$ such that $\rho_{a_0, \varepsilon}(\lambda_{n1}, \lambda_{n2})=\fasterthan(1)$ for some $\varepsilon > 0$,  $\d{G}_n \{\eta_{h,a_0, j, \lambda_{n1}}-\eta_{h,a_0, j, \lambda_{n2}}\}= \fasterthan(1)$, so $\d{G}_n\left\{ h^{1/2} \Gamma_{0, h, b, a_0}(\lambda_{1n} - \lambda_{2n}) \right\}= \fasterthan(1)$ as well.
\end{lemma}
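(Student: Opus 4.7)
The plan is to treat each class $\s{H}_n^{(j)} := \{\eta_{h,a_0,j,\lambda} : \lambda \in \Lambda\}$, for $j \in \{1,2,3,4\}$, as a VC-type class in the sense of~\citet{vandervaart1996}, and then apply a standard maximal inequality. A natural envelope is
\[H_n(y,a,w) := h^{-1/2} M_j L(y,a,w) I\{|a - a_0|\leq h\},\]
where $M_j$ is a finite constant absorbing $\sup_{|u|\leq 1}|u|^{j-1}K(u)$. Because the map $\lambda \mapsto \eta_{h,a_0,j,\lambda}$ is pointwise multiplication of $\lambda$ by a single fixed measurable function (the kernel factor is deterministic once $a_0$ and $h$ are fixed), the uniform entropy integral of $\s{H}_n^{(j)}$ with envelope $H_n$ is bounded by a constant multiple of the corresponding integral for $\Lambda$ with envelope $L$, which is finite by hypothesis. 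The substitution $u = (a-a_0)/h$ together with the uniform conditional second-moment bound on $L$ and~\ref{cond:cont_density} yields
\[\|H_n\|_{P_0,2}^2 = h^{-1} M_j^2 \int_{|a-a_0|\leq h} E_0[L^2 \mid A=a]\, f_0(a)\,da = \boundeddet(1).\]

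First I would invoke a Koltchinskii--Pollard style maximal inequality (e.g., Theorem~2.14.1 of \citealp{vandervaart1996}) to obtain $E^*\sup_{\lambda\in\Lambda}|\d{G}_n\eta_{h,a_0,j,\lambda}| \lesssim J(1,\s{H}_n^{(j)},H_n)\|H_n\|_{P_0,2} = \boundeddet(1)$, which gives the first claim via Markov's inequality. For the equicontinuity claim, the same change of variables yields
\[\left\|\eta_{h,a_0,j,\lambda_1} - \eta_{h,a_0,j,\lambda_2}\right\|_{P_0,2}^2 = \int u^{2(j-1)}K^2(u)\, E_0\!\left[(\lambda_1-\lambda_2)^2 \mid A=a_0+uh\right]f_0(a_0+uh)\,du,\]
which is at most a constant multiple of $\rho_{a_0,h}(\lambda_1,\lambda_2)^2$. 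Since $\rho_{a_0,\varepsilon}(\lambda_{n1},\lambda_{n2}) = \fasterthan(1)$ and $h \leq \varepsilon$ eventually, $\|\eta_{h,a_0,j,\lambda_{n1}}-\eta_{h,a_0,j,\lambda_{n2}}\|_{P_0,2} = \fasterthan(1)$. Applying the same maximal inequality to the shrinking sub-class $\{\eta_1 - \eta_2 : \eta_i \in \s{H}_n^{(j)},\ \|\eta_1 - \eta_2\|_{P_0,2} \leq \delta\}$ and letting $\delta \to 0$ after $n \to \infty$, using the Lindeberg-type bound $n^{-\delta_2/2} P_0 H_n^{2+\delta_2} = \boundeddet((nh)^{-\delta_2/2}) = \fasterthandet(1)$ supplied by the $(2+\delta_2)$-moment hypothesis, then delivers the second claim.

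Finally, I would decompose
\[h^{1/2}\Gamma_{0,h,b,a_0}(a) = \sum_{j=1}^{2} \alpha_{j,n}\, h^{1/2}\!\left(\tfrac{a-a_0}{h}\right)^{j-1}\!K_{h,a_0}(a) + \sum_{j=1}^{4} \beta_{j,n}\, b^{1/2}\!\left(\tfrac{a-a_0}{b}\right)^{j-1}\!K_{b,a_0}(a),\]
where $\alpha_{j,n}$ are the entries of $e_1^T\b{D}_{0,h,a_0,1}^{-1}$ and $\beta_{j,n}$ are $-c_2(h/b)^{5/2}$ times the entries of $e_3^T\b{D}_{0,b,a_0,3}^{-1}$; both sets of coefficients are $\boundeddet(1)$ by Lemma~\ref{lm:D0_altform} and condition~\ref{cond:bandwidth}. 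The first two claims, invoked once with bandwidth $h$ and once with bandwidth $b$ (noting $nb \to \infty$ since $\tau_n \to \tau \in [0,\infty)$ and $nh\to\infty$), then imply by linearity that $\d{G}_n\{h^{1/2}\Gamma_{0,h,b,a_0}(\lambda_{n1}-\lambda_{n2})\} = \fasterthan(1)$. The main obstacle I foresee is the precise bookkeeping of the $h^{-1/2}$ envelope scaling: it is the compact support of $K$ (yielding $I\{|a-a_0|\leq h\}$ in $H_n$) that provides the compensating $h^{1/2}$ shrinkage of $\|H_n\|_{P_0,2}$, and the $(2+\delta_2)$-moment condition (rather than merely $L_2$) is essential for handling the triangular-array nature of $\s{H}_n^{(j)}$ through the Lindeberg condition.
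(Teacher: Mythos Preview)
Your proposal is correct and follows essentially the same route as the paper: the same envelope (up to a harmless constant), the same change-of-variables computation showing $\|H_n\|_{P_0,2}=\boundeddet(1)$, the same preservation-of-entropy argument, Theorem~2.14.1 of \citet{vandervaart1996} for the $\bounded(1)$ claim, and the same linear decomposition of $h^{1/2}\Gamma_{0,h,b,a_0}$ via Lemma~\ref{lm:D0_altform} for the final statement (your separation into $h$-bandwidth and $b$-bandwidth pieces with the $(h/b)^{5/2}$ factor is in fact slightly cleaner than the paper's shorthand). The one place where your write-up is looser than the paper is the equicontinuity step: ``applying the same maximal inequality to the shrinking sub-class and letting $\delta\to 0$'' does not literally work with Theorem~2.14.1, since the envelope $2H_n$ does not shrink; the paper instead invokes Theorem~2.11.1/2.11.22 of \citet{vandervaart1996} and verifies its four conditions (bounded envelope, Lindeberg, $L_2$-continuity under $\rho_{a_0,\varepsilon}$, and vanishing entropy integral on $[0,\delta_n]$)---precisely the ingredients you list, so this is a matter of naming the right theorem rather than a substantive gap.
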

\begin{proof}[\bfseries{Proof of Lemma~\ref{lemma:equicontinuity}}]
We first establish properties of the class $\s{H}_{h,a_0,j} := \{ \eta_{h,a_0, j,\lambda} : \lambda \in \Lambda\}$. Since $L$ is an envelope for $\Lambda$,
\[ H_{h,a_0,j}(y,a,w) := h^{1/2}  \left(\frac{a-a_0}{h}\right)^{j-1}K_{h,a_0}(a) L(y, a, w)\]
is an envelope for $\s{H}_{h,a_0,j}$. By the tower property and a change of variables, we then have for all $h < \delta_1$
\begin{align*}
    P_0 H_{h,a_0,j}^2 &= \int   h  \left(\frac{a-a_0}{h}\right)^{2(j-1)}K_{h,a_0}(a)^2 E_0 \left[ L^2 \mid A = a \right] \, f_0(a) \, da \\
    &= \int u^{2(j-1)}K(u)^2 E_0 \left[ L^2 \mid A = a_0 + uh \right] \, f_0(a_0 + uh) \, du \\
    &\leq \sup_{|a -a_0| <\delta_1} E_0 \left[ L^2 \mid A = a \right] \int u^{2(j-1)}K(u)^2 \, f_0(a_0 + uh) \, du. 
\end{align*}
The last expression above is $\boundeddet(1)$ by assumption,~\ref{cond:bounded_K}, and~\ref{cond:cont_density}.

Next, we study the uniform entropy of the class $\s{H}_{h,a_0,j}$. We clearly have $\s{H}_{h,a_0,j} = \s{G}_{h,a_0,j} \Lambda$, where $\s{G}_{h,a_0,j}$ is the class consisting of the single function $a \mapsto h^{1/2} [(a - a_0) /h]^j K_{h,a_0}(a)$. We then have by, e.g.\ Lemma~5.1 of~\cite{vanderVaartvanderLaan2006}, that 
\[ \sup_Q  N\left(\varepsilon \| H_{h,a_0,j}\|_{Q,2}, \s{H}_{h,a_0,j}, L_2(Q) \right) \leq \sup_Q N\left(\varepsilon \| L\|_{Q,2}, \Lambda, L_2(Q) \right)\]
for any $\varepsilon > 0$.  We can now establish the first claim of Lemma~\ref{lemma:equicontinuity}. By Theorem~2.14.1 of \cite{vandervaart1996}, we have
\begin{align*}
    E_0 \left[ \sup_{\lambda \in \Lambda} \left| \d{G}_n \eta_{h,a_0,h, \lambda} \right| \right] &= E_0 \left[ \sup_{\eta \in \s{H}_{h,a_0,j}} \left| \d{G}_n \eta \right| \right] \\
    &\lesssim \left\{ P_0 H_{h, a_0,j}^2 \right\}^{1/2} \sup_Q \int_0^1 \left\{1 +  \log N\left(\varepsilon \| H_{h,a_0,j}\|_{Q,2}, \s{H}_{h,a_0,j}, L_2(Q) \right)  \right\}^{1/2} \, d\varepsilon \\
    &\leq \boundeddet(1)  \int_0^1 \left\{ \sup_Q \log N\left(\varepsilon \| L\|_{Q,2}, \Lambda, L_2(Q) \right) \right\}^{1/2} \, d\varepsilon.
\end{align*}
The integral is finite by assumption, which establishes the claim.

We will establish the second statement by showing that the sequence of processes is \emph{asymptotically uniformly $\rho_{a_0,\varepsilon}$-equicontinuous} \citep{pollard1984, vandervaart1996}.  To do so, we will use Theorem 2.11.1 (see also Theorem 2.11.22) of~\cite{vandervaart1996}, which  implies that $\{ V_n: n = 1,2,\dots\}$ is asymptotically uniformly $\rho_{a_0, \varepsilon}$-equicontinuous if the following conditions hold:
\begin{itemize}
\item[(a)] $P_0 H_{h,a_0,j}^2 = \boundeddet(1)$; 
\item[(b)] $P_0\left[ H_{h,a_0,j}^2 I\{H_{h,a_0,j}>\varepsilon n^{1/2}\} \right] = \fasterthandet(1)$ for every $\varepsilon > 0$;
\item[(c)] $\sup_{\rho_{a_0, \varepsilon} (\lambda_1,\lambda_2)<\delta_n}P_0\{\eta_{h,a_0,j,\lambda_1}-\eta_{h,a_0,j,\lambda_2}\}^2 = \fasterthandet(1)$ for every $\delta_n\longrightarrow 0$; and
\item[(d)] for every $\delta_n=\fasterthandet(1)$, $\sup_Q \int_{0}^{\delta_n}\left\{\log N(\varepsilon \|H_{h,a_0,j}\|_{Q,2}, \s{H}_{h,a_0,j}, L_2(Q)) \right\}^{1/2}\,d\varepsilon = \fasterthandet(1)$.
\end{itemize}
We prove that $\{\d{G}_n \eta_{h,a_0, j, \lambda} : \lambda \in \Lambda\}$ is $\rho_{a_0, \varepsilon}$-equicontinuous by establishing these conditions, and the result follows.

We showed condition (a) above. For (b), we have again by the tower property and a change of variables
\begin{align*}
    P_0 \left\{ H_{h,a_0,j}^2 I\left(H_{h,a_0,j} > \varepsilon n^{1/2}  \right) \right\} &= \int   h  \left(\frac{a-a_0}{h}\right)^{2(j-1)}K_{h,a_0}(a)^2 E_0 \left[ L^2 I\left(H_{h,a_0,j} > \varepsilon n^{1/2}  \right) \mid A = a \right] \, f_0(a) \, da \\
    &=  \int  u^{2(j-1)}K(u)^2 E_0 \left[ L^2 I\left(H_{h,a_0,j} > \varepsilon n^{1/2}  \right) \mid A = a_0 + uh \right] \, f_0(a_0 + uh) \, du.
\end{align*}
Now we note that since the kernel $K$ is uniformly bounded by, say $\bar{K} < \infty$,  and has support contained in $[-1,1]$,
\[ I\left(H_{h,a_0,j} > \varepsilon n^{1/2}  \right) = I\left( h^{1/2} [(A - a_0) / h]^{j-1} K_{h,a_0}(A) L > \varepsilon n^{1/2}  \right) \leq  I\left( \bar{K}L > \varepsilon (nh)^{1/2}  \right) .\]
Hence, for all $h < \delta_1$,
\begin{align*}
    P_0 \left\{ H_{h,a_0,j}^2 I\left(H_{h,a_0,j} > \varepsilon n^{1/2}  \right) \right\} &\leq \bar{K}^2  \int_{-1}^1   E_0 \left[ L^2 I\left( L > \varepsilon (nh)^{1/2} / \bar{K}  \right) \mid A = a_0 + uh \right] \, f_0(a_0 + uh) \, du \\
    &\lesssim \sup_{|a - a_0| < \delta_1}E_0 \left[ L^2 I\left( L > \varepsilon (nh)^{1/2} / \bar{K}  \right) \mid A = a  \right].
\end{align*}
Then by H\"{o}lder's inequality,
\begin{align*}
    E_0 \left[ L^2 I\left( L > \varepsilon (nh)^{1/2} / \bar{K}  \right) \mid A = a  \right] 
    &\leq\left\{E_0\left[L^{2+\delta_2} \mid A = a\right]\right\}^{2 / (2 + \delta_2)} \left\{ P_0\left( L > \varepsilon (nh)^{1/2} / \bar{K} \mid A = a \right)\right\}^{\delta_2 / (2+\delta_2)},
\end{align*}
and by Markov's inequality,
\begin{align*}
    P_0\left( L > \varepsilon (nh)^{1/2} / \bar{K} \mid A = a \right) &\leq  E_0 \left[ L^{2+\delta_2} \mid A = a\right]  / \left[ \varepsilon (nh)^{1/2} / \bar{K} \right]^{2+\delta_2}.
\end{align*}
Thus, 
\begin{align*}
    \sup_{|a - a_0| < \delta_1}E_0 \left[ L^2 I\left( L > \varepsilon (nh)^{1/2} / \bar{K}  \right) \mid A = a  \right] &\leq \left[ \varepsilon (nh)^{1/2} / \bar{K} \right]^{-\delta_2} \sup_{|a - a_0| < \delta_1}E_0\left[L^{2+\delta_2} \mid A = a\right].
\end{align*}
Since $nh \longrightarrow \infty$, this latter expression tends to zero by assumption for every $\varepsilon > 0$, which verifies (b).

For (c), by a similar calculation we have for all $h$ small enough that
\begin{align*}
    P_0 \left\{ \eta_{h,a_0, j,\lambda_1} - \eta_{h,a_0, j,\lambda_2}\right\}^2 &= \int h \left(\frac{a-a_0}{h}\right)^{2(j-1)}K_{h,a_0}(a)^2 E_0\left[ \left\{\lambda_1 - \lambda_2\right\}^2 \mid A = a\right] f_0(a) \, da \\
    &= \int  u^{2(j-1)}K(u)^2 E_0\left[ \left\{\lambda_1 - \lambda_2\right\}^2 \mid A = a_0 + uh\right] f_0(a_0 + uh) \, du \\
    &\lesssim C \sup_{|a - a_0| < \varepsilon} E_0\left[ \left\{\lambda_1 - \lambda_2\right\}^2 \mid A = a\right] \\
    &= \rho_{a_0, \varepsilon}(\lambda_1, \lambda_2)^2.
\end{align*}
Hence, 
\[ \sup_{\rho(\lambda_1,\lambda_2)<\delta_n}P_0\{\eta_{h,a_0, j,\lambda_1} - \eta_{h,a_0, j,\lambda_2}\}^2 \lesssim  \delta_n^2, \]
so that (c) is satisfied.

For (d), using the entropy bound established above, we have
\[ \sup_Q \int_0^\delta \left\{ \log  N\left(\varepsilon \| H_{h,a_0,j}\|_{Q,2}, \s{H}_{h,a_0,j}, L_2(Q) \right) \right\}^{1/2} \, d\varepsilon \leq \int_0^\delta \left \{ \sup_Q \log N\left(\varepsilon \| L\|_{Q,2}, \Lambda, L_2(Q) \right) \right\}^{1/2} \, d\varepsilon.\]
Since the integral on the right is convergent by assumption, the expression converges to zero as $\delta \to 0$, which verifies (d). 

For the final statement, by Lemma~\ref{lm:D0_altform}, we can write
\begin{align*}
    h^{1/2}\Gamma_{0,h, b, a_0}\lambda  &= \sum_{j=1}^2 C_{h,a_0,j} \eta_{h,a_0,j,\lambda} + \sum_{j=1}^3 C_{h,b,a_0,j}' \eta_{b,a_0,j,\lambda}
\end{align*}
for some constants $C_{h,a_0,j} = \boundeddet(1)$ and $C_{h,b,a_0,j}' = \boundeddet(1)$. By the result above, we then have 
\begin{align*}
    \d{G}_n\left\{ h^{1/2} \Gamma_{0, h, b, a_0}(\lambda_{1n} - \lambda_{2n}) \right\}  &= \sum_{j=1}^2 C_{h,a_0,j} \d{G}_n\left\{ \eta_{h,a_0,j,\lambda_{n1}} - \eta_{h,a_0,j,\lambda_{n2}} \right\} + \sum_{j=1}^3 C_{h,b,a_0,j}' \d{G}_n\left\{ \eta_{b,a_0,j,\lambda_{n1}} - \eta_{b,a_0,j,\lambda_{n2}} \right\} \\
    &= \sum_{j=1}^2 \boundeddet(1) \fasterthan(1)+\sum_{j=1}^3 \boundeddet(1) \fasterthan(1) = \fasterthan(1).
\end{align*}
\end{proof}

Next, we consider the sequence of stochastic processes $\{\bar{V}_n(\lambda) : \lambda \in \Lambda\}$, for $\bar{V}_n(\lambda) := \d{G}_n \bar\eta_{h,a_0, j,\lambda}$, where
\[\bar\eta_{h,a_0, j,\lambda}(y,a,w) := \int h^{1/2}\left(\frac{a-a_0}{h}\right)^{j-1}K_{h,a_0}(a) \lambda(y, a, w) \, dF_0(a).\]


\begin{lemma}\label{lemma:integral_process} Suppose~\ref{cond:bounded_K},~\ref{cond:bandwidth}, and~\ref{cond:cont_density} hold, and that $\Lambda$ is a class of functions uniformly bounded by $L < \infty$  and satisfying $ \sup_Q \log N\left(\varepsilon, \Lambda, L_2(Q) \right) \leq  C\varepsilon^{-V}$ for some $C < \infty$ and $V \in (0,2)$. Then $\sup_{\lambda \in \Lambda} \left| \d{G}_n \bar\eta_{h,a_0,j,\lambda} \right| = \fasterthan(h^{(2-s)/(2s)})$ for every $s \in (V \vee 1,2)$ and each $j \in\{ 1,2,\ldots\}$. Hence, $\sup_{\lambda \in \Lambda} \left| \d{G}_n \left( \int h^{1/2} \Gamma_{0,h,b,a_0} \lambda \, dF_0\right) \right| = \fasterthan(h^{(2-s)/(2s)})$ for every $s \in (V,2)$ as well.
\end{lemma}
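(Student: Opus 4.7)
The plan is to apply the uniform entropy maximal inequality of Theorem~2.14.1 of~\cite{vandervaart1996} to the class $\bar{\s{H}}_{h,a_0,j} := \{\bar\eta_{h,a_0,j,\lambda} : \lambda \in \Lambda\}$, following the template of Lemma~\ref{lemma:equicontinuity}. I would first derive a constant envelope: the change of variables $u = (a-a_0)/h$, the compact support of $K$ from~\ref{cond:bounded_K}, the uniform bound $|\lambda| \leq L$, and the local boundedness of $f_0$ from~\ref{cond:cont_density} together give $|\bar\eta_{h,a_0,j,\lambda}(y,w)| \leq C h^{1/2}$ uniformly in $\lambda$ for all $h$ sufficiently small, so the constant function $\bar H \equiv C h^{1/2}$ is a valid envelope for $\bar{\s{H}}_{h,a_0,j}$.

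The crucial step is then to transfer the uniform entropy bound from $\Lambda$ to $\bar{\s{H}}_{h,a_0,j}$. Since $\lambda \mapsto \bar\eta_{h,a_0,j,\lambda}$ is linear, Cauchy-Schwarz applied with the probability measure $K(u)\,du$ produces $\|\bar\eta_{h,a_0,j,\lambda_1} - \bar\eta_{h,a_0,j,\lambda_2}\|_{L_2(Q)}^2 \leq C h \|\lambda_1 - \lambda_2\|_{L_2(\tilde Q_h)}^2$ for every probability measure $Q$ on $(y,w)$, where $\tilde Q_h$ is the probability measure on $(y,a,w)$ defined by $\tilde Q_h(E) := \int K(u) \int I_E(y, a_0+uh, w) \, dQ(y,w) \, du$. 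Passing to covering numbers and using the hypothesis on $\Lambda$ yields $\sup_Q \log N(\varepsilon \|\bar H\|_{Q,2}, \bar{\s{H}}_{h,a_0,j}, L_2(Q)) \lesssim \varepsilon^{-V}$, so the uniform entropy integral $J(1, \bar{\s{H}}_{h,a_0,j}, \bar H) \lesssim \int_0^1 \varepsilon^{-V/2}\,d\varepsilon$ is finite because $V < 2$.

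Theorem~2.14.1 of~\cite{vandervaart1996} then delivers $E \sup_{\lambda \in \Lambda} |\d{G}_n \bar\eta_{h,a_0,j,\lambda}| \lesssim J(1, \bar{\s{H}}_{h,a_0,j}, \bar H) \|\bar H\|_{P_0,2} \lesssim h^{1/2}$, so Markov's inequality gives $\sup_\lambda |\d{G}_n \bar\eta_{h,a_0,j,\lambda}| = \bounded(h^{1/2})$. For any $s > V \vee 1 \geq 1$, the exponent $(2-s)/(2s) < 1/2$, hence $h^{1/2}/h^{(2-s)/(2s)} = h^{(s-1)/(2s)} \longrightarrow 0$ as $h \longrightarrow 0$. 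This yields the claimed $\fasterthan(h^{(2-s)/(2s)})$ rate.

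For the composite statement, I would expand $\Gamma_{0,h,b,a_0}$ according to its definition and use the identity $h^{1/2} [(a-a_0)/b]^{j-1} K_{b,a_0}(a) = \tau_n^{1/2} \cdot b^{1/2} [(a-a_0)/b]^{j-1} K_{b,a_0}(a)$ to express $\d{G}_n \int h^{1/2} \Gamma_{0,h,b,a_0} \lambda \, dF_0$ as a finite linear combination of $\d{G}_n \bar\eta_{h,a_0,j,\lambda}$ for $j = 1,2$ and $\d{G}_n \bar\eta_{b,a_0,j,\lambda}$ for $j = 1,\ldots,4$. The scalar coefficients involve entries of $\b{D}^{-1}_{0,h,a_0,1}$ and $\b{D}^{-1}_{0,b,a_0,3}$, which are $\boundeddet(1)$ by Lemma~\ref{lm:D0_altform}, together with powers of $\tau_n = \boundeddet(1)$ from~\ref{cond:bandwidth}; since~\ref{cond:bandwidth} also forces $h = \boundeddet(b)$, combining the first part of the lemma applied at both bandwidths gives the stated rate. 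The main technical hurdle will be constructing the auxiliary measure $\tilde Q_h$ carefully so that the entropy transfer from $\Lambda$ to $\bar{\s{H}}_{h,a_0,j}$ is clean; the remainder is a routine application of empirical process machinery.
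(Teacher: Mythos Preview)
Your approach is correct and takes a genuinely different, somewhat simpler route from the paper's. The paper works with a one-parameter family of envelopes $\bar H_{h,s} := C h^{(2-s)/(2s)}$, obtained from $\bigl[\int F_{h,a_0,j}^s \, dF_0\bigr]^{1/s}$ for $s \geq 1$, and transfers entropy via Lemma~5.2 of \cite{vanderVaartvanderLaan2006} (with $r=t=2$); this yields $\sup_Q \log N(\varepsilon \|\bar H_{h,s}\|_{Q,2}, \bar{\s{H}}_{h,a_0,j}, L_2(Q)) \lesssim \varepsilon^{-2V/s}$, whose entropy integral is finite only when $s>V$. Theorem~2.14.1 then gives $\boundeddet(h^{(2-s)/(2s)})$ for each such $s$, and the $o$-statement follows by picking $s' \in (V,s)$. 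Your direct Cauchy--Schwarz transfer is sharper: it produces entropy exponent $V$ rather than $2V/s$, so the single envelope $\bar H = C h^{1/2}$ already delivers $\bounded(h^{1/2})$ under the standing hypothesis $V<2$, with no optimization over $s$ required. This is a strictly tighter intermediate bound and makes the $\fasterthan(h^{(2-s)/(2s)})$ conclusion immediate for every $s>1$.

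One small imprecision in the composite statement: the remark ``since~\ref{cond:bandwidth} also forces $h = \boundeddet(b)$'' does not by itself let you pass from $\fasterthan(b^{(2-s)/(2s)})$ to $\fasterthan(h^{(2-s)/(2s)})$ --- the inequality points the wrong way, since $b \geq h/\tau_n$ makes $b^{(2-s)/(2s)}$ \emph{larger} than a multiple of $h^{(2-s)/(2s)}$. What actually controls the $b$-bandwidth terms is that the coefficient in $\Gamma_{0,h,b,a_0}$ carries an explicit factor $c_2\tau_n^2$, and combined with the $\tau_n^{1/2}$ you extract one has $\tau_n^{5/2} b^{1/2} = \tau_n^2 \cdot (\tau_n^{1/2} b^{1/2}) = \tau_n^2 h^{1/2} = \boundeddet(h^{1/2})$. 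With your sharper $\bounded(b^{1/2})$ bound this is immediate; just make that bookkeeping explicit rather than appealing to $h = \boundeddet(b)$.
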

\begin{proof}[\bfseries{Proof of Lemma~\ref{lemma:integral_process}}]
We let $\bar{\s{H}}_{h,a_0,j} := \{\bar\eta_{h,a_0, j,\lambda} : \lambda \in \Lambda\}$.  We use Theorem~2.14.1 of \cite{vandervaart1996}. 
We first bound the uniform entropy of $\bar{\s{H}}_{h,a_0,j}$. Lemma~5.1 of \cite{vanderVaartvanderLaan2006} implies that the class of functions $\s{F}_{h,a_0,j} := \left\{(y,a,w) \mapsto h^{1/2}\left(\frac{a-a_0}{h}\right)^{j-1}K_{h,a_0}(a) \lambda(y,a,w) : \lambda \in \Lambda\right\}$ satisfies 
\[ \sup_{Q} N\left( \varepsilon \|F_{h,a_0,j} \|_{Q,2}, \s{F}_{h,a_0,j}, L_2(Q)\right) \leq \sup_Q N\left( \varepsilon L, \Lambda, L_2(Q)\right)\]
relative to the envelope function $F_{h,a_0,j} : a \mapsto L h^{1/2} |(a -a_0)/h|^jK_{h,a_0}(a)$. We note that for $s \geq 1$,
\[\left[ \int F_{h,a_0,j}(a)^s \, dF_0(a) \right]^{1/s} = L h^{1/2+(1-s)/s}\left[ \int |u|^{s(j-1)} K(u)^s f_0(a_0 + uh) \,du \right]^{1/s}  \lesssim h^{(2-s)/(2s)} \]
by~\ref{cond:bounded_K} and~\ref{cond:cont_density}. Hence, $\bar{H}_{h,s} := C h^{(2-s)/(2s)}$ is an envelope for $\bar{\s{H}}_{h,a_0,j}$ for some $C <\infty$. By Lemma~5.2 of \cite{vanderVaartvanderLaan2006} (with $r = t = 2$), the above bound for the uniform entropy of $\s{F}_{h,a_0,j}$, and the assumed bound for the uniform entropy of $\Lambda$, we then have
\begin{align*}
     \sup_{Q} \log N\left( \varepsilon \|\bar{H}_{h,s} \|_{Q,2}, \s{H}_{h,a_0,j}, L_2(Q)\right) &\leq \sup_{Q} \log N\left( \varepsilon^{2/s} \|F_{h,a_0,j} \|_{Q,2} / 2^{r/s}, \s{F}_{h,a_0,j}, L_2(Q)\right) \\
     &\leq \sup_Q \log N\left(\varepsilon^{2/s} L/ 2^{r/s}, \Lambda, L_2(Q)\right) \\
     &\lesssim \varepsilon^{-2V/s}.
\end{align*}
Theorem~2.14.1 of \cite{vandervaart1996} then gives
\begin{align*}
    E_0 \left[ \sup_{\bar\eta \in \bar{\s{H}}_{h,a_0,j}} \left| \d{G}_n \bar\eta \right| \right] &\lesssim \| \bar{H}_{h,s} \|_{P,2} \int_0^1 \left\{ 1 +  \sup_{Q} \log N\left( \varepsilon \|\bar{H}_{h,s} \|_{Q,2}, \s{H}_{h,a_0,j}, L_2(Q)\right) \right\}^{1/2} \, d\varepsilon   \\
    &\lesssim  h^{(2-s)/(2s)}\int_0^1 \left\{ 1 +  \varepsilon^{-2V/s} \right\}^{1/2} \, d\varepsilon.
\end{align*}
The integral is finite so long as $V/s < 1$. Thus, 
\[ E_0 \left[ \sup_{\bar\eta \in \bar{\s{H}}_{h,a_0,j}} \left| \d{G}_n \bar\eta \right| \right] = \boundeddet\left(h^{(2-s)/(2s)}\right) \]
for all $s > V$. Furthermore, for any $s > V$, there exists $s' \in (V, s)$, so that 
\[ E_0 \left[ \sup_{\bar\eta \in \bar{\s{H}}_{h,a_0,j}} \left| \d{G}_n \bar\eta \right| \right] = \boundeddet\left(h^{(2-s')/(2s')}\right) = \fasterthandet\left(h^{(2-s)/(2s)}\right). \]
The final statement regarding $\{\d{G}_n \left( \int h^{1/2} \Gamma_{0,h,b,a_0} \lambda \, dF_0\right) :\lambda \in \Lambda\}$ follows by decomposing $\Gamma_{0,h,b,a_0}$ and using Lemma~\ref{lm:Dmatrix}, as in the proof of Lemma~\ref{lemma:equicontinuity}.
\end{proof}

We now use these results to establish rates of convergence of the remainder terms $R_{n,h,b,a_0,2}$ and $R_{n,h,b,a_0,3}$.

\begin{corollary}\label{cor:R2R3}
If \ref{cond:bounded_K}--\ref{cond:cont_density} hold, then $R_{n,h,b,a_0,2} = \fasterthan\left(\{nh\}^{-1/2}\right)$ and $R_{n,h,b,a_0,3} = \bounded\left(\{nh\}^{-1}\right)$.
\end{corollary}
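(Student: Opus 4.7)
\textbf{Plan for $R_{n,h,b,a_0,2}$.} First I would rewrite the remainder by pulling the minus sign inside, yielding the three-term decomposition
\[R_{n,h,b,a_0,2} = (\d{P}_n - P_0)\Bigl[\Gamma_{0,h,b,a_0}(\psi_n - \psi_\infty)\Bigr] + (\d{P}_n - P_0)\Bigl[\Gamma_{0,h,b,a_0}\!\int(\mu_n - \mu_\infty)\,dQ_0\Bigr] + (\d{P}_n - P_0)\Bigl[\int \Gamma_{0,h,b,a_0}(\mu_n - \mu_\infty)\,dF_0\Bigr],\]
so that $(nh)^{1/2}R_{n,h,b,a_0,2}$ splits into three rescaled empirical process terms. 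The first two are sums of functions of $(Y,A,W)$ (or $A$ alone after integration against $Q_0$), both of which are evaluated on the compactly supported kernel $K_{h,a_0}$ appearing in $\Gamma_{0,h,b,a_0}$. I would apply Lemma~\ref{lemma:equicontinuity} to each of them with the classes $\Lambda$ formed from $\s{F}_\mu$ and $\s{F}_g$ (which satisfy the envelope, uniform entropy, and moment conditions by~\ref{cond:uniform_entropy_nuisances} and~\ref{cond:cont_density}). The key input is the asymptotic equicontinuity hypothesis $\rho_{a_0,\delta_1}(\psi_n, \psi_\infty) = \fasterthan(1)$ and analogously for the $Q_0$-integrated summand. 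For the third term, the integrand is a function of $W$ alone, so Lemma~\ref{lemma:integral_process} is the appropriate tool and yields an $\fasterthan(h^{(2-s)/(2s)})$ rate for any $s > V$, which is $\fasterthan(1)$.

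\textbf{Verifying the semi-metric hypothesis.} The main obstacle is to establish $\rho_{a_0,\delta_1}(\psi_n,\psi_\infty) = \fasterthan(1)$. This is where condition~\ref{cond:doubly_robust} enters. I would write $\psi_n - \psi_\infty = (\mu_\infty - \mu_n)/g_n + (Y - \mu_\infty)(g_\infty - g_n)/(g_n g_\infty)$, bound this in $L_2(P_0)$ conditional on $A=a$ using the uniform bounds on $\s{F}_\mu, \s{F}_g$ in~\ref{cond:uniform_entropy_nuisances}(a) together with the bounded conditional moments of $Y$ in~\ref{cond:cont_density}(c), and split the resulting squared expression over $\s{S}_1 \cup \s{S}_2 \cup \s{S}_3$. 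On each subregion~\ref{cond:doubly_robust} guarantees that at least one factor is consistent at the faster-than-$\{nh\}^{-1/2}$ rate while the other is merely $\fasterthan(1)$, which is far stronger than the $\fasterthan(1)$ bound needed for equicontinuity. The $Q_0$-integrated summand is handled similarly using only the $\mu_n \to \mu_\infty$ piece of~\ref{cond:doubly_robust}.

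\textbf{Plan for $R_{n,h,b,a_0,3}$.} Here I would exploit that the matrix differences $\b{D}_{n,h,a_0,1}^{-1} - \b{D}_{0,h,a_0,1}^{-1}$ and $\b{D}_{n,b,a_0,3}^{-1} - \b{D}_{0,b,a_0,3}^{-1}$ do not depend on the variable of integration and can therefore be pulled outside the empirical mean. Concretely,
\[R_{n,h,b,a_0,3} = e_1^T\bigl(\b{D}_{n,h,a_0,1}^{-1} - \b{D}_{0,h,a_0,1}^{-1}\bigr)(\d{P}_n - P_0)\bigl[w_{h,a_0,1}K_{h,a_0} \Lambda_n\bigr] - e_3^T c_2 \tau_n^2\bigl(\b{D}_{n,b,a_0,3}^{-1} - \b{D}_{0,b,a_0,3}^{-1}\bigr)(\d{P}_n - P_0)\bigl[w_{b,a_0,3}K_{b,a_0} \Lambda_n\bigr],\]
where $\Lambda_n$ denotes the bracketed combination $\psi_n + \int \mu_n\,dQ_0 + W\text{-integral term}$ (split naturally into the product and integrated pieces). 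Lemma~\ref{lm:Dmatrix} gives $\|\b{D}_{n,h,a_0,1}^{-1} - \b{D}_{0,h,a_0,1}^{-1}\|_\infty = \bounded(\{nh\}^{-1/2})$ and similarly for the $b$-indexed matrix. For each empirical process factor, multiplying and dividing by $h^{1/2}$ (respectively $b^{1/2}$) and applying Lemma~\ref{lemma:equicontinuity} with $\Lambda_n$ held fixed in $\s{F}_\mu \cup \s{F}_g$ yields $\bounded(\{nh\}^{-1/2})$. Products of the two rates give $\bounded(\{nh\}^{-1})$, as claimed. The $W$-integral piece is handled identically using Lemma~\ref{lemma:integral_process} in place of Lemma~\ref{lemma:equicontinuity}, which yields a strictly faster rate since the integrated envelope shrinks with $h$.
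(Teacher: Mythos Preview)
Your proposal is correct and follows essentially the same approach as the paper: the same three-term decomposition of $R_{n,h,b,a_0,2}$ handled via Lemma~\ref{lemma:equicontinuity} (for the $\psi_n-\psi_\infty$ and $Q_0$-integrated terms, after verifying $\rho_{a_0,\delta_1}(\psi_n,\psi_\infty)=\fasterthan(1)$ from~\ref{cond:doubly_robust}) and Lemma~\ref{lemma:integral_process} (for the $F_0$-integrated term), and the same factorization of $R_{n,h,b,a_0,3}$ into matrix differences $\b{D}_n^{-1}-\b{D}_0^{-1}$ (controlled by Lemma~\ref{lm:Dmatrix}) times $\bounded(1)$ empirical process pieces (controlled by the first statement of Lemma~\ref{lemma:equicontinuity} and by Lemma~\ref{lemma:integral_process}). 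The only cosmetic differences are that the paper bounds the $\mu_n$ and $\mu_\infty$ integrated terms separately rather than as a difference, and that the relevant class for $\psi_n$ is the product class $\{(y-\mu)/g:\mu\in\s{F}_\mu,\,g\in\s{F}_g\}$ rather than $\s{F}_\mu\cup\s{F}_g$.
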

\begin{proof}[\bfseries{Proof of Corollary~\ref{cor:R2R3}}]
We can write
\begin{align}
    (nh)^{1/2}R_{n, h,b,a_0, 2} &= \d{G}_n\left\{ h^{1/2} \Gamma_{0,h,b,a_0} \left( \psi_n - \psi_\infty \right) \right\} + \d{G}_n\left\{ h^{1/2} \Gamma_{0,h,b,a_0} \left( \int \mu_n \, dQ_0 - \int \mu_\infty \, dQ_0 \right) \right\}\nonumber \\
    &\qquad + \d{G}_n \left\{ \int h^{1/2} \Gamma_{0,h,b,a_0} \mu_n \, dF_0 \right\} - \d{G}_n \left\{ \int h^{1/2} \Gamma_{0,h,b,a_0} \mu_\infty \, dF_0 \right\} \label{eq:R2_decomp}
\end{align}
and 
\begin{align}
    (nh)^{1/2}R_{n, h,b,a_0, 3} &= \d{G}_n\left\{ h^{1/2} \left(\Gamma_{n,h,b,a_0}- \Gamma_{0,h,b,a_0}\right) \psi_n \right\} +   \d{G}_n\left\{ h^{1/2} \left(\Gamma_{n,h,b,a_0}- \Gamma_{0,h,b,a_0}\right)  \int \mu_n \, dQ_0\right\} \nonumber \\
    &\qquad + \d{G}_n \left\{ \int h^{1/2} \left(\Gamma_{n,h,b,a_0}- \Gamma_{0,h,b,a_0}\right) \mu_n \, dF_0 \right\}  \label{eq:R3_decomp}
\end{align}
We address the first two terms in each expansion using Lemma~\ref{lemma:equicontinuity}. For the first term, we set $\Lambda := \{ o \mapsto [y - \mu(a,w)] / g(a,w): \mu \in \s{F}_\mu, g \in \s{F}_g\}$. By~\ref{cond:uniform_entropy_nuisances}, an envelope function for $\Lambda$ is $L(y,a,w) = C + C' |y|$ for some $C, C' < \infty$. Hence, by the triangle inequality,
\begin{align*}
    \sup_{|a - a_0| < \delta_1} E_0\left[ L^{2+\delta_2} \mid A = a\right] &= \sup_{|a - a_0| < \delta_1} E_0 \left[ \left\{ C + C'|Y|\right\}^{2+\delta_2} \mid A = a\right] \\
    &\leq \left[C + C'\sup_{|a - a_0| < \delta_1}  \left( E_0 \left\{ |Y|^{2+\delta_2} \mid A = a\right\}\right)^{1/(2+\delta_2)} \right]^{2+\delta_2},
\end{align*}
which is finite by~\ref{cond:cont_density}. Therefore, condition 1 of Lemma~\ref{lemma:equicontinuity} holds. For condition 2 of Lemma~\ref{lemma:equicontinuity}, we note that permanence properties of uniform entropy numbers and uniform entropy integrals (see, e.g., Theorem~2.10.20 of \citealp{vandervaart1996} and Lemma~5.1 of \citealp{vanderVaartvanderLaan2006}) in conjunction with~\ref{cond:uniform_entropy_nuisances} implies that $\Lambda$ possesses finite uniform entropy integral. Finally, we show that $\rho_{a_0,\delta_1}(\psi_n, \psi_\infty) = \fasterthan(1)$. By Jensen's inequality,
\[  E_0\left( Y^2 \mid A = a, W = w\right) \leq \left[ E_0\left( |Y|^{2+\delta_2} \mid A = a, W =w \right) \right]^{2 / (2 + \delta_2)}\]
which is uniformly bounded for $a \in B_{\delta_1}(a_0)$ and $P_0$-almost every $w$ by~\ref{cond:cont_density}. We also note that by~\ref{cond:uniform_entropy_nuisances} and~\ref{cond:doubly_robust}, for any functions $\gamma_1, \gamma_2$ of $(a,w)$, 
\begin{align*}
    \sup_{|a -a_0| < \delta_1} \left\{ E_0\left[ \left\{ \gamma_1(a,W) - \gamma_2(a,W)\right\}^2 \mid A = a\right] \right\}^{1/2} &= \sup_{|a -a_0| < \delta_1} \left\{ E_0\left[ g_0(a,W)\left\{ \gamma_1(a,W) - \gamma_2(a,W)\right\}^2 \right] \right\}^{1/2} \\
    &\lesssim \sup_{|a -a_0| < \delta_1} \left\{ E_0\left[ \left\{ \gamma_1(a,W) - \gamma_2(a,W)\right\}^2 \right] \right\}^{1/2} \\
    &= \sup_{|a -a_0| < \delta_1} \left\{ E_0\left[ I_{B_{\delta_1}(a_0) \times \s{W}}(a,W)\left\{ \gamma_1(a,W) - \gamma_2(a,W)\right\}^2 \right] \right\}^{1/2} \\
    &\leq \sum_{j=1}^3 \sup_{|a -a_0| < \delta_1} \left\{ E_0\left[ I_{\s{S}_j}(a,W)\left\{ \gamma_1(a,W) - \gamma_2(a,W)\right\}^2 \right] \right\}^{1/2} \\
    &= \sum_{j=1}^3 d\left(\gamma_1, \gamma_2; B_{\delta_1}(a_0), \s{S}_j\right).
\end{align*}
We then have  
\begin{align*}
    \rho_{a_0,\delta_1}(\psi_n, \psi_\infty) &= \sup_{|a - a_0| <\delta_1} \left\{ E_0 \left[ \left( \psi_n - \psi_0 \right)^2 \mid A =a \right] \right\}^{1/2} \\
    &= \sup_{|a - a_0| <\delta_1} \left\{ E_0 \left[ \left( \{ 1/(g_n g_\infty)\} \left\{Y - \mu_n \right\} \left\{ g_\infty - g_n\right\} - \{1/g_\infty\}\{\mu_n - \mu_\infty\} \right)^2 \mid A =a \right] \right\}^{1/2} \\ 
    &\lesssim \sup_{|a - a_0| <\delta_1} \left\{ E_0 \left[Y^2 \left( g_n - g_\infty\right)^2 \mid A =a \right] \right\}^{1/2} + \sup_{|a - a_0| <\delta_1} \left\{ E_0 \left[\mu_n^2 \left( g_n - g_\infty\right)^2 \mid A =a \right] \right\}^{1/2} \\
    &\qquad + \sup_{|a - a_0| <\delta_1} \left\{ E_0 \left[\left(\mu_n - \mu_\infty\right)^2 \mid A =a \right] \right\}^{1/2} \\
    &\lesssim \sup_{|a - a_0| <\delta_1} \left\{ E_0 \left[ E_0\left(Y^2 \mid A =a, W\right) \left( g_n - g_\infty\right)^2 \mid A =a \right] \right\}^{1/2} +  \sum_{j=1}^3 d\left(g_n, g_\infty; B_{\delta_1}(a_0), \s{S}_j\right) \\
    &\qquad + \sum_{j=1}^3 d\left(\mu_n, \mu_\infty; B_{\delta_1}(a_0), \s{S}_j\right)  \\
    &\lesssim \sum_{j=1}^3 d\left(g_n, g_\infty; B_{\delta_1}(a_0), \s{S}_j\right)  + \sum_{j=1}^3 d\left(\mu_n, \mu_\infty; B_{\delta_1}(a_0), \s{S}_j\right).
\end{align*}
Hence, by~\ref{cond:doubly_robust}, $\rho_{a_0,\delta_1}(\psi_n, \psi_\infty) = \fasterthan(1)$. Lemma~\ref{lemma:equicontinuity} thus implies that  $\d{G}_n\{ h^{1/2} \Gamma_{0,h,b,a_0} (\psi_n - \psi_\infty)\} = \fasterthan(1)$, which addresses the first term in equation~\eqref{eq:R2_decomp}.  For the first term in equation~\eqref{eq:R3_decomp}, we note that
\begin{align*}
    \d{G}_n\left\{ h^{1/2} \left(\Gamma_{n,h,b,a_0}- \Gamma_{0,h,b,a_0}\right) \psi_n \right\} &= e_1^T \left( \b{D}_{n,h,a_0,1}^{-1} - \b{D}_{0,h,a_0,1}^{-1}\right)  \d{G}_n\left\{ h^{1/2} w_{h,a_0,1} K_{h,a_0} \psi_n \right\} \\
    &\qquad -  c_{0,h,a_0,2} (h/b)^2 e_3^T\left( \b{D}_{n,h,a_0,2}^{-1} - \b{D}_{0,h,a_0,2}^{-1}\right)  \d{G}_n\left\{ h^{1/2} w_{b,a_0,2} K_{b,a_0} \psi_n \right\} \\
    &\qquad -  \left(c_{n,h,a_0,2} - c_{0,h,a_0,2}\right) (h/b)^2  e_3^T\b{D}_{n,h,a_0,2}^{-1}  \d{G}_n\left\{ h^{1/2} w_{b,a_0,2} K_{b,a_0} \psi_n \right\}.
\end{align*}
Since we have established that the conditions of Lemma~\ref{lemma:equicontinuity} hold for the class $\Lambda$ that $\psi_n$ falls in, Lemma~\ref{lemma:equicontinuity} implies that all elements of $\d{G}_n\left\{ h^{1/2} w_{h,a_0,1} K_{h,a_0} \psi_n \right\}$ and $\d{G}_n\left\{ h^{1/2} w_{b,a_0,2} K_{b,a_0} \psi_n \right\}$ are $\bounded(1)$. In conjunction with Lemma~\ref{lm:Dmatrix}, we then have
\[  \d{G}_n\left\{ h^{1/2} \left(\Gamma_{n,h,b,a_0}- \Gamma_{0,h,b,a_0}\right) \psi_n \right\} = \bounded\left( \{nh\}^{-1/2} \right).\]


For the second term in equation~\eqref{eq:R2_decomp}, we use Lemma~\ref{lemma:equicontinuity} with $\Lambda := \left\{ a \mapsto \int \mu(a, w) \, dQ_0(w) : \mu \in \s{F}_\mu \right\}$, which is a uniformly bounded class with envelope $L = C_1$ by~\ref{cond:uniform_entropy_nuisances}. Hence, condition (1)  of Lemma~\ref{lemma:equicontinuity} hold trivially for this class. For condition (2), \ref{cond:uniform_entropy_nuisances} and Lemma~5.2 of \cite{vanderVaartvanderLaan2006} (with $r = s = t = 2$) together imply that condition (2) of Lemma~\ref{lemma:equicontinuity} is satisfied. Therefore, if $\rho_{a_0, \delta_1}(\lambda_n, \lambda_\infty) = \fasterthan(1)$ for $\lambda_n : a \mapsto \int \mu_n(a,w) \, dQ_0(w)$ and $\lambda_\infty : a \mapsto \int \mu_\infty(a,w) \, dQ_0(w)$, Lemma~\ref{lemma:equicontinuity} implies that \[\d{G}_n\left\{ h^{1/2} \Gamma_{0,h,b,a_0} \left( \int \mu_n \, dQ_0 - \int \mu_\infty \, dQ_0 \right) \right\} = \fasterthan(1).\] 
Since $\lambda_n$ and $\lambda_\infty$ are functions just of $a$, we have
\begin{align*}
    \rho_{a_0, \delta_1}(\lambda_n, \lambda_\infty)^2 &= \sup_{|a -a_0|<\delta_1} \left\{ \int \left[ \mu_n(a, w) - \mu_\infty(a,w) \right] \, dQ_0(w) \right\}^2 \leq  \sup_{|a -a_0|<\delta_1} \int \left[ \mu_n(a, w) - \mu_\infty(a,w) \right]^2 \, dQ_0(w),
\end{align*}
which is $\fasterthan(1)$ by~\ref{cond:doubly_robust} as noted above. A similar argument yields that the second term of~\eqref{eq:R3_decomp} is $\bounded\left(\{nh\}^{-1/2}\right)$.

For the third and fourth terms in equation~\eqref{eq:R2_decomp}, we use Lemma~\ref{lemma:integral_process} with $\Lambda = \s{F}_\mu$. The conditions of the lemma are satisfied by~\ref{cond:uniform_entropy_nuisances}, so that 
\[ \sup_{\mu \in \s{F}_\mu} \left| \d{G}_n \left\{ \int h^{1/2} \Gamma_{0,h,b,a_0} \mu \, dF_0 \right\} \right| = \fasterthan\left(h^{(2-s) /(2s)} \right)\]
for every $s \in (V_\mu,2)$. Since $\mu_n, \mu_\infty \in \s{F}_\mu$ almost surely for all $n$ large enough, the result follows.

For the third term in equation~\eqref{eq:R3_decomp}, we note that 
\begin{align*}
    \d{G}_n\left\{ \int h^{1/2} \left(\Gamma_{n,h,b,a_0}- \Gamma_{0,h,b,a_0}\right) \mu_n  \, dF_0 \right\} &= e_1^T \left( \b{D}_{n,h,a_0,1}^{-1} - \b{D}_{0,h,a_0,1}^{-1}\right)  \d{G}_n\left\{ \int h^{1/2} w_{h,a_0,1} K_{h,a_0} \mu_n\, dF_0 \right\} \\
    &\quad -  c_{0,h,a_0,2} (h/b)^2e_3^T\left( \b{D}_{n,b,a_0,2}^{-1} - \b{D}_{0,b,a_0,2}^{-1}\right)  \d{G}_n\left\{ \int  h^{1/2} w_{h,a_0,2} K_{h,a_0} \mu_n \, dF_0 \right\} \\
    &\quad-  (c_{0,h,a_0,2}  - c_{0,h,a_0,2}) (h/b)^2e_3^T \b{D}_{n,b,a_0,2}^{-1}  \d{G}_n\left\{ \int  h^{1/2} w_{h,a_0,2} K_{h,a_0} \mu_n \, dF_0 \right\} \\
\end{align*}
Each of the terms in $\d{G}_n\left\{ \int h^{1/2} w_{h,a_0,1} K_{h,a_0} \mu_n\, dF_0 \right\}$ and $\d{G}_n\left\{ \int h^{1/2} w_{b,a_0,3} K_{b,a_0} \mu_n \, dF_0\right\}$ are $\fasterthan\left(h^{(2-s) /(2s)} \right)$ by Lemma~\ref{lemma:integral_process}, so they are also $\bounded(1)$. In conjunction with Lemma~\ref{lm:Dmatrix}, we then have 
\[\d{G}_n\left\{ \int h^{1/2} \left(\Gamma_{n,h,b,a_0}- \Gamma_{0,h,b,a_0}\right) \mu_n  \, dF_0 \right\} = \bounded\left( \{nh\}^{-1/2}\right),\]
as claimed.
\end{proof}

We now turn to uniform control of $R_{n,h,b,s,2}$ and $R_{n,h,b,,3}$ over $a_0 \in \s{A}_0$. Let $\s{F}$ be a class of measurable functions equipped with a measurable envelope $F$. We say that $\s{F}$ is \emph{VC type} with envelope $F$ if there exists a constant $V>0$ such that $\sup_Q N(\varepsilon\|F\|_{Q, 2}, \s{F}, L_2(Q)) \lesssim \varepsilon^{-V}$ when $\varepsilon < 1$ and the supremum is taken over all probability measures. 
\begin{lemma}\label{lm:VC_class_K}
If~\ref{cond:bounded_K} holds, then $\{a \mapsto (\tfrac{a-a_0}{h})^{j-1}K(\tfrac{a-a_0}{h}) : a_0\in\s{A}_0\}$ is VC type for each fixed $h > 0$ and positive integer $j$.
\end{lemma}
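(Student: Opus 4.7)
The plan is to exploit the structural condition on $K$ imposed by~\ref{cond:bounded_K} and combine it with the fact that the dependence on $a_0$ is purely through an affine transformation of the argument together with multiplication by a polynomial in $a_0$. The key external tool is the standard VC calculus of \cite{vandervaart1996} (in particular Theorem~2.6.7 and Lemmas~2.6.17--2.6.18), together with the classical observation of \cite{gine2002} that the subgraph condition in~\ref{cond:bounded_K} characterizes kernels generating VC-type classes.

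First, I would reduce to the case $j=1$. Write
\[\left(\frac{a-a_0}{h}\right)^{j-1}K\left(\frac{a-a_0}{h}\right) = p_{j-1,h,a_0}(a)\cdot K\left(\frac{a-a_0}{h}\right),\]
where, for fixed $h$, $\{p_{j-1,h,a_0} : a_0 \in \s{A}_0\}$ is a one-parameter family of real polynomials of degree at most $j-1$ in $a$. This family sits in the finite-dimensional vector space of such polynomials, and $\s{A}_0$ is compact with $K$ supported in $[-1,1]$, so on the set where $K_{h,a_0}$ is nonzero, $|p_{j-1,h,a_0}|$ is uniformly bounded by a constant depending only on $h$ and $\s{A}_0$. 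Finite-dimensional vector spaces of measurable functions are VC-subgraph (Lemma~2.6.15 of \cite{vandervaart1996}), hence VC type with a constant envelope. Once the class $\{a\mapsto K((a-a_0)/h) : a_0\in\s{A}_0\}$ is shown to be VC type, the pointwise product of two bounded VC-type classes is again VC type by Lemma~A.6 of \cite{Chernozhukov_2014} (equivalently, by standard covering-number calculus).

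Second, for $j=1$, I would argue directly from~\ref{cond:bounded_K}. Because $K$ lies in the linear span of indicator-like functions whose subgraphs are finite Boolean combinations of sets $\{(s,u): p(s,u)\leq \varphi(u)\}$, and subgraphs of the form $\{(s,u): p(s,u) \leq \varphi(u)\}$ define VC classes of sets in $\d{R}^2$ (polynomial inequalities have finite pseudo-dimension; see the discussion preceding Theorem~4.2.1 in \cite{vandervaart1996}), the subgraph of $K$ itself lies in a VC class of subsets of $\d{R}^2$. Substituting $s\mapsto (a-a_0)/h$ for fixed $h$ yields subgraphs $\{(a,u) : \tilde p(a,u;a_0) \leq \varphi(u)\}$ with $\tilde p$ jointly polynomial in $(a,u,a_0)$; this family of sets, indexed by $a_0$, remains a VC class because its indicators lie in a fixed finite-dimensional polynomial vector space composed with an indicator of a sign pattern. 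Linear combinations and Boolean operations preserve VC-type (again via \cite{vandervaart1996} Section~2.6), so $\{a\mapsto K((a-a_0)/h) : a_0\in\s{A}_0\}$ is VC type with envelope $\|K\|_\infty$.

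I expect the principal technical nuisance to be keeping track of the $h$-dependence of the constants and verifying that the Boolean-operation and linear-span closures in~\ref{cond:bounded_K} transfer cleanly through the affine substitution $s = (a-a_0)/h$; conceptually this is routine but notationally finicky. Apart from that, once the VC-type property is in place for $j=1$, the product argument to recover general $j$ is immediate.
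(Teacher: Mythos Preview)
Your overall architecture matches the paper's: write the target class as a product of a kernel-shift class and a ``polynomial-type'' class, invoke \cite{gine2002} (via condition~\ref{cond:bounded_K}) for the kernel factor, and appeal to VC permanence for the product. The second paragraph of your plan, handling the $j=1$ case, is essentially the paper's argument.

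The difference lies in how you treat the polynomial factor, and here there is a genuine gap. You assert that $\{p_{j-1,h,a_0}:a_0\in\s A_0\}$, being contained in a finite-dimensional space, is ``VC type with a constant envelope.'' The VC-subgraph part is correct, but the polynomials $a\mapsto ((a-a_0)/h)^{j-1}$ are unbounded on $\d R$; the observation that they are bounded on the support of $K_{h,a_0}$ does not help, because that support moves with $a_0$. The product permanence lemma you cite delivers covering-number bounds relative to the \emph{product} envelope, which here is unbounded (or, if you restrict to the enlargement $\s A_0+[-h,h]$, of size $(|\s A_0|/h+1)^{j-1}$). Either way you do not get VC-type relative to the envelope $\|K\|_\infty$ that downstream results (e.g.\ Lemma~\ref{lm:uniform_entropy_lambda_r} and Lemma~\ref{lm:Dmatrix_unif}) require, and the constants you inherit are not $h$-uniform---which you yourself flag as a concern.

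The paper avoids this by absorbing the kernel's support indicator into the polynomial factor first: since $K$ is supported in $[-1,1]$, one has $((a-a_0)/h)^{j-1}K((a-a_0)/h) = \big[((a-a_0)/h)^{j-1}I(|(a-a_0)/h|\le 1)\big]\cdot K((a-a_0)/h)$, and the bracketed factor is uniformly bounded by $1$. That truncated factor is of the form $\phi\circ p$ evaluated at $(a-a_0)/h$, with $\phi(u)=uI(|u|\le 1)$ bounded and of bounded variation and $p$ a monomial, so the \cite{gine2002} criterion applies to it directly---not just to $K$. Both factors are then bounded VC-type with $h$-free constants, and the product lemma gives what is needed. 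Your finite-dimensional-space route can be repaired the same way (truncate first, then argue), but at that point you are essentially reproducing the paper's proof.
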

\begin{proof}[\bfseries{Proof of Lemma~\ref{lm:VC_class_K}}]
By \ref{cond:bounded_K}, $K$ is supported on $[-1,1]$ and is uniformly bounded. Hence, 
\[ \left(\frac{a-a_0}{h}\right)^{j-1}K\left(\frac{a-a_0}{h}\right) = I\left(\left|\frac{a-a_0}{h}\right|\leq 1\right)\left(\frac{a-a_0}{h}\right)^{j-1}K\left(\frac{a-a_0}{h}\right),\]
which implies that the class in question is contained in the product of the two classes of functions
\begin{align*}
    \left\{ K\left(\frac{a-a_0}{h}\right): a_0 \in \s{A}_0\right\} \text{ and }\left\{\left(\frac{a-a_0}{h}\right)^{j-1}I\left(\left|\frac{a-a_0}{h}\right|\leq 1\right)  : a_0 \in \s{A}_0\right\}.
\end{align*}
The first class is VC type by \ref{cond:bounded_K} and the results of \cite{gine2002}. For the second class, we first rewrite this class as 
\[\left\{ \phi \circ p_j\left(\frac{a-a_0}{h}\right) : a_0 \in \s{A}_0\right\},\]
where where $\phi := u \mapsto u I (|u| \leq 1)$ and $p_j := u \mapsto u^j$. Since $\phi$ is a bounded real function of bounded variation and $p_j$ is a polynomial, \cite{gine2002} implies that this class is also VC. Since the product of VC classes is also VC, (e.g., Lemma 5.1 of~\citealp{vanderVaartvanderLaan2006}), the result follows.
\end{proof}

\begin{corollary}\label{cor:phi_infty_VC}
For $a_0\in\s{A}_0$ and $h,b > 0$, consider the following function:
\[\eta_{h,b,a_0} : (y,a,w) \mapsto h^{1/2}\sigma^{-1}_{\infty,h,b}(a_0)\phi_{\infty, h, b,a_0}^*(y,a,w).\]
If~\ref{cond:bounded_K}  holds, then the class of functions $\s{H}_{h,b} := \{\eta_{h,b,a_0} : a_0\in\s{A}_0\}$ is VC type.
\end{corollary}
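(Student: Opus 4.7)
The plan is to decompose $\eta_{h,b,a_0}$ into a finite sum of products of functions lying in simpler classes indexed by $a_0 \in \s{A}_0$, each of which can be shown to be VC type by appealing to Lemma~\ref{lm:VC_class_K}, Lipschitz parametrization arguments, or triviality (singletons). I would then invoke standard permanence results for VC type classes under sums, products, and multiplication by bounded measurable functions of the index (e.g., Lemmas~2.6.18 and~2.6.20 of van der Vaart and Wellner 1996) to assemble the pieces and conclude.

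Concretely, I would first unfold the definition. The function $a \mapsto \Gamma_{0,h,b,a_0}(a)$ can be written as
\[\sum_{j=1}^{2}\alpha_{h,a_0}^{(j)}\,K_{h,a_0}(a)\left(\tfrac{a-a_0}{h}\right)^{j-1} \;-\; c_2(h/b)^2 \sum_{j=1}^{4}\beta_{b,a_0}^{(j)}\,K_{b,a_0}(a)\left(\tfrac{a-a_0}{b}\right)^{j-1},\]
where $\alpha_{h,a_0}^{(j)}$ and $\beta_{b,a_0}^{(j)}$ are the entries of $e_1^T\b{D}_{0,h,a_0,1}^{-1}$ and $e_3^T\b{D}_{0,b,a_0,3}^{-1}$, respectively, which are bounded and Lipschitz in $a_0$ on $\s{A}_0$ by Lemma~\ref{lm:D0_altform} (combined with $f_0$ being bounded away from zero). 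A parallel expansion applies to $\gamma_{0,h,b,a_0}$ (whose coefficients are quadratic forms in these bounded matrix entries and in $P_0(w_{h,a_0,1}K_{h,a_0}\theta_0)$, the latter being continuous in $a_0$) and to the integral term $\int \Gamma_{0,h,b,a_0}(\bar a)\{\mu_\infty(\bar a,w) - \int \mu_\infty(\bar a,\bar w)\,dQ_0(\bar w)\}\,dF_0(\bar a)$. After this expansion, the only $(y,a,w)$-dependent ingredients are: the kernel building blocks $a \mapsto K_{h,a_0}(a)(\tfrac{a-a_0}{h})^{j-1}$ and $K_{b,a_0}(a)(\tfrac{a-a_0}{b})^{j-1}$; the fixed functions $\xi_\infty$ and $\mu_\infty - \int \mu_\infty\,dQ_0$; and integrals against $dF_0$ of these kernel blocks weighted by $\mu_\infty$.

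The kernel classes are VC type directly by Lemma~\ref{lm:VC_class_K}. The fixed functions $\xi_\infty$ and $\mu_\infty - \int \mu_\infty\,dQ_0$ are singletons and hence trivially VC type. The coefficient classes $\{a_0 \mapsto \alpha_{h,a_0}^{(j)}\}$ and $\{a_0 \mapsto \beta_{b,a_0}^{(j)}\}$, viewed as constant functions in $(y,a,w)$, are VC type because $\s{A}_0$ is a compact one-dimensional set and the parametrization is Lipschitz. Finally, for the integral class $\{w \mapsto \int K_{h,a_0}(\bar a)(\tfrac{\bar a-a_0}{h})^{j-1}(\mu_\infty(\bar a,w) - \int\mu_\infty(\bar a,\bar w)\,dQ_0(\bar w))\,dF_0(\bar a) : a_0 \in \s{A}_0\}$, the Lipschitz continuity of $K$ and its compact support (from~\ref{cond:bounded_K}), together with the boundedness of $\mu_\infty$, make $a_0 \mapsto (\text{integral})$ Lipschitz in $a_0$ uniformly in $w$. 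An $\varepsilon$-net of $\s{A}_0$ of cardinality $O(1/\varepsilon)$ therefore produces an $L_2(Q)$ $\varepsilon$-net uniformly over $Q$, giving the VC type property.

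Assembling via sums and products — including multiplication by the bounded scalar $h^{1/2}\sigma_{\infty,h,b}^{-1}(a_0)$, which is continuous and, by the computation in Lemma~\ref{lm:lindeberg_feller_CLT}, bounded and bounded away from zero on $\s{A}_0$ for all $h, b$ small enough — yields that $\s{H}_{h,b}$ is VC type with a constant envelope. The main bookkeeping obstacle is verifying Lipschitz-in-$a_0$ bounds uniformly in the other arguments for every building block, especially the entries of $\b{D}^{-1}_{0,h,a_0,\cdot}$, which ultimately relies on $f_0$ being Lipschitz and bounded away from zero on $\s{A}_{\delta_3}$; the VC index and envelope constant are then allowed to depend on $h$ and $b$, matching what the corollary asserts for each fixed pair of bandwidths.
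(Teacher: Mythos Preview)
Your decomposition and assembly strategy are correct in outline, and the permanence results you cite are the right tools. However, your route differs from the paper's in one important way that makes your argument both more laborious and, as written, reliant on hypotheses beyond~\ref{cond:bounded_K}.

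The paper's key simplification is this: any collection of constants $\{c_{a_0} : a_0 \in \s{A}_0\}$, viewed as constant functions of $(y,a,w)$, is a VC class with VC index 1, because the subgraphs are totally ordered by inclusion and hence cannot shatter two points. No regularity whatsoever is needed. This immediately handles $\sigma_{\infty,h,b}^{-1}(a_0)$, the entries of $\b{D}^{-1}_{0,h,a_0,1}$ and $\b{D}^{-1}_{0,b,a_0,3}$, the vectors $P_0(w_{h,a_0,1}K_{h,a_0}\theta_0)$, and the double integral $\iint \Gamma_{0,h,b,a_0}\mu_\infty\,dQ_0\,dF_0$. You instead argue these are VC type via Lipschitz parametrization over the compact set $\s{A}_0$, which forces you to invoke Lemma~\ref{lm:D0_altform} and Lemma~\ref{lm:lindeberg_feller_CLT}; those in turn require~\ref{cond:cont_density} or~\ref{cond:holder_smooth_theta} (Lipschitz $f_0$ bounded away from zero, continuity of $\sigma_0^2$), none of which the corollary assumes. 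Your remark that ``the main bookkeeping obstacle is verifying Lipschitz-in-$a_0$ bounds'' is thus self-imposed: the paper sidesteps it entirely.

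For the integral term $w \mapsto \int \Gamma_{0,h,b,a_0}(\bar a)\mu_\infty(\bar a,w)\,dF_0(\bar a)$, the paper does not argue Lipschitz-in-$a_0$ either; it first shows $\{\Gamma_{0,h,b,a_0}(a)\mu_\infty(a,w) : a_0 \in \s{A}_0\}$ is VC type (product of a VC class with a fixed function, Lemma~2.6.18 of van der Vaart and Wellner), and then applies Lemma~5.2 of van der Laan and van der Vaart (2006), a general permanence result saying that marginalizing a VC-type class over one argument preserves the VC-type property. Your direct Lipschitz argument would work given boundedness of $\mu_\infty$, but again imports an assumption the corollary does not state.
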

\begin{proof}[\bfseries{Proof of Corollary \ref{cor:phi_infty_VC}}]
For each $a_0 \in \s{A}_0$, $\eta_{h,b,a_0}$ is defined as
\begin{align*}
    h^{1/2}\sigma^{-1}_{\infty,h,b}(a_0)\left[\Gamma_{0,h,b,a_0}(a) \xi_\infty(y,a,w) - \gamma_{0,h,b,a_0}(a) + \int\Gamma_{0,h,b,a_0}(\bar{a}) \left\{ \mu_\infty(\bar{a},w) - \int \mu_\infty(\bar{a}, \bar{w}) \, dQ_0(\bar{w}) \right\} \, dF_0(\bar{a})\right].
\end{align*}
Any collection of constants $\{c_{a_0} : a_0 \in \s{A}_0\}$ is a VC class with VC index 1 because no set of size 2 can be shattered. This implies that $\{\sigma_{\infty,h,b}(a_0) : a_0 \in \s{A}_0\}$, $\{P_0\left( w_{h,a_0, 1} K_{h, a_0} \theta_0\right) : a_0\in\s{A}_0\}$, $\{\b{D}^{-1}_{0, h, a_0, 1}[i,j] : a_0\in\s{A}_0\}$ for each $(i,j)$, and $\{\iint\Gamma_{0,h,b,a_0}(\bar{a}) \mu_\infty(\bar{a}, \bar{w}) \, dQ_0(\bar{w}) \, dF_0(\bar{a}) : a_0\in\s{A}_0\}$ are all VC type.  We note that by permanence properties of uniform covering numbers, sums and products of VC classes of functions are also VC type. It thus remains to show that $\{\Gamma_{0,h,b,a_0}(a) \xi_\infty(y,a,w) : a_0\in\s{A}_0\}$, $\{\gamma_{0,h,b,a_0}(a) : a_0\in\s{A}_0\}$ and $\{\int\Gamma_{0,h,b,a_0}(\bar{a}) \mu_\infty(\bar{a},w) \, dF_0(\bar{a}) : a_0\in\s{A}_0\}$ are VC-type.

We recall that
\begin{align*}
    \Gamma_{0,h,b,a_0}(a)  = e_1^T \b{D}_{0,h,a_0,1}^{-1} w_{h,a_0,1}(a) K_{h,a_0}(a) - e_3^T c_{0,h,a_,2} (h/b)^2 \b{D}_{0,b,a_0,2}^{-1} w_{b,a_0,2}(a) K_{b,a_0}(a).
\end{align*} 
By Lemma~\ref{lm:VC_class_K}, each entry of $\{w_{h,a_0,1}(a) K_{h,a_0}(a) : a_0\in\s{A}_0\}$ and $\{w_{b,a_0,2}(a) K_{b,a_0}(a) : a_0\in\s{A}_0\}$ are VC type. Combined with the discussion of sets of constants above, this implies that $\{\Gamma_{0,h,b,a_0}(a) : a_0 \in \s{A}_0\}$ is VC type for each $h,b >0$.  By Lemma 2.6.18 of \cite{vandervaart1996}, element-wise products of a fixed function and a VC class of functions are also VC type. Hence, it follows that $\{\Gamma_{0,h,b,a_0}(a) \xi_\infty(y,a,w) : a_0\in\s{A}_0\}$ is also VC type. Similarly, $\{\Gamma_{0,h,b,a_0}(a) \mu_\infty(a,w) : a_0\in\s{A}_0\}$ is VC type. Lemma~5.2 of \cite{vanderVaartvanderLaan2006} (with $r=s=t=2$) then implies that $\{\int\Gamma_{0,h,b,s}(\bar{a}) \mu_\infty(\bar{a},w) \, dF_0(\bar{a}) : a_0\in\s{A}_0\}$ is also VC type.

Finally, we recall that
\begin{align*}
    \gamma_{0,h,b,a_0}(a) &= e_1^T\b{D}^{-1}_{0, h, a_0,1}w_{h,a_0, 1}(a) K_{h, a_0}(a)w^T_{h,s, 1}(a)\b{D}^{-1}_{0, h, a_0,1} P_0 \left( w_{h,a_0, 1} K_{h, a_0} \theta_0\right) \\
        &\qquad- c_2 (h/b)^2 e_3^T \b{D}_{0, b,a_0,2}^{-1} w_{b,a_0,2}(a) K_{b,a_0}(a)w^T_{b,a_0, 2}(a)\b{D}^{-1}_{0, b, a_0,2} P_0 \left( w_{b,a_0, 2} K_{b, a_0} \theta_0\right) \\
        &\qquad + (h/b)^2 e_1^T \b{D}^{-1}_{0, h, a_0,1} \left[ \tilde{w}_{h,a_0,1}(a)  - w_{h,a_0,1}(a) w_{h,a_0,1}(a)^T  \b{D}_{0, h, a_0,1} P_0 (\tilde{w}_{h,a_0,1} K_{h,a_0}) \right] K_{h,a_0}(a) \\
        &\qquad \qquad \times e_3^T \b{D}_{0, b,a_0,2}^{-1}P_0 \left( w_{b,a_0, 2} K_{b, a_0} \theta_0\right)
\end{align*} 
As in Lemma~\ref{lm:VC_class_K}, each entry of $\{w_{h,a_0, 1}(a) K_{h, a_0}(a)w^T_{h,a_0, 1}(a) : a_0\in\s{A}_0\}$, $\{w_{b,a_0,2}(a) K_{b,a_0}(a)w^T_{b,a_0, 2}(a) : a_0\in\s{A}_0\}$, and $\{\tilde{w}_{h,a_0, 1}(a) K_{h, a_0}(a)w^T_{h,a_0, 1}(a) : a_0\in\s{A}_0\}$ are VC type. Thus by the permanence property of VC-type function classes, we conclude that $\{\gamma_{0,h,b,a_0}(a) : a_0\in\s{A}_0\}$ is also VC type.
\end{proof}
\begin{lemma}\label{lm:Dmatrix_unif} 
If \ref{cond:bounded_K} and~\ref{cond:holder_smooth_theta} hold, then $\sup_{a_0\in \s{A}_0}\|\b{D}_{n, h, a_0,1} -  \b{D}_{0, h, a_0,1}\|_\infty$ and $\sup_{a_0\in \s{A}_0}\|\b{D}_{n, b, a_0,2} - \b{D}_{0, b, a_0,2}\|_\infty$ are $\bounded(\{nh/\log h^{-1}\}^{-1/2})$. If in addition $nh / \log h^{-1} \longrightarrow \infty$, then $\sup_{a_0\in \s{A}_0}\|\b{D}^{-1}_{n,h, a_0,1} -  \b{D}^{-1}_{0, h, a_0,1}\|_\infty$, $\sup_{a_0\in \s{A}_0}\|\b{D}^{-1}_{n,b, a_0,2} -  \b{D}^{-1}_{0, b, a_0,2}\|_\infty$, and $\sup_{a_0 \in \s{A}_0} | c_{n,h,a_0,2} - c_{0,h,a_0,2}|$  are $\bounded(\{nh/\log h^{-1}\}^{-1/2})$.
\end{lemma}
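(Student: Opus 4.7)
The plan is to write each entry of $\b{D}_{n,h,a_0,1} - \b{D}_{0,h,a_0,1}$ as a centered empirical mean indexed by $a_0 \in \s{A}_0$, then apply a uniform maximal inequality for VC-type classes. Specifically, for $j,k \in \{1,2\}$, I would set
\[
f_{h,a_0,j,k}(a) \;:=\; \Bigl(\tfrac{a-a_0}{h}\Bigr)^{j+k-2} K_{h,a_0}(a),
\]
so that $\b{D}_{n,h,a_0,1}[j,k] - \b{D}_{0,h,a_0,1}[j,k] = (\d{P}_n - P_0) f_{h,a_0,j,k}$. The class $\s{F}_{h,j,k} := \{f_{h,a_0,j,k} : a_0 \in \s{A}_0\}$ is VC type by Lemma~\ref{lm:VC_class_K}, with envelope $F_h(a) := C h^{-1} I_{[-1,1]}((a-a_0)/h)$ bounded by $C/h$ (for some $C$ depending only on $K$), and the VC index does not depend on $h$.

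Next I would bound the envelope and variance to feed into a maximal inequality. By~\ref{cond:holder_smooth_theta}(b), $f_0$ is bounded on $\s{A}_{\delta_3}$, so a change of variables gives $P_0 f_{h,a_0,j,k}^2 \lesssim h^{-1}$ uniformly in $a_0 \in \s{A}_0$, while $\|F_h\|_\infty \lesssim h^{-1}$. Applying the standard Talagrand-type maximal inequality for VC-type classes (for example, Theorem 3.1 of \cite{gine2002} or the local maximal inequality of \cite{van2011local}) yields
\[
E_0\!\left[ \sup_{a_0 \in \s{A}_0} \bigl| (\d{P}_n - P_0) f_{h,a_0,j,k} \bigr| \right] \;\lesssim\; \sqrt{\frac{\log(1/h)}{nh}} \;+\; \frac{\log(1/h)}{nh},
\]
where the $\log(1/h)$ arises from the uniform entropy integral $\int_0^1 \sqrt{\log(1/\varepsilon)}\,d\varepsilon$ evaluated up to the effective radius $\sqrt{P_0 f_{h,a_0,j,k}^2}/\|F_h\|_\infty \asymp h^{1/2}$. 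The first term dominates, giving the claimed $\bounded(\{nh/\log h^{-1}\}^{-1/2})$ rate for each entry, and hence for $\sup_{a_0}\|\b{D}_{n,h,a_0,1} - \b{D}_{0,h,a_0,1}\|_\infty$. The $b$-analogues for $\b{D}_{n,b,a_0,3}$ follow by the identical argument applied to the scalar polynomials $[(a-a_0)/b]^{j+k-2} K_{b,a_0}(a)$ for $j,k \in \{1,\dotsc,4\}$, yielding the rate $\bounded(\{nb/\log b^{-1}\}^{-1/2})$, which is in turn $\bounded(\{nh/\log h^{-1}\}^{-1/2})$ since $\tau_n \longrightarrow \tau \in [0,\infty)$ forces $b \geq h/(\tau+o(1))$ eventually.

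For the inverse matrices I would use the identity
\[
\b{D}_{n,h,a_0,1}^{-1} - \b{D}_{0,h,a_0,1}^{-1} \;=\; \b{D}_{n,h,a_0,1}^{-1}\bigl(\b{D}_{0,h,a_0,1} - \b{D}_{n,h,a_0,1}\bigr) \b{D}_{0,h,a_0,1}^{-1},
\]
together with the sub-multiplicativity of a matrix operator norm. By the uniform part of Lemma~\ref{lm:D0_altform}, $\sup_{a_0 \in \s{A}_0}\|\b{D}_{0,h,a_0,1}^{-1}\|_1 = \boundeddet(1)$ since $f_0$ is bounded away from zero on $\s{A}_{\delta_3}$. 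The first part of the lemma combined with the assumption $nh/\log h^{-1} \longrightarrow \infty$ gives $\sup_{a_0}\|\b{D}_{n,h,a_0,1} - \b{D}_{0,h,a_0,1}\|_1 = \fasterthan(1)$, so a continuity argument (or a Neumann series expansion) shows $\sup_{a_0}\|\b{D}_{n,h,a_0,1}^{-1}\|_1 = \bounded(1)$ as well. Combining these bounds delivers the claimed rate for the inverse, and the identical argument handles the $b$-case.

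The main technical obstacle is verifying the hypotheses of the uniform maximal inequality with the correct dependence on $h$; in particular, one must be careful that the VC index of $\s{F}_{h,j,k}$ does not blow up with $h$ (which is ensured by Lemma~\ref{lm:VC_class_K}) and that the envelope/variance bounds are tight enough that the $\sqrt{\log(1/h)/(nh)}$ term truly dominates the $\log(1/h)/(nh)$ Bernstein correction; this requires precisely the condition $nh/\log h^{-1} \longrightarrow \infty$ in the second part of the lemma so that one can invert the matrix uniformly with high probability.
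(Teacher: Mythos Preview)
Your proposal is correct and follows essentially the same route as the paper: both reduce each matrix entry to a supremum of $(\d{P}_n - P_0)$ over the VC-type kernel class supplied by Lemma~\ref{lm:VC_class_K}, invoke the Gin\'e--Guillou maximal inequality (the paper cites their Theorem~2.1 after factoring out $h^{-1}$ so the class is uniformly bounded with variance $\sigma^2 \asymp h$; you keep $K_{h,a_0}$ in the class and use envelope $\asymp h^{-1}$, variance $\asymp h^{-1}$, which is equivalent), and then transfer to inverses via the identical matrix identity together with the uniform bound on $\b{D}_{0,h,a_0,1}^{-1}$ from Lemma~\ref{lm:D0_altform}. One cosmetic slip: your envelope $F_h(a) = Ch^{-1}I_{[-1,1]}((a-a_0)/h)$ still depends on the indexing parameter $a_0$ and is therefore not a valid envelope for the class; the correct envelope is simply the constant $\|K\|_\infty/h$, which is what you in fact use in the ensuing bound.
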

\begin{proof}[\bfseries{Proof of Lemma~\ref{lm:Dmatrix_unif}}]
For all $1 \leq j,k\leq 2$, we can write
\begin{align*}
    \sup_{a_0 \in \s{A}_0}\left|\b{D}_{n, h, a_0,1}[j,k] - \b{D}_{0, h, a_0,1}[j,k]\right| &= h^{-1}\sup_{a_0 \in \s{A}_0}\left|\int \left(\frac{a-a_0}{h}\right)^{j+k-2} K\left(\frac{a-a_0}{h}\right)\, \left\{dF_n(a) - \, dF_0(a)\right\}\right| \\
    &= n^{-1/2} h^{-1} \sup_{f \in \s{F}_{j+k-2,h}} \left| \d{G}_n f\right|
\end{align*}
for
\[ \s{F}_{j+k-2,h} := \left\{a \mapsto \left(\frac{a-a_0}{h}\right)^{j+k-2} K\left(\frac{a-a_0}{h}\right)\, : \, a_0 \in \s{A}_0 \right\}.\]
By \ref{cond:bounded_K} and Lemma~\ref{lm:VC_class_K}, $\s{F}_{j+k-2,h}$ is a uniformly bounded VC class. For all $a_0 \in \s{A}_0$, 
\begin{align*}
    \n{Var} \left\{\left(\frac{A-a_0}{h}\right)^{j+k-2} K\left(\frac{A-a_0}{h}\right)\right\} &\leq E_0 \left[\left\{\left(\frac{A-a_0}{h}\right)^{j+k-2} K\left(\frac{A-a_0}{h}\right)\right\}^2\right]\\
    &=\int \left\{\left(\frac{a-a_0}{h}\right)^{j+k-2} K\left(\frac{a-a_0}{h}\right)\right\}^2 f_0(a) \, da\\
    &= \int h u^{2(j+k-2)} K(u)^2 f_0(uh+a_0)\, du \\
    &\leq  h\|f_0\|_\infty c_{2(j+k-2)}.
\end{align*} 
Additionally, we have that $\sup_{a_0\in\s{A}_0}|\left(\frac{a-a_0}{h}\right)^{j+k-2} K\left(\frac{a-a_0}{h}\right)| \leq \|K\|_\infty$ by \ref{cond:bounded_K}. Hence, $\s{F}_{j+k-2, h}$ satisfies the conditions of Theorem~2.1 of \cite{gine2002} with $\sigma^2$ proportional to $h$, which implies that that 
\[\sup_{f \in \s{F}_{j+k-2,h}} \left| \d{G}_n f\right| = \bounded\left( \left\{ h \log h^{-1} \right\}^{1/2} \right).\]
Hence, 
\begin{align*}
    \sup_{a_0 \in \s{A}_0}\left|\b{D}_{n, h, a_0,1}[j,k] - \b{D}_{0, h, a_0,1}[j,k]\right| = \bounded\left(\left\{\frac{\log h^{-1}}{nh}\right\}^{1/2}\right)
\end{align*}
for each $1 \leq j,k\leq 2$. We conclude that $\sup_{a_0\in \s{A}_0}\|\b{D}_{n, h, a_0,1} -  \b{D}_{0, h, a_0,1}\|_\infty$ and $\sup_{a_0\in \s{A}_0}\|\b{D}_{n, h, a_0,2} -  \b{D}_{0, h, a_0,2}\|_\infty$ are both $\bounded(\{nh/\log h^{-1}\}^{-1/2})$.  The results for the inverse matrices follow along the lines of Lemma~\ref{lm:Dmatrix}, where we use the assumption that $nh / \log h^{-1} \longrightarrow \infty$ to conclude that $\sup_{a_0\in \s{A}_0}\|\b{D}_{n, h, a_0,1} -  \b{D}_{0, h, a_0,1}\|_\infty = \fasterthan(1)$.

Finally, we can write
\begin{align*}
    \sup_{a_0 \in \s{A}_0} \left| c_{n,h,a_0,2} - c_{0,h,a_0,2}\right| &\leq  \sup_{a_0 \in \s{A}_0} \left| e_1^T \left(\b{D}_{n, h, a_0,1}^{-1} - \b{D}_{0, h, a_0,1}^{-1}\right) \right|  \sup_{a_0 \in \s{A}_0} \left|\d{P}_n \left(\tilde{w}_{h,a_0,1} K_{h,a_0} \right)  \right| \\
    &\qquad + \sup_{a_0 \in \s{A}_0} \left| e_1^T  \b{D}_{0, h, a_0,1}^{-1}\right| \sup_{a_0 \in \s{A}_0} \left| \left(\d{P}_n - P_0\right) \left(\tilde{w}_{h,a_0,1} K_{h,a_0} \right) \right|.
\end{align*}
The first term is $\bounded(\{nh/\log h^{-1}\}^{-1/2})$ by the result for $\sup_{a_0\in \s{A}_0}\|\b{D}_{n, h, a_0,1}^{-1} -  \b{D}_{0, h, a_0,1}^{-1}\|_\infty$, and the second term is $\bounded(\{nh/\log h^{-1}\}^{-1/2})$ by a similar calculation to that above.
\end{proof}

As above, we define
\[\eta_{h,a_0, j,\lambda}(y,a,w) :=  h^{1/2}\left(\frac{a-a_0}{h}\right)^{j-1}K_{h,a_0}\left(a\right) \lambda(y, a, w),\]
and we let $\lambda_\infty$ be a fixed function in $\Lambda$. 
We also define the following semi-metric on $\Lambda$:
\[ \rho_{\s{A}_\delta}(\lambda_1, \lambda_2) := \sup_{a_0 \in \s{A}_\delta} \left(E_0 \left[ \left\{ \lambda_1(Y,A,W) - \lambda_2(Y,A,W) \right\}^2 \mid A = a_0 \right] \right)^{1/2},\] 
where $\s{A}_\delta := \{a : \exists a_0 \in \s{A}_0, |a-a_0| \leq\delta\}$. For each $r > 0$, we then define $\Lambda_r := \{\lambda \in \Lambda: \rho_{\s{A}_\delta}(\lambda, \lambda_\infty) \leq r\}$ and $\s{H}_{h, j, r} := \{\eta_{h,a_0, j,\lambda} - \eta_{h,a_0,j,\lambda_\infty} : a_0 \in \s{A}_0, \lambda \in \Lambda_r\}$. If $\Lambda$ is equipped with an envelope function $L$, then $H_{h} := 2h^{-1/2}\|K\|_\infty L$ is an envelope for $\s{H}_{h, j, r}$. The next Lemma controls the uniform entropy of $\s{H}_{h,j,r}$ in terms of that of $\Lambda$.
\begin{lemma} \label{lm:uniform_entropy_lambda_r} Suppose $\Lambda$ has envelope function $L$ with $P L^2 < \infty$ and such that $\sup_Q \log N(\varepsilon \|L\|_{Q,2}, \Lambda, L_2(Q)) \leq C \varepsilon^{-V}$ for some $C, V < \infty$. If  \ref{cond:bounded_K} also holds, then 
\[\sup_{Q} \log N(\varepsilon \|H_{h}\|_{Q, 2}, \s{H}_{h,j,r}, L_2(Q)) \leq C' \varepsilon^{-V}\]
for all $\varepsilon \leq 1$ and some $C' < \infty$.
\end{lemma}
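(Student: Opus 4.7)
The plan is to factor each element of $\s{H}_{h,j,r}$ as a product of a kernel-type factor depending only on $a_0$ and a shifted nuisance factor depending only on $\lambda$, then combine uniform-entropy bounds for each piece via the permanence result for products of function classes (Lemma 5.1 of \citealp{vanderVaartvanderLaan2006}). Explicitly, writing $g_{h,a_0}(a) := h^{-1/2}\left(\tfrac{a-a_0}{h}\right)^{j-1} K\left(\tfrac{a-a_0}{h}\right)$, we have $\eta_{h,a_0,j,\lambda} - \eta_{h,a_0,j,\lambda_\infty} = g_{h,a_0}\cdot(\lambda - \lambda_\infty)$, so $\s{H}_{h,j,r} = \s{G}_{h,j}\cdot(\Lambda_r - \lambda_\infty)$ where $\s{G}_{h,j} := \{g_{h,a_0} : a_0 \in \s{A}_0\}$. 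Since $K$ is supported in $[-1,1]$ and $|u|^{j-1}\leq 1$ there, $\s{G}_{h,j}$ has constant envelope $h^{-1/2}\|K\|_\infty$, and $\Lambda_r - \lambda_\infty$ has envelope $2L$, so the pointwise product is dominated by $H_h = 2h^{-1/2}\|K\|_\infty L$, matching the stated envelope.

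First I would bound the entropy of $\s{G}_{h,j}$. By Lemma~\ref{lm:VC_class_K} and~\ref{cond:bounded_K}, the class $\{a \mapsto ((a-a_0)/h)^{j-1}K((a-a_0)/h) : a_0 \in \s{A}_0\}$ is VC type for each fixed $h$, and multiplication by the constant $h^{-1/2}$ preserves this property. Hence $\sup_Q \log N(\varepsilon h^{-1/2}\|K\|_\infty, \s{G}_{h,j}, L_2(Q)) \lesssim \log(1/\varepsilon)$ for $\varepsilon \leq 1$. Second, I would control the entropy of $\Lambda_r - \lambda_\infty$: because $\Lambda_r \subseteq \Lambda$ and covering numbers are invariant under a fixed translation, $\sup_Q \log N(\varepsilon\|L\|_{Q,2}, \Lambda_r - \lambda_\infty, L_2(Q)) \leq \sup_Q \log N(\varepsilon\|L\|_{Q,2}, \Lambda, L_2(Q)) \leq C\varepsilon^{-V}$. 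Converting to the envelope $2L$ just doubles the bracket size, so $\sup_Q \log N(\varepsilon\|2L\|_{Q,2}, \Lambda_r - \lambda_\infty, L_2(Q)) \leq C(2\varepsilon)^{-V} \lesssim \varepsilon^{-V}$.

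Combining these via Lemma~5.1 of \citet{vanderVaartvanderLaan2006} on products of classes gives, after absorbing a harmless factor of 2 in the argument,
\[
\sup_Q \log N(\varepsilon \|H_h\|_{Q,2}, \s{H}_{h,j,r}, L_2(Q)) \lesssim \log(1/\varepsilon) + \varepsilon^{-V} \lesssim \varepsilon^{-V}
\]
for all $\varepsilon \leq 1$, where the last bound uses $V > 0$. This yields the claim with a suitable constant $C'$.

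There is no substantial obstacle here; the argument is essentially bookkeeping. The only points that require care are verifying that the constant $h^{-1/2}$ in the envelope of $\s{G}_{h,j}$ combines with $\|2L\|_{Q,2}$ in the envelope of $\Lambda_r - \lambda_\infty$ to produce exactly $\|H_h\|_{Q,2}$, and noting that the proof does not actually exploit the restriction $\rho_{\s{A}_\delta}(\lambda, \lambda_\infty) \leq r$ defining $\Lambda_r$. A sharper bound incorporating $r$ could be derived but is unnecessary, since the present statement is what is needed to apply the local maximal inequalities elsewhere in the proof of Theorem~\ref{thm:uniform}.
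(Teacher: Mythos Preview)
Your proposal is correct and follows essentially the same route as the paper's proof: factor $\s{H}_{h,j,r}$ as a product of the kernel class $h^{-1/2}\s{K}_{h,j}$ and the shifted nuisance class $\Lambda-\lambda_\infty$, invoke Lemma~\ref{lm:VC_class_K} for the VC-type bound on the kernel class, and combine via Lemma~5.1 of \cite{vanderVaartvanderLaan2006}. Your observation that the restriction to $\Lambda_r$ is never used is also accurate---the paper's argument likewise bounds by the entropy of all of $\Lambda$.
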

\begin{proof}[\bfseries{Proof of Lemma~\ref{lm:uniform_entropy_lambda_r}}]
For each $h > 0$ and $j \in \{1,\dotsc, 4\}$, we define
\begin{align*}
    \s{K}_{h,j} := \left\{\left(\frac{a-a_0}{h}\right)^{j-1}K\left(\frac{a-a_0}{h}\right)  : a_0 \in \s{A}_0\right\},
\end{align*}
which is uniformly bounded by $\|K\|_\infty$. We have that $\s{H}_{h,j,r}$ is then contained in the product of $h^{-1/2}\s{K}_{h,j}$ and $\Lambda - \lambda_\infty$, which has envelope $2L$. Lemma~5.1 of~\cite{vanderVaartvanderLaan2006} implies that 
\begin{align*}
    &\sup_Q  \log N\left(\varepsilon \| H_{h,j,r}\|_{Q,2}, \s{H}_{h,j,r}, L_2(Q) \right) \leq \sup_Q \log N\left(\varepsilon h^{-1/2}\| K\|_{\infty}, h^{-1/2}\s{K}_{h,j}, L_2(Q) \right) + \sup_Q \log N\left(\varepsilon \|L\|_{Q,2}, \Lambda, L_2(Q) \right) \label{eq1:product_covering_H}
\end{align*}
for any $\varepsilon > 0$. By Lemma~\ref{lm:VC_class_K}, $\s{K}_{h,j}$ is VC type, which implies by Theorem~2.6.7 of \cite{vandervaart1996} that 
\[\sup_Q \log N\left(\varepsilon h^{-1/2}\| K\|_{\infty}, h^{-1/2}\s{K}_{h,j}, L_2(Q) \right) = \sup_Q \log N\left(\varepsilon \| K\|_{\infty}, \s{K}_{h,j}, L_2(Q) \right)\lesssim\log \varepsilon^{-1},\]
which is bounded up to a constant by $\varepsilon^{-V}$. 
\end{proof}

\begin{lemma}\label{lm:local_process}
Suppose $\Lambda$ is a class of functions uniformly bounded by $L <\infty$ and satisfying $\sup_Q \log N\left(\varepsilon, \Lambda, L_2(Q) \right) \leq  C\varepsilon^{-V}$ for some $C < \infty$ and $V \in (0,2)$ such that $n \left[ h /(\log n)\right]^{\frac{2+V}{2-V}} \longrightarrow \infty$. If $r = r_n > 0$ is a sequence satisfying 
\[r=\fasterthandet\left(h^{\frac{V}{2(2-V)}}\{\log n\}^{-\frac{1}{2-V}}\right)\]
and \ref{cond:bounded_K} holds, then
\[E_0\left[\sup_{\zeta\in \s{H}_{h,j,r}} \left|\d{G}_n \zeta \right|\right] = \fasterthandet(\{\log n\}^{-1/2})\]
for each $j \in \{1,2,3,4\}$. Consequently, if the above conditions hold and $\lambda_n \in \Lambda$ is a sequence of possibly random functions satisfying 
\[ \rho_{\s{A}_\delta}(\lambda_n, \lambda_\infty) = \fasterthan\left(h^{\frac{V}{2(2-V)}}\{\log n\}^{-\frac{1}{2-V}}\right)\]
for some $\delta > 0$ and~\ref{cond:holder_smooth_theta} also holds, then
\[\sup_{a_0\in\s{A}_0}\left|\d{G}_n\left\{h^{1/2}\Gamma_{0,h,b,a_0}(\lambda_{n} -\lambda_{\infty})\right\} \right| =\fasterthan\left(\{\log n\}^{-1/2}\right).\]
\end{lemma}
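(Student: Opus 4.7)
The plan is to prove the first statement by applying a local maximal inequality of \cite{van2011local} to the class $\s{H}_{h,j,r}$, and then to deduce the second statement by expanding $h^{1/2}\Gamma_{0,h,b,a_0}$ as a uniformly $\bounded(1)$ linear combination of functions of the form $\eta_{h,a_0,j,\lambda}$ and $\eta_{b,a_0,j,\lambda}$.

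For the first statement, I would assemble the three inputs to the maximal inequality. The envelope $H_h = 2h^{-1/2}\|K\|_\infty L$ satisfies $\|H_h\|_{P_0,2}^2 \lesssim h^{-1}$ and $\|H_h\|_\infty \lesssim h^{-1/2}$. The change of variables $u=(a-a_0)/h$ together with the tower property yields, for every $\zeta = \eta_{h,a_0,j,\lambda} - \eta_{h,a_0,j,\lambda_\infty} \in \s{H}_{h,j,r}$,
\begin{align*}
P_0\zeta^2 = \int u^{2(j-1)} K(u)^2\, E_0\bigl[(\lambda-\lambda_\infty)^2\mid A=a_0+uh\bigr]\,f_0(a_0+uh)\,du \lesssim \rho_{\s{A}_\delta}(\lambda,\lambda_\infty)^2 \leq r^2
\end{align*}
for $n$ large enough under~\ref{cond:bounded_K},~\ref{cond:bandwidth}, and~\ref{cond:cont_density}. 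By Lemma~\ref{lm:uniform_entropy_lambda_r} and $V<2$, the uniform entropy integral obeys $J(\delta, \s{H}_{h,j,r}, H_h) \lesssim \delta^{1-V/2}$. Setting $\delta_n := \max(Crh^{1/2}, n^{-1/(2+V)})$ to ensure $\delta_n^2\|H_h\|_{P_0,2}^2 \geq \sup_\zeta P_0\zeta^2$, Theorem~2.1 of \cite{van2011local} gives
\begin{align*}
E_0\sup_{\zeta\in\s{H}_{h,j,r}}\bigl|\d{G}_n\zeta\bigr| \lesssim \delta_n^{1-V/2} h^{-1/2} + \delta_n^{-V} h^{-1/2} n^{-1/2}.
\end{align*}

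Bounding $\delta_n^{1-V/2}\leq (Crh^{1/2})^{1-V/2}+(n^{-1/(2+V)})^{1-V/2}$, I would track exponents to show both summands are $\fasterthandet(\{\log n\}^{-1/2})$. The hypothesis $r = \fasterthandet(h^{V/[2(2-V)]}\{\log n\}^{-1/(2-V)})$ gives $r^{1-V/2} = \fasterthandet(h^{V/4}\{\log n\}^{-1/2})$ after cancelling $V/(2-V)$ factors, which handles the $Crh^{1/2}$ piece of the first summand. The safety lower bound $\delta_n \geq n^{-1/(2+V)}$ gives $\delta_n^{-V}h^{-1/2}n^{-1/2} \leq n^{-(2-V)/[2(2+V)]}h^{-1/2}$ and the same bound for the $n^{-1/(2+V)}$ piece of the first summand, each of which equals $\fasterthandet(\{\log n\}^{-1/2})$ precisely when $n^{(2-V)/(2+V)}h/\log n \longrightarrow \infty$; this is condition~\ref{cond:unif_nuisance_rate} rewritten by raising to the power $(2-V)/(2+V)$. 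This establishes the first statement.

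For the consequence, Lemma~\ref{lm:D0_altform} under~\ref{cond:holder_smooth_theta} yields $\sup_{a_0\in\s{A}_0}\|\b{D}_{0,h,a_0,1}^{-1} - f_0(a_0)^{-1}\b{S}_2^{-1}\|_\infty = \boundeddet(h)$ and the analogue for $\b{D}_{0,b,a_0,3}^{-1}$, while $(h/b)^{5/2}\longrightarrow \tau^{5/2}$ from~\ref{cond:bandwidth} controls the bias-correction scaling. Thus
\begin{align*}
h^{1/2}\Gamma_{0,h,b,a_0}(a)\lambda(o) = \sum_{j=1}^{2} A_{n,j,a_0}\,\eta_{h,a_0,j,\lambda}(o) + \sum_{j=1}^{4} B_{n,j,a_0}\,\eta_{b,a_0,j,\lambda}(o)
\end{align*}
for coefficients with $\sup_{a_0\in\s{A}_0}\max(|A_{n,j,a_0}|,|B_{n,j,a_0}|) = \bounded(1)$. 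I would then fix a deterministic sequence $\tilde{r}_n \gtrsim \rho_{\s{A}_\delta}(\lambda_n,\lambda_\infty)$ still obeying the hypothesis upper bound, so that $\lambda_n\in\Lambda_{\tilde{r}_n}$ with probability approaching one, apply the first statement to each of the six summands (for the bandwidth-$b$ terms one uses the analogous class with $h$ replaced by $b$; since $b \geq h$ eventually by~\ref{cond:bandwidth}, both the hypothesis on $\tilde{r}_n$ and condition~\ref{cond:unif_nuisance_rate} remain in force for $b$), and conclude by the triangle inequality and Markov's inequality. The main obstacle is the exponent bookkeeping in the localization step to match~\ref{cond:unif_nuisance_rate}, together with the technicality of reducing the random nuisance $\lambda_n$ to a deterministic localization radius $\tilde{r}_n$.
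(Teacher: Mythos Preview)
Your proposal is correct and follows essentially the same route as the paper: bound $P_0\zeta^2\lesssim r^2$ via the change of variables, invoke Theorem~2.1 of \cite{van2011local} with $J(\delta)\lesssim\delta^{1-V/2}$ from Lemma~\ref{lm:uniform_entropy_lambda_r}, protect against $r$ being too small by enlarging the localization radius, and then expand $h^{1/2}\Gamma_{0,h,b,a_0}$ into $\eta$-building blocks with uniformly bounded coefficients. The differences are cosmetic: the paper handles the small-$r$ issue by replacing $r$ with $\bar r=\max\{r,(st)^{-1/2}\}$ and using monotonicity of $\s{H}_{h,j,r}$ in $r$, whereas you enlarge $\delta_n$ directly to $\max(Crh^{1/2},n^{-1/(2+V)})$; both choices lead to the same verification that~\ref{cond:unif_nuisance_rate} is exactly what is needed. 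For the random $\lambda_n$, the paper does an explicit $\varepsilon$--$\gamma$ splitting with Markov's inequality, while you pass to a deterministic $\tilde r_n$ via the standard $o_p$-to-$o$ lemma; again equivalent. Your explicit treatment of the bandwidth-$b$ terms (noting $b\geq h$ eventually preserves both hypotheses) is in fact a little more careful than the paper, which absorbs them into the notation $\sum_{j=1}^4 C_{h,a_0,j}\eta_{h,a_0,j,\lambda}$.
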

\begin{proof}[\bfseries{Proof of Lemma~\ref{lm:local_process}}]
For each $\eta_{h,a_0,j,\lambda}$ such that $h \leq \delta$, we have 
\begin{align*}
    P_0 (\eta_{h,a_0,j,\lambda} - \eta_{h,a_0,j,\lambda_\infty})^2 &= h^{-1}\int \left\{ \left(\frac{a-a_0}{h}\right)^{j-1}K\left(\frac{a-a_0}{h}\right) (\lambda-\lambda_\infty)(y, a, w)\right\}^2\, dP_0(y,a,w) \\
    &= h^{-1}\int  \left(\frac{a-a_0}{h}\right)^{2(j-1)}\left\{K\left(\frac{a-a_0}{h}\right)\right\}^2 E_0\left\{ (\lambda-\lambda_\infty)^2 \mid A = a\right\} f_0(a) \, da \\
    &= \int u^{2(j-1)}K(u)^2 E_0\{(\lambda-\lambda_\infty)^2 \mid A=a_0+uh\} f_0(a_0+uh) \, du \\
    &\leq \|K\|_\infty^2 \sup_{a\in \s{A}_{\delta}}  E_0\{(\lambda-\lambda_\infty)^2 \mid A=a\} \|f_0\|_\infty \\
    &=  \|K\|_\infty^2 \rho_{\s{A}_\delta}(\lambda,\lambda_\infty)^2 \|f_0\|_\infty.
\end{align*}
Hence, for all $\zeta \in \s{H}_{h,j,r}$ with $h \leq \delta$, we have $P_0 \zeta^2 \leq  r^2 \|K\|_\infty^2  \|f_0\|_\infty$. Since $\Lambda$ is uniformly bounded by $L$, $\s{H}_{h,j,r}$ is uniformly bounded by $H_{h,j,r} =  2h^{-1/2}L\|K\|_\infty$, and hence $P_0 H^2_{h,j,r} = 4h^{-1}L^2\|K\|_\infty^2$. Therefore, $P_0 \zeta^2 \leq C^2 hr^2 P_0 H^2_{h,j,r}$ for all $\zeta \in \s{H}_{h,j,r}$ with $h \leq \delta$, where $C = \|f_0\|_\infty^{1/2} / (2L)$ does not depend on $h$ or $r$. Theorem~2.1 of \cite{van2011local} (applied to $\s{F} = \s{H}_{h,j,r} / H_{h,j,r}$, which is uniformly bounded by 1) then implies that for all $n$ large enough,
\begin{align*}
    E_0 \left[\sup_{\zeta \in \s{H}_{h,j,r}} |\d{G}_n \zeta|\right] \lesssim h^{-1/2}J(Crh^{1/2}, \s{H}_{h,j,r}, L_2)\left\{1 + \frac{J(Crh^{1/2}, \s{H}_{h,j,r}, L_2)}{C^2hr^2n^{1/2}}\right\},
\end{align*}
where
\begin{align*}
    J(x, \s{H}_{h,j,r},L_2) &:= \sup_Q \int_0^x \sqrt{1 + \log  N\left(\varepsilon \| H_{h,j,r}\|_{Q,2}, \s{H}_{h,j,r}, L_2(Q) \right)} \, d\varepsilon \lesssim \int_0^x \varepsilon^{-V/2} \, d\varepsilon  \lesssim x^{1-V/2}
\end{align*}
for all $x \in (0,1]$ by Lemma~\ref{lm:uniform_entropy_lambda_r}. Therefore,
\begin{align*}
    \{\log n\}^{1/2}E_0 \left[\sup_{\zeta \in \s{H}_{h,j,r}} |\d{G}_n \zeta|\right] &\lesssim \{\log n\}^{1/2} h^{-1/2} r^{\frac{2 - V}{2}} h^{\frac{2-V}{4}}\left\{ 1 + \frac{r^{\frac{2 - V}{2}} h^{\frac{2-V}{4}}}{hr^2n^{1/2}}\right\} \\
    &= \left\{ r \left[\log n\right]^{\frac{1}{2-V}}  h^{-\frac{V}{2(2-V)}} \right\}^{\frac{2-V}{2}} + \left\{r \left[\log n\right]^{-\frac{1}{2V}} n^{\frac{1}{2V}} h^{\frac{1+V}{2V}} \right\}^{-V} \\
    &= \left\{ r s \right\}^{\frac{2-V}{2}} + \left\{r t \right\}^{-V},
\end{align*}
where $s := \left[\log n\right]^{\frac{1}{2-V}}  h^{-\frac{V}{2(2-V)}}$ and $t := \left[\log n\right]^{-\frac{1}{2V}} n^{\frac{1}{2V}} h^{\frac{1+V}{2V}}$. We note that 
\[t/s = \left\{n \left[ h /(\log n)\right]^{\frac{2+V}{2-V}} \right\}^{\frac{1}{2V}} \longrightarrow \infty \]
by assumption. We now define $\bar{r} := \max\{ r, [st]^{-1/2}\}$. Since $r \leq \bar{r}$, we then have 
\[\sup_{\zeta \in \s{H}_{h,j,r}} |\d{G}_n \zeta| \leq \sup_{\zeta \in \s{H}_{h,j,\bar{r}}} |\d{G}_n \zeta|\]
almost surely by the increasing nature of the sets $\s{H}_{h,j,r}$. Using the above bound applied to $\bar{r}$, we also have 
\begin{align*}
    \{\log n\}^{1/2}E_0 \left[\sup_{\zeta \in \s{H}_{h,j,\bar{r}}} |\d{G}_n \zeta|\right] &\lesssim  \left\{ \bar{r} s \right\}^{\frac{2-V}{2}} + \left\{\bar{r} t \right\}^{-V},
\end{align*}
By assumption, $rs = \fasterthandet(1)$, and $[st]
^{-1/2}s = [s/t]^{1/2}$, which tends to zero since $t/s \longrightarrow \infty$ as noted above. Therefore, $\bar{r} s = \fasterthandet(1)$. In addition, $\bar{r}t \geq [st]^{-1/2}t = [t/s]^{1/2} \longrightarrow \infty$. Hence, $\left\{ \bar{r} s \right\}^{\frac{2-V}{2}} + \left\{\bar{r} t \right\}^{-V} = \fasterthandet(1)$. Putting it together, we have 
\[ \{\log n\}^{1/2}E_0 \left[\sup_{\zeta \in \s{H}_{h,j,r}} |\d{G}_n \zeta|\right] \leq \{\log n\}^{1/2}E_0 \left[\sup_{\zeta \in \s{H}_{h,j,\bar{r}}} |\d{G}_n \zeta|\right] \lesssim  \left\{ \bar{r} s \right\}^{\frac{2-V}{2}} + \left\{\bar{r} t \right\}^{-V} = \fasterthandet(1),\]
which proves the first claim.

For the second claim, as in the proof of Lemma~\ref{lemma:equicontinuity},  we can write 
\[h^{1/2}\Gamma_{0,h, b, a_0}\lambda  = \sum_{j=1}^2 C_{h,a_0,j} \eta_{h,a_0,j,\lambda} + \sum_{j=1}^3 C_{h,b,a_0,j}' \eta_{b,a_0,j,\lambda}\]
for some constants $C_{h,a_0,j}$ and $C_{b,a_0,j}$ such that $\sup_{a_0\in\s{A}_0} |C_{h,a_0,j}| = \boundeddet(1)$  and $\sup_{a_0\in\s{A}_0} |C_{h,b,a_0,j}'| = \boundeddet(1)$ by the uniform statements of Lemma~\ref{lm:D0_altform}. Thus, 
\begin{align*}
    \sup_{a_0\in\s{A}_0}\left|\d{G}_n\left\{ h^{1/2} \Gamma_{0, h, b, a_0}(\lambda_{n} - \lambda_{\infty}) \right\}\right|  &\leq \sum_{j=1}^2 \sup_{a_0\in\s{A}_0}\left|C_{h,a_0,j}\right| \sup_{a_0\in\s{A}_0}\left|\d{G}_n \eta_{h,a_0,j,\lambda_{n}} \right| +  \sum_{j=1}^3 \sup_{a_0\in\s{A}_0}\left|C_{h,b,a_0,j}'\right| \sup_{a_0\in\s{A}_0}\left|\d{G}_n \eta_{b,a_0,j,\lambda_{n}} \right| \\
    &= \boundeddet(1) \sum_{j=1}^2 \sup_{a_0\in\s{A}_0}\left|\d{G}_n \eta_{h,a_0,j,\lambda_{n}} \right| + \boundeddet(1)\sum_{j=1}^3 \sup_{a_0\in\s{A}_0}\left|\d{G}_n \eta_{b,a_0,j,\lambda_{n}} \right|.
\end{align*}
Hence, if we can show that $\sup_{a_0\in\s{A}_0}\left|\d{G}_n \eta_{h,a_0,j,\lambda_{n}} \right|  = \fasterthan\left(\{\log n\}^{-1/2}\right)$ for each $j$, then the claim follows.

For any $\nu, \gamma > 0$, we have
\begin{align*}
    P_0\left( \sup_{a_0\in\s{A}_0}\left|\d{G}_n \eta_{h,a_0,j,\lambda_{n}} \right| > \nu / \{\log n\}^{1/2}\right) &\leq P_0\left( \sup_{a_0\in\s{A}_0}\left|\d{G}_n \eta_{h,a_0,j,\lambda_{n}} \right| > \nu / \{\log n\}^{1/2}, \rho_{\s{A}_\delta}(\lambda_n, \lambda_\infty) < \gamma / s\right)\\
    &\qquad + P_0 \left( \rho_{\s{A}_\delta}(\lambda_n, \lambda_\infty) \geq \gamma / s \right).
\end{align*}
Now, $\rho_{\s{A}_\delta}(\lambda_n, \lambda_\infty) < \gamma / s$ implies that $\eta_{h,a_0,j,\lambda_{n}} \in \s{H}_{h,j,\gamma / s}$, so by Markov's inequality,
\begin{align*}
    P_0\left( \sup_{a_0\in\s{A}_0}\left|\d{G}_n \eta_{h,a_0,j,\lambda_{n}} \right| > \nu / \{\log n\}^{1/2}, \rho_{\s{A}_0, \delta}(\lambda_n, \lambda_\infty) < \gamma / s\right) &\leq P_0\left( \sup_{\zeta\in\s{H}_{h,j,\gamma / a}}\left|\d{G}_n \zeta\right| > \nu / \{\log n\}^{1/2}\right) \\
    &\leq \nu^{-1} \{\log n\}^{1/2}E_0\left[ \sup_{\zeta\in\s{H}_{h,j,\gamma / s}}\left|\d{G}_n \zeta\right|\right].
\end{align*}
Applying the same the same technique as used above, we then have 
\begin{align*}
     \{\log n\}^{1/2}E_0\left[ \sup_{\zeta\in\s{H}_{h,j,\gamma / s}}\left|\d{G}_n \zeta\right|\right] &\lesssim \left[ \max\{\gamma / s (st)^{-1/2}\} s\right]^{\frac{2-V}{2}} + \left[\max\{\gamma / s, (st)^{-1/2}\} t \right]^{-V} \\
     &= \left[ \max\{\gamma, (s/t)^{1/2}\} \right]^{\frac{2-V}{2}} + \left[\max\{\gamma (t/ s), (t/s)^{1/2}\} \right]^{-V}.
\end{align*}
Therefore, for any $\nu, \gamma > 0$, we have
\begin{align*}
     P_0\left( \sup_{a_0\in\s{A}_0}\left|\d{G}_n \eta_{h,a_0,j,\lambda_{n}} \right| > \nu / \{\log n\}^{1/2}\right) &\lesssim   \nu^{-1} \left[ \max\{\gamma, (s/t)^{1/2}\} \right]^{\frac{2-V}{2}} + \nu^{-1}\left[\max\{\gamma (t/ s), (t/s)^{1/2}\} \right]^{-V}\\
     &\qquad + P_0 \left( \rho_{\s{A}_\delta}(\lambda_n, \lambda_\infty) \geq \gamma / s \right)
\end{align*}
Since $s/t \longrightarrow 0$, $\rho_{\s{A}_\delta}(\lambda_n, \lambda_\infty) = \fasterthan(s^{-1})$, and $\gamma$ was arbitrary, for any fixed $\nu > 0$, we can choose $\gamma$ to make the above expression as small as we like for all $n$ large enough. This implies that 
\[P_0\left( \sup_{a_0\in\s{A}_0}\left|\d{G}_n \eta_{h,a_0,j,\lambda_{n}} \right| > \nu / \{\log n\}^{1/2}\right) \longrightarrow 0\]
for any $\nu > 0$, so that $\sup_{a_0\in\s{A}_0}\left|\d{G}_n \eta_{h,a_0,j,\lambda_{n}} \right|  = \fasterthan\left(\{\log n\}^{-1/2}\right)$ for each $j \in \{1, 2, \dots\}$, which completes the proof.

\end{proof}
Next, we define
\[\bar\eta_{h,a_0, j,\lambda}(y,a, w) :=\int \eta_{h,a_0, j,\lambda}(y,a, w) \, dF_0(a) =  \int h^{1/2} \left(\frac{a-a_0}{h}\right)^{j-1}K_{h,a_0}(a) \lambda(y, a, w) \, dF_0(a).\]
The next lemma provides a rate of convergence for $\sup_{s\in\s{S},\lambda \in \Lambda} \left| \d{G}_n \bar\eta_{h,s,j,\lambda} \right|$, which is sufficient to obtain a rate of convergence of $\sup_{a_0 \in \s{A}_0} |R_{n,h,b,s,2}|$ and $\sup_{a_0 \in \s{A}_0} |R_{n,h,b,s,3}|$. We note that we could obtain an even faster rate of convergence for $\sup_{s\in\s{S},\lambda \in \Lambda} \left| \d{G}_n \left\{\bar\eta_{h,s,j,\lambda} - \bar\eta_{h,s,j,\lambda_\infty}\right\} \right|$ using similar techniques to those above for $\sup_{s\in\s{S},\lambda \in \Lambda} \left| \d{G}_n \left\{\eta_{h,s,j,\lambda} - \eta_{h,s,j,\lambda_\infty}\right\} \right|$. However, the integration over $a$ in $\bar\eta_{h,s,j,\lambda}$ is sufficient to obtain a rate of convergence without localizing around $\lambda_\infty$, which provides a sufficient rate for our purposes.
\begin{lemma}\label{lm:local_integrated_process}
Suppose $\Lambda$ is a class of functions uniformly bounded by $L < \infty$ and satisfying $\sup_Q \log N\left(\varepsilon, \Lambda, L_2(Q) \right) \leq  C\varepsilon^{-V}$ for some $C < \infty$ and $V \in (0,2)$. If~\ref{cond:bounded_K} holds, then $\sup_{a_0\in\s{A}_0,\lambda \in \Lambda} \left| \d{G}_n \bar\eta_{h,a_0,j,\lambda} \right| = \bounded\left(h^{\frac{1-V}{2}} + \left\{ n h^{1 + 2V}\right\}^{-1/2}\right)$ for every $j \in\{ 1,2,\dots\}$. Consequently, if~\ref{cond:holder_smooth_theta} holds as well, then \[\sup_{a_0\in\s{A}_0, \lambda \in \Lambda} \left| \d{G}_n \left\{ \int h^{1/2} \Gamma_{0,h,b,a_0} \lambda \, dF_0\right\} \right| = \bounded\left(h^{\frac{1-V}{2}} + \left\{ n h^{1 + 2V}\right\}^{-1/2}\right).\]
\end{lemma}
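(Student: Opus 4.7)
The plan is to apply the local maximal inequality (Theorem~2.1 of \cite{van2011local}) to the class $\bar{\s{H}}_{h,j} := \{\bar\eta_{h,a_0,j,\lambda} : a_0 \in \s{A}_0,\ \lambda \in \Lambda\}$. That inequality requires three inputs: a measurable envelope $\bar H_h$ with controlled $L_2(P_0)$ and $L_\infty$ norms, a uniform bound on $\sup_{\bar\eta} P_0 \bar\eta^2$, and a bound on the uniform entropy integral $J(\delta, \bar{\s{H}}_{h,j}, L_2)$.

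The envelope and variance are straightforward. After the change of variables $u = (a-a_0)/h$ inside the integral defining $\bar\eta_{h,a_0,j,\lambda}$, using $|\lambda|\le L$, the compact support of $K$ from~\ref{cond:bounded_K}, and boundedness of $f_0$, one obtains $|\bar\eta_{h,a_0,j,\lambda}(y,w)| \le L\|f_0\|_\infty h^{1/2}\int |u|^{j-1}K(u)\,du$, so $\bar H_h$ can be taken to be a constant of order $h^{1/2}$. A Cauchy--Schwarz step applied to the same integral (with $K(u)\,du$ viewed as a probability measure) yields $\sup_{\bar\eta} P_0\bar\eta^2 \lesssim h \asymp \|\bar H_h\|_{P_0,2}^2$, so the variance parameter in the maximal inequality can be taken to be a fixed positive constant.

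The main technical challenge will be controlling the uniform entropy of $\bar{\s{H}}_{h,j}$. Define the auxiliary classes $\s{F}^{(1)}_{h,j} := \{h^{1/2}(\tfrac{\cdot-a_0}{h})^{j-1}K_{h,a_0}(\cdot) : a_0 \in \s{A}_0\}$ and $\s{F}_{h,j} := \s{F}^{(1)}_{h,j}\cdot \Lambda$. By Lemma~\ref{lm:VC_class_K} and~\ref{cond:bounded_K}, $\s{F}^{(1)}_{h,j}$ is VC type with constant envelope of order $h^{-1/2}$, so Lemma~5.1 of \cite{vanderVaartvanderLaan2006} combined with the assumed entropy of $\Lambda$ gives $\sup_Q \log N(\varepsilon\|F_{h,j}\|_{Q,2}, \s{F}_{h,j}, L_2(Q)) \lesssim \varepsilon^{-V}$ for $\varepsilon \le 1$. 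The integration operator $T_{F_0}: f \mapsto \int f(\cdot,a,\cdot)\,dF_0(a)$ sends $\s{F}_{h,j}$ onto $\bar{\s{H}}_{h,j}$, and Jensen's inequality implies it is a contraction in the sense $\|T_{F_0}f - T_{F_0}f'\|_{Q,2} \le \|f - f'\|_{F_0 \otimes Q, 2}$. Hence the uniform covering numbers of $\bar{\s{H}}_{h,j}$ are bounded by those of $\s{F}_{h,j}$; renormalizing by the envelopes (whose $L_2$ norms differ by a factor of order $h^{-1}$) produces $\sup_Q \log N(\varepsilon\|\bar H_h\|_{Q,2}, \bar{\s{H}}_{h,j}, L_2(Q)) \lesssim \varepsilon^{-V}h^{-V}$, and therefore $J(1, \bar{\s{H}}_{h,j}, L_2) \lesssim h^{-V/2}$ for $V \in (0,2)$.

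Substituting these ingredients into Theorem~2.1 of \cite{van2011local} yields
\[
E_0\sup_{\bar\eta \in \bar{\s{H}}_{h,j}}\left|\d{G}_n\bar\eta\right| \lesssim h^{1/2}\cdot h^{-V/2}\left(1 + \frac{h^{-V/2}}{\sqrt{n}}\right) = h^{(1-V)/2} + n^{-1/2}h^{(1-2V)/2},
\]
and Markov's inequality converts this into the stated $\bounded$-bound; since $h \le 1$ implies $n^{-1/2}h^{(1-2V)/2} \le (nh^{1+2V})^{-1/2}$, the first claim follows. For the consequence, I would use the decomposition
\[
h^{1/2}\Gamma_{0,h,b,a_0}(a)\lambda(y,a,w) = \sum_{j=1}^{2}\alpha_{h,a_0,j}\,\eta_{h,a_0,j,\lambda}(y,a,w) + (h/b)^{5/2}\sum_{j=1}^{4}\beta_{b,a_0,j}\,\eta_{b,a_0,j,\lambda}(y,a,w),
\]
where the scalar coefficients come from entries of $e_1^T\b{D}_{0,h,a_0,1}^{-1}$ and $-c_2\,e_3^T\b{D}_{0,b,a_0,3}^{-1}$ and are $O(1)$ uniformly in $a_0 \in \s{A}_0$ by the uniform version of Lemma~\ref{lm:D0_altform} (which invokes~\ref{cond:holder_smooth_theta}), and $(h/b)^{5/2} = O(1)$ by~\ref{cond:bandwidth}. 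Integrating against $F_0$, applying the first part of the lemma separately with bandwidth $h$ and with bandwidth $b$, and absorbing the $b$-indexed rate into the $h$-indexed rate using boundedness of $h/b$ completes the argument.
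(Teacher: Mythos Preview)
Your proof is correct and follows the same overall strategy as the paper: apply Theorem~2.1 of \cite{van2011local} to the class $\bar{\s{H}}_{h,j}$ after transferring the uniform entropy bound from the pre-integrated class $\s{F}_{h,j}$ through the integration operator. The one tactical difference is the choice of envelope. You take the tight envelope $\bar H_h \asymp h^{1/2}$, push the $h$-dependence into the entropy bound (getting $\log N(\varepsilon\|\bar H_h\|_{Q,2},\bar{\s{H}}_{h,j},L_2(Q)) \lesssim \varepsilon^{-V}h^{-V}$), and evaluate $J$ at a constant $\delta$. The paper instead keeps the loose envelope $H_h \asymp h^{-1/2}$ of the pre-integrated class, invokes Lemma~5.2 of \cite{vanderVaartvanderLaan2006} (with $r=s=t=2$) to obtain the cleaner entropy bound $\log N(\varepsilon H_h,\bar{\s{H}}_{h,j},L_2(Q)) \lesssim \varepsilon^{-V}$, and then exploits the small variance parameter $\delta \asymp h$ in the local maximal inequality. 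Both parametrizations yield the same leading term $h^{(1-V)/2}$; your route in fact produces the tighter second term $n^{-1/2}h^{(1-2V)/2}$, which you then correctly relax to $(nh^{1+2V})^{-1/2}$. The derivation of the consequence for $\Gamma_{0,h,b,a_0}$ via the uniform part of Lemma~\ref{lm:D0_altform} matches the paper's.
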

\begin{proof}[\bfseries{Proof of Lemma~\ref{lm:local_integrated_process}}]
We consider the class of functions $\bar{\s{H}}_{h, j} := \{\bar{\eta}_{h,a_0, j,\lambda} : a_0\in\s{A}_0, \lambda\in\Lambda\}$. We equip this class with the envelope function $H_{h} = h^{-1/2} L\|K\|_{\infty}$. We also define $\s{H}_{h,j} := \{\eta_{h,a_0, j,\lambda} : a_0\in\s{A}_0, \lambda\in\Lambda\}$, and we note that $H_{h}$ is also an envelope for $\s{H}_{h,j}$. Hence, by Lemma~5.2 of \cite{vanderVaartvanderLaan2006} with $s=t=r=2$ and  Lemma~\ref{lm:uniform_entropy_lambda_r}, we have
\[\sup_{Q} \log N\left(\varepsilon H_{h}, \bar{\s{H}}_{h,j}, L_2(Q)\right) \leq \sup_{Q} \log N\left(\varepsilon H_{h}/2, \s{H}_{h,j}, L_2(Q)\right) \lesssim \varepsilon^{-V}. \]
Therefore,
\begin{align*}
    J(x, \bar{\s{H}}_{h,j},L_2) &:= \sup_Q \int_0^x \sqrt{1 + \log  N\left(\varepsilon  \bar{H}_{h}, \bar{\s{H}}_{h,j}, L_2(Q) \right)} \, d\varepsilon \lesssim x^{1-V/2}
\end{align*}
for all $x \in (0,1]$. We also have
\begin{align*}
    P_0\bar{\eta}_{h,a_0, j,\lambda}^2 &\leq  h^{-1} L^2\left\{ \int \left|\frac{a-a_0}{h}\right|^{j-1}K\left(\frac{a-a_0}{h}\right)  \, dF_0(a) \right\}^2 \\
    &= h L^2 \left\{ \int \left|u\right|^{j-1}K\left(u\right) f_0(a_0 + uh) \, du \right\}^2 \\
    &\leq h L^2 \|f_0\|_\infty^2 \|K\|_\infty^2
\end{align*}
for every $a_0 \in \s{A}_0$ and $\lambda \in \Lambda$. Hence,  $P_0\bar{\eta}_{h,a_0, j,\lambda}^2 \leq \|f_0\|_\infty^2 h^2 P_0 H_h^2$ for every  $\bar{\eta}_{h,a_0, j,\lambda} \in \bar{\s{H}}_{h,j}$. By Theorem~2.1 of \cite{van2011local}, we then have 
\begin{align*}
    E_0 \left[ \sup_{\zeta \in \bar{\s{H}}_{h,j}} |\d{G}_n \zeta| \right] &\lesssim  h^{-1/2} J\left(\|f_0\|_\infty h, \bar{\s{H}}_{h,j},L_2\right) \left\{ 1 + \frac{J\left(\|f_0\|_\infty h, \bar{\s{H}}_{h,j},L_2\right)}{n^{1/2} \|f_0\|_\infty^2 h^2} \right\} \\
    &\lesssim  h^{\frac{1-V}{2}} \left\{ 1 + n^{-1/2} h^{-\frac{2 + V}{2}}\right\} \\
    &=  h^{\frac{1-V}{2}} + \left\{ n  h^{1+2V}\right\}^{-1/2}.
\end{align*}
The second claim follows as in the proof of Lemma~\ref{lm:local_process}. 
\end{proof}

\begin{lemma}\label{lm:bounded_supGn} Suppose $\Lambda$ is a class of functions uniformly bounded by $L < \infty$ such that $\sup_Q \log N\left(\varepsilon, \Lambda, L_2(Q) \right) \leq  C\varepsilon^{-V}$ for some $C < \infty$ and $V \in (0,2)$. If~\ref{cond:bounded_K} holds, then for each $j \in \{1,2,\dots\}$, $\sup_{a_0\in\s{A}_0, \lambda\in\Lambda}\d{G}_n\left|\eta_{h,a_0, j,\lambda}\right| = \bounded\left(h^{-1/2}\right)$.
\end{lemma}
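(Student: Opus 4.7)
The plan is to treat this as a direct application of a global maximal inequality to the class $\s{H}_{h,j} := \{\eta_{h,a_0,j,\lambda} : a_0 \in \s{A}_0,\, \lambda \in \Lambda\}$, with no localization needed since the claimed rate $\bounded(h^{-1/2})$ matches the envelope size. First, I would observe that, by writing $K_{h,a_0}(a) = h^{-1}K([a-a_0]/h)$ and using the change of variables $u = (a-a_0)/h$, one has $|\eta_{h,a_0,j,\lambda}(y,a,w)| \leq h^{-1/2}|u|^{j-1}K(u) L \leq h^{-1/2} L \|K\|_\infty$ since $j \in \{1,2,3,4\}$ and $K$ is supported on $[-1,1]$ by~\ref{cond:bounded_K}. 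Hence $H_h := h^{-1/2} L \|K\|_\infty$ is an envelope for $\s{H}_{h,j}$, and $\|H_h\|_{P_0,2} \lesssim h^{-1/2}$.

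The next step is to bound the uniform entropy of $\s{H}_{h,j}$. I would write $\s{H}_{h,j}$ as the pointwise product of $h^{-1/2}\s{K}_{h,j}$, where $\s{K}_{h,j} := \{a \mapsto (\tfrac{a-a_0}{h})^{j-1}K(\tfrac{a-a_0}{h}) : a_0 \in \s{A}_0\}$, and $\Lambda$. By Lemma~\ref{lm:VC_class_K}, $\s{K}_{h,j}$ is VC-type with envelope $\|K\|_\infty$, so $\sup_Q \log N(\varepsilon \|K\|_\infty, \s{K}_{h,j}, L_2(Q)) \lesssim \log(1/\varepsilon)$. Combining this with the assumed polynomial bound for $\Lambda$ via Lemma~5.1 of \cite{vanderVaartvanderLaan2006} then yields
\[
\sup_Q \log N\bigl(\varepsilon \|H_h\|_{Q,2},\, \s{H}_{h,j},\, L_2(Q)\bigr) \lesssim \log(1/\varepsilon) + \varepsilon^{-V} \lesssim \varepsilon^{-V}
\]
for $\varepsilon \in (0,1]$, and the associated uniform entropy integral $\int_0^1 \{1+\varepsilon^{-V}\}^{1/2}\,d\varepsilon$ is finite because $V < 2$.

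With these two ingredients in hand, the result follows from the standard maximal inequality (Theorem~2.14.1 of \cite{vandervaart1996}):
\[
E_0\Bigl[\sup_{\eta \in \s{H}_{h,j}} |\d{G}_n \eta|\Bigr] \;\lesssim\; \|H_h\|_{P_0,2} \int_0^1 \Bigl\{1 + \sup_Q \log N\bigl(\varepsilon \|H_h\|_{Q,2}, \s{H}_{h,j}, L_2(Q)\bigr)\Bigr\}^{1/2} d\varepsilon \;\lesssim\; h^{-1/2}.
\]
An application of Markov's inequality then gives $\sup_{a_0 \in \s{A}_0,\lambda \in \Lambda}|\d{G}_n \eta_{h,a_0,j,\lambda}| = \bounded(h^{-1/2})$, as claimed. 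There is no real obstacle here since all the necessary entropy bounds and maximal inequality ingredients are already available from earlier lemmas in this supplement; the only care needed is in bookkeeping the factor $h^{-1/2}$ coming from the rescaled kernel, which drives the envelope size and therefore the final rate.
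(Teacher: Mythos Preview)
Your proposal is correct and follows essentially the same route as the paper: identify the envelope $h^{-1/2}L\|K\|_\infty$, bound the uniform entropy of the product class via the VC property of the kernel class (Lemma~\ref{lm:VC_class_K}) together with the assumed polynomial entropy of $\Lambda$, and apply Theorem~2.14.1 of \cite{vandervaart1996}. The only cosmetic difference is that the paper packages the entropy bound by citing Lemma~\ref{lm:uniform_entropy_lambda_r}, whereas you reproduce that argument inline.
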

\begin{proof}[\bfseries{Proof of Lemma~\ref{lm:bounded_supGn}}]
Since $L$ is an envelope for $\Lambda$, $ h^{-1/2}\|K\|_\infty L$ is an envelope for $\{\eta_{h,a_0, j,\lambda} : a_0\in\s{A}_0, \lambda\in\Lambda\}$  by~\ref{cond:bounded_K}. By Lemma~\ref{lm:uniform_entropy_lambda_r}, the uniform entropy integral of this class is finite. Hence, by Theorem 2.14.1 of \cite{vandervaart1996}, we have 
\begin{align*}
    E_0 \left[ \sup_{a_0\in\s{A}_0, \lambda\in\Lambda}\d{G}_n\left|\eta_{h,a_0, j,\lambda}\right| \right] &\lesssim h^{-1/2},
\end{align*}
and the result follows.
\end{proof}

\begin{corollary}\label{cor:supR2R3}
If \ref{cond:bounded_K}--\ref{cond:uniform_entropy_nuisances},~\ref{cond:unif_nuisance_rate},~\ref{cond:unif_doubly_robust}(d), and~\ref{cond:holder_smooth_theta} hold and $nh^p \longrightarrow 0$ for some $p>0$, then $\sup_{a_0\in\s{A}_0}\left| R_{n,h,b,a_0,2}\right| = \fasterthan\left(\{nh\log n\}^{-1/2}\right)$ and $\sup_{a_0\in\s{A}_0}\left| R_{n,h,b,a_0,3}\right| = \fasterthan\left(\{n h\log n\}^{-1/2}\right)$.
\end{corollary}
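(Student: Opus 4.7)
My plan is to mirror the decomposition used in the proof of Corollary~\ref{cor:R2R3}, but to upgrade each stochastic bound to a uniform one with an additional $\{\log n\}^{-1/2}$ factor. The upgrades come from the localized maximal inequalities in Lemmas~\ref{lm:local_process} and~\ref{lm:local_integrated_process} together with the uniform matrix-inverse rate of Lemma~\ref{lm:Dmatrix_unif}. The new ingredients beyond the pointwise argument are (i) the $L_2$ rates for $\mu_n-\mu_\infty$ and $g_n-g_\infty$ over $\s{A}_{\delta_3}\times\s{W}$ provided by~\ref{cond:unif_doubly_robust}(e) (which I take to be intended alongside~(d) in the corollary hypotheses), and (ii) the $\varepsilon^{-V}$ bound on uniform entropy from~\ref{cond:uniform_entropy_nuisances}, both of which match the hypotheses of Lemma~\ref{lm:local_process}.

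For $(nh)^{1/2}R_{n,h,b,a_0,2}$ I use the three-summand decomposition~\eqref{eq:R2_decomp}. In the first summand I take $\Lambda=\{(y,a,w)\mapsto [y-\mu(a,w)]/g(a,w):\mu\in\s{F}_\mu,\,g\in\s{F}_g\}$: this is uniformly bounded by~\ref{cond:holder_smooth_theta}(c) and~\ref{cond:uniform_entropy_nuisances}(a), and has uniform entropy polynomial in $\varepsilon^{-1}$ with exponent at most $V$ by the permanence properties recalled in Lemma~5.1 of~\cite{vanderVaartvanderLaan2006}. The same algebra as in Corollary~\ref{cor:R2R3} gives $\rho_{\s{A}_\delta}(\psi_n,\psi_\infty)\lesssim d(\mu_n,\mu_\infty;\s{A}_{\delta_3},\s{A}\times\s{W})+d(g_n,g_\infty;\s{A}_{\delta_3},\s{A}\times\s{W})$, which by~\ref{cond:unif_doubly_robust}(e) is $\fasterthan(h^{V/(2(2-V))}\{\log n\}^{-1/(2-V)})$, precisely the hypothesis of Lemma~\ref{lm:local_process}, yielding a $\fasterthan(\{\log n\}^{-1/2})$ bound. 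The second summand is handled identically with $\Lambda=\{a\mapsto\int\mu(a,w)\,dQ_0(w):\mu\in\s{F}_\mu\}$, the $\rho$-rate coming from Jensen's inequality.

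The third summand $\d{G}_n\{\int h^{1/2}\Gamma_{0,h,b,a_0}(\mu_n-\mu_\infty)\,dF_0\}$ needs an integrated analogue of Lemma~\ref{lm:local_process}. Expanding $\Gamma_{0,h,b,a_0}$ via Lemma~\ref{lm:D0_altform} reduces the problem to bounding $\sup_{a_0,\mu}|\d{G}_n\{\bar\eta_{h,a_0,j,\mu}-\bar\eta_{h,a_0,j,\mu_\infty}\}|$ over the localized class $\{\mu\in\s{F}_\mu:\rho_{\s{A}_\delta}(\mu,\mu_\infty)\leq r_n\}$, with $r_n$ the rate in~\ref{cond:unif_doubly_robust}(e). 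The Cauchy--Schwarz computation
\[
E_0\{\bar\eta_{h,a_0,j,\mu}-\bar\eta_{h,a_0,j,\mu_\infty}\}^2 \leq \Bigl[\int h\,[(a-a_0)/h]^{2(j-1)} K_{h,a_0}(a)^2 dF_0(a)\Bigr]\cdot\sup_{a\in\s{A}_{\delta_3}}E_0[(\mu-\mu_\infty)^2(a,W)] \lesssim r_n^2,
\]
combined with the uniform-entropy estimate in Lemma~\ref{lm:uniform_entropy_lambda_r} and Theorem~2.1 of~\cite{van2011local}, supplies the required $\fasterthan(\{\log n\}^{-1/2})$ rate.

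For $(nh)^{1/2}R_{n,h,b,a_0,3}$ I use~\eqref{eq:R3_decomp} and the observation that the data-dependent inverse-matrix factors $\b{D}^{-1}_{n,\cdot,a_0,\cdot}-\b{D}^{-1}_{0,\cdot,a_0,\cdot}$ are constant with respect to the observation against which $\d{G}_n$ acts and hence factor outside of $\d{G}_n$. This reduces everything to a product of $\sup_{a_0}\|\b{D}^{-1}_{n,h,a_0,1}-\b{D}^{-1}_{0,h,a_0,1}\|_\infty$ (and its bias-correction twin) with $\sup_{a_0}|\d{G}_n\{h^{1/2}w_{h,a_0,1}K_{h,a_0}\psi_n\}|$ and analogous integrated terms. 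Lemma~\ref{lm:Dmatrix_unif} supplies the matrix rate $\bounded(\{nh/\log h^{-1}\}^{-1/2})$; splitting $\psi_n=\psi_\infty+(\psi_n-\psi_\infty)$ lets Lemma~\ref{lm:bounded_supGn} handle the fixed part and Lemma~\ref{lm:local_process} the difference, giving a $\bounded(1)$ empirical-process factor, with Lemma~\ref{lm:local_integrated_process} used in parallel for the integrated piece. The main obstacle is verifying that $\bounded(\{nh/\log h^{-1}\}^{-1/2})\cdot\bounded(1)=\fasterthan(\{\log n\}^{-1/2})$, i.e.\ that $(\log n)(\log h^{-1})/(nh)\longrightarrow 0$; this follows from~\ref{cond:unif_nuisance_rate} since $nh^3\to\infty$ forces $nh\gtrsim n^{2/3}$ while $\log h^{-1}\lesssim\log n$, with the auxiliary growth condition in the statement interpreted as the corrected form $nh^p\longrightarrow\infty$.
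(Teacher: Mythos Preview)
Your high-level plan matches the paper's proof: decompose as in~\eqref{eq:R2_decomp}--\eqref{eq:R3_decomp}, apply Lemma~\ref{lm:local_process} to the $\psi_n-\psi_\infty$ and $\int(\mu_n-\mu_\infty)\,dQ_0$ pieces, and combine Lemma~\ref{lm:Dmatrix_unif} with a crude empirical-process bound for $R_{n,h,b,a_0,3}$. Two points deserve comment.

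\textbf{The integrated term in $R_2$.} You localize $\int h^{1/2}\Gamma_{0,h,b,a_0}(\mu_n-\mu_\infty)\,dF_0$ around $\mu_\infty$ and invoke a van~der~Vaart--Wellner local maximal inequality. The paper takes a different, simpler route: it bounds $\sup_{a_0,\mu}|\d{G}_n\{\int h^{1/2}\Gamma_{0,h,b,a_0}\mu\,dF_0\}|$ over the \emph{entire} class $\s F_\mu$ via Lemma~\ref{lm:local_integrated_process}, obtaining $\bounded(h^{(1-V_\mu)/2}+\{nh^{1+2V_\mu}\}^{-1/2})$. The second summand is $\fasterthandet(\{\log n\}^{-1/2})$ from $nh^3\to\infty$ and $V_\mu<1$; the first summand is where $nh^p\to 0$ is used, since a polynomial upper bound on $h$ is needed so that $\{\log n\}^{1/2}h^{(1-V_\mu)/2}\to 0$. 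The paper in fact remarks (just before Lemma~\ref{lm:local_integrated_process}) that your localized route would also work and give a sharper rate; it simply isn't needed. Your Cauchy--Schwarz display is not literally correct as written (the first bracket should be $(\int|f|\,dF_0)^2$ rather than $\int f^2\,dF_0$ if you want the $\sup_a$ to appear on the right), but a corrected version yields $P_0(\bar\eta_{h,a_0,j,\mu}-\bar\eta_{h,a_0,j,\mu_\infty})^2\lesssim h\,r_n^2$, which is more than enough.

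\textbf{The empirical-process factor in $R_3$ and the condition $nh^p\to 0$.} Your claim that Lemma~\ref{lm:bounded_supGn} delivers a $\bounded(1)$ bound on $\sup_{a_0}|\d{G}_n\{h^{1/2}w_{h,a_0,1}K_{h,a_0}\psi_\infty\}|$ is incorrect: that lemma gives $\bounded(h^{-1/2})$, because the envelope of the class indexed by $a_0$ is a constant times $h^{-1/2}$. The paper uses exactly this $\bounded(h^{-1/2})$ bound, so the product with Lemma~\ref{lm:Dmatrix_unif} is $\bounded(\{nh^2/\log h^{-1}\}^{-1/2})$, and the required check is that $nh^2/(\log h^{-1}\cdot\log n)\to\infty$. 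This still follows from $nh^3\to\infty$ alone (which forces $nh^2\gtrsim n^{1/3}$ and $\log h^{-1}\lesssim\log n$), so your final conclusion survives, but the intermediate rate is off by $h^{-1/2}$. Relatedly, your ``correction'' of the hypothesis to $nh^p\to\infty$ is a misreading: the stated condition $nh^p\to 0$ is correct and is genuinely used in the paper's (non-localized) treatment of the integrated term. Your localized alternative happens not to need it, which is why you did not see its role.
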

\begin{proof}[\bfseries{Proof of Corollary~\ref{cor:supR2R3}}]
As in the proof of Corollary~\ref{cor:R2R3}, we can write
\begin{align}
    \sup_{a_0\in\s{A}_0}\left| \{nh\}^{1/2}R_{n, h,b,a_0, 2}\right| &\leq \sup_{a_0\in\s{A}_0}\left| \d{G}_n\left\{ h^{1/2} \Gamma_{0,h,b,a_0} \left( \psi_n - \psi_\infty \right) \right\}\right|\\
    &\qquad+ \sup_{a_0\in\s{A}_0}\left| \d{G}_n\left\{ h^{1/2} \Gamma_{0,h,b,a_0} \left( \int \mu_n \, dQ_0 - \int \mu_\infty \, dQ_0 \right) \right\}\right|\nonumber \\
    &\qquad + \sup_{a_0\in\s{A}_0}\left| \d{G}_n \left\{ \int h^{1/2} \Gamma_{0,h,b,a_0} \mu_n \, dF_0 \right\}\right| + \sup_{a_0\in\s{A}_0}\left|\d{G}_n \left\{ \int h^{1/2} \Gamma_{0,h,b,a_0} \mu_\infty \, dF_0 \right\} \right|.\label{eq:supR2_decomp}
\end{align}
For the first term, we use Lemma~\ref{lm:local_process} with $\Lambda := \{ o \mapsto [y - \mu(a,w)] / g(a,w): \mu \in \s{F}_\mu, g \in \s{F}_g\}$. By~\ref{cond:uniform_entropy_nuisances} and~\ref{cond:holder_smooth_theta}, $\Lambda$ is uniformly bounded by some $L < \infty$. By~\ref{cond:uniform_entropy_nuisances} and permanence properties of uniform entropy integrals, $\sup_Q \log N(\varepsilon, \Lambda, L_2(Q)) \leq C \varepsilon^{-V}$ for $V = \max\{V_\mu, V_g\} \in (0,2)$, and by~\ref{cond:unif_nuisance_rate}, $n[ h / (\log n)]^{\frac{2+V}{2-V}} \longrightarrow \infty$. Using a similar argument to that used in the proof of Corollary~\ref{cor:R2R3} and using the assumption that $|Y|$ is bounded almost surely, we can show that
\begin{align*}
    \rho_{\s{A}_{\delta_3}}(\psi_n, \psi_\infty) \lesssim d(g_n, g_\infty; \s{A}_{\delta_3}, \s{A} \times \s{W})  + d(\mu_n, \mu_\infty;\s{A}_{\delta_3}, \s{A} \times \s{W}).
\end{align*}
Hence, by~\ref{cond:unif_doubly_robust}(d), $\rho_{\s{A}_{\delta_3}}(\psi_n, \psi_\infty)=\fasterthan\left(h^{\frac{V}{2(2-V)}}\{\log n\}^{-\frac{1}{2-V}}\right)$. Thus, the conditions of Lemma~\ref{lm:local_process} hold, and it follows that $\sup_{a_0\in\s{A}_0}\left| \d{G}_n\{ h^{1/2} \Gamma_{0,h,b,a_0} (\psi_n - \psi_\infty)\}\right| = \fasterthan(\{\log n\}^{-1/2})$. 

For the second term in equation~\eqref{eq:supR2_decomp}, we also use Lemma~\ref{lm:local_process} with $\Lambda = \left\{ a \mapsto \int \mu(a, w) \, dQ_0(w) : \mu \in \s{F}_\mu \right\}$. This class is uniformly bounded since $\s{F}_\mu$ is uniformly bounded by~\ref{cond:uniform_entropy_nuisances}. By~\ref{cond:uniform_entropy_nuisances} and Lemma~5.2 of \cite{vanderVaartvanderLaan2006}, $\sup_Q \log N(\varepsilon, \Lambda, L_2(Q)) \lesssim \varepsilon^{-V_\mu} \lesssim \varepsilon^{-V}$, and by~\ref{cond:unif_nuisance_rate}  $n[ h / (\log n)]^{\frac{2+V}{2-V}} \longrightarrow \infty$. By Jensen's inequality and~\ref{cond:unif_doubly_robust}(d),
\[\rho_{\s{A}_{\delta_3}}\left(\int \mu_n \, dQ_0, \int \mu_\infty \, dQ_0\right) \leq d(\mu_n, \mu_\infty;\s{A}_{\delta_3}, \s{A} \times \s{W}) =\fasterthan\left(h^{\frac{V}{2(2-V)}}\{\log n\}^{-\frac{1}{2-V}}\right).\] 
The conditions of Lemma~\ref{lm:local_process} are satisfied, and it follows that
\[\sup_{a_0\in\s{A}_0}\left|\d{G}_n\left\{ h^{1/2} \Gamma_{0,h,b,a_0} \left( \int \mu_n \, dQ_0 - \int \mu_\infty \, dQ_0 \right) \right\}\right| = \fasterthan\left(\{\log n\}^{-1/2}\right).\] 
For the third and fourth terms in equation~\eqref{eq:supR2_decomp}, we use Lemma~\ref{lm:local_integrated_process} with $\Lambda = \s{F}_\mu$. The conditions of the lemma are satisfied by~\ref{cond:uniform_entropy_nuisances}, so that 
\[ \{\log n\}^{1/2}\sup_{a_0\in\s{A}_0, \mu \in \s{F}_\mu} \left| \d{G}_n \left\{ \int h^{1/2} \Gamma_{0,h,b,a_0} \mu \, dF_0 \right\} \right| = \bounded\left(\{\log n\}^{1/2}h^{\frac{1-V_\mu}{2}} + \left\{ nh^{1 + 2V_\mu} / \log n\right\}^{-1/2} \right).\]
By~\ref{cond:unif_nuisance_rate}, $nh^3 \longrightarrow \infty$, which implies that $ nh^{1 + 2V_\mu} / \log n \longrightarrow \infty$ since $V_\mu \in (0,1)$. By assumption, $nh^p \longrightarrow 0$ for some $p > 0$, and since $V_\mu \in (0,1)$, this implies that $\{\log n\}^{1/2}h^{\frac{1-V_\mu}{2}} = \fasterthandet(1)$ as well. Hence,
\[ \sup_{a_0\in\s{A}_0, \mu \in \s{F}_\mu} \left| \d{G}_n \left\{ \int h^{1/2} \Gamma_{0,h,b,a_0} \mu \, dF_0 \right\} \right| = \fasterthan\left(\{\log n\}^{-1/2} \right).\]
We have now shown that every term of equation~\eqref{eq:supR2_decomp} is $\fasterthan\left(\{\log n\}^{-1/2} \right)$, so we conclude that $\sup_{a_0 \in \s{A}_0}\left|R_{n, h,b,a_0, 2}\right| = \fasterthan\left( \left\{nh\log n\right\}^{-1/2}\right)$.

We can similarly decompose $R_{n, h,b,a_0, 3}$ as
\begin{align}
    \sup_{a_0\in\s{A}_0}\left| (nh)^{1/2}R_{n, h,b,a_0, 3}\right| &\leq \sup_{a_0\in\s{A}_0}\left|\d{G}_n\left\{ h^{1/2} \left(\Gamma_{n,h,b,a_0}- \Gamma_{0,h,b,a_0}\right) \psi_n \right\}\right| \nonumber \\
    &\qquad+   \sup_{a_0\in\s{A}_0}\left|\d{G}_n\left\{ h^{1/2} \left(\Gamma_{n,h,b,a_0}- \Gamma_{0,h,b,a_0}\right)  \int \mu_n \, dQ_0\right\} \right|\nonumber \\
    &\qquad + \sup_{a_0\in\s{A}_0}\left|\d{G}_n \left\{ \int h^{1/2} \left(\Gamma_{n,h,b,a_0}- \Gamma_{0,h,b,a_0}\right) \mu_n \, dF_0 \right\}\right|.  \label{eq:sup_R3_decomp}
\end{align}
For the first term in equation~\eqref{eq:sup_R3_decomp}, we note that
\begin{align*}
    \sup_{a_0 \in \s{A}_0}\left|\d{G}_n\left\{ h^{1/2} \left(\Gamma_{n,h,b,a_0}- \Gamma_{0,h,b,a_0}\right) \psi_n \right\}\right|  &=  \sup_{a_0 \in \s{A}_0}\left|e_1^T \left( \b{D}_{n,h,a_0,1}^{-1} - \b{D}_{0,h,a_0,1}^{-1}\right)  \d{G}_n\left\{ h^{1/2} w_{h,a_0,1} K_{h,a_0} \psi_n \right\}\right. \\
    &\qquad -  c_{0,h,a_0,2} (h/b)^2 e_3^T\left( \b{D}_{n,b,a_0,2}^{-1} - \b{D}_{0,b,a_0,2}^{-1}\right)  \d{G}_n\left\{ h^{1/2} w_{b,a_0,2} K_{b,a_0} \psi_n \right\}\\
     &\qquad\left. -  (c_{n,h,a_0,2} - c_{0,h,a_0,2}) (h/b)^2 e_3^T\b{D}_{n,b,a_0,2}^{-1} \d{G}_n\left\{ h^{1/2} w_{b,a_0,2} K_{b,a_0} \psi_n \right\}\right| \\
    &\lesssim \sup_{a_0 \in \s{A}_0}\left\|  \b{D}_{n,h,a_0,1}^{-1} - \b{D}_{0,h,a_0,1}^{-1}\right\|_{\infty}  \sup_{a_0 \in \s{A}_0}\left| \d{G}_n\left\{ h^{1/2} w_{h,a_0,1} \b{1}^T K_{h,a_0} \psi_n \right\}\right| \\
    &\qquad +  \sup_{a_0 \in \s{A}_0}\left\| \b{D}_{n,b,a_0,2}^{-1} - \b{D}_{0,b,a_0,2}^{-1}\right\|_\infty  \sup_{a_0 \in \s{A}_0}\left| \d{G}_n\left\{ h^{1/2} w_{b,a_0,2} \b{1}^T K_{b,a_0} \psi_n \right\}\right| \\
    &\qquad +  \sup_{a_0 \in \s{A}_0}\left| c_{n,h,a_0,2}- c_{0,h,a_02}\right|  \sup_{a_0 \in \s{A}_0}\left| \d{G}_n\left\{ h^{1/2} w_{b,a_0,2} \b{1}^T K_{b,a_0} \psi_n \right\}\right|.
\end{align*}
By~\ref{cond:unif_nuisance_rate}, $nh^3 \longrightarrow \infty$, which implies that $nh / \log h^{-1} \longrightarrow \infty$. Hence, by  Lemma~\ref{lm:Dmatrix_unif}, $\sup_{a_0 \in \s{A}_0}\left\|  \b{D}_{n,h,a_0,1}^{-1} - \b{D}_{0,h,a_0,1}^{-1}\right\|_{\infty}$, $\sup_{a_0 \in \s{A}_0}\left\| \b{D}_{n,b,a_0,2}^{-1} - \b{D}_{0,b,a_0,2}^{-1}\right\|_\infty$, and $\sup_{a_0 \in \s{A}_0}\left| c_{n,h,a_0,2}- c_{0,h,a_02}\right|$ are all $\bounded( \{ nh / \log h^{-1} \}^{-1/2})$. In addition, by Lemma~\ref{lm:bounded_supGn}, we have that $\sup_{a_0 \in \s{A}_0}\left| \d{G}_n\left\{ h^{1/2} w_{h,a_0,1}^T \b{1} K_{h,a_0} \psi_n \right\}\right|$ and $\sup_{a_0 \in \s{A}_0}\left| \d{G}_n\left\{ h^{1/2} w_{b,a_0,2}^T \b{1} K_{b,a_0} \psi_n \right\}\right|$ are both $\bounded(h^{-1/2})$.  We then have
\[  \{ \log n\}^{1/2}\sup_{a_0 \in \s{A}_0}\left|\d{G}_n\left\{ h^{1/2} \left(\Gamma_{n,h,b,a_0}- \Gamma_{0,h,b,a_0}\right) \psi_n \right\}\right| = \bounded\left( \left\{\frac{nh^2}{\log h^{-1} \log n} \right\}^{-1/2} \right).\]
Since $nh^3 \longrightarrow \infty$ and $nh^p \longrightarrow 0$ for some $p > 0$, $\frac{nh^2}{\log h^{-1} \log n} \longrightarrow \infty$. The second term in equation~\eqref{eq:sup_R3_decomp} can be addressed in the same way using Lemma~5.2 of \cite{vanderVaartvanderLaan2006}.


We can similarly bound the third term in equation~\eqref{eq:sup_R3_decomp} up to a constant by
\begin{align*}
    & \sup_{a_0\in\s{A}_0}\left\| \b{D}_{n,h,a_0,1}^{-1} - \b{D}_{0,h,a_0,1}^{-1}\right\|_\infty\sup_{a_0\in\s{A}_0}\left|  \d{G}_n\left\{ \int h^{1/2} w_{h,a_0,1}^T \b{1} K_{h,a_0} \mu_n\, dF_0 \right\}\right| \\
    &\qquad + \sup_{a_0\in\s{A}_0}\left\| \b{D}_{n,b,a_0,2}^{-1} - \b{D}_{0,b,a_0,2}^{-1}\right\|_\infty  \sup_{a_0\in\s{A}_0}\left|\d{G}_n\left\{ \int  h^{1/2} w_{b,a_0,2}^T \b{1} K_{b,a_0} \mu_n \, dF_0 \right\}\right| \\
    &\qquad +  \sup_{a_0\in\s{A}_0}\left| c_{n,h,a_0,2} - c_{0,h,a_0,2}\right| \sup_{a_0\in\s{A}_0}\left|\d{G}_n\left\{ \int  h^{1/2} w_{b,a_0,2}^T \b{1} K_{b,a_0} \mu_n \, dF_0 \right\}\right|
\end{align*}
By Lemmas~\ref{lm:Dmatrix_unif} and~\ref{lm:local_integrated_process}, we then have 
\begin{align*}
    &\{\log n\}^{1/2}\sup_{a_0\in\s{A}_0}\left|\d{G}_n \left\{ \int h^{1/2} \left(\Gamma_{n,h,b,a_0}- \Gamma_{0,h,b,a_0}\right) \mu_n \, dF_0 \right\}\right|\\
    &\qquad= \bounded\left( \left\{\frac{nh}{\log h^{-1} \log n}\right\}^{-1/2} \left\{ h^{(1-V_\mu) / 2} + \left(nh^{1+2V_\mu}\right)^{-1/2} \right\}\right) \\
    &\qquad= \bounded\left( \left\{\frac{nh^{V_\mu}}{\log h^{-1} \log n}\right\}^{-1/2} + \left\{\frac{nh^{1 + V_\mu}}{ \left(\log h^{-1} \log n \right)^{1/2}}\right\}^{-1}\right).
\end{align*}
By~\ref{cond:unif_nuisance_rate}, both terms are $\fasterthandet(1)$.
\end{proof}

\clearpage 

\section{Analysis of remainder term $R_{n,h,b,a_0,4}$}

\begin{lemma}\label{lm:R4}
If~\ref{cond:bounded_K}--\ref{cond:doubly_robust} hold, then $R_{n, h,b,a_0, 4} = \fasterthan( \{nh\}^{-1/2})$. If~\ref{cond:bounded_K}--\ref{cond:uniform_entropy_nuisances},~\ref{cond:unif_doubly_robust}, and~\ref{cond:holder_smooth_theta} hold, then $\sup_{a_0 \in \s{A}_0}\left|R_{n, h,b,a_0, 4}\right| = \fasterthan\left( \{nh\log n\}^{-1/2}\right)$.
\end{lemma}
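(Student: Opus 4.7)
\textbf{Proof plan for Lemma~\ref{lm:R4}.} My starting point is the product structure
\[
R_{n,h,b,a_0,4} = \iint \Gamma_{n,h,b,a_0}(a)\{\mu_n(a,w) - \mu_0(a,w)\}\{1 - g_0(a,w)/g_n(a,w)\}\, dF_0(a)\, dQ_0(w).
\]
By~\ref{cond:uniform_entropy_nuisances}, the two bracketed factors are uniformly bounded since $|\mu_n|, |\mu_0| \leq C_1$ and $C_2 \leq g_n, g_0 \leq C_3$; by~\ref{cond:bounded_K}--\ref{cond:bandwidth}, $\Gamma_{n,h,b,a_0}$ vanishes outside $B_{\max\{h,b\}}(a_0) \subseteq B_{\delta_1}(a_0)$ for all $n$ large enough. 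Fubini then yields
\[
|R_{n,h,b,a_0,4}| \leq \left(\sup_{a \in B_{\delta_1}(a_0)} I_n(a)\right) \int |\Gamma_{n,h,b,a_0}(a)|\, dF_0(a),
\]
where $I_n(a) := \int |\mu_n - \mu_0|\,|1 - g_0/g_n|\, dQ_0(w)$. It suffices to show (i) $\sup_{a} I_n(a) = \fasterthan(\{nh\}^{-1/2})$ and (ii) $\int |\Gamma_{n,h,b,a_0}(a)|\, dF_0(a) = \bounded(1)$.

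For (i), at each fixed $a \in B_{\delta_1}(a_0)$ I will partition $\s{W}$ into slices $\s{S}_j(a) := \{w : (a,w) \in \s{S}_j\}$, $j \in \{1,2,3\}$, whose union is $\s{W}$ by~\ref{cond:doubly_robust}(a). On $\s{S}_1(a)$, $\mu_\infty(a,w) = \mu_0(a,w)$, and Cauchy--Schwarz in $w$ together with the uniform boundedness of $1 - g_0/g_n$ gives
\[
\int_{\s{S}_1(a)}|\mu_n - \mu_0|\,|1 - g_0/g_n|\, dQ_0 \lesssim \left[\int I_{\s{S}_1}(a,w)\{\mu_n - \mu_\infty\}^2 dQ_0(w)\right]^{1/2}.
\]
Taking the supremum over $a \in B_{\delta_1}(a_0)$ produces exactly $d(\mu_n,\mu_\infty; B_{\delta_1}(a_0), \s{S}_1)$, which is $\fasterthan(\{nh\}^{-1/2})$ by~\ref{cond:doubly_robust}(b). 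Symmetrically, on $\s{S}_2(a)$ I use $g_\infty = g_0$ and the boundedness of $\mu_n - \mu_0$ to obtain a bound by $d(g_n, g_\infty; B_{\delta_1}(a_0), \s{S}_2) = \fasterthan(\{nh\}^{-1/2})$ via~\ref{cond:doubly_robust}(c); and on $\s{S}_3(a)$, where both identifications hold, two applications of Cauchy--Schwarz yield the product $d(\mu_n, \mu_\infty; B_{\delta_1}(a_0), \s{S}_3)\cdot d(g_n, g_\infty; B_{\delta_1}(a_0), \s{S}_3) = \fasterthan(\{nh\}^{-1/2})$ via~\ref{cond:doubly_robust}(d).

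For (ii), I split $\Gamma_{n,h,b,a_0}$ into its local-linear and bias-correction pieces. For the former, Lemma~\ref{lm:Dmatrix} gives $\b{D}_{n,h,a_0,1}^{-1} = \bounded(1)$ entrywise, and a change of variables $u = (a-a_0)/h$ reduces $\int |w_{h,a_0,1}(a)|\, K_{h,a_0}(a)\, f_0(a)\, da$ to $\int (1, |u|)^T K(u) f_0(a_0 + uh)\, du = \boundeddet(1)$ by~\ref{cond:bounded_K} and~\ref{cond:cont_density}. The bias-correction piece picks up the prefactor $(h/b)^2 = \tau_n^2$, which is $\boundeddet(1)$ by~\ref{cond:bandwidth}, and the analogous change of variables $u = (a-a_0)/b$ handles the remaining integral. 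Combining (i) and (ii) yields the pointwise conclusion.

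The uniform claim follows by the same route: replace $B_{\delta_1}(a_0)$ with $\s{A}_{\delta_3}$ throughout, use the pseudo-distance rates in~\ref{cond:unif_doubly_robust}(b)--(d) to obtain $\sup_{a \in \s{A}_{\delta_3}} I_n(a) = \fasterthan(\{nh\log n\}^{-1/2})$, and upgrade (ii) to $\sup_{a_0 \in \s{A}_0} \int |\Gamma_{n,h,b,a_0}|\, dF_0 = \bounded(1)$ using the uniform statements of Lemmas~\ref{lm:D0_altform} and~\ref{lm:Dmatrix_unif}, which require~\ref{cond:holder_smooth_theta} (in particular the Lipschitz continuity and boundedness of $f_0$ on $\s{A}_{\delta_3}$). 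The main subtlety is the bookkeeping required to correctly identify which of $\s{S}_1, \s{S}_2, \s{S}_3$ controls which nuisance estimator and hence which Cauchy--Schwarz bound to apply, but once the partition is fixed the calculation reduces to standard kernel integrals, so I do not anticipate any further obstacle.
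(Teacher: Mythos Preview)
Your proposal is correct and follows essentially the same route as the paper: restrict to $B_{\delta_1}(a_0)$ via the compact support of $K$, partition according to $\s{S}_1,\s{S}_2,\s{S}_3$, apply Cauchy--Schwarz together with the uniform bounds from~\ref{cond:uniform_entropy_nuisances} to reduce to the pseudo-distances in~\ref{cond:doubly_robust}, and control $\int|\Gamma_{n,h,b,a_0}|\,dF_0$ via Lemmas~\ref{lm:D0_altform} and~\ref{lm:Dmatrix} (uniform versions for the second claim). The only cosmetic difference is that the paper splits by $\s{S}_j$ first and then handles the kernel integral inside each piece, whereas you factor out the kernel integral first and then split $I_n(a)$; the ingredients and rates are identical.
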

\begin{proof}[\bfseries{Proof of Lemma~\ref{lm:R4}}]
We note that $\Gamma_{n,h,b,a_0}(a) = 0$ for all $a$ such that $|a - a_0| > \max\{h,b\}$ by~\ref{cond:bounded_K}. Hence, for $\max\{h,b\} \leq \delta_1$, $\Gamma_{n,h,b,a_0}(a) = 0$ for $a \notin B_{\delta_1}(a_0)$. Therefore, using~\ref{cond:doubly_robust}, for all $\max\{h,b\} \leq \delta_1$, we can write
\begin{align*}
    \left|\mathrel{R}_{n, h,b,a_0, 4}\right| &= \left| \iint_{B_{\delta_1}(a_0) \times \s{W}} \Gamma_{n,h,b,a_0}(a) \left\{\mu_n(a,w) - \mu_0(a,w)\right\}\left\{1-\frac{g_0(a,w)}{g_n(a,w)}\right\}\,dF_0(a)\,dQ_0(w) \right| \\
    &= \left| \sum_{j=1}^3\iint_{\s{S}_j} \Gamma_{n,h,b,a_0}(a) \left\{\mu_n(a,w) - \mu_0(a,w)\right\}\left\{1-\frac{g_0(a,w)}{g_n(a,w)}\right\}\,dF_0(a)\,dQ_0(w) \right| \\
    &\leq \sum_{j=1}^3 \iint_{\s{S}_j} \left|\Gamma_{n,h,b,a_0}(a)\right| \left|\mu_n(a,w) - \mu_0(a,w)\right|\left|1-\frac{g_0(a,w)}{g_n(a,w)}\right|\,dF_0(a)\,dQ_0(w).
\end{align*}
Now we note that for any uniformly bounded function $\lambda : \s{A} \times \s{W} \to \d{R}$ and $\s{S} \subset \s{A} \times \s{W}$, we have
\begin{align*}
    &\iint_S \left| w_{h,a_0,j}(a) K_{h,a_0}(a) \lambda(a,w)\right| \, dF_0(a) \, dQ_0(w)\\
    &\qquad=  \int_{-1}^1 \left|(1, u, \dotsc, u^j)^T \right| K(u) \int I_{\s{S}}(a_0 +uh,w)  \left|\lambda(a_0 +uh, w)\right| \, dQ_0(w) f_0(a_0 + uh) \, du \\
    &\qquad\lesssim \sup_{|a - a_0| \leq h}\int I_{\s{S}} (a,w) \left|\lambda(a, w)\right| \, dQ_0(w).
\end{align*}
Hence, for all $n$ large enough, we have 
\begin{align*}
    &\iint_{\s{S}_j} \left|\Gamma_{n,h,b,a_0}(a)\right| \left|\mu_n(a,w) - \mu_0(a,w)\right|\left|1-\frac{g_0(a,w)}{g_n(a,w)}\right|\,dF_0(a)\,dQ_0(w) \\
    &\qquad \lesssim \left[\left\|\b{D}_{n,h,a_0,1}^{-1}\right\|_{\infty}  +\left\|\b{D}_{n,h,a_0,2}^{-1}\right\|_{\infty}\right] \sup_{|a - a_0| < \delta_1} E_0 \left\{ I_{\s{S}_j}(a,W) \left|\mu_n(a,w) - \mu_0(a,w)\right|\left|1-\frac{g_0(a,W)}{g_n(a,W)}\right| \right\} \\
    &\qquad \lesssim \bounded(1) d\left(\mu_n, \mu_0; B_{\delta_1}(a_0), \s{S}_j\right)d\left(g_n, g_0; B_{\delta_1}(a_0), \s{S}_j\right),
\end{align*}
where for the last inequality we used Lemma~\ref{lm:Dmatrix}, the Cauchy-Schwartz inequality, and the uniform boundedness of $1/g_n$ guaranteed by~\ref{cond:uniform_entropy_nuisances}. We now use~\ref{cond:doubly_robust}  to see that
\begin{align*}
   d\left(\mu_n, \mu_0; B_{\delta_1}(a_0), \s{S}_1\right) &=  \fasterthan\left( \{nh\}^{-1/2}\right), \\
   d\left(g_n, g_0; B_{\delta_1}(a_0), \s{S}_2\right) &=  \fasterthan\left( \{nh\}^{-1/2}\right), \text{ and}\\
   d\left(\mu_n, \mu_0; B_{\delta_1}(a_0), \s{S}_3\right) d\left(g_n, g_0; B_{\delta_1}(a_0), \s{S}_3\right) &= \fasterthan\left(\{nh\}^{-1/2}\right).
\end{align*}
Thus, $R_{n, h,b,a_0, 4} = \fasterthan\left(\{nh\}^{-1/2}\right)$.

For the uniform statement, we note that $\Gamma_{n,h,b,a_0}(a) = 0$ for all $a_0 \in \s{A}_0$, $a \notin \s{A}_{\varepsilon_3}$, and $\max\{h,b\} < \varepsilon_3$. Hence, for $\max\{h,b\} < \varepsilon_3$, we can write
\begin{align*}
    \sup_{a_0 \in \s{A}_0}\left|R_{n, h,b,a_0, 4}\right| &= \sup_{a_0 \in \s{A}_0}\left| \iint_{\s{A}_3 \times \s{W}} \Gamma_{n,h,b,a_0}(a) \left\{\mu_n(a,w) - \mu_0(a,w)\right\}\left\{1-\frac{g_0(a,w)}{g_n(a,w)}\right\}\,dF_0(a)\,dQ_0(w) \right| \\
    &\leq \sum_{j=1}^3 \sup_{a_0 \in \s{A}_0}\iint_{\s{S}_j'} \left|\Gamma_{n,h,b,a_0}(a)\right| \left|\mu_n(a,w) - \mu_0(a,w)\right|\left|1-\frac{g_0(a,w)}{g_n(a,w)}\right|\,dF_0(a)\,dQ_0(w).
\end{align*}
Using the derivations above, we then have
\begin{align*}
    &\sup_{a_0 \in \s{A}_0}\iint_{\s{S}_j'} \left|\Gamma_{n,h,b,a_0}(a)\right| \left|\mu_n(a,w) - \mu_0(a,w)\right|\left|1-\frac{g_0(a,w)}{g_n(a,w)}\right|\,dF_0(a)\,dQ_0(w) \\
    &\qquad \lesssim \sup_{a_0\in\s{A}_0}\left[\left\|\b{D}_{n,h,a_0,1}^{-1}\right\|_{\infty}  +\left\|\b{D}_{n,h,a_0,2}^{-1}\right\|_{\infty}\right] \sup_{a_0\in\s{A}_{\varepsilon_3}} E_0 \left\{ I_{\s{S}_j'}(a,W) \left|\mu_n(a,w) - \mu_0(a,w)\right|\left|1-\frac{g_0(a,W)}{g_n(a,W)}\right| \right\} \\
    &\qquad \lesssim \bounded(1) d\left(\mu_n, \mu_0; \s{A}_{\varepsilon_3}, \s{S}_j'\right)d\left(g_n, g_0; \s{A}_{\varepsilon_3}, \s{S}_j'\right),
\end{align*}
where the last inequality uses Lemma~\ref{lm:Dmatrix_unif} and the uniform boundedness of $1/g_n$ guaranteed by~\ref{cond:uniform_entropy_nuisances}. We now use the faster rates guaranteed by~\ref{cond:unif_doubly_robust} to see that
\begin{align*}
   d(\mu_n, \mu_0; \s{A}_{\varepsilon_3}, \s{S}_1') &=  \fasterthan\left( \{nh \log n\}^{-1/2}\right), \\
   d(g_n, g_0; \s{A}_{\varepsilon_3}, \s{S}_2') &=  \fasterthan\left( \{nh \log n\}^{-1/2}\right), \text{ and}\\
   d(\mu_n, \mu_0; \s{A}_{\varepsilon_3}, \s{S}_3') d_2(g_n, g_0; \s{A}_{\varepsilon_3}, \s{S}_3') &= \fasterthan\left( \{nh \log n\}^{-1/2}\right).
\end{align*}
Thus, $\sup_{a_0\in\s{A}_0}\left|R_{n, h,b,a_0, 4}\right| = \fasterthan\left( \{nh \log n\}^{-1/2}\right)$.
\end{proof}
\clearpage

\section{Analysis of remainder term $R_{n,h,b,a_0,5}$}

\begin{lemma}\label{lemma:Uprocess_general}
Suppose $X_1, X_2, \dots$ is a sequence of IID random variables on sample space $\s{X}$ with marginal distribution $P_0$. Let $\d{P}_n$ be the empirical distribution corresponding to $(X_1, \dotsc, X_n)$, and $\s{F}$ be a collection of measurable functions from $\s{X} \times \s{X}$ to $\d{R}$ with envelope function $F$. Then
\begin{align*}
    \sup_{f \in \s{F}} \left| \iint f(x_1, x_2) \, d(\d{P}_n - P_0)(x_1) \, d(\d{P}_n - P_0)(x_2)  \right| &\lesssim n^{-1} \left\| F \right\|_{P_0 \times P_0,2} \int_0^1 \left[ 1 + \log \sup_Q N(\varepsilon \|F\|_{Q,2}, \s{F}, L_2(Q)) \right] \, d\varepsilon  \\
    &\quad + n^{-3/2}\|F \|_{P_0 \times P_0,2}  \int_0^1\left[ 1 + \log\sup_Q  N\left(\varepsilon \| F\|_{Q,2}/2,  \s{F}, L_2(Q) \right) \right]^{1/2} \, d\varepsilon.
\end{align*}
\end{lemma}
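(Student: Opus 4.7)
The plan is to put the integral into a form amenable to U-process theory and then invoke a maximal inequality for degenerate U-processes. Define the Hoeffding projection
\[ f^*(x, y) := f(x, y) - P_0 f(x, \cdot) - P_0 f(\cdot, y) + \iint f \, dP_0 \, dP_0, \]
where $P_0 f(x, \cdot) := \int f(x, z) \, dP_0(z)$ and similarly for $P_0 f(\cdot, y)$. By construction $P_0 f^*(x, \cdot) \equiv 0$ and $P_0 f^*(\cdot, y) \equiv 0$, and a direct expansion yields
\[ \iint f(x_1, x_2) \, d(\d{P}_n - P_0)(x_1) \, d(\d{P}_n - P_0)(x_2) = \frac{1}{n^2}\sum_{i,j=1}^n f^*(X_i, X_j) = \frac{1}{n}\d{P}_n g_f + \frac{n-1}{n} U_n(f^*), \]
where $g_f(x) := f^*(x, x)$ and $U_n(f^*) := \{n(n-1)\}^{-1}\sum_{i \neq j} f^*(X_i, X_j)$ is a completely degenerate U-statistic of order $2$. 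The two summands on the right will produce the two terms in the desired bound.

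For $\sup_{f \in \s{F}} |U_n(f^*)|$, I would invoke the standard maximal inequality for degenerate U-processes of order $2$ under uniform entropy, as in Nolan and Pollard (1987) or Chapter 5 of de la Pe\~{n}a and Gin\'{e} (1999). That inequality is proved by decoupling the double sum against an independent copy of the sample, Rademacher-symmetrizing each index, and then running a two-stage generic-chaining argument that exploits the hypercontractivity of second-order Rademacher chaos. The key consequence of hypercontractivity is that chaining increments are controlled in $L_2$ rather than only in a sub-Gaussian norm, which is precisely what produces the characteristic entropy factor $\int_0^1 [1 + \log \sup_Q N(\varepsilon \|F\|_{Q,2}, \s{F}, L_2(Q))] \, d\varepsilon$ — without the square root that appears in first-order empirical-process bounds — together with the $n^{-1}$ prefactor. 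Tracking the product-space envelope through the argument yields the first term of the stated bound.

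For the diagonal contribution $n^{-1}\d{P}_n g_f$, I would apply Theorem 2.14.1 of \cite{vandervaart1996} to the class $\{g_f : f \in \s{F}\}$. Since Hoeffding projection is an $L_2$-contraction, the uniform covering numbers of this class are dominated by those of $\s{F}$, so standard chaining on $\d{G}_n g_f$ delivers a bound of order $n^{-1/2}\|G\|_{P_0,2}\int_0^1 [1 + \log N]^{1/2}\, d\varepsilon$ for an appropriate envelope $G$, which after the external factor of $n^{-1}$ matches the second term in both its $n^{-3/2}$ prefactor and its $\int \sqrt{\log N}$ entropy integral. The main technical obstacle I anticipate here is an envelope mismatch: the naive diagonal envelope $G(x) := 4F(x, x)$ has $L_2(P_0)$-norm that need not be controlled by $\|F\|_{P_0 \times P_0, 2}$. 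Resolving this will likely require either grouping the diagonal with the off-diagonal and treating the full $n^{-2}\sum_{i,j} f^*(X_i, X_j)$ as a single V-statistic whose maximal inequality uses only product-space complexity inputs, or observing that chaining operates on $L_2(P_0 \times P_0)$ differences $f - f'$ whose diagonal contributions are of smaller order and absorbable into the same entropy integral after a triangle-inequality step.
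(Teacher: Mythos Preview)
Your overall architecture matches the paper's: decompose into a degenerate second-order U-process plus lower-order remainders, apply a decoupling/hypercontractivity-based maximal inequality to the U-process to get the $n^{-1}\int[1+\log N]\,d\varepsilon$ term, and use ordinary empirical-process chaining on the remainder to get the $n^{-3/2}\int[1+\log N]^{1/2}\,d\varepsilon$ term. The paper in fact cites exactly this kind of U-process bound (Lemma~3 of \cite{westling2018causal}) for the degenerate piece, so that part of your plan is on target.

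The gap is the one you already flagged: your decomposition sends the remainder through the diagonal $g_f(x)=f^*(x,x)$, whose envelope involves $F(x,x)$ and is not controlled by $\|F\|_{P_0\times P_0,2}$. Neither of your proposed workarounds resolves this cleanly. The paper avoids the diagonal entirely by a different algebraic identity: rather than splitting $n^{-2}\sum_{i,j}f^*(X_i,X_j)$ into diagonal plus off-diagonal, it writes
\[
\iint f\,d(\d{P}_n-P_0)^{\otimes 2}=\frac{1}{2n^2}\sum_{i\neq j}f^\circ(X_i,X_j)-\frac{1}{n}\int P_0 f(\cdot,x_2)\,d(\d{P}_n-P_0)(x_2)-\frac{1}{n}\int P_0 f(x_1,\cdot)\,d(\d{P}_n-P_0)(x_1)-\frac{1}{n}\iint f\,dP_0\,dP_0,
\]
where $f^\circ$ is the symmetrized degenerate kernel. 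The point is that the remainder now consists of $n^{-1}$ times empirical processes indexed by the \emph{marginalized} classes $\{x\mapsto\int f(u,x)\,dP_0(u):f\in\s{F}\}$ and $\{x\mapsto\int f(x,u)\,dP_0(u):f\in\s{F}\}$. These carry envelopes $\bar F(x)=[\int F(u,x)^2\,dP_0(u)]^{1/2}$, for which $\|\bar F\|_{P_0,2}=\|F\|_{P_0\times P_0,2}$ exactly, and whose uniform entropy is controlled by that of $\s{F}$ via Lemma~5.2 of \cite{vanderVaartvanderLaan2006}. Theorem~2.14.1 of \cite{vandervaart1996} then delivers the $n^{-3/2}$ term with the stated envelope and entropy. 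So the fix is not to absorb the diagonal but to rewrite the decomposition so that the lower-order pieces are marginals rather than diagonals.
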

\begin{proof}[\bfseries{Proof of Lemma~\ref{lemma:Uprocess_general}}]
For each $f \in \s{F}$, define the symmetrized, centered version of $f$ as
\begin{align*}
    f^\circ(x_1, x_2) &:= f(x_1, x_2) + f(x_2, x_1) - \int \left[ f(u,x_1) + f(u, x_2) + f(x_1, u) + f(x_2, u) \right] \, dP_0(u) + 2 \iint f(u,v) \, dP_0(u) \, dP_0(v).
\end{align*}
We note that $f^\circ$ is symmetric in its arguments, meaning $f^\circ(x_1, x_2) = f^\circ(x_2, x_1)$ for all $x_1, x_2 \in \s{X}$, and $\int f(x_1, u) \, dP_0(u) = 0$ for all $x_1\in \s{X}$. We let $\s{F}^\circ := \{f^\circ : f \in \s{F}\}$, and we note that an envelope function for $\s{F}^\circ$ is $F^\circ$ for
\begin{align*}
    F^\circ(x_1, x_2) &:= F(x_1, x_2) + F(x_2, x_1) + \int \left[ F(u,x_1) + F(u, x_2) + F(x_1, u) + F(x_2, u) \right] \, dP_0(u) + 2 \iint F(u,v) \, dP_0(u) \, dP_0(v).
\end{align*}
By adding and subtracting terms, we can write
\begin{align*}
    \iint f(x_1, x_2) \, d(\d{P}_n - P_0)(x_1) \, d(\d{P}_n - P_0)(x_2)  &= \frac{1}{2n^2} \sum_{\substack{i,j = 1\\ i\neq j}}^n f^\circ(X_i, X_j) - \frac{1}{n} \iint f(x_1, x_2) \, dP_0(x_1) \, d(\d{P}_n - P_0)(x_2) \\
    &\qquad -  \frac{1}{n} \iint f(x_1, x_2) \, d(\d{P}_n - P_0)(x_1) \, dP_0(x_2) \\
    &\qquad - \frac{1}{n}\iint f(x_1, x_2) \, dP_0(x_1) \, dP_0(x_2)
\end{align*}
By Lemma~3 of \cite{westling2018causal}, we have
\begin{align*}
    E_0\left[\frac{1}{2n^2} \sup_{f \in \s{F}^\circ} \left| \sum_{\substack{i,j = 1\\ i\neq j}}^n f^\circ(X_i, X_j) \right| \right] \lesssim \frac{1}{n} \left\| F^\circ\right\|_{P_0 \times P_0, 2} \int_0^1 \left[ 1 + \log \sup_Q N(\varepsilon \|F^\circ\|_{Q,2}, \s{F}^\circ, L_2(Q)) \right] \, d\varepsilon.
\end{align*}
By the triangle inequality and Jensen's inequality, $\left\| F^\circ \right\|_{P_0 \times P_0,2} \leq 8 \left\| F \right\|_{P_0 \times P_0,2}$. We also note that by definition of $f^\circ$, $\s{F}^\circ$ is contained in the sum of $\s{F}$ and the following classes:
\begin{align*}
    \s{F}_r &:= \left\{(x_1, x_2) \mapsto f(x_2, x_1) : f \in \s{F} \right\},& \bar{\s{F}}_{11} &:=  \left\{ x_1 \mapsto -\int f(x_1, u) \, dP_0(u) : f \in \s{F}\right\},\\
    \bar{\s{F}}_{12} &:=  \left\{ x_1 \mapsto -\int f(u, x_1) \, dP_0(u) : f \in \s{F}\right\},& \bar{\s{F}}_{21} &:=  \left\{ x_2 \mapsto -\int f(x_2, u) \, dP_0(u) : f \in \s{F}\right\},\\
    \bar{\s{F}}_{22} &:=  \left\{ x_2 \mapsto -\int f(u, x_2) \, dP_0(u) : f \in \s{F}\right\},& \text{and } \s{F}_m &:=  \left\{ 2\iint f(u,v) \, dP_0(u) \, dP_0(v) : f \in \s{F}\right\}.
\end{align*}
By Lemma~5.1 of \cite{vanderVaartvanderLaan2006}, $\log \sup_Q N(\varepsilon \|F^\circ\|_{Q,2}, \s{F}^\circ, L_2(Q))$ is bounded by the sum of the uniform entropies of each of the above classes. The uniform entropy of $\s{F}_r$ is the same as that of $\s{F}$ because for any measure $Q$ on $\s{X} \times \s{X}$, 
\[\int [f_1(x_1, x_2) - f_2(x_1, x_2)]^2 \, dQ(x_1, x_2) = \int [f_1(x_2, x_1) - f_2(x_2, x_1)]^2 \, dQ_r(x_1, x_2),\]
where $Q_r$ is defined as $dQ_r(x_1, x_2) := dQ(x_2, x_1)$.  We equip $\bar{\s{F}}_{11}$ with the envelope function 
\[\bar{F}_{11} : x_1 \mapsto \left[\int F(x_1, u)^2 \, dP_0(u) \right]^{1/2}.\]
We note that $ \| \bar{F}_{11}\|_{P_0,2} = \|F \|_{P_0 \times P_0,2}$. By Lemma~5.2 of \cite{vanderVaartvanderLaan2006} (with $r=s=t=2$), we have
\[ \sup_Q  N\left(\varepsilon \| \bar{F}_{11}\|_{Q,2},  \bar{\s{F}}_{11}, L_2(Q) \right) \leq \sup_Q  N\left(\varepsilon \| F\|_{Q,2} /2,  \s{F}, L_2(Q) \right).\]
Identical results hold for $\bar{F}_{12}$, $\bar{F}_{21}$, and $\bar{F}_{22}$. For $\s{F}_m$ equipped with envelope $\| F\|_{P_0 \times P_0, 2}$, we have 
\[\sup_Q  N\left(\varepsilon \| F\|_{P_0 \times P_0, 2},  \s{F}_m, L_2(Q) \right) \leq \sup_Q  N\left(\varepsilon \| F\|_{P_0 \times P_0, 2},  \s{F}, L_2(Q) \right)\]
since 
\[\left\|\int f_1 \, d(P_0 \times P_0) - \int f_2 \, d(P_0 \times P_0)\right\|_{Q,2} = \left| \int f_1 \, d(P_0 \times P_0) - \int f_2 \, d(P_0 \times P_0)\right| \leq \| f_1 - f_2\|_{P_0 \times P_0, 2}\]
for any probability measure $Q$. Therefore, we have
\begin{align*}
    &\left\| F^\circ\right\|_{P_0 \times P_0, 2} \int_0^1 \left[ 1 + \log \sup_Q N(\varepsilon \|F^\circ\|_{Q,2}, \s{F}^\circ, L_2(Q)) \right] \, d\varepsilon \lesssim \left\| F \right\|_{P_0 \times P_0,2} \int_0^1 \left[ 1 + \log \sup_Q N(\varepsilon \|F\|_{Q,2}, \s{F}, L_2(Q)) \right] \, d\varepsilon
\end{align*}
Next, we have 
\begin{align*}
    E_0\left[\sup_{f \in \s{F}} n^{-1} \left| \iint f(x_1, x_2) \, dP_0(x_1) \, d(\d{P}_n - P_0)(x_2) \right| \right] &= n^{-3/2} E_0\left[\sup_{\bar{f} \in \bar{\s{F}}_{22}} \left| \d{G}_n \bar{f}\right| \right].
\end{align*}
By Theorem~2.14.1 of \cite{vandervaart1996},
\begin{align*}
     E_0\left[\sup_{\bar{f} \in \bar{\s{F}}_{22}} \left| \d{G}_n \bar{f}\right| \right] &\lesssim \| \bar{F}_{22}\|_{P_0,2} \int_0^1\left[ 1 + \log\sup_Q  N\left(\varepsilon \| \bar{F}_{22}\|_{Q,2},  \bar{\s{F}}_{22}, L_2(Q) \right) \right]^{1/2} \, d\varepsilon.
\end{align*}
Hence,
\begin{align*}
    E_0\left[\sup_{f \in \s{F}} n^{-1} \left| \iint f(x_1, x_2) \, dP_0(x_1) \, d(\d{P}_n - P_0)(x_2) \right| \right] \lesssim n^{-3/2}\|F \|_{P_0 \times P_0,2}  \int_0^1\left[ 1 + \log\sup_Q  N\left(\varepsilon \| F\|_{Q,2}/2,  \s{F}, L_2(Q) \right) \right]^{1/2} \, d\varepsilon.
\end{align*}  
By an identical argument, we also have
\begin{align*}
    E_0\left[\sup_{f \in \s{F}} n^{-1} \left| \iint f(x_1, x_2) \, d(\d{P}_n - P_0)(x_1) \, dP_0(x_2)\right| \right] \lesssim n^{-3/2}\|F \|_{P_0 \times P_0,2}  \int_0^1\left[ 1 + \log\sup_Q  N\left(\varepsilon \| F\|_{Q,2}/2,  \s{F}, L_2(Q) \right) \right]^{1/2} \, d\varepsilon.
\end{align*} 
Finally, 
\[\sup_{f \in \s{F}} n^{-1}\left| \iint f(x_1, x_2) \, dP_0(x_1) \, dP_0(x_2) \right| \leq n^{-1} \|F \|_{P_0 \times P_0,2}.\]
Putting together the pieces yields the result.
\end{proof}

\begin{lemma}\label{lemma:R5}
If~\ref{cond:bounded_K}--\ref{cond:uniform_entropy_nuisances}, and~\ref{cond:cont_density} hold, then $\mathrel{R}_{n, h,b, a_0, 5} = \bounded(\{nh^{1/2}\}^{-1})$. If~\ref{cond:bounded_K}, \ref{cond:uniform_entropy_nuisances}, and~\ref{cond:holder_smooth_theta} hold and $nh / \log h^{-1} \longrightarrow \infty$, then $\sup_{a_0 \in \s{A}_0}|R_{n, h,b, a_0, 5}| = \bounded(\{nh \}^{-1})$.
\end{lemma}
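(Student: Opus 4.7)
\textbf{Proof plan for Lemma~\ref{lemma:R5}.} The key observation is that $R_{n,h,b,a_0,5}$ is a signed degenerate $U$-process of order two. Indeed, for any integrable $\alpha:\s{A}\times\s{W}\to\d{R}$ that does not depend on $y$, one has
\[
\iint \alpha(a,w)\,d(F_n-F_0)(a)\,d(Q_n-Q_0)(w) = \iint f_\alpha(o_1,o_2)\,d(\d{P}_n-P_0)(o_1)\,d(\d{P}_n-P_0)(o_2),
\]
where $f_\alpha(o_1,o_2):=\alpha(A_1,W_2)$. Applied with $\alpha=\Gamma_{n,h,b,a_0}\mu_n$ this rewrites $R_{n,h,b,a_0,5}$ as a degenerate $U$-process kernel. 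To eliminate the randomness in $\Gamma_{n,h,b,a_0}$, I will first decompose $\b{D}^{-1}_{n,h,a_0,1}=\b{D}^{-1}_{0,h,a_0,1}+\{\b{D}^{-1}_{n,h,a_0,1}-\b{D}^{-1}_{0,h,a_0,1}\}$ and similarly for $\b{D}^{-1}_{n,b,a_0,3}$. Using Lemma~\ref{lm:Dmatrix} (resp.\ Lemma~\ref{lm:Dmatrix_unif}), the matrix increments are $\bounded(\{nh\}^{-1/2})$ pointwise (resp.\ $\bounded(\{nh/\log h^{-1}\}^{-1/2})$ uniformly), which will reduce the problem to controlling $(\d{P}_n-P_0)^{\otimes 2}$ applied to kernels built from the deterministic weight functions $w_{h,a_0,j}K_{h,a_0}$ (and analogously at bandwidth $b$) multiplied by $\mu_n\in\s{F}_\mu$.

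For the pointwise statement I would apply Lemma~\ref{lemma:Uprocess_general} to the class
\[
\s{F}_{a_0,h}:=\left\{(o_1,o_2)\mapsto w_{h,a_0,1,j}(a_1)K_{h,a_0}(a_1)\mu(a_1,w_2):\mu\in\s{F}_\mu\right\}
\]
(and the obvious analogue at bandwidth $b$). A natural envelope is $F(o_1,o_2)=C_1|w_{h,a_0,1,j}(a_1)|K_{h,a_0}(a_1)$, for which a change of variables combined with~\ref{cond:cont_density} gives $\|F\|_{P_0\times P_0,2}=\boundeddet(h^{-1/2})$. Applying Lemma~5.1 of \cite{vanderVaartvanderLaan2006} to the product of the singleton class $\{w_{h,a_0,1,j}K_{h,a_0}\}$ and $\s{F}_\mu$, the uniform entropy of $\s{F}_{a_0,h}$ is bounded up to a constant by $\varepsilon^{-V_\mu}$; since $V_\mu<1$ by~\ref{cond:uniform_entropy_nuisances}, both entropy integrals appearing in Lemma~\ref{lemma:Uprocess_general} are finite. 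This yields the bound $\bounded(n^{-1}h^{-1/2})$ on the main term. The secondary piece involving $\b{D}^{-1}_{n,\cdot}-\b{D}^{-1}_{0,\cdot}$ inherits an extra $\bounded(\{nh\}^{-1/2})$ factor by Lemma~\ref{lm:Dmatrix} and is therefore negligible relative to the leading rate.

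For the uniform statement I would instead apply Lemma~\ref{lemma:Uprocess_general} to the larger class
\[
\s{F}^{\mathrm{unif}}_h:=\left\{(o_1,o_2)\mapsto \Gamma_{0,h,b,a_0}(a_1)\mu(a_1,w_2):a_0\in\s{A}_0,\ \mu\in\s{F}_\mu\right\}.
\]
The kernel $K$ is bounded with support in $[-1,1]$ and $\b{D}^{-1}_{0,h,a_0,1},\b{D}^{-1}_{0,b,a_0,3}$ are uniformly bounded in $a_0\in\s{A}_0$ by Lemma~\ref{lm:D0_altform} (uniform version, using~\ref{cond:holder_smooth_theta}); hence $\sup_{a_0\in\s{A}_0}|\Gamma_{0,h,b,a_0}(a)|\lesssim h^{-1}$ and $F^{\mathrm{unif}}(o_1,o_2)\lesssim h^{-1}$ is a legitimate envelope, so $\|F^{\mathrm{unif}}\|_{P_0\times P_0,2}\lesssim h^{-1}$. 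By Corollary~\ref{cor:phi_infty_VC} the class $\{\Gamma_{0,h,b,a_0}:a_0\in\s{A}_0\}$ is VC type, whose uniform entropy is logarithmic in $1/\varepsilon$; combined with $\s{F}_\mu$ via Lemma~5.1 of \cite{vanderVaartvanderLaan2006} the class $\s{F}^{\mathrm{unif}}_h$ satisfies $\sup_Q\log N(\varepsilon\|F^{\mathrm{unif}}\|_{Q,2},\s{F}^{\mathrm{unif}}_h,L_2(Q))\lesssim \varepsilon^{-V_\mu}$, keeping both entropy integrals in Lemma~\ref{lemma:Uprocess_general} finite. This gives the bound $\bounded(n^{-1}h^{-1})$ on the principal term. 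The matrix-remainder piece is handled exactly as before, now using Lemma~\ref{lm:Dmatrix_unif} (whose hypothesis $nh/\log h^{-1}\to\infty$ is precisely what the lemma assumes) and the $\bounded(h^{-1})$ envelope, yielding a remainder of order $\{nh\log h^{-1}\}^{-1/2}n^{-1}h^{-1}=\fasterthan(n^{-1}h^{-1})$.

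The main obstacle will be a careful bookkeeping of the envelope and entropy constants uniformly in $a_0$: one has to confirm that the VC-type property of $\{\Gamma_{0,h,b,a_0}:a_0\in\s{A}_0\}$ transfers to the product class $\s{F}^{\mathrm{unif}}_h$ with an envelope whose $L_2(P_0\times P_0)$ norm is genuinely $\boundeddet(h^{-1})$ rather than something larger, since this is precisely what distinguishes the pointwise $\bounded(n^{-1}h^{-1/2})$ rate from the uniform $\bounded(n^{-1}h^{-1})$ rate. The analogous terms at bandwidth $b$ are controlled by the same argument with $h$ replaced by $b$, and the assumption $\tau_n\longrightarrow\tau\in[0,\infty)$ from~\ref{cond:bandwidth} ensures that the $b$-bandwidth contribution is of the same order as the $h$-bandwidth contribution.
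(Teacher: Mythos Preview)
Your proposal is correct and follows essentially the same route as the paper: rewrite $R_{n,h,b,a_0,5}$ as a degenerate second-order $U$-statistic and apply Lemma~\ref{lemma:Uprocess_general} to the class indexed by $\mu\in\s{F}_\mu$ (pointwise) or by $(\mu,a_0)\in\s{F}_\mu\times\s{A}_0$ (uniform), with envelope of $L_2$-norm $\boundeddet(h^{-1/2})$ and $\boundeddet(h^{-1})$ respectively and entropy controlled by $\varepsilon^{-V_\mu}$, $V_\mu<1$.

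The one place where the paper is slightly more economical is in handling the random matrices $\b{D}_{n,h,a_0,1}^{-1}$ and $\b{D}_{n,b,a_0,3}^{-1}$. You split $\b{D}^{-1}_{n,\cdot}=\b{D}^{-1}_{0,\cdot}+(\b{D}^{-1}_{n,\cdot}-\b{D}^{-1}_{0,\cdot})$ and bound the main and remainder pieces separately. The paper instead observes directly that $\|\b{D}_{n,h,a_0,1}^{-1}\|_\infty=\bounded(1)$ (pointwise via Lemmas~\ref{lm:D0_altform} and~\ref{lm:Dmatrix}; uniformly via Lemmas~\ref{lm:D0_altform} and~\ref{lm:Dmatrix_unif}) and simply factors these matrices out, reducing immediately to the basic building blocks $[(a-a_0)/h]^j K_{h,a_0}(a)\mu(a,w)$ without a secondary remainder term to track. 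Your decomposition works and gives the same rates, but the paper's argument avoids the extra bookkeeping.
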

\begin{proof}[\bfseries{Proof of Lemma~\ref{lemma:R5}}]
For the pointwise statement, we write
\begin{align*}
    \mathrel{R}_{n, h,b, a_0, 5} &= \iint \Gamma_{n,h, b, a_0}(a)\mu_n(a,w)\,d(Q_n-Q_0)(w)\,d(F_n-F_0)(a) \\
    &= e_1^T \b{D}_{n,h,a_0,1}^{-1} \iint w_{h,a_0,1}(a) K_{h,a_0}(a) \mu_n(a,w)\,d(Q_n-Q_0)(w)\,d(F_n-F_0)(a)\\
    &\qquad -  c_{n,h,a_0,2} \tau_n^2 e_3^T\b{D}_{n,b,a_0,2}^{-1} \iint w_{b,a_0,2}(a) K_{b,a_0}(a) \mu_n(a,w)\,d(Q_n-Q_0)(w)\,d(F_n-F_0)(a).
\end{align*}
By Lemmas~\ref{lm:D0_altform} and~\ref{lm:Dmatrix}, $\left\|\b{D}_{n,h,a_0,1}^{-1}\right\|_\infty$ and $\left\|\b{D}_{n,h,a_0,2}^{-1}\right\|_\infty$ are both $\bounded(1)$. Hence, there exist constants $C_{n,j} = \bounded(1)$ such that 
\begin{align*}
    \left|R_{n, h,b, a_0, 5}\right| &\leq \sum_{j=0}^4 C_{n,j} \left| \iint [(a-a_0) / h]^j K_{h,a_0}(a) \mu_n(a,w)\,d(Q_n-Q_0)(w)\,d(F_n-F_0)(a)\right|.
\end{align*}
For each $j$, we then write
\begin{align*}
   &\left| \iint [(a-a_0) / h]^j K_{h,a_0}(a) \mu_n(a,w)\,d(Q_n-Q_0)(w)\,d(F_n-F_0)(a)\right| \\
   &\qquad=   \left| \iint f_{n,h,a_0,j}(a_1, w_1, a_2, w_2)\,d(\d{P}_n-P_0)(a_1,w_1)\,d(\d{P}_n-P_0)(a_2,w_2)\right| \\
   &\qquad\leq \sup_{f \in \s{F}_{h,a_0,j}}  \left| \iint f(a_1, w_1, a_2, w_2)\,d(\d{P}_n-P_0)(a_1,w_1)\,d(\d{P}_n-P_0)(a_2,w_2)\right|,
\end{align*}
where
\begin{align*}
f_{n,h,a_0,j}(a_1, w_1, a_2, w_2) &:= [(a_1-a_0) / h]^j K_{h,a_0}(a_1) \mu_n(a_1,w_2) \text{ and} \\
\s{F}_{h,a_0,j} &= \left\{(a_1, w_1, a_2, w_2) \mapsto [(a_1-a_0) / h]^j K_{h,a_0}(a_1) \mu(a_1,w_2) : \mu \in \s{F}_\mu  \right\}.
\end{align*}
By~\ref{cond:bounded_K} and~\ref{cond:uniform_entropy_nuisances}, an envelope function $F_{h,a_0}$ for $\s{F}_{h,a_0,j}$ is given by a constant times $K_{h,a_0}(a_1)$, and $\|F_{h,a_0}\|_{(P_0 \times P_0),2} \lesssim h^{-1/2}$ by the standard change of variables. In addition, by~\ref{cond:uniform_entropy_nuisances}, \[ \sup_Q \log N\left(\varepsilon \|F_{h,a_0}\|_{(P_0 \times P_0),2}, \s{F}_{h,a_0,j}, L_2(Q)\right) \lesssim \varepsilon^{-V_\mu},\]
where $V_\mu \in (0,1)$. Hence, by Lemma~\ref{lemma:Uprocess_general}, 
\begin{align*}
&E_0 \left[ \sup_{f \in \s{F}_{h,a_0,j}}  \left| \iint f(a_1, w_1, a_2, w_2)\,d(\d{P}_n-P_0)(a_1,w_1)\,d(\d{P}_n-P_0)(a_2,w_2)\right| \right] \lesssim n^{-1} h^{-1/2}
\end{align*}
for each $j$, and the pointwise result follows.

For the uniform result, by Lemmas~\ref{lm:D0_altform} and~\ref{lm:Dmatrix_unif}, $\sup_{a_0 \in\s{A}_0}\left\|\b{D}_{n,h,a_0,1}^{-1}\right\|_\infty$ and $\sup_{a_0 \in\s{A}_0}\left\|\b{D}_{n,h,a_0,2}^{-1}\right\|_\infty$ are both $\bounded(1)$. Hence, there exist constants $C_{n,j}' = \bounded(1)$ such that 
\begin{align*}
    \sup_{a_0 \in \s{A}_0}\left|R_{n, h,b, a_0, 5}\right| &\leq \sum_{j=0}^2 C_{n,j}' \sup_{a_0 \in \s{A}_0}\left| \iint [(a-a_0) / h]^j K_{h,a_0}(a) \mu_n(a,w)\,d(Q_n-Q_0)(w)\,d(F_n-F_0)(a)\right|.
\end{align*}
As above, for each $j$ we then write
\begin{align*}
   &\sup_{a_0\in\s{A}_0}\left| \iint [(a-a_0) / h]^j K_{h,a_0}(a) \mu_n(a,w)\,d(Q_n-Q_0)(w)\,d(F_n-F_0)(a)\right| \\
   &\qquad=  \sup_{a_0 \in \s{A}_0} \left| \iint f_{n,h,a_0,j}(a_1, w_1, a_2, w_2)\,d(\d{P}_n-P_0)(a_1,w_1)\,d(\d{P}_n-P_0)(a_2,w_2)\right| \\
   &\qquad\leq \sup_{f \in \s{F}_{h,j}}  \left| \iint f(a_1, w_1, a_2, w_2)\,d(\d{P}_n-P_0)(a_1,w_1)\,d(\d{P}_n-P_0)(a_2,w_2)\right|,
\end{align*}
where
\begin{align*}
\s{F}_{h,j} &= \left\{(a_1, w_1, a_2, w_2) \mapsto [(a_1-a_0) / h]^j K_{h,a_0}(a_1) \mu(a_1,w_2) : \mu \in \s{F}_\mu, a_0 \in \s{A}_0  \right\}.
\end{align*}
By~\ref{cond:bounded_K} and~\ref{cond:uniform_entropy_nuisances}, $\s{F}_{h,j}$ is uniformly bounded up to a constant by $h^{-1}$, and 
by Lemma~\ref{lm:uniform_entropy_lambda_r} and~\ref{cond:uniform_entropy_nuisances}, $\s{F}_{h,j}$ has uniform entropy bounded up to a constant by $\varepsilon^{-V_\mu}$ relative to this envelope. Thus, by Lemma~\ref{lemma:Uprocess_general}, 
\begin{align*}
&E_0 \left[ \sup_{f \in \s{F}_{h,j}}  \left| \iint f(a_1, w_1, a_2, w_2)\,d(\d{P}_n-P_0)(a_1,w_1)\,d(\d{P}_n-P_0)(a_2,w_2)\right| \right] \lesssim (nh)^{-1}
\end{align*}
for each $j$, and the result follows.
\end{proof}
\clearpage

\section{Analysis of remainder term $R_{n,h,b,a_0,6}$}

\begin{lemma}\label{lm:R6}
If~\ref{cond:bounded_K},~\ref{cond:bandwidth}, and~\ref{cond:cont_density} hold, then $|R_{n, h,b, a_0, 6}| = \bounded(\{nh\}^{-1})$. If~\ref{cond:bounded_K},~\ref{cond:bandwidth}, and~\ref{cond:holder_smooth_theta} hold and  $nh/\log h^{-1} \longrightarrow \infty$, then $\sup_{a_0 \in \s{A}_0}|R_{n, h,b,a_0, 6}| = \bounded( \{nh/\log h^{-1}\}^{-1})$.
\end{lemma}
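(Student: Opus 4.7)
The plan is to use the explicit three-factor product structure of each of the two summands in $R_{n,h,b,a_0,6}$ and bound each factor separately using Lemmas~\ref{lm:D0_altform} and~\ref{lm:Dmatrix} (pointwise) or their uniform counterparts Lemmas~\ref{lm:D0_altform} and~\ref{lm:Dmatrix_unif} (uniform). Because all matrices involved are of fixed dimension (2 for the $h$-term and 4 for the $b$-term), the element-wise $\|\cdot\|_\infty$ bounds in those lemmas translate, up to a constant depending only on the dimension, into bounds on the $L_1$ operator norm, which is sub-multiplicative. This reduces the entire argument to multiplying already-established rates.

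First I will bound $|R_{n,h,b,a_0,6}|$ by the sum of its two summands and handle each as a product of four matrix/vector norms. For the first summand,
\[
|R_{n,h,b,a_0,6}^{(1)}| \lesssim \|\b{D}^{-1}_{0,h,a_0,1}\|_1 \cdot \|\b{D}_{0,h,a_0,1} - \b{D}_{n,h,a_0,1}\|_1 \cdot \|\b{D}^{-1}_{n,h,a_0,1} - \b{D}^{-1}_{0,h,a_0,1}\|_1 \cdot \|P_0(w_{h,a_0,1} K_{h,a_0} \theta_0)\|_1.
\]
Lemma~\ref{lm:D0_altform} gives $\|\b{D}^{-1}_{0,h,a_0,1}\|_1 = \boundeddet(1)$ since $f_0(a_0) > 0$, Lemma~\ref{lm:Dmatrix} gives the two middle factors as $\bounded(\{nh\}^{-1/2})$, and a change of variables combined with the compact support of $K$ and the continuity of $\theta_0$ and $f_0$ near $a_0$ in~\ref{cond:cont_density} yields $\|P_0(w_{h,a_0,1} K_{h,a_0} \theta_0)\|_1 = \boundeddet(1)$. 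Multiplying these bounds gives $|R_{n,h,b,a_0,6}^{(1)}| = \bounded(\{nh\}^{-1})$.

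For the second summand I repeat the argument with bandwidth $b$ and the $4 \times 4$ matrices $\b{D}_{\cdot,b,a_0,3}$, picking up the extra scalar factor $c_2 \tau_n^2$. This yields a bound of order $\tau_n^2 \cdot (nb)^{-1} = (h/b)^2 (nb)^{-1}$. Since $\tau_n \longrightarrow \tau \in [0,\infty)$ by~\ref{cond:bandwidth}, $(h/b)^2$ is uniformly bounded; moreover, if $\tau > 0$ then $b$ is comparable to $h$, while if $\tau = 0$ then $nb \gg nh$. In either regime $(h/b)^2(nb)^{-1} \lesssim (nh)^{-1}$, and summing the two summands yields the pointwise claim.

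For the uniform statement the argument is structurally identical: I replace Lemma~\ref{lm:Dmatrix} with Lemma~\ref{lm:Dmatrix_unif}, which provides the rate $\bounded(\{nh/\log h^{-1}\}^{-1/2})$ for the $\b{D}$ and $\b{D}^{-1}$ differences uniformly over $a_0 \in \s{A}_0$ (this is where the hypothesis $nh/\log h^{-1} \longrightarrow \infty$ enters), and use the uniform version of Lemma~\ref{lm:D0_altform} available under~\ref{cond:holder_smooth_theta}. The continuity of $\theta_0$ and $f_0$ on the compact set $\s{A}_{\delta_3}$ furnished by~\ref{cond:holder_smooth_theta} delivers $\sup_{a_0 \in \s{A}_0}\|P_0(w_{h,a_0,1} K_{h,a_0} \theta_0)\|_1 = \boundeddet(1)$. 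Multiplying the four uniform bounds (and handling the $b$-summand as above) yields the stated rate $\bounded(\{nh/\log h^{-1}\}^{-1})$. No step is a genuine obstacle; the only care required is in tracking the $\tau_n^2$ factor in the $b$-summand and in invoking norm equivalence in fixed dimension to pass from the element-wise $\|\cdot\|_\infty$ bounds to multiplicative operator-norm bounds.
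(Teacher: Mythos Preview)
Your proposal is correct and follows essentially the same approach as the paper: bound each of the two summands as a product of four factors, invoke Lemmas~\ref{lm:D0_altform} and~\ref{lm:Dmatrix} (respectively~\ref{lm:Dmatrix_unif}) for the matrix norms, and use a change of variables for $P_0(w_{h,a_0,j}K_{h,a_0}\theta_0)$. Your explicit tracking of the factor $\tau_n^2(nb)^{-1}$ in the $b$-summand and the case split on $\tau$ is slightly more careful than the paper, which simply asserts that all four differences are $\bounded(\{nh\}^{-1/2})$; both arrive at the same conclusion since $\tau_n = h/b$ is bounded under~\ref{cond:bandwidth}.
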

\begin{proof}[\bfseries{Proof of Lemma~\ref{lm:R6}}]
We note that by~\ref{cond:bounded_K} and~\ref{cond:cont_density}, $\left| P_0 \left( w_{h,a_0, 1} K_{h, a_0} \theta_0\right) \right|$, $\left| P_0 \left( w_{b,a_0, 2} K_{b, a_0} \theta_0\right)\right|$, and $\left| P_0 \left( \tilde{w}_{h,a_0, 2} K_{h, a_0} \right)\right|$ are $\boundeddet(1)$, and by Lemma~\ref{lm:D0_altform}, $\left\| \b{D}^{-1}_{0,h, a_0,1} \right\|_{\infty}$ and $\left\| \b{D}^{-1}_{0,b, a_0,2} \right\|_\infty$ are also $\boundeddet(1)$. Hence,
\begin{align*}
    \left|R_{n, h,b, a_0, 6}\right| &\lesssim \left\| \b{D}_{0,h, a_0,1} - \b{D}_{n,h, a_0,1}\right\|_{\infty} \left\| \b{D}^{-1}_{n,h, a_0,1} - \b{D}^{-1}_{0,h, a_0,1}\right\|_\infty \\
    &\qquad+\left\| \b{D}_{0,b, a_0,2} - \b{D}_{n,b, a_0,2} \right\|_\infty \left\| \b{D}^{-1}_{n,b, a_0,2} - \b{D}^{-1}_{0,b, a_0,2}\right\|_\infty \\
    &\qquad +  \left\| \b{D}^{-1}_{n,h, a_0,1} - \b{D}^{-1}_{0,h,a_0,1}\right\|_\infty \left\| (\d{P}_n- P_0) (\tilde{w}_{h,a_0,1} K_{h,a_0}) \right\|_\infty  \\
    &\qquad +  \left\| \b{D}^{-1}_{n,h, a_0,1} - \b{D}^{-1}_{0,h,a_0,1}\right\|_\infty\left\| \b{D}_{0,h, a_0,1} - \b{D}_{n,h, a_0,1}\right\|_{\infty} \\
    &\qquad + \left|c_{n,h,a_0,2} - c_{0,h,a_0,2}\right| \left\| \b{D}^{-1}_{n,b, a_0,2} - \b{D}^{-1}_{0,b, a_0,2}\right\|_\infty 
\end{align*}
By Lemma~\ref{lm:Dmatrix}, each of the differences is $\bounded(\{nh\}^{-1/2})$. Thus,  $R_{n, h,b, a_0, 6} = \bounded(\{nh\}^{-1})$. For the uniform statement, by~\ref{cond:bounded_K} and \ref{cond:holder_smooth_theta}, $\sup_{a_0\in\s{A}_0}\left| P_0 \left( w_{h,a_0, 1} K_{h, a_0} \theta_0\right) \right|$, $\sup_{a_0\in\s{A}_0}\left| P_0 \left( w_{b,a_0, 2} K_{b, a_0} \theta_0\right)\right|$, and $\sup_{a_0 \in \s{A}_0}\left| P_0 \left( \tilde{w}_{h,a_0, 2} K_{h, a_0} \right)\right|$ are all $\boundeddet(1)$. By Lemma~\ref{lm:D0_altform} and~\ref{cond:holder_smooth_theta}, $\sup_{a_0\in\s{A}_0}\left\| \b{D}^{-1}_{0,h, a_0,1} \right\|_{\infty}$ and $\sup_{a_0\in\s{A}_0}\left\| \b{D}^{-1}_{0,b, a_0,2} \right\|_\infty$ are also $\boundeddet(1)$. By Lemma \ref{lm:Dmatrix_unif}, the differences are all $\bounded(\{nh/\log h^{-1}\}^{-1/2})$ uniformly over $a_0 \in \s{A}_0$. We thus conclude $\sup_{a_0\in\s{A}_0}\left|R_{n, h,b, a_0, 6}\right|=\bounded(\{nh/\log h^{-1}\}^{-1})$.
\end{proof}

\clearpage

\section{Analysis of the covariance estimator}
\begin{lemma}\label{lemma:covar}
If~\ref{cond:bounded_K}--\ref{cond:cont_density} hold for $a_0 = u$ and $a_0 = v$, then
\begin{align*}
    h\d{P}_n( \phi_{n,h,b,u}^* \phi_{n,h,b,v}^*) - hP_0( \phi_{\infty,h,b,u}^* \phi_{\infty,h,b,v}^*) = \fasterthan(1).
\end{align*}
If~\ref{cond:bounded_K}--\ref{cond:uniform_entropy_nuisances} and~\ref{cond:unif_nuisance_rate}--\ref{cond:holder_smooth_theta} hold and $nh^5 = \boundeddet(1)$, then
\begin{align*}
    \sup_{u,v \in \s{A}_0} \left| h\d{P}_n( \phi_{n,h,b,u}^* \phi_{n,h,b,v}^*) - hP_0( \phi_{\infty,h,b,u}^* \phi_{\infty,h,b,v}^*) \right|= \bounded(n^{-p})
\end{align*}
for some $p > 0$.
\end{lemma}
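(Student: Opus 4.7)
The plan is to decompose the difference into an empirical-process piece and a nuisance-plug-in piece. Writing
\[
h\d{P}_n( \phi_{n,h,b,u}^* \phi_{n,h,b,v}^*) - hP_0( \phi_{\infty,h,b,u}^* \phi_{\infty,h,b,v}^*) = h(\d{P}_n - P_0)(\phi_{\infty,h,b,u}^* \phi_{\infty,h,b,v}^*) + h\d{P}_n\left(\phi_{n,h,b,u}^* \phi_{n,h,b,v}^* - \phi_{\infty,h,b,u}^* \phi_{\infty,h,b,v}^*\right),
\]
I would control the two summands separately. For the first (pointwise) summand, I would show via a direct second-moment bound that $\n{Var}\{h \phi_{\infty,h,b,u}^* \phi_{\infty,h,b,v}^*\} \lesssim h^{-1}$: the squared-envelope calculations in Lemma~\ref{lm:lindeberg_feller_CLT} already give $h P_0(\phi_{\infty,h,b,a_0}^*)^2 = \boundeddet(1)$, and an essentially identical change-of-variables argument (using the compact support of $K$ and the bounded conditional moments from~\ref{cond:cont_density}) yields $h^2 P_0(\phi_{\infty,h,b,u}^*)^2(\phi_{\infty,h,b,v}^*)^2 = \boundeddet(h^{-1})$. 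Chebyshev then gives $h(\d{P}_n-P_0)(\phi_{\infty,h,b,u}^*\phi_{\infty,h,b,v}^*) = \bounded(\{nh\}^{-1/2}) = \fasterthan(1)$.

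For the second (pointwise) summand, I would split by adding and subtracting $\phi_{\infty,h,b,u}^*\phi_{n,h,b,v}^*$ and apply Cauchy--Schwarz, giving
\[
\bigl|h\d{P}_n(\phi_{n,h,b,u}^* - \phi_{\infty,h,b,u}^*)\phi_{n,h,b,v}^*\bigr| \leq \bigl\{h\d{P}_n (\phi_{n,h,b,u}^* - \phi_{\infty,h,b,u}^*)^2\bigr\}^{1/2}\bigl\{h\d{P}_n(\phi_{n,h,b,v}^*)^2\bigr\}^{1/2},
\]
and symmetrically for the other piece. The second factor is $\sigma_{n,h,b}(v)$, which is $\bounded(1)$ by Theorem~1. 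For the first factor, I would expand $\phi_{n,h,b,u}^* - \phi_{\infty,h,b,u}^*$ into the differences $\Gamma_{n,h,b,u} - \Gamma_{0,h,b,u}$, $\xi_n - \xi_\infty$, $\gamma_{n,h,b,u} - \gamma_{0,h,b,u}$, and the integrated terms, then bound each contribution in $L_2(\d{P}_n)$ using Lemmas~\ref{lm:Dmatrix} and~\ref{lm:D0_altform}, condition~\ref{cond:doubly_robust}, and the uniform bounds on the kernel weights; the pointwise consistency will follow since each piece is $\fasterthan(h^{-1/2})$ times a nuisance or matrix error that vanishes.

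For the uniform statement, I would replace both arguments by a supremum and apply maximal-inequality machinery. The class $\{h\phi_{\infty,h,b,u}^*\phi_{\infty,h,b,v}^* : (u,v) \in \s{A}_0 \times \s{A}_0\}$ is a product of two VC-type classes by Corollary~\ref{cor:phi_infty_VC} (hence VC-type by Lemma 2.6.18 of \cite{vandervaart1996}) with envelope of order $h^{-1}$ (using the almost sure boundedness of $|Y|$ in~\ref{cond:holder_smooth_theta}) and $L_2(P_0)$-bracketing of order $h^{-1/2}$. Theorem~2.14.1 of \cite{vandervaart1996} then yields $\sup_{u,v}|h(\d{P}_n - P_0)(\phi_{\infty,h,b,u}^*\phi_{\infty,h,b,v}^*)| = \bounded(\{nh\}^{-1/2}\sqrt{\log n})$, which is $\bounded(n^{-p})$ for some $p>0$ since $nh^5 = \boundeddet(1)$ forces $nh \gtrsim n^{4/5}$. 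For the plug-in piece, the same Cauchy--Schwarz split reduces the task to $\sup_u h\d{P}_n(\phi_{n,h,b,u}^* - \phi_{\infty,h,b,u}^*)^2 = \bounded(n^{-p})$, and I would obtain this by expanding the difference as above and invoking the uniform rates in~\ref{cond:unif_doubly_robust}, the uniform matrix bound of Lemma~\ref{lm:Dmatrix_unif}, and the uniform maximal inequalities of Lemmas~\ref{lm:local_process} and~\ref{lm:local_integrated_process} to control the empirical-process remainders that arise when converting $h\d{P}_n(\cdot)^2$ bounds to $hP_0(\cdot)^2$ bounds.

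The main obstacle will be obtaining polynomial uniform control of $\sup_u h\d{P}_n(\phi_{n,h,b,u}^* - \phi_{\infty,h,b,u}^*)^2$. The ordinary $\fasterthan(1)$-rate arguments used for first-order consistency suffice for the pointwise claim, but the uniform claim requires quantitative rates: one must track both the nuisance rates from~\ref{cond:unif_doubly_robust}(e) and an $h$-dependent factor from the kernel weights, and verify that when $nh^5 = \boundeddet(1)$ the combined bound remains a power of $n$. I expect the dominant term will come from the $\Gamma_{n,h,b,u}^2 (\psi_n - \psi_\infty)^2$ contribution, whose $L_1(P_0)$ norm is of order $h^{-1}$ times the squared uniform nuisance rate and which, together with~\ref{cond:unif_nuisance_rate}, just barely produces a polynomial-in-$n$ bound.
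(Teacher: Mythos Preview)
Your decomposition is genuinely different from the paper's. The paper does not split into ``empirical process of the fixed product $\phi_{\infty}^*\phi_{\infty}^*$'' plus ``plug-in difference controlled by Cauchy--Schwarz''; instead it expands $h\d{P}_n(\phi_{n,u}^*\phi_{n,v}^*)-hP_0(\phi_{\infty,u}^*\phi_{\infty,v}^*)$ directly, writing each $\phi^*$ as $\Gamma\xi-\gamma+\eta$ and distributing, which yields a dozen mixed terms (some $\d{G}_n$ with the $n$-indexed functions, some $P_0$-differences like $P_0[h\Gamma_0\{\Gamma_0(\xi_n+\xi_\infty)-\gamma_0\}(\xi_n-\xi_\infty)]$, and the $\eta$-terms handled separately). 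Each term is then bounded using the kernel/entropy machinery already in place. Your route is conceptually cleaner and reuses more of the existing lemmas; the paper's route avoids having to control $h\d{P}_n(\phi_n^*-\phi_\infty^*)^2$ as an intermediate object and instead gets term-wise rates directly. Both are viable.

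Two specific issues in your proposal, however, need fixing. First, invoking Theorem~\ref{thm:pointwise} to assert $\sigma_{n,h,b}(v)=\bounded(1)$ is circular: the final statement of that theorem is proved \emph{using} the present lemma. You should instead first establish $h\d{P}_n(\phi_{n,u}^*-\phi_{\infty,u}^*)^2=\fasterthan(1)$, then deduce $h\d{P}_n(\phi_{n,v}^*)^2\le 2h\d{P}_n(\phi_{\infty,v}^*)^2+2h\d{P}_n(\phi_{n,v}^*-\phi_{\infty,v}^*)^2=\bounded(1)+\fasterthan(1)$, where $h\d{P}_n(\phi_{\infty,v}^*)^2=\bounded(1)$ follows from the first summand of your decomposition together with Lemma~\ref{lm:lindeberg_feller_CLT}.

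Second, your justification ``$nh^5=\boundeddet(1)$ forces $nh\gtrsim n^{4/5}$'' has the inequality reversed: $nh^5=\boundeddet(1)$ gives $h\lesssim n^{-1/5}$, hence $nh\lesssim n^{4/5}$, which is an \emph{upper} bound and says nothing about $(nh)^{-1/2}$ being polynomially small. The polynomial rate for the empirical-process piece comes instead from the lower bound on $h$ implicit in~\ref{cond:unif_nuisance_rate} (e.g.\ $nh^3\to\infty$ gives $h^{-1}=\fasterthandet(n^{1/3})$, so $n^{-1/2}h^{-1}=\fasterthandet(n^{-1/6})$). Relatedly, a direct application of Theorem~2.14.1 with the envelope of order $h^{-1}$ gives $\sup_{u,v}|h(\d{P}_n-P_0)(\phi_{\infty,u}^*\phi_{\infty,v}^*)|=\bounded(n^{-1/2}h^{-1})$, not the tighter $\bounded(\{nh\}^{-1/2}\sqrt{\log n})$ you quote; the cruder rate is still polynomial under~\ref{cond:unif_nuisance_rate}, so the conclusion survives, but the stated bound would require a local maximal inequality exploiting that individual functions have $L_2(P_0)$-norm of order $h^{-1/2}$ rather than just envelope $h^{-1}$.
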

\begin{proof}[\bfseries{Proof of Lemma~\ref{lemma:covar}}]
Analysis of this expression uses many of the techniques developed throughout this document. Hence, for brevity, we omit some of the details in this proof.

For convenience, we define
\begin{align*}
    \eta_{n,h,b,u}(w) &:= \int \Gamma_{n,h,b,u}(a)\left[\mu_n(a,w) - \smallint \mu_n \, dQ_n\right] \, dF_n(a), \text{ and} \\
    \eta_{0,h,b,u}(w) &:= \int \Gamma_{0,h,b,u}(a)\left[\mu_\infty(a,w) - \smallint \mu_\infty \, dQ_0\right] \, dF_0(a).
\end{align*}
Up to terms with $u$ and $v$ swapped, we then expand $h\d{P}_n( \phi_{n,h,b,u}^* \phi_{n,h,b,v}^*) - hP_0( \phi_{\infty,h,b,u}^* \phi_{\infty,h,b,v}^*)$ as
\begin{align}
 &n^{-1/2}\d{G}_n\left(h\Gamma_{n,h,b,u} \Gamma_{n,h,b,v}\xi_n^2\right)+ n^{-1/2}\d{G}_n\left( h \gamma_{n,h,b,u} \gamma_{n,h,b,v}\right) -  n^{-1/2}\d{G}_n \left(h\Gamma_{n,h,b,u}\gamma_{n,h,b,v} \xi_n \right)\nonumber \\
 &\qquad + P_0\left[h\left(\Gamma_{n,h,b,u} - \Gamma_{0,h,b,u}\right)\left( \Gamma_{n,h,b,v}\xi_n - \gamma_{n,h,b,v}\right) \xi_n \right]+P_0\left[h \Gamma_{0,h,b,u} \left\{ \Gamma_{0,h,b,v}\left(\xi_n + \xi_\infty\right) - \gamma_{0,h,b,v} \right\} \left\{ \xi_n - \xi_\infty \right\}\right] \nonumber \\
 &\qquad - P_0\left[ h\left( \Gamma_{0,h,b,u} \xi_n - \gamma_{0,h,b,u}\right) \left(\gamma_{n,h,b,v} - \gamma_{0,h,b,v}\right) \right] \nonumber   \\
  &\qquad + \d{P}_n\left(  h \eta_{n,h,b,u}\eta_{n,h,b,v}\right) + \d{P}_n \left( h \Gamma_{n,h,b,u} \eta_{n,h,b,v} \xi_n \right) - \d{P}_n \left( h \gamma_{n,h,b,u} \eta_{n,h,b,v}  \right)  \nonumber\\ 
 &\qquad - P_0 \left(h\eta_{0,h,b,u}\eta_{0,h,b,v} \right) - P_0 \left(h\Gamma_{0,h,b,u}\eta_{0,h,b,v} \xi_\infty \right) +P_0\left(h\gamma_{0,h,b,u} \eta_{0,h,b,v}\right).  \label{eq:emp_process_covar}
\end{align}
For the first term in~\eqref{eq:emp_process_covar}, by expanding $\Gamma_{n,h,b,u} \Gamma_{n,h,b,v}$ and Lemmas~\ref{lm:D0_altform} and~\ref{lm:Dmatrix}, $\d{G}_n\left(h\Gamma_{n,h,b,u} \Gamma_{n,h,b,v}\xi_n^2\right)$ can be decomposed into $\bounded(1)$ times terms of the form 
\begin{align*}
\d{G}_n\left(h w_{h,u,j} w_{h,v, k} K_{h,u} K_{b,v}\xi_n^2\right).
\end{align*}
By the bounded fourth moment of $Y$ and condition~\ref{cond:uniform_entropy_nuisances}, the class
\[ \left\{h w_{h,u,j} w_{h,v, k} K_{h,u} K_{b,v}\xi_n^2 : \mu_n \in \s{F}_\mu, g_n \in \s{F}_g\right\}\]
has finite uniform entropy integral and an envelope that is a square-integrable  function times $h^{-1}$. Hence, by Theorem~2.14.1 of \cite{vandervaart1996}, $\d{G}_n\left(h\Gamma_{n,h,b,u} \Gamma_{n,h,b,v}\xi_n^2\right) = \bounded(h^{-1})$. Similarly, by conditions~\ref{cond:bounded_K} and~\ref{cond:uniform_entropy_nuisances}, the class
\[ \left\{h w_{h,u,j} w_{h,v, k} K_{h,u} K_{b,v}\xi_n^2 : \mu_n \in \s{F}_\mu, g_n \in \s{F}_g, u \in \s{A}_0, v \in \s{A}_0\right\}\]
possesses finite uniform entropy integral and under~\ref{cond:holder_smooth_theta}, an envelope that is bounded up to a constant by $h^{-1}$. Therefore, by Lemmas~\ref{lm:D0_altform} and~\ref{lm:Dmatrix_unif}, $\sup_{u, v \in \s{A}_0}|\d{G}_n\left(h\Gamma_{n,h,b,u} \Gamma_{n,h,b,v}\xi_n^2\right)|= \bounded(h^{-1})$ as well. 
Using an analogous argument, we can show that $\d{G}_n\left( h \gamma_{n,h,b,u} \gamma_{n,h,b,v}\right)$ and $\d{G}_n \left(h\Gamma_{n,h,b,u} \gamma_{h,h,b,v}\xi_n \right)$, which are the second and third terms in~\eqref{eq:emp_process_covar} are $\bounded(h^{-1})$ both pointwise and uniformly.

The fourth term in~\eqref{eq:emp_process_covar} can be written as a constant times a sum of terms of the form 
\[h \left(\b{D}_{n,h,u,j}^{-1} - \b{D}_{0,h,u,j}^{-1}\right)P_0 \left[ w_{h, u, k} K_{h,u}\left( \Gamma_{n,h,b,v}\xi_n - \gamma_{n,h,b,v}\right) \xi_n \right].  \]
By Lemma~\ref{lm:Dmatrix}, $\| \b{D}_{n,h,u,j}^{-1} - \b{D}_{0,h,u,j}^{-1}\|_\infty = \bounded(\{nh\}^{-1/2})$. Using a change of variables, we can also show that $\| w_{h, u, k} K_{h,u}\xi_n \|_{L_2(P_0)} = \bounded(h^{-1/2})$ and $\| \Gamma_{n,h,b,v}\xi_n - \gamma_{n,h,b,v}\|_{L_2(P_0)} = \bounded(h^{-1/2})$. Hence, under the pointwise conditions this term is $\bounded(\{nh\}^{-1/2}) = \fasterthan(1)$. Under the uniform results, the same rates hold except that Lemma~\ref{lm:Dmatrix_unif} includes an extra $\log h^{-1}$ term. Hence, under the uniform conditions, this term is $\bounded(\{nh / \log h^{-1}\}^{-1/2}) = \bounded(h)$ uniformly over $u,v \in \s{A}_0$. A similar analysis applies to the sixth term in~\eqref{eq:emp_process_covar}.

For the fifth term in~\eqref{eq:emp_process_covar}, we can show that $\| \Gamma_{0,h,b,u}(\xi_n - \xi_\infty)\|_{L_2(P_0)} = \fasterthan(h^{-1/2})$ and $\|\Gamma_{0,h,b,v}\left(\xi_n + \xi_\infty\right) - \gamma_{0,h,b,v} \|_{L_2(P_0)} = \bounded(h^{-1/2})$, so the pointwise result follows by the Cauchy-Schwarz inequality. For the uniform result, by~\ref{cond:unif_nuisance_rate}(e), 
\[ \sup_{u \in\s{A}_0} \| \Gamma_{0,h,b,u}(\xi_n - \xi_\infty)\|_{L_2(P_0)} = \fasterthan\left(h^{-1/2}h^{\frac{V}{2(2-V)}}\{\log n\}^{-\frac{1}{2-V}}\right),\]
so we obtain a uniform rate of $\fasterthan\left(h^{\frac{V}{2(2-V)}}\{\log n\}^{-\frac{1}{2-V}}\right)$ for this term.

For the seventh term in~\eqref{eq:emp_process_covar}, by the boundedness of $\mu_n$ we have for any $u,v$
\begin{align*}
\left|\d{P}_n\left(  h \eta_{n,h,b,u} \eta_{n,h,b,v} \right)\right| &\lesssim h   \int \left| \Gamma_{n,h,b,u}\right| \, dF_n \int \left| \Gamma_{n,h,b,v}\right| \, dF_n.
\end{align*}
As in the proof of Lemmas~\ref{lm:Dmatrix} and~\ref{lm:Dmatrix_unif}, we can show that each of these terms is $\bounded(1)$ pointwise and uniformly under the appropriate conditions. The same holds when $\d{P}_n$ is replaced with $P_0$ above. Similarly, for the eighth term in~\eqref{eq:emp_process_covar}, we have
\begin{align*}
\left|\d{P}_n\left(  h \Gamma_{n,h,b,u}\eta_{n,h,b,v} \xi_n \right)\right| &\lesssim h   \d{P}_n \left| \Gamma_{n,h,b,u}\xi_n\right|  \int \left| \Gamma_{n,h,b,v}\right| \, dF_n.
\end{align*}
We can show that both terms are $\bounded(1)$ pointwise and uniformly. The remainder of the terms in~\eqref{eq:emp_process_covar} can be handled using a similar argument.

We have now shown that 
\begin{align*}
h P_0 \left[ \left(\phi_{n,h,b,u}^* - \phi_{\infty,h,b,u}^*\right)\phi_{n,h,b,v}^*\right] = \fasterthan(1)
\end{align*}
under the pointwise conditions, and 
\begin{align*}
\sup_{u,v \in \s{A}_0} h \left| P_0 \left[ \left(\phi_{n,h,b,u}^* - \phi_{\infty,h,b,u}^*\right)\phi_{n,h,b,v}^*\right]\right|  = \fasterthan\left(n^{-p}\right)
\end{align*}
for some $p > 0$ since $nh^5 = \boundeddet(1)$.

\end{proof}

\clearpage

\section{Lemmas supporting uniform result}

We now present a series of supporting lemmas towards the construction of uniform bands for $\theta_0(a_0)$.

\begin{lemma}\label{lm:unif_bounded_variance}
Let $\sigma^2_{\infty,h,b}(a_0) := hP_0(\phi^{*}_{\infty, h, b, a_0})^2$. If \ref{cond:bounded_K}--\ref{cond:bandwidth},~\ref{cond:unif_doubly_robust}(a), and~\ref{cond:holder_smooth_theta} hold, then there exist constants $c, C \in (0,\infty)$ such that $c \leq \sigma^2_{\infty,h,b}(a_0) \leq C $ for all $a_0 \in \s{A}_0$ and all h small enough.
\end{lemma}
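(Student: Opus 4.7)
The plan is to decompose $\phi^{*}_{\infty, h, b, a_0} = A_{h,b,a_0} + B_{h,b,a_0} + C_{h,b,a_0}$, where $A_{h,b,a_0} := \Gamma_{0,h,b,a_0}(\xi_\infty - \theta_0)$, $B_{h,b,a_0} := \Gamma_{0,h,b,a_0}\theta_0 - \gamma_{0,h,b,a_0}$, and $C_{h,b,a_0}(w) := \int \Gamma_{0,h,b,a_0}(\bar{a})\{\mu_\infty(\bar{a},w) - \int \mu_\infty(\bar{a},\bar{w})\,dQ_0(\bar{w})\}\,dF_0(\bar{a})$. Expanding $\sigma^2_{\infty, h, b}(a_0) = hP_0(A_{h,b,a_0} + B_{h,b,a_0} + C_{h,b,a_0})^2$ then yields three squared-moment terms plus three cross-products. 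The main step is to show that the leading piece $hP_0[A_{h,b,a_0}^2]$ converges uniformly over $a_0 \in \s{A}_0$ to the continuous limit $V_{K,\tau} \sigma_0^2(a_0)/f_0(a_0)$, and that every other contribution is uniformly $\fasterthandet(1)$.

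For the leading piece I would repeat the pointwise computation from Lemma~\ref{lm:lindeberg_feller_CLT}, but substitute the uniform statement of Lemma~\ref{lm:D0_altform} (valid under~\ref{cond:holder_smooth_theta}), which gives $\sup_{a_0 \in \s{A}_0}\|\b{D}_{0,h,a_0,1}^{-1} - f_0(a_0)^{-1}\b{S}_2^{-1}\|_\infty = \boundeddet(h)$ and the analogous bound at bandwidth $b$. After the change of variables $u = (a-a_0)/h$, the integrand becomes a bounded function of $u$ multiplied by $\sigma_0^2(a_0 + uh) f_0(a_0 + uh)$; the uniform continuity of $\sigma_0^2$ and $f_0$ on the compact enlargement $\s{A}_{\delta_3}$ (granted by~\ref{cond:holder_smooth_theta}(b) and~\ref{cond:holder_smooth_theta}(d)) permits pulling the supremum over $a_0$ inside the integral, yielding the claimed uniform convergence. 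For the remaining summands, the pointwise derivations already carried out in Lemma~\ref{lm:lindeberg_feller_CLT}, together with the H\"{o}lder continuity of $\theta_0^{(2)}$, Lipschitz continuity of $f_0$, and the uniform boundedness of $\mu_\infty$ from~\ref{cond:unif_doubly_robust}(a), upgrade to $\sup_{a_0\in\s{A}_0} hP_0[B_{h,b,a_0}^2] = \boundeddet(h^2)$ and $\sup_{a_0\in\s{A}_0} hP_0[C_{h,b,a_0}^2] = \boundeddet(h)$. Controlling the three cross-products by the Cauchy--Schwarz inequality then yields $\sup_{a_0 \in \s{A}_0}|\sigma^2_{\infty,h,b}(a_0) - hP_0[A_{h,b,a_0}^2]| = \boundeddet(h^{1/2})$.

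Combining the two pieces gives $\sup_{a_0 \in \s{A}_0}|\sigma^2_{\infty,h,b}(a_0) - V_{K,\tau} \sigma_0^2(a_0)/f_0(a_0)| = \fasterthandet(1)$. The limit function is continuous on the compact set $\s{A}_0$, bounded above because $f_0$ is bounded away from zero by~\ref{cond:holder_smooth_theta}(b) and $\sigma_0^2$ is continuous (hence bounded) on $\s{A}_{\delta_3}$, and bounded below by a strictly positive constant because $V_{K,\tau} > 0$ (Lemma~\ref{lm:lindeberg_feller_CLT}), $f_0$ is bounded above on $\s{A}_{\delta_3}$, and $\sigma_0^2$ is positive on $\s{A}_0$ (the implicit nondegeneracy required for the asserted positive $c$). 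Uniform convergence combined with these bounds produces the desired constants $c, C \in (0, \infty)$ for every sufficiently small $h$. The main technical hurdle is the first step: establishing genuine uniformity in the convergence of $hP_0[A_{h,b,a_0}^2]$; once the uniform version of Lemma~\ref{lm:D0_altform} is available, this reduces to a routine uniform-continuity argument on the compact enlargement $\s{A}_{\delta_3}$.
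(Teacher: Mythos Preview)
Your proposal is correct and follows the same overall strategy as the paper: establish that $\sigma^2_{\infty,h,b}(a_0)$ converges uniformly over $\s{A}_0$ to the continuous limit $V_{K,\tau}\sigma_0^2(a_0)/f_0(a_0)$, and then bound that limit above and below on the compact set. The paper's own proof is much terser---it simply cites Lemma~\ref{lm:lindeberg_feller_CLT} for the pointwise convergence and then asserts that compactness of $\s{A}_0$ upgrades this to uniform convergence---whereas you work out the uniformity explicitly via the $A+B+C$ decomposition, the uniform statement of Lemma~\ref{lm:D0_altform}, and uniform continuity on the enlargement $\s{A}_{\delta_3}$. In effect your argument supplies the details the paper leaves implicit: pointwise convergence plus compactness does not by itself yield uniform convergence, and the honest justification is exactly the uniform-error-bound computation you outline.
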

\begin{proof}[\bfseries{Proof of Lemma~\ref{lm:unif_bounded_variance}}]
We note that~\ref{cond:holder_smooth_theta} implies that~\ref{cond:cont_density} holds for all $a_0 \in \s{A}_0$, and~\ref{cond:unif_doubly_robust}(a) implies that~\ref{cond:doubly_robust}(a) holds for all $a_0 \in \s{A}_0$. Therefore, the conditions of Lemma~\ref{lm:lindeberg_feller_CLT} are met for every $a_0 \in \s{A}_0$, so $\sigma_{\infty,h,b}^2(a_0)$ converges to $V_{K, \tau}\sigma_0^2(a_0) / f_0(a_0)$ for each $a_0 \in \s{A}_0$. Since $\s{A}_0$ is compact, this convergence is also uniform in $a_0$. By Lemma~\ref{lm:lindeberg_feller_CLT}, $V_{K,\tau} \in (0, \infty)$, so by \ref{cond:holder_smooth_theta}, there exist $c, C \in (0,\infty)$ such that $c \leq V_{K,\tau}\sigma_0^2(a_0) / f_0(a_0) \leq C$ for all $a_0 \in \s{A}_0$. Hence, for all $h$ small enough, $c \leq \sigma_{\infty,h,b}^2(a_0) \leq C$ for all $a_0 \in \s{A}_0$.
\end{proof}


\begin{lemma}\label{lm:phi_finite_moments} 
If~\ref{cond:bounded_K},~\ref{cond:unif_doubly_robust}(a), and~\ref{cond:holder_smooth_theta} hold, then $\sup_{a_0 \in \s{A}_0}P_0|h\phi^*_{\infty, h,b,a_0}|^k \lesssim h$ for all $h$ small enough.
\end{lemma}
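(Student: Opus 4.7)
The plan is to decompose $\phi^*_{\infty,h,b,a_0}$ into its three additive components,
\[
\phi^*_{\infty,h,b,a_0} = \Gamma_{0,h,b,a_0}\xi_\infty - \gamma_{0,h,b,a_0} + \int \Gamma_{0,h,b,a_0}(\bar a)\left\{\mu_\infty(\bar a, w) - \smallint \mu_\infty\, dQ_0\right\}dF_0(\bar a),
\]
apply $|x_1+x_2+x_3|^k \leq 3^{k-1}\sum |x_i|^k$, and bound each piece uniformly in $a_0 \in \s{A}_0$. The target reduces to showing $\sup_{a_0 \in \s{A}_0} P_0|\phi^*_{\infty,h,b,a_0}|^k \lesssim h^{1-k}$ for $h$ small, after which multiplication by $h^k$ yields the claim.

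The key pointwise bound on $\Gamma_{0,h,b,a_0}$ is
\[
|\Gamma_{0,h,b,a_0}(a)| \;\lesssim\; h^{-1} I_{\{|a-a_0|\leq h\}} + b^{-1} I_{\{|a-a_0|\leq b\}}
\]
uniformly in $a_0 \in \s{A}_0$, obtained from the compact support and boundedness of $K$ in~\ref{cond:bounded_K}, the uniform bound $\sup_{a_0 \in \s{A}_0}\|\b{D}_{0,h,a_0,1}^{-1}\|_\infty = \boundeddet(1)$ and the analogous statement for $\b{D}_{0,b,a_0,3}^{-1}$ supplied by the uniform half of Lemma~\ref{lm:D0_altform} under~\ref{cond:holder_smooth_theta}(b), plus the fact that $h/b$ is bounded above by~\ref{cond:bandwidth} (which controls the prefactor $c_2\tau_n^2/b$ in the second summand of $\Gamma$). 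The change of variables $u = (a-a_0)/h$ and the boundedness of $f_0$ on $\s{A}_{\delta_3}$ then give $\sup_{a_0}P_0|\Gamma_{0,h,b,a_0}|^k \lesssim h^{1-k} + b^{1-k} \lesssim h^{1-k}$ for $k \geq 1$, since $h/b$ bounded above implies $b^{1-k} \lesssim h^{1-k}$ whenever $k \geq 1$.

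Since $|Y|$ is almost surely bounded by~\ref{cond:holder_smooth_theta}(c), $\mu_\infty$ is uniformly bounded by~\ref{cond:uniform_entropy_nuisances}(a), and $g_\infty$ is uniformly bounded below by the same condition, the limiting pseudo-outcome $\xi_\infty$ is uniformly bounded. Thus $\sup_{a_0}P_0|\Gamma_{0,h,b,a_0}\xi_\infty|^k \lesssim h^{1-k}$. The centering function $\gamma_{0,h,b,a_0}$ has the same structure as $\Gamma_{0,h,b,a_0}$ with an extra $w^T_{\cdot,a_0,\cdot}\b{D}^{-1}P_0(w_{\cdot,a_0,\cdot} K_{\cdot,a_0}\theta_0)$ factor, and since $\theta_0$ is bounded on $\s{A}_{\delta_3}$ by~\ref{cond:holder_smooth_theta}(a) this factor is $\boundeddet(1)$ uniformly in $a_0$; hence $\sup_{a_0}P_0|\gamma_{0,h,b,a_0}|^k \lesssim h^{1-k}$ by the same calculation. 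Finally, the integrated term is bounded in absolute value by $2\|\mu_\infty\|_\infty \int|\Gamma_{0,h,b,a_0}|\, dF_0$, and a further change of variables shows $\sup_{a_0}\int|\Gamma_{0,h,b,a_0}|\,dF_0 = \boundeddet(1)$, so this piece contributes $\boundeddet(1) \lesssim h^{1-k}$ for $h$ small.

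Adding the three contributions yields $\sup_{a_0 \in \s{A}_0}P_0|\phi^*_{\infty,h,b,a_0}|^k \lesssim h^{1-k}$, and multiplying by $h^k$ gives the stated bound. No step is genuinely hard; the only thing to watch carefully is that every constant in the pointwise bounds on $\Gamma$, $\gamma$, and the integrated term is uniform over $a_0 \in \s{A}_0$, which is why we invoke the uniform half of Lemma~\ref{lm:D0_altform} and the enlargement $\s{A}_{\delta_3}$ in~\ref{cond:holder_smooth_theta} rather than the pointwise analogues used in Lemma~\ref{lm:lindeberg_feller_CLT}.
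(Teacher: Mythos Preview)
Your proposal is correct and follows essentially the same approach as the paper: decompose $\phi^*_{\infty,h,b,a_0}$ into its three additive pieces, use the uniform $\boundeddet(1)$ bounds on $\b{D}_{0,h,a_0,1}^{-1}$ and $\b{D}_{0,b,a_0,3}^{-1}$ from Lemma~\ref{lm:D0_altform}, the almost-sure boundedness of $\xi_\infty$, and a change of variables. The only cosmetic differences are that the paper applies Minkowski's inequality to $[P_0|h\phi^*|^k]^{1/k}$ rather than the power-mean inequality to $P_0|\phi^*|^k$, and the paper handles the integrated term via Jensen's inequality (bounding its $k$th moment by $\int|h\Gamma|^k\,dF_0\lesssim h$) rather than your pointwise bound $\int|\Gamma|\,dF_0=\boundeddet(1)$; your route actually gives a slightly sharper $h^k$ for that piece, but both suffice.
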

\begin{proof}[\bfseries{Proof of Lemma~\ref{lm:phi_finite_moments}}]
    First by the triangle inequality, we have that 
    \begin{align}
        \sup_{a_0 \in \s{A}_0}\left[P_0|h\phi^*_{\infty, h,b,a_0}|^k\right]^{1/k} &\leq \sup_{a_0 \in \s{A}_0}\left[P_0 \left|h\Gamma_{0,h, b, a_0}\xi_\infty\right|^k\right]^{1/k} + \sup_{a_0 \in \s{A}_0}\left[P_0 \left|h\gamma_{0, h, b, a_0}\right|^k\right]^{1/k}\nonumber\\
        &\qquad + \sup_{a_0 \in \s{A}_0}\left[P_0\left|h\int \Gamma_{0,h, b, a_0}(\bar{a})\left\{ \mu_\infty(\bar{a},w) - \int \mu_\infty(\bar{a}, \bar{w}) \, dQ_0(\bar{w}) \right\}\, dF_0(\bar{a})\right|^k\right]^{1/k}. \label{eq:expr1}
    \end{align}
    We show that three terms on the right hand side of the above display are bounded up to a constant by $h^{1/k}$, which implies that $\sup_{a_0 \in \s{A}_0}P_0|h\phi^*_{\infty, h,b,a_0}|^k$ is bounded up to a constant by $h$. Noting that $|\xi_\infty|$ is $P_0$-almost surely uniformly bounded by~\ref{cond:unif_doubly_robust}(a) and~\ref{cond:holder_smooth_theta}, we have
    \begin{align*}
        \sup_{a_0 \in \s{A}_0}\left[P_0 \left|h\Gamma_{0,h, b, a_0}\xi_\infty\right|^k\right]^{1/k}&\lesssim \sup_{a_0 \in \s{A}_0}\left[\int\left|h\Gamma_{0,h, b, a_0}(a)\right|^k\, dF_0(a)\right]^{1/k} \\ 
        & \leq \sup_{a_0 \in \s{A}_0}\left\{\int\left|h e_1^T \b{D}_{0,h,a_0,1}^{-1} w_{h,a_0,1}(a) K_{h,a_0}(a)\right|^k \,dF_0(a)\right\}^{1/k} \\
        &\qquad + \sup_{a_0 \in \s{A}_0}\left\{\int\left|h c_{0,h,a_0,2} \tau_n^2e_3^T \b{D}_{0,b,a_0,2}^{-1} w_{b,a_0,2}(a) K_{b,a_0}(a)\right|^k \,dF_0(a)\right\}^{1/k}\\
        &\lesssim h^{1/k} \sup_{a_0 \in \s{A}_0} \left|e_1^T \b{D}_{0,h,a_0,1}^{-1} \b{1}\right| + h^{1/k} \tau_n^{-1/k} \sup_{a_0 \in \s{A}_0}\left| c_{0,h,a_0,2} \tau_n^2 e_3^T\b{D}_{0,b,a_0,2}^{-1} \b{1}\right|  
    \end{align*}
    By Lemma~\ref{lm:D0_altform} and~\ref{cond:holder_smooth_theta}, $\sup_{a_0 \in \s{A}_0} |e_1^T \b{D}_{0,h,a_0,1}^{-1}|$ and $\sup_{a_0 \in \s{A}_0} |c_{0,h,a_0,2} \tau_n^2e_3^T \b{D}_{0,b,a_0,2}^{-1}|$ are both $\boundeddet(1)$. 
    Therefore, the first term on the right hand side of~\eqref{eq:expr1} is bounded up to a constant by $h^{1/k}$ as $h \longrightarrow 0$. The second term on the right hand side of~\eqref{eq:expr1} can similarly be shown to be bounded up to a constant by $h^{1/k}$ using Lemma~\ref{lm:D0_altform} and the fact that $\mu_0$ is uniformly bounded since $|Y|$ is $P_0$-almost surely uniformly bounded by~\ref{cond:holder_smooth_theta}.
   
    Finally, the last term  of~\eqref{eq:expr1} can be bounded using Jensen's inequality as follows:
    \begin{align*}
        &\sup_{a_0 \in \s{A}_0}\int\left|h\int \Gamma_{0,h, b, a_0}(\bar{a})\left\{ \mu_\infty(\bar{a},w) - \smallint \mu_\infty(\bar{a}, \bar{w}) \, dQ_0(\bar{w}) \right\}\,dF_0(\bar{a})\right|^k \, dQ_0(w) 
        \\
        &\qquad \leq \sup_{a_0 \in \s{A}_0}\iint \left|h\Gamma_{0,h, b, a_0}(\bar{a})\left\{ \mu_\infty(\bar{a},w) - \int \mu_\infty(\bar{a}, \bar{w}) \, dQ_0(\bar{w}) \right\} \right|^k \, dF_0(\bar{a}) \, dQ_0(w) \\
        &\qquad\lesssim \sup_{a_0 \in \s{A}_0}\int \left|h\Gamma_{0,h, b, a_0}(\bar{a})\right|^k \, dF_0(\bar{a}),
    \end{align*}
    where the final inequality follows from the uniform bound on $\mu_\infty$ assumed by~\ref{cond:unif_doubly_robust}(a). The final integral is bounded up to a constant by $h$ as shown above.
\end{proof}

\begin{lemma}\label{lm:subgaussian}
Let $\{Z_{\infty, h,b}(a_0): a_0 \in \s{A}_0\}$ be the mean-zero Gaussian process  with covariance function
\[\Sigma_{\infty,h,b} : (u,v) \mapsto hP_0(\phi^*_{\infty, h, b, u }\phi^*_{\infty, h, b, v}) / [\sigma_{\infty,h,b}(u)\sigma_{\infty,h,b}(v)].\] 
If~\ref{cond:bounded_K}--\ref{cond:bandwidth},~\ref{cond:unif_doubly_robust}(a), and~\ref{cond:holder_smooth_theta} hold, then $\{Z_{\infty,h,b} : a_0 \in \s{A}_0\}$ is tight in $\ell^\infty(\s{A}_0)$, $E_0\left[ \sup_{a_0\in \s{A}_0} |Z_{\infty, h,b}(a_0)|\right] \leq  C\sqrt{\log(1/h)}$ for a constant $C$ not depending on $h$, and 
\[E\left[ \sup_{|u-v| < \delta} \left| Z_{\infty,h,b}(u) -Z_{\infty,h,b}(v)\right|\right] \leq C'h^{-1/2}([2\delta] \wedge h)^{1/2} \left[ \log\frac{|\s{A}_0|}{2([2\delta] \wedge h)}\right]^{1/2}\]
for all $[2\delta] \wedge h \leq |\s{A}_0| / 8$ and a constant $C'$ not depending on $h$.
\end{lemma}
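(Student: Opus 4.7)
The plan is to study the standardized Gaussian process through its natural pseudo-metric $d_{h,b}(u,v)^2 := E[(Z_{\infty,h,b}(u) - Z_{\infty,h,b}(v))^2] = 2 - 2\Sigma_{\infty,h,b}(u,v)$, prove the square-root bound
\[
d_{h,b}(u,v)^2 \lesssim h^{-1}(|u-v| \wedge h),
\]
and then derive all three assertions from Dudley's entropy inequality applied to this pseudo-metric. Lemma~\ref{lm:unif_bounded_variance} guarantees $\sigma_{\infty,h,b}(u)$ is bounded above and away from zero uniformly in $u \in \s{A}_0$, which lets us pass between $E[(\phi^*_u - \phi^*_v)^2]$ and $d_{h,b}(u,v)^2$.

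The heart of the proof is the metric bound. First I would write
\[
\phi_{\infty,h,b,u}^* - \phi_{\infty,h,b,v}^* = (\Gamma_{0,h,b,u} - \Gamma_{0,h,b,v})\xi_\infty - (\gamma_{0,h,b,u} - \gamma_{0,h,b,v}) + \int (\Gamma_{0,h,b,u} - \Gamma_{0,h,b,v})(\mu_\infty - \smallint \mu_\infty\,dQ_0)\,dF_0
\]
and analyze each summand. For the kernel-based piece, the change of variables $t = (a-u)/h$ gives
\[
\int [K_{h,u}(a) - K_{h,v}(a)]^2\,da = 2h^{-1}\bigl[c_0^* - \smallint K(t)K(t + (u-v)/h)\,dt\bigr],
\]
and Lipschitz continuity of $K$ (condition~\ref{cond:bounded_K}) yields $c_0^* - \smallint K(t)K(t+s)\,dt \lesssim |s| \wedge 1$. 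Analogous computations handle the polynomial-weighted kernels $w_{h,u,1}K_{h,u}$ and $w_{b,u,3}K_{b,u}$ and the mixed cross-products. Combined with uniform boundedness of $\b{D}_{0,h,u,1}^{-1}$ and $\b{D}_{0,b,u,3}^{-1}$ from Lemma~\ref{lm:D0_altform}, and the uniform boundedness of $f_0$, $\sigma_0^2$, $\theta_0$, $\mu_\infty$, and $1/g_\infty$ from \ref{cond:holder_smooth_theta} and \ref{cond:unif_doubly_robust}(a), this leads to $hP_0(\phi^*_u - \phi^*_v)^2 \lesssim h^{-1}(|u-v| \wedge h)$. Lemma~\ref{lm:phi_finite_moments} supplies $d_{h,b}(u,v) \leq \sqrt{2} + o(1)$ globally, so the square-root bound $d_{h,b}(u,v) \lesssim h^{-1/2}(|u-v|\wedge h)^{1/2}$ holds on all of $\s{A}_0 \times \s{A}_0$.

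This bound immediately yields $N(\epsilon, \s{A}_0, d_{h,b}) \lesssim |\s{A}_0|/(h\epsilon^2)$ for $\epsilon \in (0, \sqrt{2}]$. Dudley's entropy bound for Gaussian suprema then gives
\[
E_0\!\left[\sup_{a_0 \in \s{A}_0} |Z_{\infty,h,b}(a_0)|\right] \lesssim \int_0^{\sqrt{2}} \sqrt{\log(|\s{A}_0|/(h\epsilon^2))}\,d\epsilon \lesssim \sqrt{\log(1/h)},
\]
after substituting $\eta = h^{1/2}\epsilon$ and using the standard estimate $\int_0^x \sqrt{\log(M/\eta)}\,d\eta \lesssim x \sqrt{\log(M/x)}$ for $x \leq M/2$. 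For the modulus, the analogous Dudley bound for $E[\sup_{d(u,v) \leq \delta'}|Z(u)-Z(v)|]$ with $\delta' \asymp h^{-1/2}((2\delta)\wedge h)^{1/2}$ and the same substitution produces exactly the claimed $C' h^{-1/2}([2\delta]\wedge h)^{1/2}\{\log(|\s{A}_0|/(2([2\delta]\wedge h)))\}^{1/2}$ bound. Tightness in $\ell^\infty(\s{A}_0)$ follows from the modulus bound, which ensures almost surely uniformly $d_{h,b}$-continuous sample paths, and total boundedness of $(\s{A}_0, d_{h,b})$ (a consequence of the metric bound and compactness of $\s{A}_0$); this is the Gaussian tightness criterion of Theorem~1.5.7 in \cite{vandervaart1996}.

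The main obstacle is the careful pseudo-metric computation: controlling all the cross terms between the $K_{h,u}$, $K_{b,u}$, $K_{h,v}$, $K_{b,v}$ components of $\Gamma_{0,h,b,u}$ and $\Gamma_{0,h,b,v}$, together with $\gamma$ and the integrated $\mu_\infty$ contribution, while simultaneously tracking the regimes $|u-v| \leq h$ and $|u-v| > h$. Because $b \asymp h$ (as $\tau_n \to \tau \in [0,\infty)$ with the convention that the $K_{b,u}$ term vanishes when $\tau = 0$), the kernel bandwidths are comparable, so the single square-root rate in $h$ governs the global metric; once this is established, the Dudley calculus is routine.
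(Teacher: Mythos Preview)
Your proposal is correct and follows essentially the same route as the paper: bound the intrinsic pseudo-metric by $h^{-1/2}(|u-v|\wedge h)^{1/2}$ via the decomposition of $\phi_{\infty,h,b,u}^* - \phi_{\infty,h,b,v}^*$ and Lipschitz continuity of $K$, then feed the resulting covering-number estimate into Dudley's inequality (the paper cites Corollary~2.2.8 of \cite{vandervaart1996}) for the supremum, modulus, and tightness conclusions. One point your sketch glosses over that the paper handles explicitly: in bounding $\Gamma_{0,h,b,u} - \Gamma_{0,h,b,v}$ you must also control the difference $\b{D}_{0,h,u,1}^{-1} - \b{D}_{0,h,v,1}^{-1}$ (not just the uniform size of each inverse), which the paper does by a telescoping argument showing $\|\b{D}_{0,h,u} - \b{D}_{0,h,v}\|_\infty \lesssim h^{-1}|u-v|$; this contributes a term of order $h^{-1}|u-v|$ to the metric, which is dominated by $h^{-1/2}|u-v|^{1/2}$ on the regime $|u-v|\leq 2h$.
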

\begin{proof}[\bfseries{Proof of Lemma~\ref{lm:subgaussian}}]
Let $\rho_{\infty, h,b}$ be the standard deviation semi-metric corresponding to $Z_{\infty,h,b}$; i.e.\ 
\[\rho_{\infty,h,b}^2(u, v) := E\left[ Z_{\infty,h,b}(u) - Z_{\infty,h,b}(v) \right]^2 = |\Sigma_{\infty,h,b}(u,u) - 2\Sigma_{\infty,h,b}(u,v)+\Sigma_{\infty,h,b}(v,v)| = 2 \left| 1 -\Sigma_{\infty,h,b}(u,v)\right|\]
because $\Sigma_{\infty,h,b}(u,u) = 1$ for any $u$ by definition. We first establish that $\rho_{\infty,h,b}$ is Lipschitz in $|u-v|$. By Lemma~\ref{lm:unif_bounded_variance}, there exist constants $c,C \in (0, \infty)$ such that for all $h$ small enough and all $a_0 \in \s{A}_0$, $c \leq \sigma^2_{\infty,h,b}(a_0) \leq C$.
We can then bound $|1 - \Sigma_{\infty,h,b}(u,v)|$ as follows:
\begin{align*}
    |1 - \Sigma_{\infty,h,b}(u,v)|  &= \left|1-hP_0(\phi^*_{\infty, h, b, u }\phi^*_{\infty, h, b, v}) / [\sigma_{\infty,h,b}(u)\sigma_{\infty,h,b}(v)]\right|\nonumber\\
    &= \left| 1 - \frac{\sigma_{\infty,h,b}(u)}{\sigma_{\infty,h,b}(v)}  - \frac{hP_0\phi^{*}_{\infty, h, b, u}(\phi^{*}_{\infty, h, b, v} - \phi^{*}_{\infty, h, b, u})}{\sigma_{\infty,h,b}(u)\sigma_{\infty,h,b}(v)}\right| \nonumber \\
    &\leq\frac{\left| \sigma_{\infty,h,b}(u) - \sigma_{\infty,h,b}(v) \right|}{  \sigma_{\infty,h,b}(v)} + \frac{h\left| P_0\phi^{*}_{\infty, h, b, u}(\phi^{*}_{\infty, h, b, v} - \phi^{*}_{\infty, h, b, u})\right|}{ \sigma_{\infty,h,b}(u) \sigma_{\infty,h,b}(v)} \nonumber\\
    &\leq \frac{\left| \sigma_{\infty,h,b}(u) - \sigma_{\infty,h,b}(v)\right| + h^{1/2}\left[ P_0 (\phi^{*}_{\infty, h, b, u} - \phi^{*}_{\infty, h, b, v})^2\right]^{1/2}}{\sigma_{\infty,h,b}(v)} \nonumber \\
    &\leq c^{-1/2} \left| \sigma_{\infty,h,b}(u) - \sigma_{\infty,h,b}(v)\right| + c^{-1/2}h^{1/2}\left[ P_0 (\phi^{*}_{\infty, h, b, u} - \phi^{*}_{\infty, h, b, v})^2\right]^{1/2}.
\end{align*}
Since $\sigma_{\infty,h,b}(u)$ and $\sigma_{\infty,h,b}(v)$ are bounded away from zero for all $h$ small enough and $x \mapsto x^{1/2}$ is Lipschitz on $[\varepsilon, \infty)$ for any $\varepsilon > 0$, we also have
\begin{align*}
    \left| \sigma_{\infty,h,b}(u) - \sigma_{\infty,h,b}(v)\right| &= \left| \left[ hP_0 (\phi^{*}_{\infty, h, b, u})^2 \right]^{1/2} - \left[ hP_0 (\phi^{*}_{\infty, h, b, v})^2 \right]^{1/2}\right| \\
    &\lesssim h \left| P_0 (\phi^{*}_{\infty, h, b, u})^2 - P_0 (\phi^{*}_{\infty, h, b, v})^2 \right|  \\
    &= h \left| P_0 \left(\phi^{*}_{\infty, h, b, u} -  \phi^{*}_{\infty, h, b, v}\right) \phi^{*}_{\infty, h, b, u} + P_0 \left(\phi^{*}_{\infty, h, b, u} -  \phi^{*}_{\infty, h, b, v}\right) \phi^{*}_{\infty, h, b, v}\right|  \\
    &\leq  h^{1/2} \left[ P_0 (\phi^{*}_{\infty, h, b, u} - \phi^{*}_{\infty, h, b, v})^2 \{ \sigma_{\infty,h,b}(u)^2 + \sigma_{\infty,h,b}(v)^2\} \right]^{1/2} \\
    &\lesssim h^{1/2} \left[ P_0 (\phi^{*}_{\infty, h, b, u} - \phi^{*}_{\infty, h, b, v})^2 \right]^{1/2}.
\end{align*}
Hence, we can turn our attention to bounding $h^{1/2}[P_0 (\phi^{*}_{\infty, h, b, u} - \phi^{*}_{\infty, h, b, v})^2]^{1/2}$. We have
\begin{align*}
   h^{1/2}\left[P_0 \left(\phi^{*}_{\infty, h, b, u} - \phi^{*}_{\infty, h, b, v}\right)^2\right]^{1/2} &\leq h^{1/2}\left[P_0\{(\Gamma_{0,h,b,u}- \Gamma_{0,h,b,v})^2\xi_\infty^2\}\right]^{1/2}  + h^{1/2}\left[P_0(\gamma_{0,h,b,u}
   -\gamma_{0,h,b,v})^2\right]^{1/2}\\
   &\qquad + h^{1/2}\left[P_0\left(\int\left\{ \Gamma_{0,h,b,u} - \Gamma_{0,h,b,v}\right\}\left\{  \mu_\infty - \int \mu_\infty \, dQ_0\right\} \, dF_0\right)^2\right]^{1/2}.
\end{align*}
For the first term in the display, since $|\xi_\infty|$ is $P_0$-almost surely uniformly bounded, we have
\begin{align}
    &\left[P_0\{(\Gamma_{0,h,b,u}- \Gamma_{0,h,b,v})^2\xi_\infty^2\}\right]^{1/2} \nonumber\\
    &\qquad\lesssim \left[\int  \left\{e_1^T \b{D}_{0,h,u,1}^{-1} w_{h,u,1}(a) K_{h,u}(a)-e_1^T \b{D}_{0,h,v,1}^{-1} w_{h,v,1}(a) K_{h,v}(a)\right\}^2  f_0(a) \, da \right]^{1/2}\nonumber\\
    &\qquad\qquad +\tau_n^2  \left[ \int \left\{ c_{0,h,u,2}e_3^T \b{D}_{0,b,u,2}^{-1} w_{b,u,2}(a) K_{b,u}(a)-c_{0,h,v,2}e_3^T  \b{D}_{0,b,v,2}^{-1} w_{b,v,2}(a) K_{b,v}(a)\right\}^2 f_0(a) \,da\right]^{1/2} \label{eq:Gamma_u-Gamma_v}.
\end{align}
For the first term, we note that since the support of $K$ is contained in $[-1,1]$, if $|u-v| > 2h$, then either $K_{h,u}(a) = 0$ or $K_{h,v}(a) = 0$ for all $a$. Hence, if  $|u-v| > 2h$, then 
\begin{align*}
    &\left[\int  \left\{e_1^T \b{D}_{0,h,u,1}^{-1} w_{h,u,1}(a) K_{h,u}(a)-e_1^T \b{D}_{0,h,v,1}^{-1} w_{h,v,1}(a) K_{h,v}(a)\right\}^2  f_0(a) \, da \right]^{1/2} \\
    &\qquad = \left[\int  \left\{e_1^T \b{D}_{0,h,u,1}^{-1} w_{h,u,1}(a) K_{h,u}(a)\right\}^2f_0(a) \, da + \int \left\{e_1^T \b{D}_{0,h,v}^{-1} w_{h,v,1}(a) K_{h,v}(a)\right\}^2  f_0(a) \, da \right]^{1/2} \\
    &\qquad = h^{-1/2}\left[\int  \left\{e_1^T \b{D}_{0,h,u,1}^{-1} (1,t) K(t)\right\}^2f_0(u + th) \, dt + \int \left\{e_1^T \b{D}_{0,h,v,1}^{-1} (1,t)K(t)\right\}^2  f_0(v + th) \, dt \right]^{1/2},
\end{align*}
which is bounded up to a constant by $h^{-1/2}$ by the boundedness of $K$, $f_0$, and Lemma~\ref{lm:D0_altform}. If $|u-v| \leq 2h$, then we further decompose
\begin{align*}
    &\left[\int  \left\{e_1^T \b{D}_{0,h,u,1}^{-1} w_{h,u,1}(a) K_{h,u}(a)-e_1^T \b{D}_{0,h,v,1}^{-1} w_{h,v,1}(a) K_{h,v}(a)\right\}^2  f_0(a) \, da \right]^{1/2} \\
    &\qquad \leq \left[\int  \left\{e_1^T \b{D}_{0,h,u,1}^{-1} \left[w_{h,u,1}(a) - w_{h,v,1}(a)\right]\right\}^2 K_{h,u}(a)^2 f_0(a) \, da \right]^{1/2} \\
    &\qquad\qquad + \left[\int  \left\{e_1^T \left[\b{D}_{0,h,u,1}^{-1}-\b{D}_{0,h,v,1}^{-1} \right] w_{h,v,1}(a)\right\}^2 K_{h,u}(a)^2 f_0(a) \, da \right]^{1/2} \\
    &\qquad\qquad +  \left[\int  \left\{e_1^T \b{D}_{0,h,v,1}^{-1} w_{h,v,1}(a)\right\}^2 \left\{K_{h,u}(a) - K_{h,v}(a)\right\}^2 f_0(a) \, da \right]^{1/2}.
\end{align*}
For the first term, we have $w_{h,u,1}(a) - w_{h,v,1}(a) = (0, (v-u)/h)$, so $e_1^T \b{D}_{0,h,u,1}^{-1}\left[ w_{h,u,1}(a) - w_{h,v,1}(a)\right] = h^{-1} (v-u)  \b{D}_{0,h,u,1}^{-1}[1,2] =  (v-u)  \boundeddet(1)$ since $\b{D}_{0,h,u,1}^{-1} = f_0(u)^{-1} \b{S}_2^{-1} + \boundeddet(h)$ by Lemma~\ref{lm:D0_altform} and $\b{S}_2$ is a diagonal matrix. Hence, the first term is bounded up to a constant by 
\begin{align*}
   |u-v|\left[ \int \left\{ K_{h,u}(a) \right\}^2  f_0(a) \, da \right]^{1/2}  &= h^{-1/2}|u-v| \left[ \int \left\{ K(t) \right\}^2  f_0(u + th) \, dt \right]^{1/2},
\end{align*}
which is bounded up to a constant by $h^{-1/2}|u-v|$ for all $h$ small enough. 

For the second term, we can write $\b{D}_{0,h,u,1}^{-1}- \b{D}_{0,h,v,1}^{-1} = \b{D}_{0,h,v,1}^{-1}(\b{D}_{0,h,v,1}- \b{D}_{0,h,u,1})\b{D}_{0,h,v,1}^{-1}$, and we have by definition 
\begin{align*}
    \b{D}_{0,h,v,1}- \b{D}_{0,h,u,1}&= P_0 \left(w_{h,v,1}w_{h,v,1}^T  K_{h,v} -  w_{h,u,1} w_{h,u,1}^T K_{h,u}\right) \\
    &= P_0 \left(\left[w_{h,v,1}w_{h,v,1}^T -  w_{h,u,1} w_{h,u,1}^T \right] K_{h,v} +w_{h,u,1} w_{h,u,1}^T\left[K_{h,v} - K_{h,u}\right]\right).
\end{align*}
We then note that 
\[ w_{h,v,1}(a)w_{h,v,1}(a)^T -  w_{h,u,1}(a) w_{h,u,1}(a)^T = h^{-1}(v-u) \begin{pmatrix} 0 & 1 \\ 1 & [2a - u - v] / h \end{pmatrix}.\]
With the change of variables $t = (a-v)/h$, we then have
\begin{align*}
     \left|P_0 \left(\left[w_{h,v,1}w_{h,v,1}^T -  w_{h,u,1} w_{h,u,1}^T \right] K_{h,v} \right)\right| &= h^{-1}|u-v|  \left|\int\begin{pmatrix} 0 & 1 \\ 1 & 
    2t + (v-u) / h\end{pmatrix} K(t) f_0(v + th) \, dt \right|.
\end{align*}
Since we are considering the case $|u-v| \leq 2h$, the absolute value of this expression is bounded up to a constant by $h^{-1}|u-v|$. We also have 
\begin{align*}
    \left|P_0 \left(w_{h,u,1} w_{h,u,1}^T\left[K_{h,v} - K_{h,u}\right]\right)\right| &= \left| \int (1, t) (1,t)^T\left[ K(t) - K(t + [u-v] / h)\right] f_0(u + th) \, dt\right|,
\end{align*}
which is bounded up to a constant by  $h^{-1} |u-v|$ by the Lipschitz assumption on $K$. Hence, 
\begin{align*}
    &\left[\int  \left\{e_1^T \left[\b{D}_{0,h,u,1}^{-1}-\b{D}_{0,h,v,1}^{-1} \right] w_{h,v,1}(a)\right\}^2 K_{h,u}(a)^2 f_0(a) \, da \right]^{1/2} \\
    &\qquad \leq Ch^{-1}|u-v|\left[\int  \left\{e_1^T \b{D}_{0,h,v,1}^{-1} \b{1} \b{D}_{0,h,u,1}^{-1} w_{h,v,1}(a)\right\}^2 K_{h,u}(a)^2 f_0(a) \, da \right]^{1/2} \\
    &\qquad = Ch^{-3/2}|u-v|\left[\int  \left\{e_1^T \b{D}_{0,h,v,1}^{-1} \b{1} \b{D}_{0,h,u,1}^{-1} (1, t)\right\}^2  K(t + (v-u)/h) f_0(v + th) \, da \right]^{1/2},
\end{align*}
which is bounded up to a constant by $h^{-3/2}|u-v|$.

For the final term, since $K$ is assumed to be Lipschitz, we have
\begin{align*}
    &\left[ \int \left\{ e_1^T \b{D}_{0,h,v,1}^{-1} w_{h,v,1}(a) \left[K_{h,u}(a) - K_{h,v}(a)\right] \right\}^2 f_0(a) \, da \right]^{1/2} \\
    &\qquad\lesssim \left[ \int \left\{ e_1^T \b{D}_{0,h,v,1}^{-1} w_{h,v,1}(a)\right\}^2 h^{-2}\left[(a-u)/h -(a-v)/h \right]^2  f_0(a) \, da \right]^{1/2} \\
    &\qquad = h^{-3/2}|u-v|\left[ \int \left\{ e_1^T \b{D}_{0,h,v,1}^{-1} (1, t)\right\}^2 f_0(u + th) \, dt \right]^{1/2},
\end{align*}
which is bounded up to a constant by $h^{-3/2}|u-v|$.

Putting it together, we have that
\[ h^{1/2}\left[\int  \left\{e_1^T \b{D}_{0,h,u,1}^{-1} w_{h,u,1}(a) K_{h,u}(a)-e_1^T \b{D}_{0,h,v,1}^{-1} w_{h,v,1}(a) K_{h,v}(a)\right\}^2  f_0(a) \, da \right]^{1/2}\]
is bounded up to a constant by $h^{-3/2}|u-v|$ when $|u-v| \leq 2h$ and is bounded up to a constant when $|u-v| > 2h$. Analysis of the second term of~\eqref{eq:Gamma_u-Gamma_v} follows the same logic, and yields the same result. We can also show using the above techniques that  $h^{1/2}\left[P_0(\gamma_{0,h,b,u}-\gamma_{0,h,b,v})^2\right]^{1/2}$ satisfies the same bound. Finally, since $\mu_\infty$ is uniformly bounded, we have
\begin{align*}
    \left[P_0\left(\int\left\{ \Gamma_{0,h,b,u} - \Gamma_{0,h,b,v}\right\}\left\{  \mu_\infty - \int \mu_\infty \, dQ_0\right\} \, dF_0\right)^2\right]^{1/2} &\leq C \int \left| \Gamma_{0,h,b,u} -\Gamma_{0,h,b,v}\right| \, dF_0 \\
    &\leq C\left[ \int \left(\Gamma_{0,h,b,u} -\Gamma_{0,h,b,v}\right)^2 \, dF_0  \right]^{1/2},
\end{align*}
which is the same as the expression we bounded above.
 
We have now shown that there exist a constants $C_1$ and $C_2$ not depending on $h$, $u$, or $v$ such that $\rho_{\infty,h,b}(u,v) \leq C_1 h^{-1/2} |u-v|^{1/2}$ for $|u-v| \leq 2h$, and $\rho_{\infty,h,b}(u,v) \leq C_2$ for  $|u-v| > 2h$. Without loss of generality, we can take $C_2 = \sqrt{2} C_1$ so the bound is continuous. Suppose $\varepsilon \in (0,  C_1 / \sqrt{2}]$ and $|u-v| \leq 4C_1^{-2} h \varepsilon^2$. Then $|u-v| \leq 2h$, so 
\[ \rho_{\infty,h,b}(u,v) \leq C_1 h^{-1/2} |u-v|^{1/2} \leq 2\varepsilon.\]
Hence, for $\varepsilon \leq C_1/\sqrt{2}$, $N(\varepsilon, \s{A}_0, \rho_{\infty,h,b}) \leq C_1^2|\s{A}_0| / (4h\varepsilon^2)$. For $\varepsilon > C_1 / \sqrt{2}$ and $|u-v| \leq 2h$, we have $\rho_{\infty,h,b}(u,v) \leq C_1 h^{-1/2} |u-v|^{1/2} \leq \sqrt{2}C_1 < 2\varepsilon$. For $\varepsilon > C_1 / \sqrt{2}$ and $|u-v| > 2h$, we have $\rho_{\infty,h,b}(u,v) \leq \sqrt{2} C_1 < 2 \varepsilon$. Therefore, all $u,v \in \s{A}_0$ fit in a single $\rho_{\infty,h,b}$ ball of radius $\varepsilon$, so $N(\varepsilon, \s{A}_0, \rho_{\infty,h,b}) = 1$ for $\varepsilon > C_1 / \sqrt{2}$. 

This implies that $\s{A}_0$ can be covered by finitely many $\rho_{\infty,h,b}$ balls of radius $\varepsilon$ for every $\varepsilon > 0$. Hence, the semimetric space $(\s{A}_0, \rho_{\infty,h,b})$ is totally bounded, and hence separable (see, e.g.\ page 17 of \citealp{vandervaart1996}). Furthermore, since $Z_{\infty,h,b}$ is a Gaussian process, it is sub-Gaussian with respect to its intrinsic semimetric $\rho_{\infty,h,b}$. We thus conclude $\{Z_{\infty,h,b}(a_0) : a_0 \in \s{A}_0\}$ is a separable sub-Gaussian process with respect to $\rho_{\infty,h,b}$. We then have by Corollary 2.2.8 of \cite{vandervaart1996} that
\begin{align*}
     E\left[ \sup_{a_0 \in \s{A}_0} |Z_{\infty,h,b}(a_0)|\right] &\lesssim E\left[ |Z_{\infty,h,b}(a_1)|\right] + \int_0^\infty \left\{ \log N(\varepsilon, \s{A}_0, \rho_{\infty,h,b})\right\}^{1/2} \, d\varepsilon \\
     &\leq 1 + \int_0^{C_1/\sqrt{2}} \left\{ \log \left( C_1^2|\s{A}_0| / [4h\varepsilon^2]\right) \right\}^{1/2} \, d\varepsilon \\
     &\leq 1 + \frac{C_1 \left[ 1 + \log \left(|\s{A}_0| / [2h]\right) \right]}{2 \left[ \log \left(|\s{A}_0| / [2h]\right)\right]^{1/2}},
\end{align*}
which is bounded up to a constant by $\left[\log h^{-1} \right]^{1/2}$ for all $h$ small enough. For every $\delta \in (0, C_1 / \sqrt{2})$, we also have by Corollary 2.2.8 of \cite{vandervaart1996} that
\begin{align*}
     E\left[ \sup_{\rho_{\infty,h,b}(s,t) < \delta} \left|Z_{\infty, h,b}(a_1) - Z_{\infty, h,b}(t)\right|\right] &\lesssim \int_0^\delta\left\{ \log N(\varepsilon, \s{A}_0, \rho_{\infty,h,b})\right\}^{1/2} \, d\varepsilon.
\end{align*}
Since the integral is finite over $[0,\infty)$ (as shown above), the integral over $[0,\delta]$ goes to zero as $\delta$ goes to zero. Hence, the sample paths of $Z_{\infty,h,b}$ are almost surely uniformly $\rho_{\infty,h,b}$-continuous. Since $(\s{A}_0, \rho_{\infty,h,b})$ is totally bounded, this implies that $Z_{\infty,h,b}$ is tight in $\ell^\infty(\s{A}_0)$.   

Finally, since $\rho_{\infty,h,b}(u,v) \leq C_1 h^{-1/2} (\min\{|u-v|, 2h\})^{1/2}$, we have again by Corollary 2.2.8 of \cite{vandervaart1996} that
\begin{align*}
E\left[ \sup_{|u-v| < \delta} \left| Z_{\infty,h,b}(u) -Z_{\infty,h,b}(v)\right|\right] &\leq E\left[ \sup_{\rho_{\infty,h,b}(u, v) < C_1 h^{-1/2} (\delta \wedge [2h])^{1/2}} \left| Z_{\infty,h,b}(u) -Z_{\infty,h,b}(v)\right|\right] \\
&\lesssim \int_0^{C_1 h^{-1/2} (\delta \wedge [2h])^{1/2}} \left\{ \log N(\varepsilon, \s{A}_0, \rho_{\infty,h,b})\right\}^{1/2} \, d\varepsilon \\
    &\leq \int_0^{C_1 h^{-1/2} (\delta \wedge [h/2])^{1/2}}  \left\{ \log \left( C_1^2|\s{A}_0| / [4h\varepsilon^2]\right) \right\}^{1/2} \, d\varepsilon \\
     &\lesssim h^{-1/2}([2\delta] \wedge h)^{1/2} \left[ \log\frac{|\s{A}_0|}{2([2\delta] \wedge h)}\right]^{1/2}
\end{align*}
as long as $[2\delta] \wedge h \leq |\s{A}_0| / 8$.
\end{proof}

\begin{lemma}\label{lm:gaussian_approx}
If~\ref{cond:bounded_K}--\ref{cond:bandwidth},~\ref{cond:unif_doubly_robust}(a), and~\ref{cond:holder_smooth_theta} hold, 
then
\begin{align*}
   \sup_{a_0 \in \s{A}_0}\left|\mathbb{G}_n \frac{h^{1/2}\phi^*_{\infty, h, b, a_0}}{\sigma_{\infty,h,b}(a_0)}\right| - \sup_{a_0 \in \s{A}_0}\left|Z_{\infty,h,b}(a_0)\right| =\bounded\left(\{ n h\} ^{-1/2}\{\log n\}^{3/2} + \{nh\}^{-1/4}\{\log n\}^{5/4} + \{nh\}^{-1/6}\log n\right). 
\end{align*}
Hence, if in addition $nh^p \longrightarrow \infty$ for some $p > 1$, then 
\begin{align*}
    \sup_{t \in \d{R}} \left|P \left(\sup_{a_0 \in \s{A}_0} \left| \d{G}_n \frac{h^{1/2}\phi_{\infty, h, b,a_0}^*}{\sigma_{\infty,h,b}(a_0)} \right|  \leq t \right)-P \left(\sup_{a_0 \in \s{A}_0}\left|Z_{\infty,h,b}(a_0)\right| \leq t \right)\right| = \fasterthandet(1).
\end{align*}
\end{lemma}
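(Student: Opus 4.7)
The plan is to apply the Gaussian approximation theorem of \citet{Chernozhukov_2014} for suprema of empirical processes to the standardized class $\s{H}_{n,h,b}:=\{o\mapsto h^{1/2}\phi^*_{\infty,h,b,a_0}(o)/\sigma_{\infty,h,b}(a_0):a_0\in\s{A}_0\}$. Three ingredients must be assembled before invoking the coupling theorem. First, Corollary~\ref{cor:phi_infty_VC} together with Lemma~\ref{lm:unif_bounded_variance} show that $\s{H}_{n,h,b}$ is a VC-type class whose uniform-entropy constants do not depend on $n$. Second, inspection of the three summands defining $\phi^*_{\infty,h,b,a_0}$ shows that an envelope function $F_n$ exists with $\|F_n\|_\infty\lesssim h^{-1/2}$, since after standardization each term inherits the $h^{-1}$ singularity of the kernel weight times a factor of $h^{1/2}$. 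Third, by construction of the standardization, $\sup_{f\in\s{H}_{n,h,b}}P_0 f^2 = 1$, and Lemma~\ref{lm:phi_finite_moments} together with \ref{cond:holder_smooth_theta}(c) control the higher moments uniformly in $a_0$.

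With these ingredients, I would invoke the coupling result of \citet{Chernozhukov_2014}, which guarantees that, on a suitably enlarged probability space, one can construct a version of the Gaussian process $Z_{\infty,h,b}$ with covariance $\Sigma_{\infty,h,b}$ such that
\[
\left|\sup_{a_0\in\s{A}_0}\bigl|\d{G}_n\bigl(h^{1/2}\phi^*_{\infty,h,b,a_0}/\sigma_{\infty,h,b}(a_0)\bigr)\bigr| - \sup_{a_0\in\s{A}_0}\bigl|Z_{\infty,h,b}(a_0)\bigr|\right|=\bounded(r_n),
\]
where $r_n$ is the CCK coupling rate with envelope $M\asymp h^{-1/2}$ and second-moment bound $\sigma^2=O(1)$. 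The key algebraic identity is $M^{1/3}n^{-1/6}=(nh)^{-1/6}$, which converts the standard $n^{-1/6}$ coupling rate into the dominant $\{nh\}^{-1/6}\log n$ term in the stated bound. The remaining two terms $\{nh\}^{-1/2}(\log n)^{3/2}$ and $\{nh\}^{-1/4}(\log n)^{5/4}$ arise from the higher-moment maximal-inequality contributions in CCK's Hungarian-style decomposition of the empirical process, again after the $M^a n^{-b}$-to-$(nh)^{-c}$ rescaling.

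For the second conclusion, under $nh^p\to\infty$ for some $p>1$, each of the three rate terms is $\fasterthandet(1)$ after absorbing the polylogarithmic factors (note that $(nh)^{-1/6}\log n\longrightarrow0$ whenever $nh\gg(\log n)^6$, which $nh^p\to\infty$ implies). Hence the coupling error is $\fasterthan(1)$. I would then combine this with the Gaussian anti-concentration inequality for suprema (Corollary~2.1 of \citealp{chernozhukov2015comparison}) applied to $\sup_{a_0\in\s{A}_0}|Z_{\infty,h,b}(a_0)|$; the anti-concentration constant requires control of $E[\sup_{a_0}|Z_{\infty,h,b}(a_0)|]$, which Lemma~\ref{lm:subgaussian} bounds by $C(\log h^{-1})^{1/2}$. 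Standard arguments then convert an $\fasterthan(1)$ coupling error into an $\fasterthandet(1)$ bound on the Kolmogorov distance between $\sup_{a_0}|\d{G}_n(h^{1/2}\phi^*/\sigma)|$ and $\sup_{a_0}|Z_{\infty,h,b}|$.

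The main technical obstacle is matching the three rate terms to the correct form of the CCK coupling inequality; the two subleading terms $\{nh\}^{-1/2}(\log n)^{3/2}$ and $\{nh\}^{-1/4}(\log n)^{5/4}$ are non-obvious and suggest that the argument requires a two-step decomposition that first truncates the envelope at a data-adaptive level, bounds the truncation remainder via a maximal inequality, and then applies the Gaussian approximation on the truncated process. Verifying the precise constants and entropy exponents so that the bound comes out exactly as stated is mechanical once the truncation level is fixed, but a secondary subtlety is that the class $\s{H}_{n,h,b}$ depends on $n$ through $h$, so one must check that the VC-type constants from Corollary~\ref{cor:phi_infty_VC} are genuinely uniform in $h$ rather than only fixed-$h$.
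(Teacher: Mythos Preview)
Your approach is essentially the same as the paper's: both apply the coupling machinery of \cite{Chernozhukov_2014} to the standardized class $\s{H}_{h,b}$, verify VC-type via Corollary~\ref{cor:phi_infty_VC}, bound the envelope by a constant times $h^{-1/2}$ using~\ref{cond:holder_smooth_theta}(c), and use $\sigma^2=1$. Two clarifications are worth making. First, your speculation about a two-step truncation decomposition is unnecessary: the paper invokes Corollary~2.2 of \cite{Chernozhukov_2014} directly, and the three rate terms $\{nh\}^{-1/2}(\log n)^{3/2}$, $\{nh\}^{-1/4}(\log n)^{5/4}$, and $\{nh\}^{-1/6}\log n$ fall out immediately upon substituting the envelope/third-moment bound $b\asymp h^{-1/2}$ and $\sigma=1$ into that corollary's conclusion (with $\gamma_n=(\log n)^{-1}$). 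No ad hoc truncation is required; the Hungarian-style decomposition is already packaged inside Corollary~2.2. Second, for the Kolmogorov-distance statement the paper uses Lemma~2.4 of \cite{Chernozhukov_2014} rather than assembling anti-concentration by hand; the key input is $E_0[\sup_{\s{A}_0}|Z_{\infty,h,b}|]=\boundeddet(\{\log h^{-1}\}^{1/2})$ from Lemma~\ref{lm:subgaussian}, after which one checks $r_n\{\log h^{-1}\}^{1/2}=\fasterthandet(1)$ under $nh^p\to\infty$. Your route through explicit anti-concentration is equivalent in spirit (indeed it is what Lemma~2.4 proves internally), just slightly less direct.
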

\begin{proof}[\bfseries{Proof of Lemma~\ref{lm:gaussian_approx}}]
These results are an application of Corollary~2.2 and Lemma~2.4 of \cite{Chernozhukov_2014}. For each $a_0 \in \s{A}_0$ and $h,b > 0$, we consider the following function:
\[\eta_{h,b,a_0} := (y,a,w) \mapsto h^{1/2}\sigma^{-1}_{\infty,h,b}(a_0)\phi_{\infty, h, b,a_0}^*(y,a,w).\]
We then define the class of functions $\s{H}_{h,b} := \{\eta_{h,b,a_0} : a_0 \in \s{A}_0\}$. In the notation of \cite{Chernozhukov_2014}, we have $\s{F} = \s{H}_{h,b} \cup -\s{H}_{h,b}$, 
\begin{align*}
    Z &= \sup_{f \in \s{F}} \d{G}_n f = \sup_{\eta \in \s{H}_{h,b}} \left|\d{G}_n \eta\right| = \sup_{a_0 \in \s{A}_0} \left|\d{G}_n \frac{h^{1/2}\phi^*_{\infty, h, b, a_0}}{\sigma_{\infty,h,b}(a_0)}\right|,
\end{align*}
and $\tilde{Z} = \sup_{a_0 \in \s{A}_0} \left|Z_{\infty,h,b}(a_0)\right|$, where by definition, $Z_{\infty,h,b}(a_0)$ is a mean-zero Gaussian process on $\s{A}_0$ with covariance function $(u,v) \mapsto P_0(\eta_{h,b,u}\eta_{h,b,v})$.

We now verify the conditions of Corollary~2.2 of \cite{Chernozhukov_2014}. First, $\s{H}_{h,b}$ is pointwise measurable because $K$ is uniformly continuous by~\ref{cond:bounded_K}, and by Corollary~\ref{cor:phi_infty_VC}, $\s{H}_{h,b}$ is VC type. For each $a_0 \in \s{A}_0$ and $h,b > 0$, we have $P_0 \eta_{h,b,a_0}^2 = 1$  by definition of $\sigma_{\infty, h,b}(a_0)$. By Lemmas~\ref{lm:unif_bounded_variance} and~\ref{lm:phi_finite_moments}, we have  
\begin{align*}
    \sup_{\eta \in \s{H}_{h,b}} P_0 |\eta|^3 =\sup_{a_0 \in \s{A}_0} \frac{P_0 \left|h^{1/2}\phi_{\infty, h, b,a_0}^*\right|^3}{\sigma_{\infty,h,b}(a_0)^3} \lesssim h^{-1/2} 
\end{align*}
Hence, in the notation of  \cite{Chernozhukov_2014}, we have $\sigma^2 = 1$ and $b = h^{-1/2}$ up to a constant not depending on $h$. We next establish that $\s{H}_{h,b}$ is uniformly bounded up to a constant by $h^{-1/2}$. We let $\b{1}$  be a vector of 1's of the appropriate dimension. Using the boundedness and bounded support of $K$ as well as the uniform boundedness of $\mu_\infty$,  for each $a_0 \in \s{A}_0$, we have that
\begin{align*}
    \left|h\phi^*_{\infty, h,b,a_0}(y,a,w)\right| 
    &\leq\left|e_1^T \b{D}_{0, h, a_0,1}^{-1} \b{1} + c_{0,h,a_0,2}(h/b)^3 e_3^T \b{D}_{0, b,a_0,2}^{-1} \b{1} \right|  \left\{ \left|\xi_\infty(y,a,w)\right| + 2K_0\right\}\|K\|_\infty\\
    &\qquad+\left|e_1^T\b{D}^{-1}_{0,h, a_0,1}  \b{1} \b{1}^T \b{D}^{-1}_{0,h, a_0,1} \right|\left| P_0 \left( w_{h,a_0,1} K_{h,a_0} \theta_0 \right)\right|\|K\|_\infty \\
     &\qquad+  \left|c_{0,h,a_0,2} (h/b)^3 e_3^T\b{D}_{0,b,a_0,2}^{-1} \b{1} \b{1}^T \b{D}_{0,b,a_0,2}^{-1} \right| \left| P_0 \left( w_{b,a_0,2} K_{b,a_0} \theta_0 \right) \right| \|K\|_\infty \\
     &\qquad + (h/b)^2 \left|e_1^T \b{D}_{0, h, a_0,1}^{-1} \right|\left| \b{1} + \b{1} \b{1}^T \b{D}_{0, h, a_0,1}^{-1} P_0( \tilde{w}_{h,a_0,1} K_{h,a_0})\right| \|K\|_\infty \\
     &\qquad\qquad \times \left|e_3^T\b{D}_{0,b,a_0,2}^{-1} P_0 \left( w_{b,a_0,2} K_{b,a_0} \theta_0 \right) \right| 
\end{align*}
By Lemma~\ref{lm:D0_altform}, the elements of $\b{D}_{0, h,a_0,1}^{-1}$ and $\b{D}_{0, b,a_0,2}^{-1}$ are uniformly bounded over $a_0 \in \s{A}_0$ and for all $h$ small enough. By the uniform boundedness of $\theta_0$ and $f_0$ in an enlargement of $\s{A}_0$, we can also show using a change of variables that $\left| P_0 \left( w_{h,a_0,1} K_{h,a_0} \theta_0 \right)\right|$ and  $\left| P_0 \left( w_{b,a_0,2} K_{b,a_0} \theta_0 \right) \right|$ are uniformly bounded for all $a_0 \in \s{A}_0$ and $h$ small enough. Hence, there are finite positive constants $C_1$ and $C_2$ not depending on $h$, $a_0$, or $(y,a,w)$ such that $\left|h\phi^*_{\infty, h,b,a_0}(y,a,w)\right| \leq C_1 + C_2 |\xi_\infty(y,a,w)|$ for all $(y,a,w)$, $a_0 \in \s{A}_0$ and $h$ small enough. Therefore, by Lemma~\ref{lm:unif_bounded_variance}, an envelope function for $\s{H}_{h,b}$ is given by $h^{-1/2} (C_1' + C_2' |\xi_\infty|)$ for finite positive constants $C_1'$ and $C_2'$. By~\ref{cond:unif_doubly_robust}(a) and~\ref{cond:holder_smooth_theta}, $|\xi_\infty|$ is uniformly bounded. Thus, $\s{H}_{h,b}$ is uniformly bounded up to a constant by $b = h^{-1/2}$, so the moment and envelope conditions of Corollary~2.2 of \cite{Chernozhukov_2014} hold.

We have now checked all the conditions of Corollary~2.2 of \cite{Chernozhukov_2014}, so with $\gamma_n = (\log n)^{-1}$, it follows that for all $h$ small enough and a constant $C$ not depending on $h$,
\begin{align*}
    &P\left(\left|\sup_{a_0 \in \s{A}_0}\left|\mathbb{G}_n h^{1/2}\frac{\phi^*_{\infty, h, b, a_0}}{\sigma_{\infty,h,b}(a_0)} \right|- \sup_{a_0 \in \s{A}_0}\left|{Z}_{\infty, h,b}(s)\right|\right| > \frac{Ch^{-1/2}\{\log n\}^{3/2}}{n^{1/2}} + \frac{Ch^{-1/4}\{\log n\}^{5/4}}{n^{1/4}} + \frac{Ch^{-1/6}\log n}{n^{1/6}} \right) \\
    &\qquad \lesssim \frac{1}{\log n} + \frac{\log n}{n}
\end{align*}
We conclude that
\begin{align*}
    \left|\sup_{a_0 \in \s{A}_0}\left|\mathbb{G}_n \frac{\phi^*_{\infty, h, b, a_0}}{\sigma_{\infty,h,b}(a_0)}\right| - \sup_{a_0 \in \s{A}_0}\left|Z_{\infty,h,b}(a_0)\right|\right| =\bounded\left(r_n\right)
\end{align*}
for
\[r_n = \{nh\}^{-1/2}\{\log n\}^{3/2} + \{nh\}^{-1/4}\{\log n\}^{5/4} + \{nh\}^{-1/6}\log n.\]

For the second statement, we use Lemma~2.4 of \cite{Chernozhukov_2014}. In their notation, we have $\s{F}_n = (h^{-1/2}\s{H}_{h,b}) \cup (-h^{-1/2}\s{H}_{h,b})$. We have already established that this class is pointwise measurable, that its envelope is square integrable, and that its variance function is uniformly bounded above and below for all $n$. Lemma~\ref{lm:subgaussian} implies that $\s{F}_n$ is $P_0$-pre-Gaussian. For the final condition of Lemma~2.4 of \cite{Chernozhukov_2014}, by Lemma~\ref{lm:subgaussian}, we have $E_0 \sup_{a_0\in \s{A}_0} |Z_{\infty, h,b}(a_0)| = \boundeddet\left(\{\log(1/h)\}^{1/2}\right)$. The assumption that $nh^p \longrightarrow \infty$ for some $p > 1$ then implies that $r_n \{ \log (1/h)\}^{1/2}  =\fasterthandet(1)$, which verifies the last condition of Lemma~2.4 of \cite{Chernozhukov_2014}.

\end{proof}

\begin{lemma}\label{lemma:anticoncentration}
If the conditions of Lemma~\ref{lm:subgaussian} hold, then for any $\varepsilon > 0$,
\[\sup_{t \in \d{R}}P \left(\left| \sup_{a_0\in\s{A}_0}\left| Z_{\infty, h, b}(a_0)\right| - t \right|  \leq \varepsilon \left[\log h^{-1}\right]^{-1/2}\right)  \leq C\varepsilon  + \fasterthandet(1)\]
as $h \longrightarrow 0$ for $C$ not depending on $h$ or $\varepsilon$.
\end{lemma}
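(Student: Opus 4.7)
The plan is to combine the Chernozhukov--Chetverikov--Kato anti-concentration inequality for suprema of separable Gaussian processes (Corollary~2.1 of \cite{Chernozhukov_2014}, or its analogue in \cite{chernozhukov2015comparison}) with the expected-supremum bound from Lemma~\ref{lm:subgaussian}. Recall that by construction of $Z_{\infty,h,b}$ via the normalization $\Sigma_{\infty,h,b}(u,u) = hP_0(\phi^*_{\infty,h,b,u})^2 / \sigma^2_{\infty,h,b}(u) = 1$, the process has unit variance at every $a_0 \in \s{A}_0$. Lemma~\ref{lm:subgaussian} further shows the process is tight in $\ell^\infty(\s{A}_0)$ and that $E\left[\sup_{a_0 \in \s{A}_0}|Z_{\infty,h,b}(a_0)|\right] \leq C\{\log h^{-1}\}^{1/2}$ for all $h$ small enough. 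This is exactly the setting in which the CCK anti-concentration bound applies cleanly.

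First, to handle the absolute value, I would adjoin a reflected copy of the index set and consider the process $\widetilde{Z}$ on $\s{A}_0 \sqcup (-\s{A}_0)$ defined by $\widetilde{Z}(a_0) := Z_{\infty,h,b}(a_0)$ for $a_0 \in \s{A}_0$ and $\widetilde{Z}(-a_0) := -Z_{\infty,h,b}(a_0)$; this remains a separable mean-zero Gaussian process with unit variance, and its supremum equals $\sup_{a_0 \in \s{A}_0}|Z_{\infty,h,b}(a_0)|$. Applying CCK's anti-concentration inequality to $\widetilde{Z}$ yields the existence of an absolute constant $C_0 < \infty$ such that, for every $\eta > 0$,
\begin{equation*}
\sup_{t \in \d{R}} P\left(\left|\sup_{a_0 \in \s{A}_0}|Z_{\infty,h,b}(a_0)| - t\right| \leq \eta\right) \leq C_0 \eta \left(E\left[\sup_{a_0 \in \s{A}_0}|Z_{\infty,h,b}(a_0)|\right] + 1\right).
\end{equation*}

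Now I would substitute $\eta = \varepsilon \{\log h^{-1}\}^{-1/2}$ and apply Lemma~\ref{lm:subgaussian} to the expected-supremum term, giving for all $h$ small enough
\begin{equation*}
\sup_{t \in \d{R}} P\left(\left|\sup_{a_0 \in \s{A}_0}|Z_{\infty,h,b}(a_0)| - t\right| \leq \varepsilon\{\log h^{-1}\}^{-1/2}\right) \leq C_0 \varepsilon \{\log h^{-1}\}^{-1/2}\left(C\{\log h^{-1}\}^{1/2} + 1\right).
\end{equation*}
Expanding the right-hand side gives $CC_0\varepsilon + C_0 \varepsilon \{\log h^{-1}\}^{-1/2}$, where the second term is $\fasterthandet(1)$ as $h \longrightarrow 0$. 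Setting the constant appropriately completes the proof.

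The main obstacle is essentially bookkeeping rather than substantive: verifying that the hypotheses of the CCK anti-concentration inequality (separability, unit variance, finiteness of the expected supremum) genuinely hold for $\widetilde{Z}$, all of which follow from Lemma~\ref{lm:subgaussian}. The cancellation of the $\{\log h^{-1}\}^{1/2}$ factor from the expected-supremum bound against the $\{\log h^{-1}\}^{-1/2}$ scaling of the window width is the whole point: it converts a bound that would otherwise blow up logarithmically into a genuine $O(\varepsilon)$ anti-concentration statement, which is precisely what is needed downstream for Theorem~\ref{thm:uniform} to pass from Kolmogorov distance in the Gaussian approximation to coverage of the uniform band.
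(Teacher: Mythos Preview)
Your proposal is correct and follows essentially the same route as the paper: reflect the process (the paper writes this as $\s{F} = \s{H}_{h,b} \cup -\s{H}_{h,b}$), apply a CCK anti-concentration inequality for unit-variance separable Gaussian processes, and cancel the $\{\log h^{-1}\}^{1/2}$ from Lemma~\ref{lm:subgaussian} against the $\{\log h^{-1}\}^{-1/2}$ window. The only cosmetic difference is that the paper invokes Lemma~A.1 of \cite{Chernozhukov_2014}, whose bound carries an additive $\{\log(1/\eta)\}^{1/2}$ rather than the ``$+1$'' in the version you cite, but both variants yield $C\varepsilon + \fasterthandet(1)$ after substitution.
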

\begin{proof}[\bfseries{Proof of Lemma~\ref{lemma:anticoncentration}}]
We use Lemma~A.1 from the supplementary material of \cite{Chernozhukov_2014}. We define the class of functions $\s{H}_{h,b} := \{ \phi_{\infty, h,b,s}^* / \sigma_{\infty, h,b,}(s) : s \in \s{S}\}$, and in the notation of Lemma~A.1 of \cite{Chernozhukov_2014}, we set $\s{F} = \s{H}_{h,b} \cup -\s{H}_{h,b}$. By Lemma~\ref{lm:subgaussian}, a tight Gaussian process in $\ell^\infty(\s{F})$ exists, so $\s{F}$ is $P_0$-pre-Gaussian, and by definition, $\n{Var}_0(f) = 1$ for all $f \in \s{F}$. Hence, the conditions of Lemma~A.1 of \cite{Chernozhukov_2014} are satisfied, and we have
\begin{align*}
    &\sup_{t \in \d{R}}P \left(\left| \sup_{a_0\in\s{A}_0}\left| Z_{\infty, h, b}(a_0)\right| - t \right|  \leq \varepsilon \left[\log h^{-1}\right]^{-1/2}\right) \\
    &\qquad\leq C \varepsilon \left[\log h^{-1}\right]^{-1/2} \left[ E_0\left\{ \sup_{a_0\in\s{A}_0} \left|Z_{\infty, h, b}(a_0)\right| \right\} + \left\{\log\left( \left[\log h^{-1}\right]^{1/2} / \varepsilon \right)\right\}^{1/2}  \right]
\end{align*}
for $C$ not depending on $h$ or $\varepsilon$. By Lemma~\ref{lm:subgaussian}, $E_0 \left\{\sup_{a_0\in\s{A}_0}|Z_{\infty,h,b
o}(s)|\right\} = \boundeddet(\{\log h^{-1}\}^{1/2})$. Furthermore, \[\varepsilon \left[\log h^{-1}\right]^{-1/2} \left\{\log\left( \left[\log h^{-1}\right]^{1/2} / \varepsilon \right)\right\}^{1/2} = \fasterthandet(1)\]
as $h \longrightarrow 0$ for any $\varepsilon > 0$.  The result follows.
\end{proof}

\begin{lemma}\label{lemma:sup_empirical_process}
If~\ref{cond:bounded_K}--\ref{cond:uniform_entropy_nuisances} and~\ref{cond:unif_nuisance_rate}--\ref{cond:holder_smooth_theta} hold, then 
\[\sup_{t \in \d{R}}\left|  P_0 \left(\sup_{a_0 \in \s{A}_0}(nh)^{1/2} \left| \frac{\theta_{n,h,b}(a_0) - \theta_0(a_0)}{\sigma_{n,h,b}(a_0)} \right| \leq t\right) -  P_0 \left(\sup_{a_0 \in \s{A}_0}\left| Z_{\infty, h, b}(a_0)\right| \leq t\right) \right| = \fasterthandet(1).\]
\end{lemma}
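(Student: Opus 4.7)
The plan is to reduce the claim to Lemma~\ref{lm:gaussian_approx} by showing that the distribution of
\[T_n := \sup_{a_0 \in \s{A}_0} (nh)^{1/2}\left|\theta_{n,h,b}(a_0) - \theta_0(a_0)\right|/\sigma_{n,h,b}(a_0)\]
is close in Kolmogorov distance to that of
\[T_n' := \sup_{a_0 \in \s{A}_0} \left|\d{G}_n h^{1/2}\phi_{\infty,h,b,a_0}^*/\sigma_{\infty,h,b}(a_0)\right|,\]
and then invoking the second statement of Lemma~\ref{lm:gaussian_approx} to bridge $T_n'$ to $T := \sup_{a_0 \in \s{A}_0}|Z_{\infty,h,b}(a_0)|$. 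The standard smoothing argument gives, for any $\varepsilon_n > 0$,
\[\sup_t |P_0(T_n \leq t) - P_0(T \leq t)| \leq P_0(|T_n - T_n'| > \varepsilon_n) + \sup_s P_0(|T - s| \leq \varepsilon_n) + \fasterthandet(1),\]
where the $\fasterthandet(1)$ term absorbs the Kolmogorov distance between $T_n'$ and $T$ established by Lemma~\ref{lm:gaussian_approx} under $nh^3 \longrightarrow \infty$. Choosing $\varepsilon_n = \varepsilon (\log h^{-1})^{-1/2}$ for arbitrary $\varepsilon > 0$, Lemma~\ref{lemma:anticoncentration} bounds the second term by $C\varepsilon + \fasterthandet(1)$; the task reduces to showing $|T_n - T_n'| = \fasterthan((\log h^{-1})^{-1/2})$, after which letting $\varepsilon \downarrow 0$ yields the result.

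Using the reverse triangle inequality for suprema and adding and subtracting appropriate terms, I will bound
\begin{align*}
|T_n - T_n'| &\leq \sup_{a_0 \in \s{A}_0} \frac{(nh)^{1/2} \sum_{j=1}^6 |R_{n,h,b,a_0,j}|}{\sigma_{n,h,b}(a_0)} \\
&\qquad + \sup_{a_0 \in \s{A}_0}\left|\d{G}_n h^{1/2}\phi_{\infty,h,b,a_0}^*\right| \cdot \sup_{a_0 \in \s{A}_0}\left|\frac{\sigma_{\infty,h,b}(a_0) - \sigma_{n,h,b}(a_0)}{\sigma_{n,h,b}(a_0)\sigma_{\infty,h,b}(a_0)}\right|,
\end{align*}
where I use Lemma~\ref{lm:first-order-decomposition} together with the identity $(nh)^{1/2}\d{P}_n \phi_{\infty,h,b,a_0}^* = \d{G}_n h^{1/2}\phi_{\infty,h,b,a_0}^*$ (which holds since $P_0 \phi_{\infty,h,b,a_0}^* = 0$). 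Lemma~\ref{lm:unif_bounded_variance} ensures $\sigma_{\infty,h,b}$ is uniformly bounded below, and Lemma~\ref{lemma:covar} (applied with $u=v$) implies both that $\sigma_{n,h,b}$ is uniformly bounded below with probability tending to one and that $\sup_{a_0 \in \s{A}_0}|\sigma_{n,h,b}(a_0) - \sigma_{\infty,h,b}(a_0)| = \bounded(n^{-p})$ for some $p > 0$. Combining this with $\sup_{a_0 \in \s{A}_0}|\d{G}_n h^{1/2}\phi_{\infty,h,b,a_0}^*| = \bounded((\log h^{-1})^{1/2})$, which follows from the Gaussian maximal bound in Lemma~\ref{lm:subgaussian} and the approximation in Lemma~\ref{lm:gaussian_approx}, makes the second summand $\bounded(n^{-p}(\log h^{-1})^{1/2}) = \fasterthan((\log h^{-1})^{-1/2})$.

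\textbf{Main obstacle.} The delicate step is tracking the $(\log h^{-1})^{1/2}$ penalty through each of the six remainder terms. For the bias $R_1$, the H\"{o}lder-continuous second derivative assumed in~\ref{cond:holder_smooth_theta}(a) upgrades Lemma~\ref{lm:R1} to $\boundeddet(h^{2+\delta_4})$, which combined with $nh^5 = \boundeddet(1)$ yields $(nh)^{1/2}\sup|R_1| = \boundeddet(h^{\delta_4})$, and $h^{\delta_4}(\log h^{-1})^{1/2} \longrightarrow 0$ at any polynomial bandwidth rate. The empirical-process remainders $R_2, R_3, R_4$ are each $\fasterthan(\{nh\log n\}^{-1/2})$ uniformly by Corollary~\ref{cor:supR2R3} and Lemma~\ref{lm:R4}, so $(nh)^{1/2}$ times each is $\fasterthan((\log n)^{-1/2})$, which dominates $(\log h^{-1})^{-1/2}$ because $\log h^{-1} = \boundeddet(\log n)$ under the polynomial bandwidth rates permitted by~\ref{cond:unif_nuisance_rate}. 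The $U$-process term $R_5$ and the nuisance-matrix term $R_6$ contribute $\bounded((nh)^{-1/2})$ and $\bounded((\log h^{-1})/(nh)^{1/2})$, respectively, by Lemmas~\ref{lemma:R5} and~\ref{lm:R6}, so after multiplication by $(\log h^{-1})^{1/2}$ they vanish provided $nh/(\log h^{-1})^3 \longrightarrow \infty$, which follows from the assumption $nh^3 \longrightarrow \infty$ together with $h \longrightarrow 0$. The main difficulty is not any single estimate but verifying that every remainder, properly weighted, tolerates the logarithmic penalty arising from the anti-concentration step.
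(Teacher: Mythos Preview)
Your proposal is correct and follows essentially the same route as the paper: invoke Lemma~\ref{lm:first-order-decomposition}, control the studentized process as the Gaussian-approximated term $\d{G}_n h^{1/2}\phi^*_{\infty,h,b,a_0}/\sigma_{\infty,h,b}(a_0)$ plus a remainder that is $\fasterthan((\log h^{-1})^{-1/2})$, apply Lemma~\ref{lm:gaussian_approx} to bridge to $Z_{\infty,h,b}$, and absorb the remainder via the anti-concentration bound of Lemma~\ref{lemma:anticoncentration}. The paper keeps the pointwise decomposition $G_n(a_0)+R_n(a_0)$ and works with the probability inequalities directly, whereas you bound $|T_n-T_n'|$ via the reverse triangle inequality for suprema; the resulting remainder accounting (Lemmas~\ref{lm:R1}, \ref{lm:R4}, \ref{lemma:R5}, \ref{lm:R6}, Corollary~\ref{cor:supR2R3}, and Lemma~\ref{lemma:covar} for the variance ratio) is identical, including the implicit use of $nh^5=\boundeddet(1)$ which the paper also invokes without listing it in the lemma statement.
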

\begin{proof}[\bfseries{Proof of Lemma~\ref{lemma:sup_empirical_process}}]
Since~\ref{cond:unif_doubly_robust} implies that~\ref{cond:doubly_robust} holds for all $a_0 \in \s{A}_0$, by Lemma~\ref{lm:first-order-decomposition},
\begin{align}
    P \left(\sup_{a_0 \in \s{A}_0}(nh)^{1/2} \left| \frac{\theta_{n,h,b}(a_0) - \theta_0(a_0)}{\sigma_{n,h,b}(a_0)} \right| \leq t\right) &= P \left(\sup_{a_0 \in \s{A}_0} \left| G_n(a_0) + R_{n}(a_0) \right| \leq t\right), \nonumber
\end{align}
where we define 
\[R_{n}(a_0) := (nh)^{1/2}\sum_{j=1}^6 R_{n,h,b,a_0,j} / \sigma_{n,h,b}(a_0) +  \frac{\sigma_{\infty,h,b}(a_0) -\sigma_{n,h,b}(a_0)}{\sigma_{n,h,b}(a_0)} \d{G}_n \frac{h^{1/2}\phi_{\infty, h, b,a_0}^*}{\sigma_{\infty,h,b}(a_0)}\]
and $G_n(a_0) := \d{G}_n \frac{h^{1/2}\phi_{\infty, h, b,a_0}^*}{\sigma_{\infty,h,b}(a_0)}$. By the triangle inequality, we have
\begin{align*}
    P \left(\sup_{\s{A}_0} \left| G_n\right| \leq t - \sup_{\s{A}_0} |R_{n}| \right) &\leq  P \left(\sup_{\s{A}_0} \left| G_n + R_{n}\right| \leq t\right)\leq P \left(\sup_{\s{A}_0} \left|G_n\right| \leq t + \sup_{\s{A}_0} |R_{n}| \right)
\end{align*}
for each $t \in \d{R}$. We then have
\begin{align*}
    &P \left(\sup_{\s{A}_0} \left| G_n + R_{n} \right| \leq t\right) - P \left(\sup_{\s{A}_0}\left| Z_{\infty, h, b}\right| \leq t\right) \\
    &\qquad\leq P \left(\sup_{\s{A}_0} \left| G_n\right| \leq t + \sup_{\s{A}_0} |R_{n}| \right) - P \left(\sup_{\s{A}_0}\left| Z_{\infty, h, b}\right| \leq t + \sup_{\s{A}_0} |R_{n}|\right) \\
    &\qquad\qquad + P \left(\sup_{\s{A}_0}\left| Z_{\infty, h, b}\right| \leq t + \sup_{\s{A}_0} |R_{n}|\right) - P \left(\sup_{\s{A}_0}\left| Z_{\infty, h, b}\right| \leq t\right) \\
    &\qquad\leq \left| P \left(\sup_{\s{A}_0} \left|G_n\right| \leq t + \sup_{\s{A}_0} |R_{n}| \right) - P \left(\sup_{\s{A}_0}\left| Z_{\infty, h, b}\right| \leq t + \sup_{\s{A}_0} |R_{n}|\right) \right| + P \left(t < \sup_{\s{A}_0}\left| Z_{\infty, h, b}\right| \leq t + \sup_{\s{A}_0} |R_{n}|\right) \\
    &\qquad \leq \sup_{t \in \d{R}} \left| P \left(\sup_{\s{A}_0} \left| G_n\right| \leq t  \right) - P \left(\sup_{\s{A}_0}\left| Z_{\infty, h, b}\right| \leq t \right) \right| + P \left(\left| \sup_{\s{A}_0}\left| Z_{\infty, h, b}\right| - t \right|  \leq \sup_{\s{A}_0} |R_{n}|\right).
\end{align*}
Similarly, 
\begin{align*}
    &P \left(\sup_{\s{A}_0} \left|G_n + R_{n} \right| \leq t\right) - P \left(\sup_{\s{A}_0}\left| Z_{\infty, h, b}\right| \leq t\right) \\
    &\qquad \geq P \left(\sup_{\s{A}_0} \left| G_n\right| \leq t - \sup_{\s{A}_0} |R_{n}| \right) - P \left(\sup_{\s{A}_0}\left| Z_{\infty, h, b}\right| \leq t - \sup_{\s{A}_0} |R_{n}|\right) \\
    &\qquad\qquad + P \left(\sup_{\s{A}_0}\left| Z_{\infty, h, b}\right| \leq t - \sup_{\s{A}_0} |R_{n}|\right) - P \left(\sup_{\s{A}_0}\left| Z_{\infty, h, b}\right| \leq t\right) \\
    &\qquad \geq -\left| P \left(\sup_{\s{A}_0} \left| G_n\right| \leq t - \sup_{\s{A}_0} |R_{n}| \right) - P \left(\sup_{\s{A}_0}\left| Z_{\infty, h, b}\right| \leq t - \sup_{\s{A}_0} |R_{n}|\right) \right|  - P \left(t - \sup_{\s{A}_0} |R_{n}|< \sup_{\s{A}_0}\left| Z_{\infty, h, b}\right| \leq t  \right) \\
    &\qquad \geq -\sup_{t \in \d{R}} \left| P \left(\sup_{\s{A}_0} \left|G_n\right| \leq t  \right) - P \left(\sup_{\s{A}_0}\left| Z_{\infty, h, b}\right| \leq t \right) \right| - P \left(\left| \sup_{\s{A}_0}\left| Z_{\infty, h, b}\right| - t \right|  \leq \sup_{\s{A}_0} |R_{n}|\right).
\end{align*}
Hence,
\begin{align*}
     &\sup_{t \in \d{R}}\left|  P \left(\sup_{a_0\in\s{A}_0}(nh)^{1/2} \left| \frac{\theta_{n,h,b}(s) - \theta_0(s)}{\sigma_{n,h,b}(a_0)} \right| \leq t\right) -  P \left(\sup_{a_0\in\s{A}_0}\left| Z_{\infty, h, b}(a_0)\right| \leq t\right) \right| \\
     &\qquad\leq \sup_{t \in \d{R}} \left| P \left(\sup_{a_0\in\s{A}_0} \left| \d{G}_n \frac{h^{1/2}\phi_{\infty, h, b,a_0}^*}{\sigma_{\infty,h,b}(a_0)}\right| \leq t  \right) - P \left(\sup_{a_0\in\s{A}_0}\left| Z_{\infty, h, b}(a_0)\right| \leq t \right) \right|\\
     &\qquad \qquad + \sup_{t \in \d{R}}P \left(\left| \sup_{a_0\in\s{A}_0}\left| Z_{\infty, h, b}(a_0)\right| - t \right|  \leq \sup_{a_0\in\s{A}_0} |R_{n}(a_0)|\right)
\end{align*}
The first term on the right hand side is $\fasterthandet(1)$ by Lemma~\ref{lm:gaussian_approx}. For the second term, for any $\varepsilon > 0$, we can write
\begin{align}
     \sup_{t \in \d{R}}P \left(\left| \sup_{a_0\in\s{A}_0}\left| Z_{\infty, h, b}(a_0)\right| - t \right|  \leq \sup_{a_0\in\s{A}_0} |R_{n}(a_0)|\right) &\leq  \sup_{t \in \d{R}}P \left(\left| \sup_{a_0\in\s{A}_0}\left| Z_{\infty, h, b}(a_0)\right| - t \right|  \leq \varepsilon \left[\log h^{-1}\right]^{-1/2}\right) \nonumber\\
     &\qquad + P\left( \sup_{a_0\in\s{A}_0} |R_{n}(a_0)| > \varepsilon \left[\log h^{-1}\right]^{-1/2}\right). \label{eq:anticon}
\end{align}
By Lemma~\ref{lemma:anticoncentration}, the first term is bounded by $C\varepsilon + \fasterthandet(1)$ for $C$ not depending on $\varepsilon$ or $h$.  For the second term, since $nh \longrightarrow \infty$, we have
\begin{align*}
\left(\log h^{-1}\right)^{1/2}\sup_{a_0\in\s{A}_0} |R_{n,a_0}| 
 &\lesssim \left[\sup_{a_0 \in \s{A}_0}  \sigma_{n,h,b}(a_0)^{-1}\right] \left[ \sum_{j=1}^6 (nh \log n)^{1/2}\sup_{a_0 \in \s{A}_0} |R_{n,h,b,a_0,j}|\right. \\
 &\qquad +  \left. (\log n)^{1/2} \sup_{a_0 \in \s{A}_0} |\sigma_{n,h,b}(a_0) -\sigma_{\infty,h,b}(a_0)| \sup_{a_0 \in \s{A}_0} \left|\d{G}_n \frac{h^{1/2}\phi_{\infty, h, b,a_0}^*}{\sigma_{\infty,h,b}(a_0)}\right| \right].
\end{align*}
By Lemmas~\ref{lm:unif_bounded_variance} and~\ref{lemma:covar}, $\sup_{a_0\in\s{A}_0}\sigma_{\infty,h,b}(a_0)^{-1} = \bounded(1)$. By Lemma~\ref{lm:R1}, $\sup_{a_0 \in \s{A}_0} |R_{n,h,b,a_0,1}| = \boundeddet(h^{2 + \delta_4})$ for some $\delta_4 > 0$. Since $nh^5 = \boundeddet(1)$, we then have $\sup_{a_0 \in \s{A}_0} |R_{n,h,b,a_0,1}| = \fasterthandet(\{nh\log n\}^{-1/2})$. By Corollary~\ref{cor:supR2R3}, $\sup_{a_0 \in \s{A}_0} |R_{n,h,b,a_0,2}|$ and $\sup_{a_0 \in \s{A}_0} |R_{n,h,b,a_0,2}|$ are $\fasterthan( \{nh \log n\}^{-1/2})$ as long as $nh^p = \fasterthandet(1)$ for some $p > 0$, which holds for $p = 6$ since $nh^5 = \boundeddet(1)$ and $h = \fasterthandet(1)$. By Lemma~\ref{lm:R4}, $\sup_{a_0 \in \s{A}_0} |R_{n,h,b,a_0,4}| = \fasterthan(\{nh \log n\}^{-1/2})$. By Lemma~\ref{lemma:R5}, $\sup_{a_0 \in \s{A}_0} |R_{n,h,b,a_0,5}| = \bounded(\{nh\}^{-1}) = \fasterthan(\{nh\log n\}^{-1/2})$ since $nh / \log h^{-1}$ and $nh / \log n$ both go to $\infty$ by~\ref{cond:unif_nuisance_rate}. Finally, by Lemma~\ref{lm:R6}, $\sup_{a_0 \in \s{A}_0} |R_{n,h,b,a_0,6}| = \bounded(\{nh / \log h^{-1}\}^{-1}) = \fasterthan(\{nh\log n\}^{-1/2})$ by the same logic. Hence, $\sum_{j=1}^6 \sup_{a_0 \in \s{A}_0} |R_{n,h,b,a_0,j}| = \fasterthan(\{nh \log n\}^{1/2})$.

By Lemmas~\ref{lm:subgaussian} and~\ref{lm:gaussian_approx}, 
\[ \sup_{a_0 \in \s{A}_0} \left|\d{G}_n \frac{h^{1/2}\phi_{\infty, h, b,a_0}^*}{\sigma_{\infty,h,b}(a_0)}\right| = \boundeddet( \{\log n\}^{1/2}) + \fasterthan(1),\]
and by Lemma~\ref{lemma:covar}, $\sup_{a_0 \in \s{A}_0} |\sigma_{n,h,b}(a_0) -\sigma_{\infty,h,b}(a_0)| = \fasterthan(\{\log n\}^{-1})$.

We have now shown that $\left(\log h^{-1}\right)^{1/2}\sup_{a_0\in\s{A}_0} |R_{n,a_0}| = \fasterthan(1)$, which implies that the second term on the right hand side of~\eqref{eq:anticon} goes to zero for any $\varepsilon > 0$. Since $\varepsilon$ was arbitrary, this implies that 
\[  \sup_{t \in \d{R}}P \left(\left| \sup_{a_0\in\s{A}_0}\left| Z_{\infty, h, b}(a_0)\right| - t \right|  \leq \sup_{a_0\in\s{A}_0} |R_{n}(a_0)|\right) = \fasterthandet(1),\]
which concludes the proof.
\end{proof}

\begin{lemma}\label{lemma:finite_approx}
If the conditions of Lemma~\ref{lm:subgaussian} hold and $\omega_n = \fasterthandet\left(h^p\right)$ for some $p > 1$, then 
\[\sup_{t \in \d{R}}\left| P_0 \left(\sup_{a_0 \in \s{A}_0}\left| Z_{\infty, h, b}(a_0)\right| \leq t\right) -  P_0 \left(\max_{a_0 \in \s{A}_n}\left| Z_{\infty, h, b}(a_0)\right| \leq t\right) \right| = \fasterthandet(1).\]
\end{lemma}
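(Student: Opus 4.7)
The plan is to exploit the fact that $\s{A}_n \subseteq \s{A}_0$ so that $\max_{a_0 \in \s{A}_n}|Z_{\infty,h,b}(a_0)| \leq \sup_{a_0 \in \s{A}_0}|Z_{\infty,h,b}(a_0)|$, reducing the problem to controlling the one-sided difference $\Delta_n := \sup_{a_0 \in \s{A}_0}|Z_{\infty,h,b}(a_0)| - \max_{a_0 \in \s{A}_n}|Z_{\infty,h,b}(a_0)|$. The approach has three steps: (i) bound $\Delta_n$ in probability using the modulus-of-continuity estimate from Lemma~\ref{lm:subgaussian}; (ii) deduce a two-sided sandwich for $P_0(\max_{\s{A}_n}|Z_{\infty,h,b}| \leq t)$ in terms of $P_0(\sup_{\s{A}_0}|Z_{\infty,h,b}| \leq t \pm \Delta_n)$; (iii) close the argument using the anti-concentration bound of Lemma~\ref{lemma:anticoncentration}.

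For step (i), note that for every $a_0 \in \s{A}_0$ there exists $a \in \s{A}_n$ with $|a - a_0| \leq \omega_n$ by definition of the mesh, so by the reverse triangle inequality $\Delta_n \leq \sup_{|u-v|\leq \omega_n}|Z_{\infty,h,b}(u) - Z_{\infty,h,b}(v)|$. Lemma~\ref{lm:subgaussian} then yields, for all $n$ large enough so that $2\omega_n \leq h$ (which holds since $\omega_n = \fasterthandet(h^p)$ with $p > 1$),
\begin{align*}
    E_0[\Delta_n] \;\lesssim\; h^{-1/2}\omega_n^{1/2}\left[\log\frac{|\s{A}_0|}{4\omega_n}\right]^{1/2} \;=\; \boundeddet\!\left(h^{(p-1)/2}\{\log h^{-1}\}^{1/2}\right).
\end{align*}
Multiplying by $(\log h^{-1})^{1/2}$ and using $p > 1$ shows that $\Delta_n (\log h^{-1})^{1/2} = \fasterthan(1)$ by Markov's inequality.

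For step (ii), since $\max_{\s{A}_n}|Z_{\infty,h,b}| \leq \sup_{\s{A}_0}|Z_{\infty,h,b}|$ we automatically have $P_0(\sup_{\s{A}_0}|Z_{\infty,h,b}| \leq t) \leq P_0(\max_{\s{A}_n}|Z_{\infty,h,b}| \leq t)$. Conversely, $\sup_{\s{A}_0}|Z_{\infty,h,b}| \leq \max_{\s{A}_n}|Z_{\infty,h,b}| + \Delta_n$, so for any $\varepsilon > 0$,
\begin{align*}
    P_0(\max_{\s{A}_n}|Z_{\infty,h,b}| \leq t) - P_0(\sup_{\s{A}_0}|Z_{\infty,h,b}| \leq t) &\leq P_0\left(\big|\sup_{\s{A}_0}|Z_{\infty,h,b}| - t\big| \leq \Delta_n\right) \\
    &\leq P_0\left(\big|\sup_{\s{A}_0}|Z_{\infty,h,b}| - t\big| \leq \varepsilon(\log h^{-1})^{-1/2}\right) \\
    &\qquad + P_0\left(\Delta_n > \varepsilon(\log h^{-1})^{-1/2}\right).
\end{align*}

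For step (iii), the first probability on the right is bounded by $C\varepsilon + \fasterthandet(1)$ uniformly in $t$ by Lemma~\ref{lemma:anticoncentration}, and the second tends to zero by step (i). Taking the supremum over $t$ and then letting $\varepsilon \downarrow 0$ along a suitable sequence yields the claim.

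The main obstacle is verifying that the modulus-of-continuity rate derived from Lemma~\ref{lm:subgaussian} is fast enough to beat the anti-concentration scale $(\log h^{-1})^{-1/2}$; this is precisely where the assumption $\omega_n = \fasterthandet(h^p)$ with $p > 1$ (rather than $p=1$) is used, since $h^{(p-1)/2}\log h^{-1} \longrightarrow 0$ exactly when $p > 1$.
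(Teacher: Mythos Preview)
Your proof is correct and follows essentially the same route as the paper's: bound the discrepancy between $\sup_{\s{A}_0}|Z_{\infty,h,b}|$ and $\max_{\s{A}_n}|Z_{\infty,h,b}|$ by the modulus of continuity from Lemma~\ref{lm:subgaussian}, then close with the anti-concentration estimate of Lemma~\ref{lemma:anticoncentration} at scale $\varepsilon(\log h^{-1})^{-1/2}$. Your observation that $\s{A}_n \subseteq \s{A}_0$ gives one inequality for free is a mild simplification over the paper's symmetric treatment of the two indicator events, but the substance of the argument is identical.
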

\begin{proof}[\bfseries{Proof of Lemma~\ref{lemma:finite_approx}}]
We write
\begin{align*}
\left| P_0 \left(\sup_{\s{A}_0}\left| Z_{\infty, h, b}\right| \leq t\right) -  P_0 \left(\max_{\s{A}_n}\left| Z_{\infty, h, b}\right| \leq t\right) \right| &= \left| E_0 \left[ I\left( \sup_{\s{A}_0}\left| Z_{\infty, h, b}\right| \leq t\right) -  I\left(\max_{\s{A}_n}\left| Z_{\infty, h, b}\right| \leq t\right) \right] \right|\\
&\leq E_0 \left[ \left| I\left( \sup_{\s{A}_0}\left| Z_{\infty, h, b}\right| \leq t\right) -  I\left(\max_{\s{A}_n}\left| Z_{\infty, h, b}\right| \leq t\right)\right| \right] \\
&\leq P_0 \left( \sup_{\s{A}_0}\left| Z_{\infty, h, b}\right| \leq t, \max_{\s{A}_n}\left| Z_{\infty, h, b}\right| >t \right) \\
&\qquad + P_0 \left( \sup_{\s{A}_0}\left| Z_{\infty, h, b}\right| > t, \max_{\s{A}_n}\left| Z_{\infty, h, b}\right| \leq t \right).
\end{align*}
We address the two probabilities in the final expression in the same way, so we only provide the derivation for the first term. For any $\varepsilon > 0$, we can write
\begin{align*}
\sup_{t \in \d{R}} P_0 \left( \sup_{\s{A}_0}\left| Z_{\infty, h, b}\right| \leq t, \max_{\s{A}_n}\left| Z_{\infty, h, b}\right| >t \right) &\leq \sup_{t \in \d{R}}P_0 \left( \sup_{\s{A}_0}\left| Z_{\infty, h, b}\right| \leq t - \varepsilon \left[\log h^{-1}\right]^{-1/2}, \max_{\s{A}_n}\left| Z_{\infty, h, b}\right| >t \right) \\
&\qquad + \sup_{t \in \d{R}}P_0 \left( t- \varepsilon\left[\log h^{-1}\right]^{-1/2} < \sup_{\s{A}_0}\left| Z_{\infty, h, b}\right| \leq t \right) \\
&\leq P_0 \left( \left|\sup_{\s{A}_0}\left| Z_{\infty, h, b}\right| - \max_{\s{A}_n}\left| Z_{\infty, h, b}\right| \right| >  \varepsilon\left[\log h^{-1}\right]^{-1/2} \right) \\
&\qquad +\sup_{t \in \d{R}} P_0 \left( \left| \sup_{\s{A}_0}\left| Z_{\infty, h, b}\right| - t\right| < \varepsilon\left[\log h^{-1}\right]^{-1/2} \right).
\end{align*}
By Lemma~\ref{lemma:anticoncentration}, the second term in the last inequality is bounded by $C \varepsilon + \fasterthandet(1)$. For the first term, we note that by the definition of the mesh $\omega_n$ of $\s{A}_n$, 
\[ \left|\sup_{\s{A}_0} | Z_{\infty,h,b}| - \max_{\s{A}_n} | Z_{\infty,h,b}| \right| \leq \sup_{|u-v| \leq \omega_n} \left|Z_{\infty,h,b}(u) - Z_{\infty,h,b}(v)\right| \]
almost surely. By Markov's inequality and Lemma~\ref{lm:subgaussian}, we then have
\begin{align*}
P_0 \left( \left|\sup_{\s{A}_0}\left| Z_{\infty, h, b}\right| - \max_{\s{A}_n}\left| Z_{\infty, h, b}\right| \right| >  \varepsilon\left[\log h^{-1}\right]^{-1/2} \right) &\leq \varepsilon^{-1} \left[\log h^{-1}\right]^{1/2} E_0 \left[\left|\sup_{\s{A}_0}\left| Z_{\infty, h, b}\right| - \max_{\s{A}_n}\left| Z_{\infty, h, b}\right| \right|\right]  \\
&\leq \varepsilon^{-1} \left[\log h^{-1}\right]^{1/2} E_0 \left[\sup_{|u-v| \leq \omega_n} \left|Z_{\infty,h,b}(u) - Z_{\infty,h,b}(v)\right|\right]  \\
&\lesssim \varepsilon^{-1} \left[\log h^{-1}\right]^{1/2}  h^{-1/2} ([2\omega_n] \wedge h)^{1/2} \left[ \log \frac{|\s{A}_0|}{2 ([2\omega_n] \wedge h)}\right]^{1/2} \\
&= \varepsilon^{-1}\left[\log h^{-1}  ([2\omega_n / h] \wedge 1) \log \frac{|\s{A}_0|}{2 ([2\omega_n] \wedge h)}\right]^{1/2}.
\end{align*}
Now $\omega_n = \fasterthandet\left(h^p \right)$ implies that $\omega_n / h \longrightarrow 0$, so for all $n$ large enough the above expression is equal to
\[ \varepsilon^{-1}\left[h^{-1} \log h^{-1}  \omega_n \log \frac{|\s{A}_0|}{4\omega_n}\right]^{1/2}.\]
This goes to zero by the assumed rate for $\omega_n$ for every $\varepsilon > 0$.
\end{proof}

\begin{lemma}\label{lemma:approx_process}
If~\ref{cond:bounded_K}--\ref{cond:uniform_entropy_nuisances} and~\ref{cond:unif_nuisance_rate}--\ref{cond:holder_smooth_theta}, $nh^5 = \boundeddet(1)$, and $m n^d = \boundeddet(1)$ for some $d \in (0, \infty)$, then
\begin{align*}
     \sup_{t \in \d{R}}\left| P_0 \left(\max_{a_0 \in \s{A}_n}\left| Z_{\infty, h, b}(a_0)\right| \leq t\right) -  P_0 \left(\max_{a_0 \in \s{A}_n}\left| Z_{n, h, b}(a_0)\right| \leq t \mid \b{O}_n\right) \right| = \fasterthan(1).
\end{align*}
\end{lemma}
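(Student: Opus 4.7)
The plan is to invoke a Gaussian–Gaussian comparison inequality for maxima, such as Theorem~2 of \citet{chernozhukov2015comparison}, applied pointwise (on the data) to the two mean-zero Gaussian vectors $Z_{\infty,h,b}|_{\s{A}_n}$ and $Z_{n,h,b}|_{\s{A}_n}$ on $\d{R}^m$. That theorem gives a bound of the form
\[
\sup_{t\in\d{R}} \left| P_0\!\left(\max_{a_0\in\s{A}_n} |Z_{\infty,h,b}(a_0)|\leq t\right) - P_0\!\left(\max_{a_0\in\s{A}_n} |Z_{n,h,b}(a_0)|\leq t\,\big|\,\b{O}_n\right) \right| \;\lesssim\; \Delta_n^{1/3}\,(\log m)^{2/3},
\]
where $\Delta_n := \max_{u,v\in \s{A}_n} |\Sigma_{n,h,b}(u,v)-\Sigma_{\infty,h,b}(u,v)|$ and $\Sigma_{n,h,b}$ denotes the (data-dependent) covariance function of $Z_{n,h,b}$. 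Since the inequality is deterministic given the data, it suffices to control $\Delta_n$ in probability.

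To control $\Delta_n$, I would first note that by Lemma~\ref{lm:unif_bounded_variance} there exist constants $0<c\leq C<\infty$ such that, for all $h$ small enough, $c\leq \sigma_{\infty,h,b}(a_0)\leq C$ for every $a_0\in\s{A}_0$. Combined with the uniform statement of Lemma~\ref{lemma:covar}, which gives
\[
\sup_{u,v\in\s{A}_0} \left| h\d{P}_n(\phi^*_{n,h,b,u}\phi^*_{n,h,b,v}) - hP_0(\phi^*_{\infty,h,b,u}\phi^*_{\infty,h,b,v}) \right| = \bounded(n^{-p})
\]
for some $p>0$, this also yields $\sup_{a_0\in\s{A}_0}|\sigma_{n,h,b}(a_0)-\sigma_{\infty,h,b}(a_0)|=\bounded(n^{-p/2})$. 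Hence $\sigma_{n,h,b}$ is also uniformly bounded above and below with probability tending to one. Writing
\[
\Sigma_{n,h,b}(u,v)-\Sigma_{\infty,h,b}(u,v) = \frac{h\d{P}_n(\phi^*_{n,h,b,u}\phi^*_{n,h,b,v})-hP_0(\phi^*_{\infty,h,b,u}\phi^*_{\infty,h,b,v})}{\sigma_{n,h,b}(u)\sigma_{n,h,b}(v)} + hP_0(\phi^*_{\infty,h,b,u}\phi^*_{\infty,h,b,v})\left[\frac{1}{\sigma_{n,h,b}(u)\sigma_{n,h,b}(v)}-\frac{1}{\sigma_{\infty,h,b}(u)\sigma_{\infty,h,b}(v)}\right],
\]
and using the uniform Cauchy--Schwarz bound $|hP_0(\phi^*_{\infty,h,b,u}\phi^*_{\infty,h,b,v})|\leq \sigma_{\infty,h,b}(u)\sigma_{\infty,h,b}(v) = \boundeddet(1)$, both summands are $\bounded(n^{-p/2})$ uniformly in $(u,v)\in\s{A}_0^2$, so $\Delta_n=\bounded(n^{-p/2})$.

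The growth assumption $mn^d=\boundeddet(1)$ (interpreted as $m$ being polynomially bounded in $n$) implies $\log m = \boundeddet(\log n)$, so
\[
\Delta_n^{1/3}(\log m)^{2/3} = \bounded\!\left(n^{-p/6}(\log n)^{2/3}\right) = \fasterthan(1).
\]
Plugging back into the comparison inequality gives the claimed $\fasterthan(1)$ bound. The main technical obstacle is not the comparison step itself, but ensuring that the covariance discrepancy $\Delta_n$ is controlled \emph{uniformly} over $\s{A}_0\times\s{A}_0$ at a polynomial rate in $n$; this is exactly why the uniform rate in Lemma~\ref{lemma:covar} (rather than just pointwise consistency) is essential, and why the proof of that lemma must push the bandwidth dependence through an empirical-process argument for the full covariance kernel rather than just the variance along the diagonal.
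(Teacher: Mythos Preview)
Your proposal is correct and takes essentially the same approach as the paper: apply a Gaussian comparison inequality for maxima, bound the covariance discrepancy $\Delta_n$ at a polynomial rate via Lemmas~\ref{lm:unif_bounded_variance} and~\ref{lemma:covar}, and use $\log m\lesssim\log n$ to conclude. The paper cites Theorem~2 of \cite{Chernozhukov_2014} (yielding the bound $[\Delta_n\log(2m)\log(h^{-1}\vee\Delta_n^{-1})]^{1/3}$ after stacking $Z$ and $-Z$ to reduce $\max|Z|$ to a one-sided maximum over $2m$ coordinates) rather than \cite{chernozhukov2015comparison}, but this is a cosmetic difference.
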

\begin{proof}[\bfseries{Proof of Lemma~\ref{lemma:approx_process}}]
Given $O_1, \dotsc, O_n$, $Z_{n, h, b}$ is a Gaussian process on $\s{A}_n$. Hence, we can use the Gaussian comparison result of Theorem~2 of \cite{Chernozhukov_2014}. Let $\s{A}_n := \{a_1 < a_2 < \cdots < a_m\}$. In their notation, we then have $Y_j = Z_{\infty, h,b}(a_j)$ and $X_j = Z_{n,h,b}(a_j)$ for $1 \leq j \leq m$, and $Y_j = -Z_{\infty, h,b}(a_j)$ and $X_j = -Z_{n,h,b}(a_j)$ for $1+ m \leq j \leq 2m$. Hence, $\max_{1 \leq j \leq 2m} Y_j = \max_{a_0 \in \s{A}_n}\left| Z_{\infty, h, b}(a_0)\right|$ and $\max_{1 \leq j \leq 2m} X_j = \max_{a_0 \in \s{A}_n}\left| Z_{n, h, b}(a_0)\right|$. Hence, we have $p = 2m$. Since $Z_{\infty, h,b}$ and $Z_{n,h,b}$ are both normalized, we have $\sigma_{jj}^X = \sigma_{jj}^Y = 1$ for all $j$. By Lemma~\ref{lm:subgaussian}, we have
\[ a_p \leq E\left[ \sup_{a_0 \in \s{A}_0} \left| Z_{\infty,h,b}(a_0)\right| \right] = \boundeddet\left( \{\log h^{-1}\}^{1/2} \right).\]
We define 
\begin{align*}
    \Delta_n &:= \sup_{u, v \in \s{A}_0} \left| \d{P}_n (h \phi_{n,h,b,u}^* \phi_{n,h,b,v}^*) / [\sigma_{n,h,b}(u)\sigma_{n,h,b}(v)] - P_0 (h \phi_{\infty,h,b,u}^* \phi_{\infty,h,b,v}^*) / [\sigma_{\infty,h,b}(u)\sigma_{\infty,h,b}(v)]\right|. 
\end{align*}
Then by Theorem~2 of \cite{Chernozhukov_2014}, 
\begin{align*}
 \sup_{t \in \d{R}}\left| P_0 \left(\max_{a_0 \in \s{A}_n}\left| Z_{\infty, h, b}(a_0)\right| \leq t\right) -  P_0 \left(\max_{a_0 \in \s{A}_n}\left| Z_{n, h, b}(a_0)\right| \leq t \mid \b{O}_n\right) \right| &\leq \left[ \Delta_n \{\log (2m)\} \log \left(h^{-1} \vee \Delta_n^{-1}\right) \right]^{1/3}.
\end{align*}
By Lemmas~\ref{lemma:covar} and~\ref{lm:unif_bounded_variance}, $\Delta_n = \fasterthan(n^{-p})$ for some $p > 0$. Since $\log m \lesssim \log n$ by assumption, the right hand side of the preceding display is $\fasterthan(1)$.

\end{proof}

\clearpage
\section{Additional results from numerical studies}\label{supp:numerical}

This section presents additional results from the numerical study described in Section 4 of the main text. \rev{In addition to data adaptive nuisance estimators, we consider the following parametric estimators. First, we estimated $\mu_0$ using a correctly specified logistic regression model to obtain estimators of $\gamma_1$, $\gamma_2$, $\gamma_3$ and $\gamma_4$. We estimated $g_0$ using maximum likelihood estimation with a correctly specified parametric model to obtain an estimator of $\beta$. To investigate the double-robustness of our estimator, we also estimated $\mu_0$ and $g_0$ using incorrectly specified parametric models. For $\mu_0$, we used the incorrectly specified logistic regression model that assumes $\mu_0(a, w) = \n{expit}\left(\tilde\gamma_1^T \tilde{w} + \tilde\gamma_2a \right)$ for some $\tilde\gamma_1 \in \d{R}^3$, $\tilde\gamma_2 \in \d{R}$, where $\tilde{w} := (1,w_1,w_2)$. For $g_0$, we used the incorrectly specified linear regression model that assumes that given $W = w$, $A$ follows a normal distribution with mean $ \tilde{\gamma_3}^T \tilde{w}$ for some $\tilde\gamma_3 \in \d{R}^3$ and constant variance}.

\rev{Figures~\ref{fig:bias_variance_500_2500} and \ref{fig:bias_variance_500_2500_ml} display} the pointwise empirical bias and variance for sample sizes $n=500, 1000$, and $2500$. \rev{Figure~\ref{fig:bias_variance_500_2500} corresponds to the nuisance parameter estimation based on parametric models and Figure \ref{fig:bias_variance_500_2500_ml} corresponds to the nuisance parameter estimation based on SuperLearner.} The squared bias of the local linear estimators remains large for $n=2500$ \rev{unless undersmoothing is employed}. The variance of the local linear estimator, on the other hand, is smaller than that of the debiased estimator compared with the corresponding bandwidth selection procedure. \rev{The variance of undersmoothed local linear estimator is comparable with the debiased estimator using LOOCV, and it increases faster than the optimal rate $n^{4/5}$.} The variance of all estimators is generally larger when the outcome regression model is misspecified. \rev{The conclusion is similar when using parametric and SuperLearner-based nuisance estimators.} 

Figure~\ref{fig:mse_500_2500_ml} displays the pointwise empirical mean squared error (MSE) of the estimators for $n = 500$, $1000$ and $2500$. \rev{The results include both parametric model-based and SuperLearner-based nuisance estimators.} For all sample sizes, the debiased local linear estimator with the plug-in bandwidth selection procedure attains the smallest MSE for most interior points we considered.  The local linear estimator using plug-in bandwidth selection attains a small MSE at points where the second derivative of $\theta_0$ is close to zero, but for points where the second derivative is far from zero, it has a larger MSE due to its bias. For the debiased estimator, the LOOCV bandwidth selection yields similar MSEs regardless of optimizing over both $b$ and $h$ or just over $h$ with $b = h$ fixed. \rev{When $n=500$, the undersmoothed local linear estimator displays a slightly smaller MSE than the debiased estimator with LOOCV bandwidth selection; however, as $n$ increases, the MSE grows, indicating suboptimal convergence rate of MSE when undersmoothing is employed. The undersmoothing technique consistently yields larger MSEs compared to the debiased local linear estimator with the plug-in bandwidth selection for all interior points and sample sizes considered. All estimators display larger MSE towards the boundary points.}

Figure~\ref{fig:pointwise_coverage_500_2500} displays the empirical coverage of pointwise 95\% confidence intervals for sample sizes $n=500, 1000$ and $n = 2500$. The coverage of confidence intervals based on the local linear estimator does not improve when the sample size is larger. On the other hand, the coverage of the confidence intervals based on the debiased estimator are slightly lower for $n = 500$, and generally very close to 95\% when $n = 2500$. The coverage accuracy is particularly good when at least one of the nuisance estimators is based on a correctly specified parametric model. On the other hand, we observe slightly worse coverage near the boundary when only conditional density is correctly specified and is based on SuperLearner.

\rev{Figure~\ref{fig:pointwise_ci_width_supp} displays the median length of pointwise 95\% confidence intervals for two sample sizes, $n=500, 1000$ and $n=2500$. We observe that the widths of the confidence intervals are generally comparable between local linear estimators and the debiased estimators when the plug-in method is used, indicating that the bias correction has a relative minor impact on confidence interval widths. When undersmoothing is used, the confidence intervals widen and align with the debiased local linear estimators obtained through LOOCV bandwidth selection procedures. Comparatively, the confidence intervals based on the debiased local linear estimator with the plug-in method exhibit narrower widths than those of the local linear estimator with undersmoothing. This gap widens as the sample size $n$ increases, as undersmoothing results in a suboptimal convergence rate.}

Figures~\ref{fig:pairwise_coverage_500_2500} \rev{and ~\ref{fig:pairwise_coverage_500_2500_ml} display} the empirical coverage of confidence intervals for the causal effect $\theta_0(a) - \theta_0(0.5)$ based for sample sizes $n=500, 1000$ and $n = 2500$. The conclusion is similar to the case with $n=1000$ from the main text; the confidence intervals based on the asymptotic independence are conservative when the distance between the evaluation points is small. This is still a problem at $n=2500$. The influence function-based confidence intervals perform particularly well \rev{when nuisance function is estimated based on parametric models as seen in Figure~\ref{fig:pairwise_coverage_500_2500}. The conclusion for SuperLearner-based methods are similar except that the influence function-based confidence intervals based on the plug-in bandwidth selection overcover when the outcome regression is misspecified.}

\rev{Figures~\ref{fig:uniform_coverage-supp} displays the empirical coverage of the uniform confidence bands over $\mathcal{A}_0 = [0,1.0]$ obtained from nuisance estimators based on parametric models and SuperLearner. The coverage accuracy of uniform bands does not show a significant difference between parametric and data-adaptive nuisance estimators. The plug-in bandwidth selection performs well for sample sizes greater than $1000$ while methods based on cross-validation show slight undercoverage for large sample sizes.}

Figure~\ref{fig:bandwidth_distribution} displays distributions of the bandwidth $h$ selected by the \rev{five} procedures. We note that the bandwidth for the plug-in method is the same for the local linear and debiased estimators. The plug-in method generally selects larger bandwidths than the LOOCV methods. This explains the larger bias and smaller variance of the estimators based on the plug-in method. The distribution of $h$ for the LOOCV method that selects both $h$ and $b$ has the largest variance. Figure~\ref{fig:tau_distribution} shows the densities of $\tau = h/b$ for the LOOCV method that selects both $h$ and $b$. Surprisingly, the procedure seems to favor smaller $\tau$ than one. This explains the larger variance of the LOOCV method that selects both $h$ and $b$ relative to the LOOCV method that fixes $b = h$.

\rev{Finally, Figure~\ref{fig:true-dose-response} displays the true covariate-adjusted regression function used across the numerical study and its second derivative. The location that corresponds to a large second derivative in its absolute value coincides with the area where local linear estimators demonstrate larger bias and poor coverage.}
\setcounter{figure}{6}
\begin{figure}[ht]
\centering
\begin{subfigure}{\textwidth}
  \centering
  \includegraphics[width=1\linewidth]{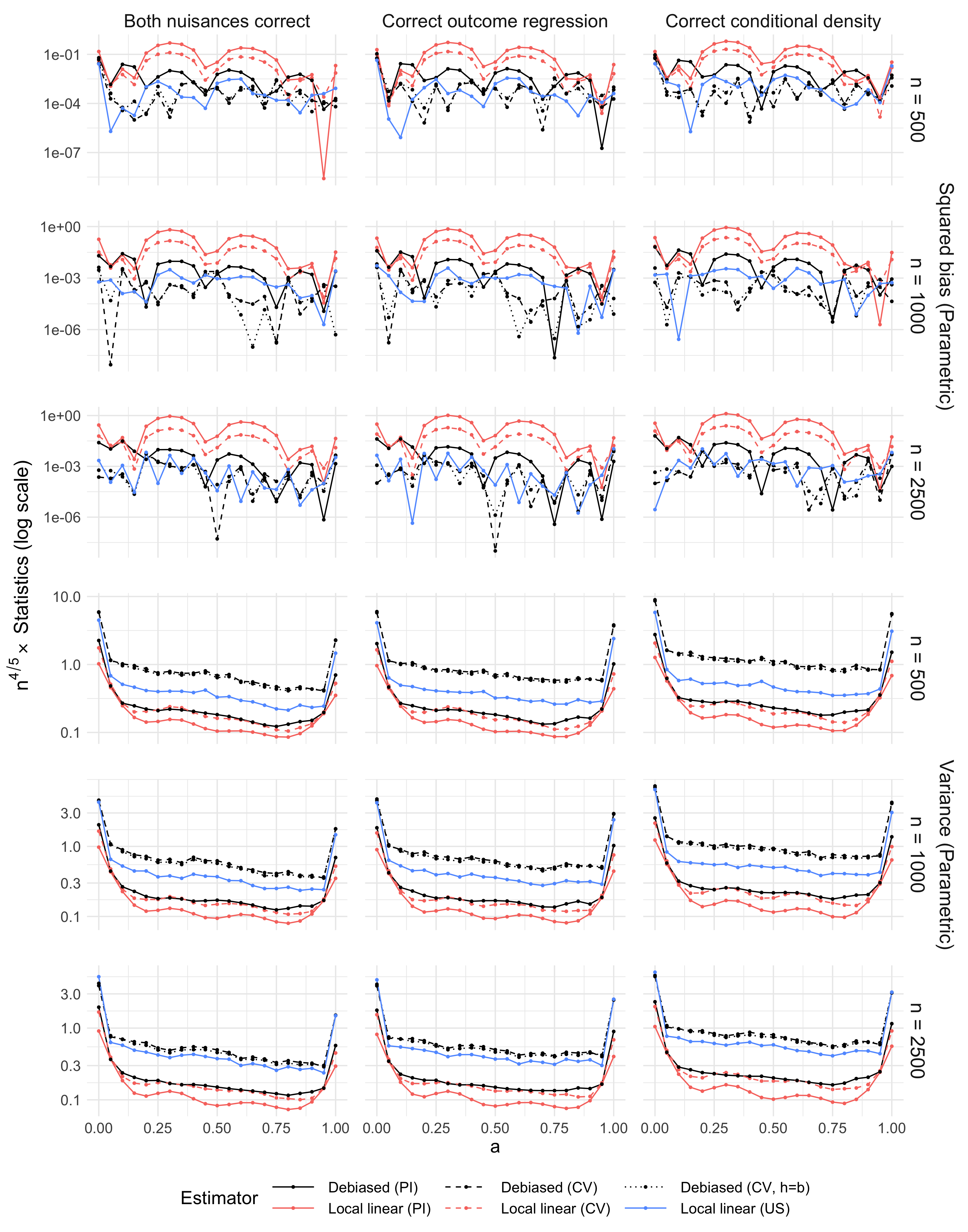}
\end{subfigure}
\caption{Empirical squared bias and variance of the estimators from sample sizes $n=500, 1000$ and $2500$ \rev{when parametric models are used for estimating nuisance functions.} The values are scaled by $n^{4/5}$ and displayed on the log scale.}
\label{fig:bias_variance_500_2500}
\end{figure}

\begin{figure}[ht]
\centering
\begin{subfigure}{\textwidth}
  \centering
  \includegraphics[width=1\linewidth]
  {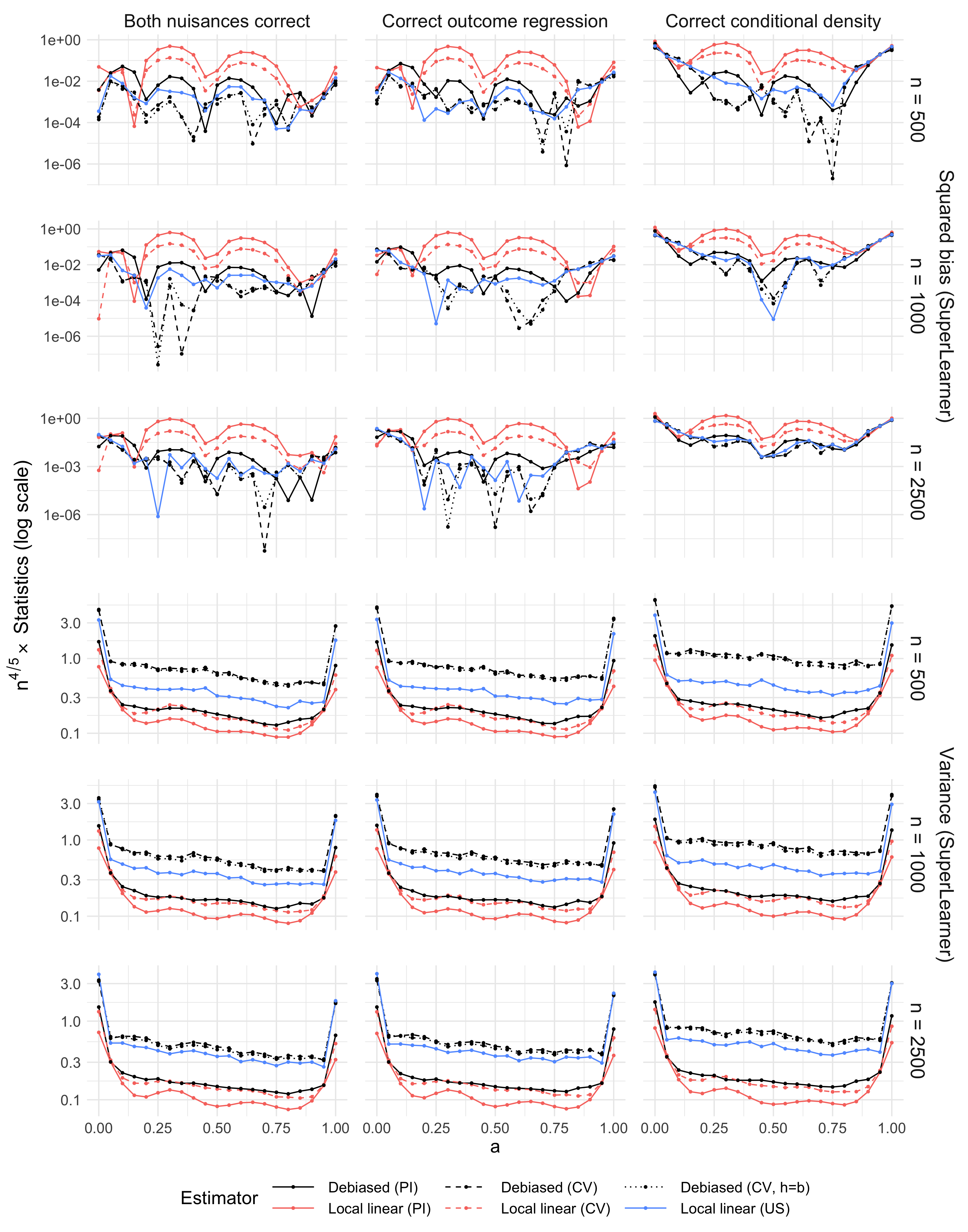}
\end{subfigure}
\caption{\rev{Empirical squared bias and variance of the estimators from sample sizes $n=500, 1000$ and $2500$ when SuperLearner-based methods are used for estimating nuisance functions. The values are scaled by $n^{4/5}$ and displayed on the log scale.}}
\label{fig:bias_variance_500_2500_ml}
\end{figure}

\begin{figure}[ht]
\centering
\begin{subfigure}{\textwidth}
  \centering
  \includegraphics[width=1\linewidth]{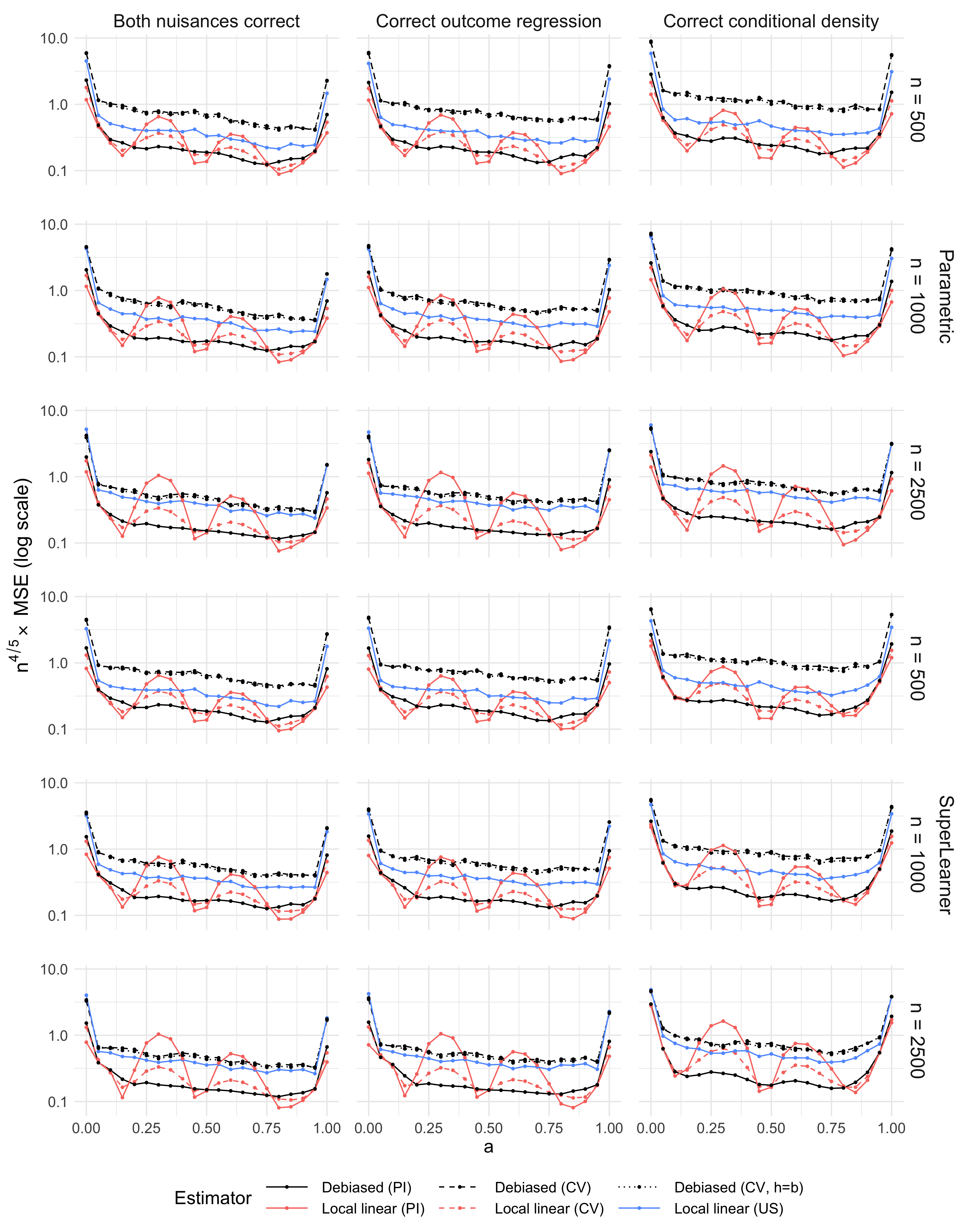}
\end{subfigure}
\caption{Empirical mean squared error of the estimators \rev{based on both parametric and SuperLearner estimators for nuisance functions}. The values are scaled by $n^{4/5}$ and displayed on the log scale.}
\label{fig:mse_500_2500_ml}
\end{figure}

\begin{figure}[ht]
\centering
\begin{subfigure}{\textwidth}
  \centering
  \includegraphics[width=1\linewidth]{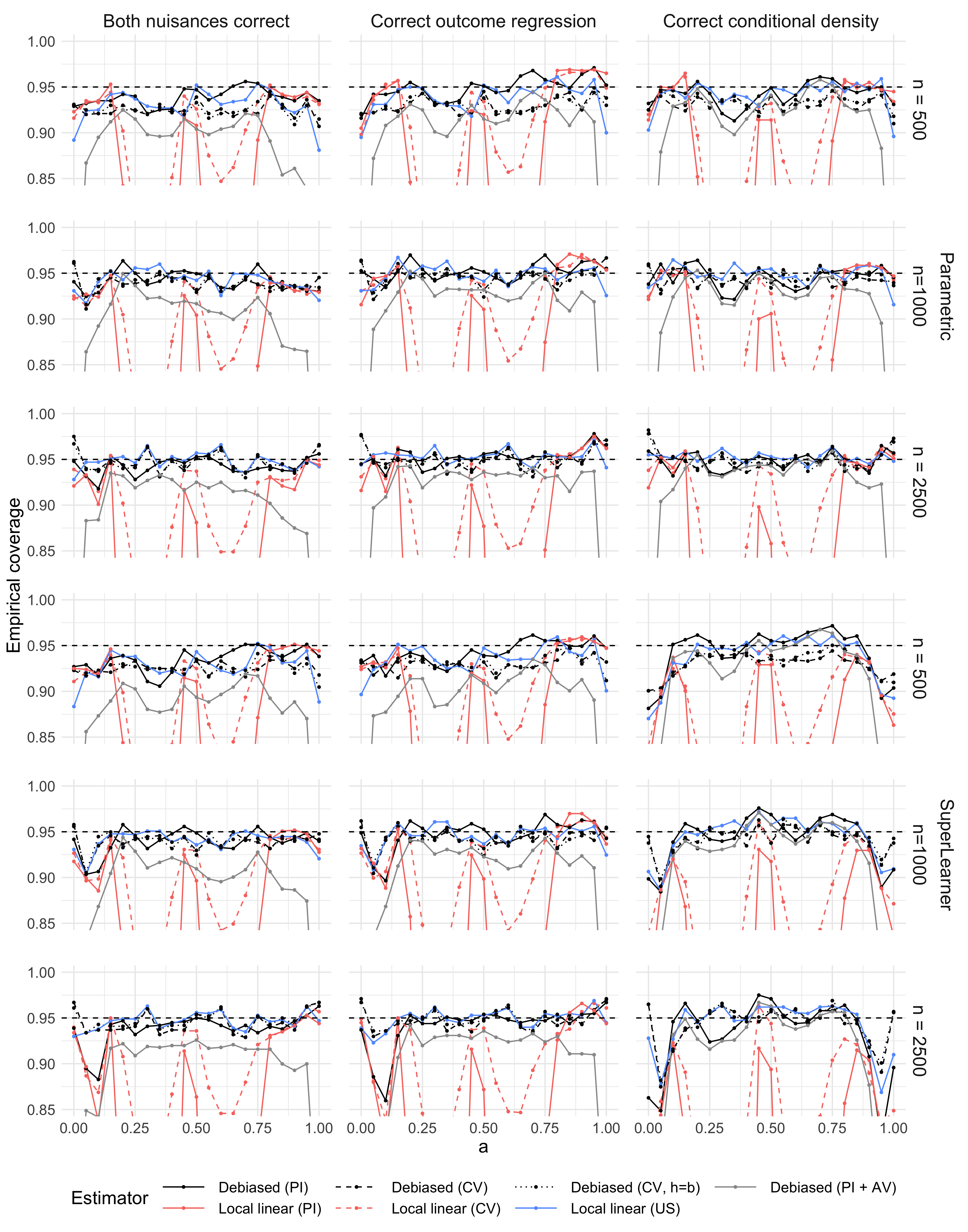}
\end{subfigure}
\caption{Empirical coverage of 95\% pointwise confidence intervals based on the debiased local linear estimator and the local linear estimator for sample sizes $n=500$ and $2500$.}
\label{fig:pointwise_coverage_500_2500}
\end{figure}

\begin{figure}[ht]
\centering
\begin{subfigure}{\textwidth}
  \centering
  \includegraphics[width=1\linewidth]{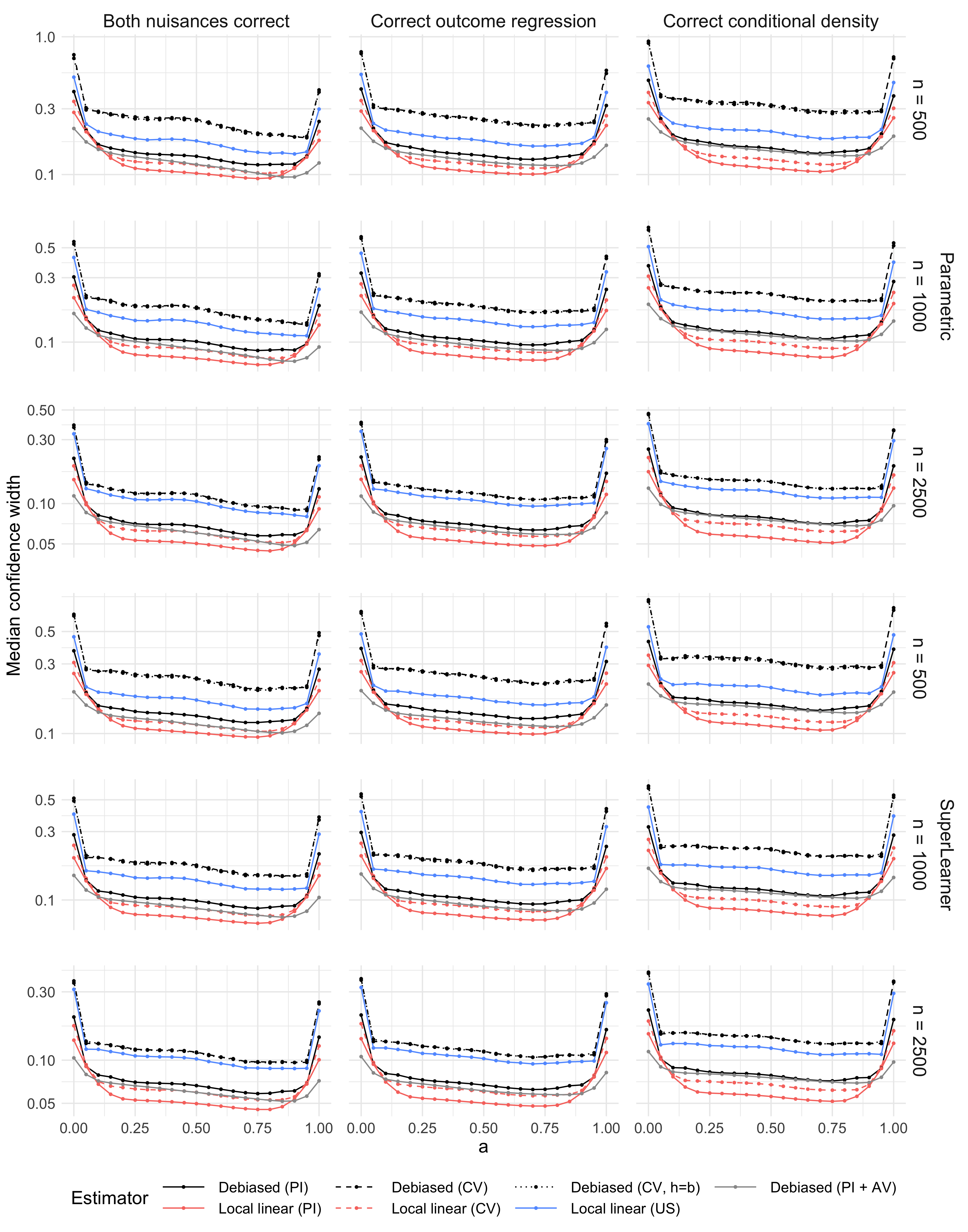}
\end{subfigure}
\caption{\rev{Median width of 95\% pointwise confidence intervals based on the debiased local linear estimator and the local linear estimator with sample sizes $n=500, 1000$ and $2500$.}}
\label{fig:pointwise_ci_width_supp}
\end{figure}


\begin{figure}[ht]
\centering
\begin{subfigure}{\textwidth}
  \centering
  \includegraphics[width=1\linewidth]{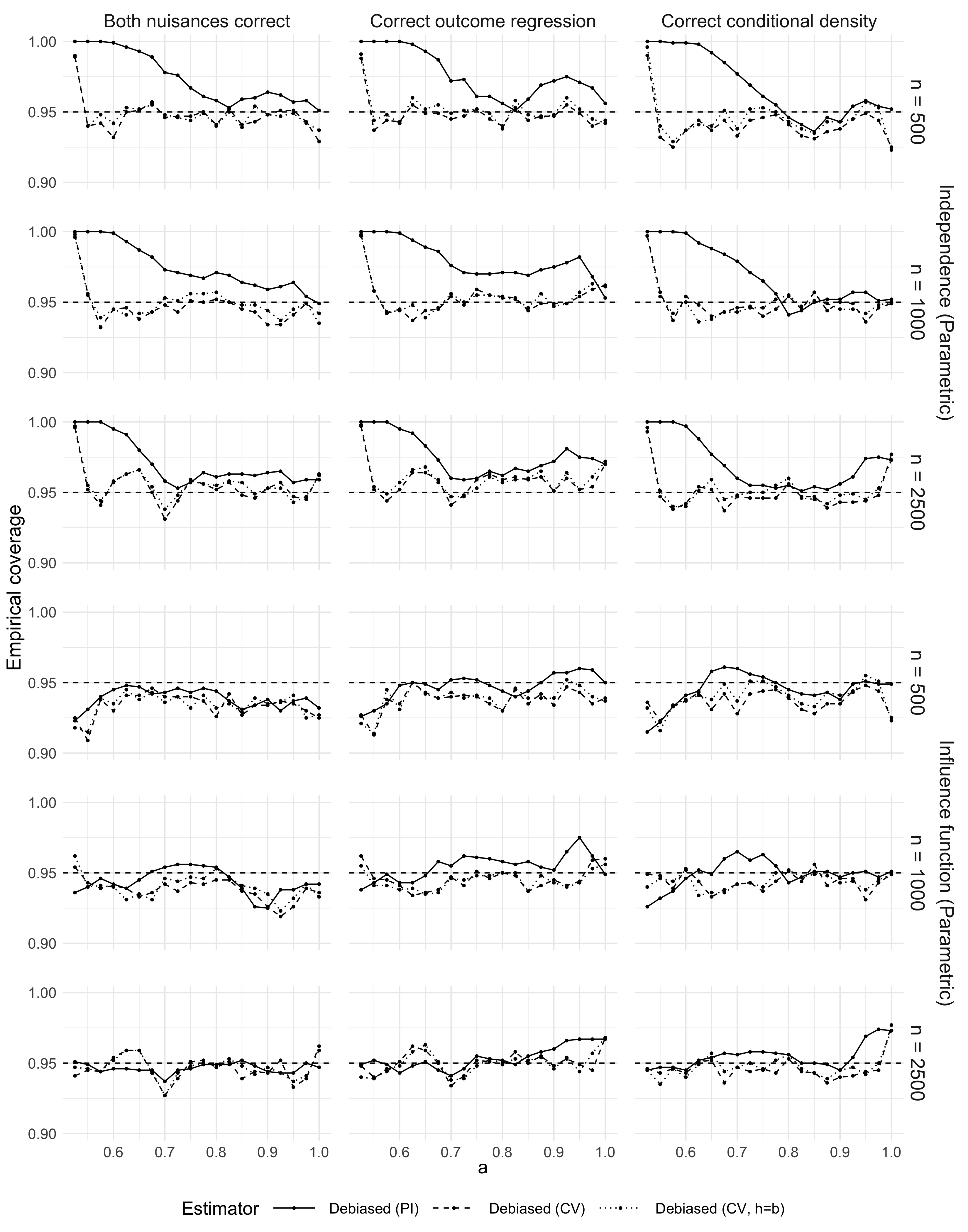}
\end{subfigure}
\caption{Empirical coverage of 95\% pointwise confidence intervals for $\theta_0(a) - \theta_0(0.5)$ based on the debiased estimator when parametric methods are used for nuisance function estimation. The intervals in the top two rows use the sum of two variance estimators; those in the bottom rows use the influence function-based variance estimator. }
\label{fig:pairwise_coverage_500_2500}
\end{figure}

\begin{figure}[ht]
\centering
\begin{subfigure}{\textwidth}
  \centering
  \includegraphics[width=1\linewidth]{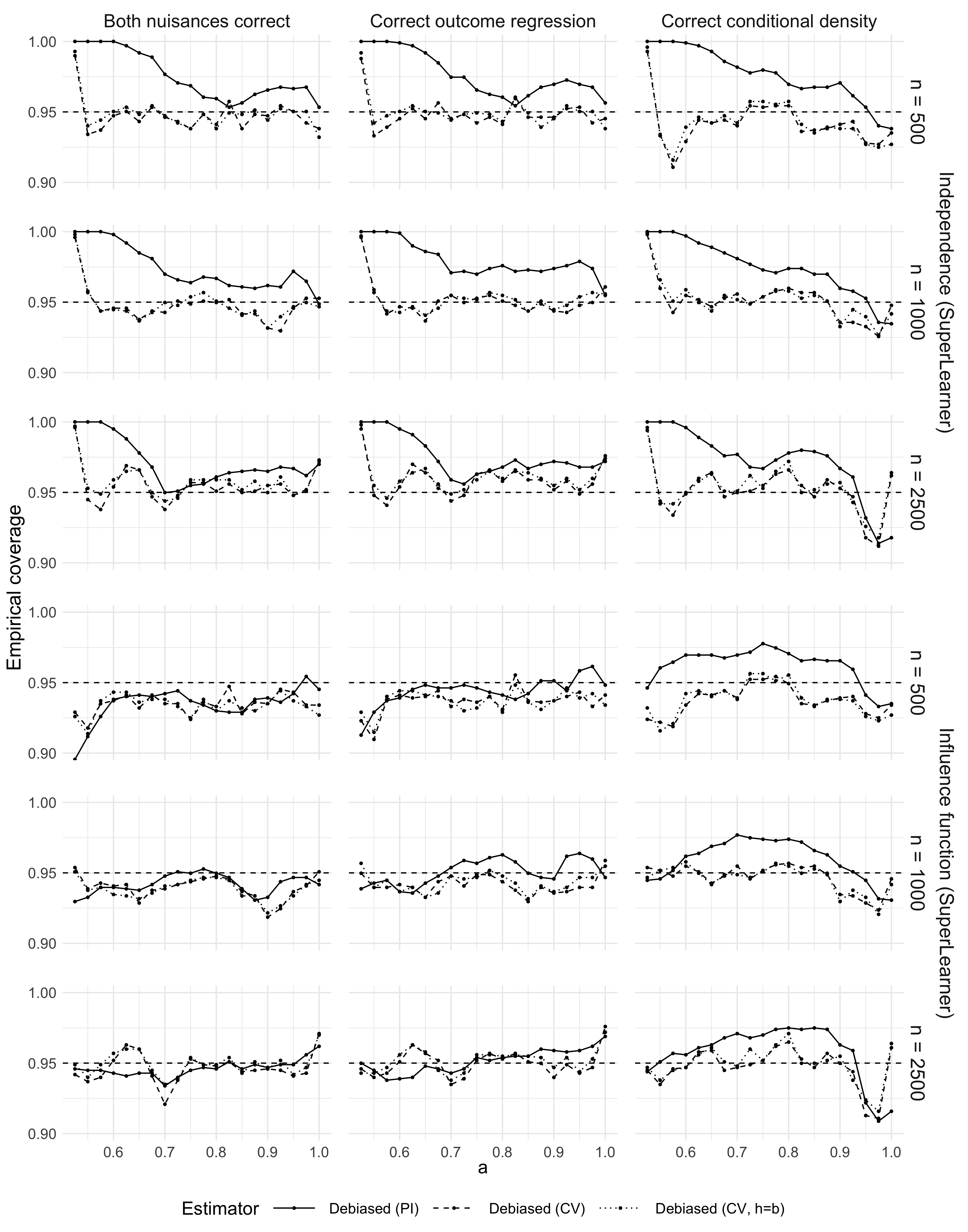}
\end{subfigure}
\caption{Empirical coverage of 95\% pointwise confidence intervals for $\theta_0(a) - \theta_0(0.5)$ based on the debiased estimator when SuperLearner-based methods are used for nuisance function estimation. The intervals in the top two rows use the sum of two variance estimators; those in the bottom rows use the influence function-based variance estimator. }
\label{fig:pairwise_coverage_500_2500_ml}
\end{figure}

\begin{figure}[ht]
    \centering
    \includegraphics[width=6.5in]{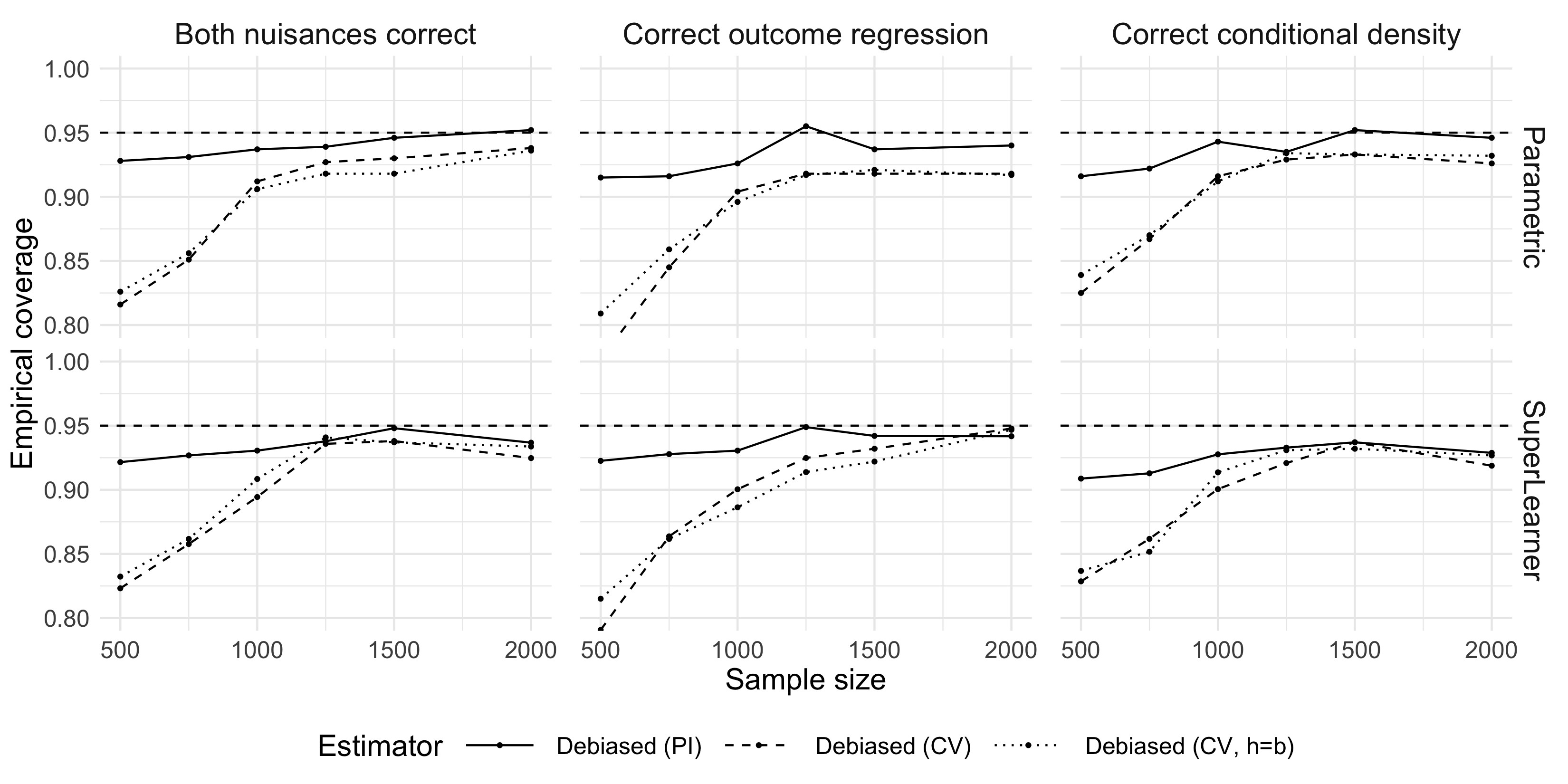}
    \caption{Empirical coverage of $95\%$ uniform confidence bands based on the debiased estimator over $\mathcal{A}_0 = [0,1]$.}
    \label{fig:uniform_coverage-supp}
\end{figure}


\begin{figure}[ht]
  \centering
  \includegraphics[width=1\linewidth]{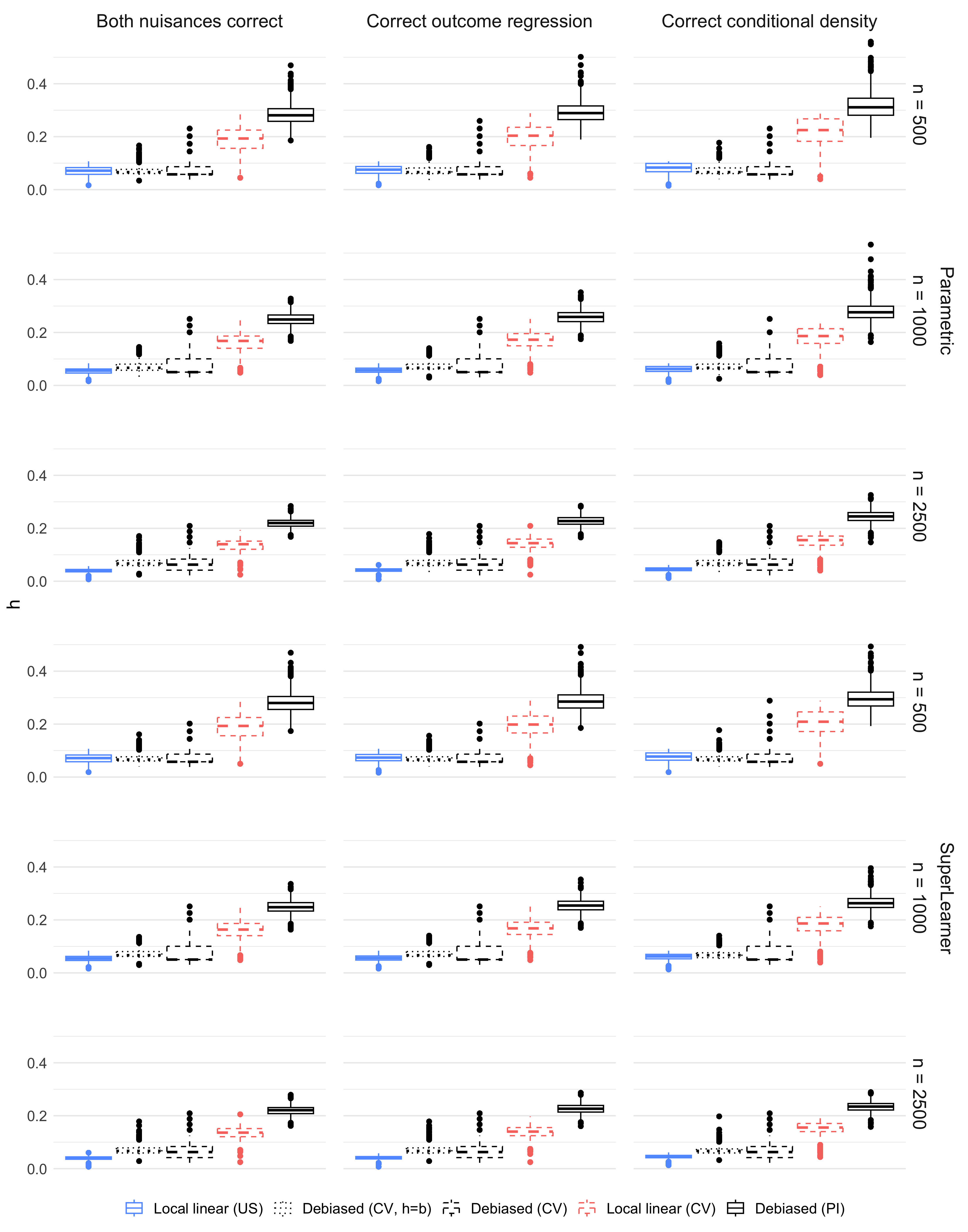}
  \caption{Distribution of the bandwidth $h$ selected by the \rev{five} different procedures over the 1000 simulations in each setting and for each sample size.}
\label{fig:bandwidth_distribution}
\end{figure}

\begin{figure}[ht]
  \centering
  \includegraphics[width=1\linewidth]{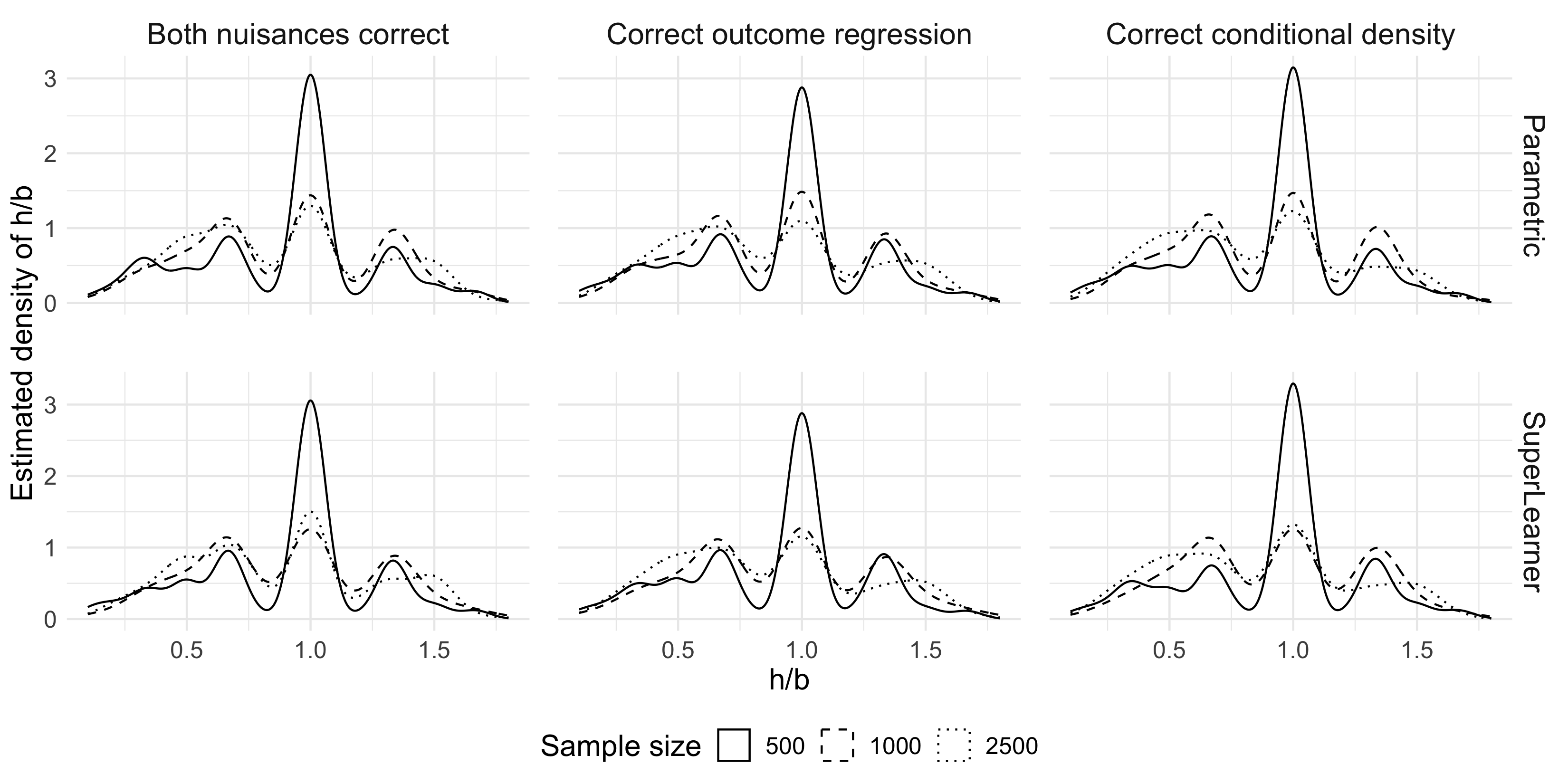}
  \caption{Density of $h/b$ found by the LOOCV method that selects both $h$ and $b$ for the debiased estimator.}
\label{fig:tau_distribution}
\end{figure}

\begin{figure}
\centering
\begin{subfigure}{3.25in}
  \centering
  \includegraphics[width=2.9in]{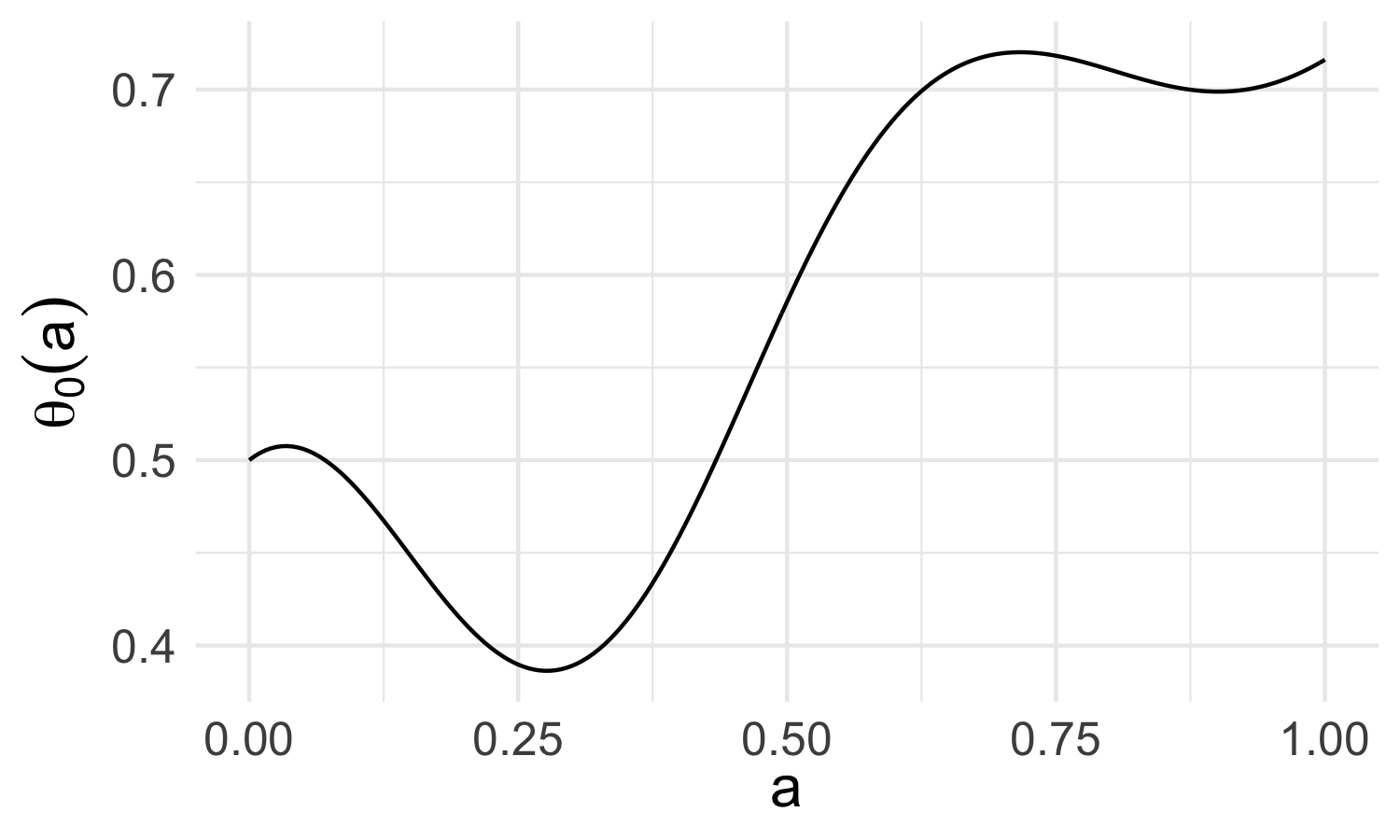}
  \label{fig:dose-response}
\end{subfigure}%
\begin{subfigure}{3.25in}
  \centering
  \includegraphics[width=2.9in]{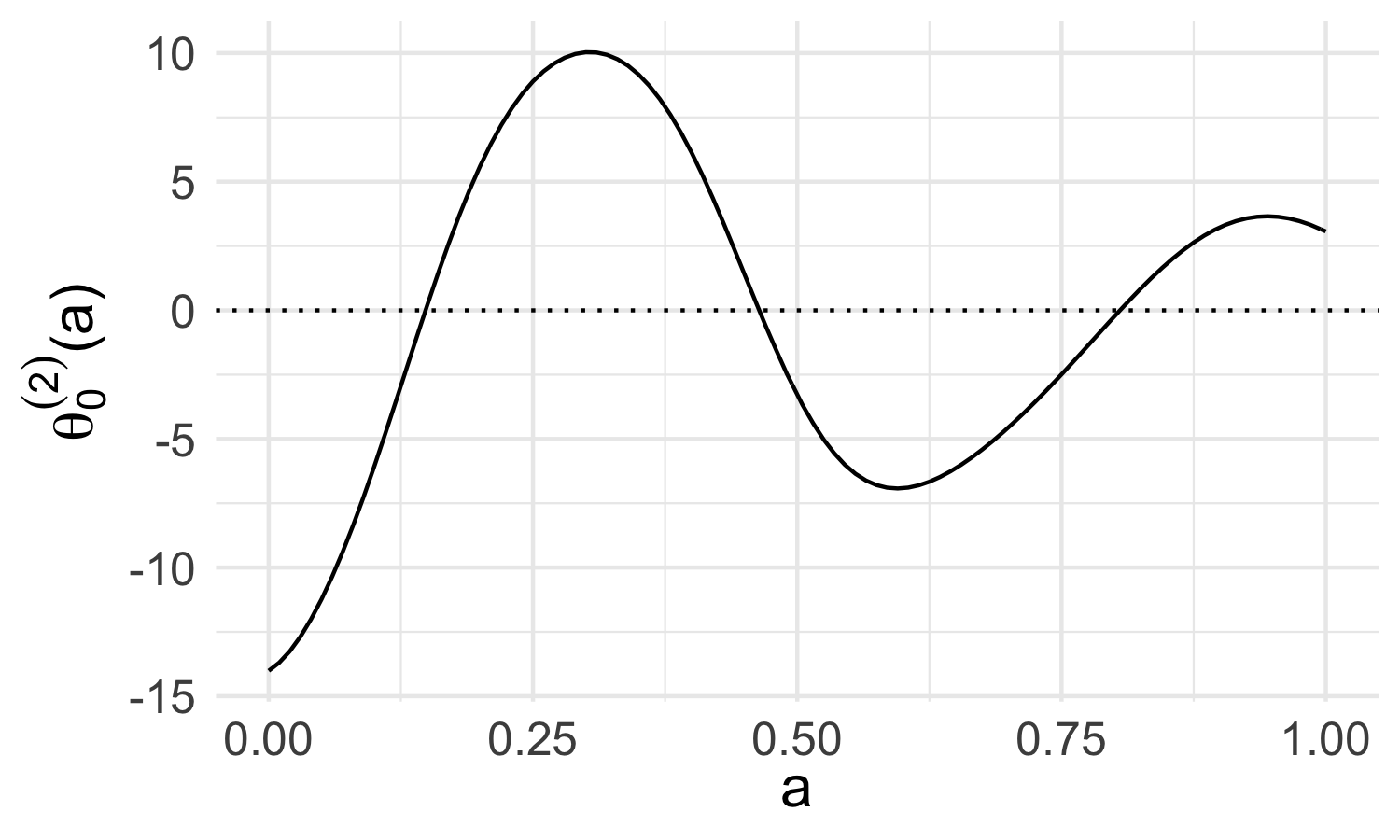}
  \label{fig:dose-response-deriv}
\end{subfigure}
\caption{The true function $\theta_0$ and its second derivative $\theta_0''$ used in the numerical experiements.}
\label{fig:true-dose-response}
\end{figure}

\end{small}
\end{appendices}

\end{adjustwidth}
\end{document}